\DeclareMathOperator*{\argmin}{argmin}
\newtheorem{theorem}{Theorem}[section]
\newtheorem{corollary}[theorem]{Corollary}
\newtheorem{definition}[theorem]{Definition}
\newtheorem{lemma}[theorem]{Lemma}
\newtheorem{remark}{Remark}[section]
\newtheorem{fact}[theorem]{Fact}
\newcommand{\supp}{\mathrm{supp}\,}
\newcommand{\eps}{\varepsilon}
\newcommand{\var}{\mathrm{Var}}
\newcommand{\norm}[1]{\lVert #1\rVert}
\def\sgn{{\mathrm{sign}}}
\def\E {{\mathbb{E}}}
\def\eps{{\varepsilon}}
\def\bp{{\mathbf{p}}}
\def\bP{{\mathbf{P}}}
\def\bq{{\mathbf{q}}}
\def\Ber{{\mathcal{B}}}
\def\cP{{\mathcal{P}}}
\def\hP{{\hat{P}}}
\def\I{{\mathbb{I}}}
\def\A{{\mathcal{A}}}
\newcommand{\domination}{\textsc{Domination}}
\newcommand{\topk}{\textsc{Top-K}}
\title{Competitive analysis of the top-$K$ ranking problem}
\author{
Xi Chen \thanks{Stern School of Business, New York University, email: xchen3@stern.nyu.edu}
\and
Sivakanth Gopi \thanks{Department of Computer Science, Princeton University, email: sgopi@cs.princeton.edu}
\and
Jieming Mao  \thanks{Department of Computer Science, Princeton University, email: jiemingm@cs.princeton.edu}
\and
Jon Schneider \thanks{Department of Computer Science, Princeton University, email: js44@cs.princeton.edu}
}
\begin{document}

\maketitle

\begin{abstract}
Motivated by applications in recommender systems, web search, social choice and crowdsourcing, we consider the problem of identifying the set of top $K$ items from noisy pairwise comparisons. In our setting, we are non-actively given $r$ pairwise comparisons between each pair of $n$ items, where each comparison has noise constrained by a very general noise model called the strong stochastic transitivity (SST) model. We analyze the competitive ratio of algorithms for the top-$K$ problem. In particular, we present a linear time algorithm for the top-$K$ problem which has a competitive ratio of $\tilde{O}(\sqrt{n})$; i.e. to solve any instance of top-$K$, our algorithm needs at most $\tilde{O}(\sqrt{n})$ times as many samples needed as the best possible algorithm for that instance (in contrast, all previous known algorithms for the top-$K$ problem have competitive ratios of $\tilde{\Omega}(n)$ or worse). We further show that this is tight: any algorithm for the top-$K$ problem has competitive ratio at least $\tilde{\Omega}(\sqrt{n})$. 
\end{abstract}
\newpage
\tableofcontents
\newpage
\section{Introduction}\label{sec:intro}

%!TEX root =  main.tex

The problem of inferring a ranking over a set of $n$ items, such as documents, images, movies, or URL links, is an important problem in machine learning and finds many applications in recommender systems, web search, social choice, and many other areas. One of the most popular forms of data for ranking is pairwise comparison data, which can be easily collected via, for example, crowdsourcing, online games, or tournament play. The problem of ranking aggregation from pairwise comparisons has been widely studied and most work aims at inferring a total ordering of all the items (see, e.g., ~\cite{Negahban12RankCentrality}). However, for some applications with a large number of items (e.g., rating of restaurants in a city), it is only necessary to identify the set of top $K$ items. For these applications, inferring the total global ranking order unnecessarily increases the complexity of the problem and requires significantly more samples.

In the basic setting for this problem, there is a set of $n$ items with some true underlying ranking. For possible pair $(i, j)$ of items, an analyst is given $r$ noisy pairwise comparisons between those two items, each independently ranking $i$ above $j$ with some probability $p_{ij}$. From this data, the analyst wishes to identify the top $K$ items in the ranking, ideally using as few samples $r$ as is necessary to be correct with sufficiently high probability. The noise in the pairwise comparisons (i.e. the probabilities $p_{ij}$) is constrained by the choice of noise model. Many existing models - such as the Bradley-Terry-Luce model (BTL) \citep{Bradley52, Luce59}, the Thurstone model \citep{Thurstone:27}, and their variants - are \emph{parametric} comparison models, in that each probability $p_{ij}$ is of the form $f(s_i, s_j)$, where $s_i$ is a `score' associated with item $i$. While these parametric models yield many interesting algorithms with provable guarantees \citep{Chen15, Jang13, Suh16Adversarial}, the models enforce strong assumptions on the probabilities of incorrect pairwise comparisons that might not hold in practice \citep{Davidson59,McLaughlin65, Tversky72, Ballinger97}.

A more general class of pairwise comparison model is the strong stochastic transitivity (SST) model, which subsumes the aforementioned parameter models as special cases and has a wide range of applications in psychology and social science (see, e.g., \cite{Davidson59,McLaughlin65,Fishburn73}). The SST model only enforces the following coherence assumption:  if $i$ is ranked above $j$, then $p_{il} \ge p_{jl}$ for all other items $l$.  \cite{Shah15Sto} pioneered the algorithmic and theoretical study  of ranking aggregation under SST models. For top-$K$ ranking problems, \cite{Shah15Sim} proposed a counting-based algorithm, which simply orders the items by the total number of pairwise comparisons won. For a certain class of instances, this algorithm is in fact optimal; any algorithm with a constant probability of success on these instances needs roughly at least as many samples as this counting algorithm. However, this does not rule out the existence of other instances where the counting algorithm performs asymptotically worse than some other algorithm.

In this paper, we study algorithms for the top-$K$ problem from the standpoint of \emph{competitive analysis}. We give an algorithm which, on any instance, needs at most $\tilde{O}(\sqrt{n})$ times as many samples as the best possible algorithm for that instance to succeed with the same probability. We further show this result is tight: for any algorithm, there are instances where that algorithm needs at least $\tilde{\Omega}(\sqrt{n})$ times as many samples as the best possible algorithm. In contrast, the counting algorithm of \cite{Shah15Sto} sometimes requires $\Omega(n)$ times as many samples as the best possible algorithm, even when the probabilities $p_{ij}$ are bounded away from $1$.
 
Our main technical tool is the introduction of a new decision problem we call \emph{domination}, which captures the difficulty of solving the top $K$ problem while being simpler to directly analyze via information theoretic techniques. The domination problem can be thought of as a restricted one-dimensional variant of the top-$K$ problem, where the analyst is only given the outcomes of pairwise comparisons that involve item $i$ or $j$, and wishes to determine whether $i$ is ranked above $j$. Our proof of the above claims proceeds by proving analogous competitive ratio results for the domination problem, and then carefully embedding the domination problem as part of the top-$K$ problem.

\subsection{Related Work}

The problem of sorting a set of items from a collection of pairwise comparisons is one of the most classical problems in computer science and statistics. Many works investigate the problem of recovering the total ordering under noisy comparisons drawn from some parametric model. For the BTL model, Negahban et al. \cite{Negahban12RankCentrality} propose the \emph{RankCentrality} algorithm, which serves as the building block for many spectral ranking algorithms. Lu and Boutilier \cite{Craig:11} give an algorithm for sorting in the Mallows model. Rajkumar and Agarwal \cite{Rajkumar14} investigate which statistical assumptions (BTL models, generalized low-noise condition, etc.) guarantee convergence of different algorithms to the true ranking.

More recently, the problem of top-$K$ ranking has received a lot of attention. Chen and Suh \cite{Chen15}, Jang et al. \cite{Jang13},  and Suh et al. \cite{Suh16Adversarial} all propose various spectral methods for the BTL model or a mixture of BTL models. Eriksson \cite{Eriksson13} considers a noisy observation model where comparisons deviating from the true ordering are i.i.d. with bounded probability. In  \cite{Shah15Sim}, Shah and Wainwright consider the general SST models and propose the counting-based algorithm, which motivates our work. The top-$K$ ranking problem is also related to the best $K$ arm identification in multi-armed bandit \cite{Bubeck:13,Jamieson:14,Zhou:14}. However, in the latter problem, the samples are i.i.d. random variables rather than pairwise comparisons and the goal is to identify the top $K$ distributions with largest means.

This paper and the above references all belong to the \emph{non-active} setting: the set of data provided to the algorithm is fixed, and there is no way for the algorithm to adaptively choose additional pairwise comparisons to query. In several applications, this property is desirable, specifically if one is using a well-established dataset or if adaptivity is costly (e.g. on some crowdsourcing platforms). Nonetheless, the problems of sorting and top-$K$ ranking are incredibly interesting in the adaptive setting as well. Several works \citep{Ailon11, Jamieson11, Mathieu07, Braverman08} consider the adaptive noisy sorting problem with (noisy) pairwise comparisons and explore the sample complexity to recover an (approximately) correct total ordering in terms of some distance function (e.g,., Kendall's tau). In \cite{Wauthier13}, Wauthier et al. propose simple weighted counting algorithms to recovery an approximate total ordering from noisy pairwise comparisons. Dwork et al. \cite{Dwork01} and Ailon et al. \cite{Ailon08} consider a related \emph{Kemeny optimization} problem, where the goal is  to determine the total ordering that minimizes the sum of the distances to different permutations. More recently, the top-$K$ ranking problem in the active setting has been studied by Braverman et al. \cite{BMW16} where they consider the sample complexity of algorithms that use a constant number of rounds of adaptivity. All of this work takes place in much more constrained noise models than the SST model. Extending our work to the active setting is an interesting open problem.

%\jiemingnote{Jieming: I am not sure if we should put the following paragraph here.}

\section{Preliminaries and Problem Setup}\label{sec:prelim}
%!TEX root =  main.tex

%\subsection{Problem Setup}

%\jonnote{Probably a lot of the formal definitions here should be moved to the Preliminaries section and should be replaced by informal discussion (or we should state our main results informally in the previous section).}
%\jiemingnote{Jieming: I am bit worried that it's hard to discuss many things before the formal definitions are made. The informal discussions might be very confusing. As long as we don't want to put the definitions in the appendix, I think it's okay to leave them in the introduction. The definition of \domination~can be moved to the main technique section.}

Consider the following problem. An analyst is given a collection of $n$ items, labelled $1$ through $n$. These items have some true ordering defined by a permutation $\pi: \{1,\ldots, n\} \rightarrow \{1, \ldots, n\}$ such that for $1 \leq u<v \leq n$, the item labelled $\pi(u)$ has a better rank than the item labelled $\pi(v)$ (i.e., the item with label $i$ has a better rank than the item $j$ if and only if $\pi^{-1}(i) <  \pi^{-1}(j)$). The analyst's goal is to determine the set of the top $K$ items, i.e., $\{\pi(1), \ldots, \pi(k)\}$.

The analyst receives $r$ samples. Each sample consists of pairwise comparisons between all pairs of items. All the pairwise comparisons are independent with each other. The outcomes of the pairwise comparison between any two items is characterized by the probability matrix  $\bP \in [0,1]^{n \times n}$.  For a pair of items $(i,j)$,  let $X_{i,j} \in \{0,1\}$ be the outcome of the comparison between the item $i$ and $j$, where $X_{i,j}=1$ means $i$ is preferred to $j$ (denoted by $i \succ j$) and $X_{i,j}=0$ otherwise. Further, let $\Ber(z)$ denote the Bernoulli random variable with  mean $z \in [0,1]$. The outcome $X_{i,j}$ follows $\Ber(\bP_{\pi^{-1}(i), \pi^{-1}(j)})$, i.e.,
\[
\Pr(X_{i,j}=1)=\Pr(i \succ j)= \bP_{\pi^{-1}(i), \pi^{-1}(j)}.
\]
The probability matrix $\bP$ is said to be strong stochastic transitive (SST) if it satisfies the following definition.
\begin{definition}\label{def:SST}
The $n\times n$ probability matrix $\bP \in [0,1]^{n \times n}$ is strong stochastic transitive (SST) if
\begin{enumerate}
  \item For $1 \leq u < v \leq n$, $\bP_{u,l} \geq \bP_{v,l}$ for all $l \in [n]$.
  \item $\bP$ is shifted-skew-symmetric (i.e., $\bP-0.5$ is skew-symmetric) where $\bP_{v,u} =1-\bP_{u,v}$ and $\bP_{u,u}=0.5$ for $u\in [n]$.
\end{enumerate}
\end{definition}
The first condition claims that when the item $i$ has a higher rank than item $j$ (i.e., $\pi^{-1}(i) < \pi^{-1}(j)$), for any other item $k$, we have
\[
\Pr(i \succ k) = \bP_{\pi^{-1}(i),\pi^{-1}(k)} \geq \Pr(j \succ k) = \bP_{\pi^{-1}(j),\pi^{-1}(k)}.
\]

\begin{remark}
Many classical parametric models such that BTL  \citep{Bradley52, Luce59}  and Thurstone (Case V) \citep{Thurstone:27}  models are special cases of SST. More specifically, parametric models assume a score vector $w_1 \geq w_2 \geq \ldots \geq w_n$. They further assume that  the comparison probability $\bP_{u,v}= F(w_u-w_v)$, where $F: \mathbb{R} \rightarrow [0,1]$ is a non-decreasing function and $F(t)=1-F(-t)$ (e.g., $F(t)=1/(1+\exp(-t))$ in BTL models). By the property of $F$, it is easy to verify that $\bP_{u,v}= F(w_u-w_v)$ satisfy the conditions in Definition \ref{def:SST}.
\end{remark}

Under the SST models, we can formally define the top-$K$ ranking problem as follows. The top-$K$ ranking problem takes the inputs $n$, $k$, $r$ that are known to the algorithm and the SST probability matrix $\bP$ that is unknown to the algorithm.
\begin{definition}\label{def:topK}
\topk$(n,k,\bP,r)$ is the following algorithmic problem:
\begin{enumerate}
\item A permutation $\pi$ of $[n]$ is uniformly sampled.
\item The algorithm is given samples $X_{i,j,l}$ for $i \in [n], j \in [n], l\in [r]$, where each $X_{i,j,l}$ is sampled independently according to $\Ber(\bP_{\pi^{-1}(i), \pi^{-1}(j)})$. The algorithm is also given the value of $k$, but not $\pi$ or the matrix $\bP$.
\item The algorithm succeeds if it correctly outputs the set of labels $\{ \pi(1),...,\pi(k) \}$ of the top $k$ items.
\end{enumerate}
\end{definition}

\begin{remark}We note that \cite{Shah15Sim} considers a slightly different observation model in which each pair is queried $r$ times. For each query, one can obtain a comparison result with the probability  $p_{\text{obs}} \in (0,1]$ and with probability $1-p_{\text{obs}}$, the query is invalid. In this model, each pair will be compared $r \cdot  p_{\text{obs}}$ times on expectation. When $p_{\text{obs}}=1$, it reduces to our model in Definition \ref{def:topK}, where we observe exactly $r$ comparisons for each pair. Our results can be easily extended to deal with the observation model in \cite{Shah15Sim} by replacing $r$ with the effective sample size, $r \cdot  p_{\text{obs}}$. We  omit the details for the sake of simplicity.  %\cite{Chen15} assumed a Erd\"{o}s-R\'{e}nyi graph based observation model each each pair is chosen to be compared with certain probability. It will be interesting to generalize our results to such a comparison model in the future.
%\xnote{Xi: please add notes if necessary how we can extend to the observation models in \cite{Shah15Sim} and \cite{Chen15}}
%\jiemingnote{Jieming: There's a simple argument to extend to the observation model by only losing a $\log(n)$ factor. As we don't care about polylogarithmic factors in this paper, it's okay. And in fact, the observation model is kind of weird. We don't have to follow their model.}
\end{remark}

Our primary metric of concern is the \textit{sample complexity} of various algorithms; that is, the minimum number of samples an algorithm $A$ requires to succeed with a given probability. To this end, we call the triple $S = (n, k, \bP)$ an \textit{instance} of the $\topk$ problem, and write $r_{min}(S, A, p)$ to denote the minimum value such that for all $r \geq r_{min}(S, A, p)$, $A$ succeeds on instance $S$ with probability $p$ when given $r$ samples. When $p$ is omitted, we will take $p=\frac{3}{4}$; i.e., $r_{min}(S, A) = r_{min}(S, A, \frac{3}{4})$.

Instead of working directly with $\topk$, we will spend most of our time working with a problem we call $\domination$, which captures the core of the difficulty of the $\topk$ problem. $\domination$ is formally defined as follows.

\begin{definition}
\label{def:dom}
\domination$(n, \bp, \bq, r)$ is the following algorithmic problem:
\begin{enumerate}
\item  $\bp =(p_1,\cdots,p_n)$ and $\bq=(q_1,\cdots,q_n)$ are two vectors of probabilities that satisfy $1 \geq p_i \geq q_i \geq 0$ for all $i\in [n]$. $\bp,\bq$ are not given to the algorithm.
\item A random bit $B$ is sampled from $\Ber(\frac{1}{2})$. Samples $X_{i,j},Y_{i,j}$ (for $i\in [n], j\in [r]$) are generated as follows:
\begin{enumerate}
\item Case $B=0$: each $X_{i,j}$ is independently sampled according to $\Ber(p_i)$ and each $Y_{i,j}$ is independently sampled according to $\Ber(q_i)$.
\item Case $B=1$: each $X_{i,j}$ is independently sampled according to $\Ber(q_i)$ and each $Y_{i,j}$ is independently sampled according to $\Ber(p_i)$.
\end{enumerate}
The algorithm is given the samples $X_{i,j}$ and $Y_{i,j}$, but is not given the bit $B$ or the values of $\bp$ and $\bq$.
\item
The algorithm succeeds if it correctly outputs the value of the hidden bit $B$.
\end{enumerate}
\end{definition}

As before, we are interested in the sample complexity of algorithms for \domination. We call the triple $C = (n, \bp, \bq)$ an instance of $\domination$, and write $r_{min}(C, A, p)$ to be the minimum value such that for all $r\geq r_{min}(C, A, p)$, $A$ succeeds at solving $\domination(n, \bp, \bq, r)$ with probability at least $p$ (similarly, we let $r_{min}(C, A) = r_{min}(C, A, \frac{3}{4})$).

%In particular, let $Y_{i,j} \in \{0,1\}$ be the outcome of the comparison between the item $i$ and $j$, where $Y_{i,j}=1$ means $i$ is preferred to $j$ (denoted by $i \succ j$) and $Y_{i,j}=0$ otherwise. The outcome $Y_{i,j}$ follows a Bernoulli distribution with the mean $\bP_{i,j}$, i.e., $\Pr[Y_{i,j}=1]=\bP_{i,j}$. Note that since $Y_{j,i}=1-Y_{i,j}$, the probability matrix $\bP$ is shifted-skew-symmetric where $\bP_{j,i} =1-\bP_{i,j}$ and $\bP_{i,i}=0.5$. The SST model enforces that for any pair $i$ and $j$:
%\begin{equation}
%  \pi^*(i) > \pi^*(j)   \; \Leftrightarrow  \; \bP_{i,k} \geq  \bP_{j,k}, \;  \forall k \in \{1,\ldots, n\}.
%\end{equation}
%In other words, if the item $i$ has a higher rank than $j$, we have $\Pr(i \succ k) \geq \Pr(j \succ k)$ for all $k$. It is easy to observe that
%

\section{Main Results}
\label{sec:result}

There are at least two main approaches one can take to analyze the sample complexity of problems like $\topk$ and $\domination$. The first (and more common) is to bound the  value of $r_{min}(S, A)$ by some explicit function $f(S)$ of the instance $S$. This is the approach taken by \cite{Shah15Sim}. They show that for some simple function $f$ (roughly, the square of the reciprocal of the absolute difference of the sums of the $k$-th and $(k+1)$-th rows of the matrix $\bP$ i.e. $1/\norm{\bP_{k}-\bP_{k+1}}_1^2$), there is an algorithm $A$ such that for all instances $S$, $r_{min}(S, A) = O(f(S))$; moreover this is optimal in the sense that there exists an instance $S$ such that for all algorithms $A$, $r_{min}(S, A) = \Omega(f(S))$. While this is a natural approach, it leaves open the question of what the correct choice of $f$ should be; indeed, different choices of $f$ give rise to different `optimal' algorithms $A$ which outperform each other on different instances.

In this paper, we take the second approach, which is to compare the sample complexity of an algorithm on an instance to the sample complexity of the best possible algorithm on that instance. Formally, let $r_{min}(S, p) = \inf_{A} r_{min}(S, A, p)$ and let $r_{min}(S) = r_{min}(S, \frac{3}{4})$. An ideal algorithm $A$ would satisfy $r_{min}(S, A) = \Theta(r_{min}(S))$ for all instances $S$ of $\topk$; more generally, we are interested in bounding the ratio between $r_{min}(S, A)$ and $r_{min}(S)$. We call this ratio the \emph{competitive ratio} of the algorithm, and say that an algorithm is $f(n)$-competitive if $r_{min}(S, A) \leq f(n)r_{min}(S)$. (We likewise define all the corresponding notions for $\domination$).

%\jonnote{maybe include theorem statements here?}
In our main upper bound result, we give a linear-time algorithm for $\topk$ which is $\tilde{O}(\sqrt{n})$-competitive  (restatement of Corollary \ref{cor:ubmain}):
\begin{theorem}
\label{thm:ubintro}
There is an algorithm $A$ for \topk\ such that $A$ runs in time  $O(n^2r)$ and on every instance $S$ of \topk\ on $n$ items,
\[
r_{min}(S, A) \leq O(\sqrt{n}\log n)r_{min}(S).
\]
\end{theorem}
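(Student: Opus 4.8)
The plan is to route everything through \domination. For a \domination\ instance $C=(n,\bp,\bq)$ write $h^2(a,b)=1-\sqrt{ab}-\sqrt{(1-a)(1-b)}$ for the squared Hellinger distance between $\Ber(a)$ and $\Ber(b)$, and set $H(C)=\sum_{i=1}^n h^2(p_i,q_i)$. The first step is to pin down the optimal sample complexity: the $B=0$ and $B=1$ hypotheses are product distributions whose squared Hellinger distance after $r$ samples is $1-\prod_i\bigl(1-h^2(p_i,q_i)\bigr)^{2r}=\Theta(rH(C))$ while this is small, so a two-point (Le Cam) argument gives $r_{min}(C)=\Omega(1/H(C))$, and the likelihood-ratio test — which is allowed to depend on $C$ — matches it, so $r_{min}(C)=\Theta(1/H(C))$. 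It therefore suffices to produce a \emph{single} algorithm for \domination\ that succeeds whenever $r\ge\tilde{\Omega}(\sqrt n)/H(C)$, and whose failure probability drops to $\delta$ at the cost of only a $\log(1/\delta)$ factor in $r$; plugging $\delta=1/\mathrm{poly}(n)$ into the reduction to \topk\ below then costs a single extra $\log n$, matching the statement.

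\textbf{The \domination\ algorithm.} No fixed estimator is competitive, and the right ratio is $\sqrt n$ rather than $O(1)$, because of the following tension. Consider an instance with $m$ coordinates each carrying Hellinger information $\theta$ and $n-m$ decoy coordinates with $p_i=q_i$, all at the same variance level, so $H(C)=m\theta$. To \emph{detect} one informative coordinate one must beat the $\sqrt{\log n}$-scale maximum of $n$ null statistics with a $\sqrt{r\theta}$-scale signal, forcing $r\gtrsim(\log n)/\theta$, i.e.\ competitive ratio $\tilde{O}(m)$; to \emph{aggregate} the like-variance coordinates one pays the variance of the $n-m$ decoys, i.e.\ ratio $\tilde{O}(n/m)$. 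The minimum of these is always $\tilde{O}(\sqrt n)$ (small $m$ favours detection, large $m$ favours aggregation), and, done carefully, this family is the lower-bound construction. Accordingly the algorithm does both: bucket the $n$ coordinates by empirical variance $\hat p_i(1-\hat p_i)+\hat q_i(1-\hat q_i)$ into $O(\log n)$ geometric buckets, and for each bucket $b$ compute the signed count $T_b=\sum_{i\in b}(\hat p_i-\hat q_i)$ and the studentized maximum $M_b=\max_{i\in b}|\hat p_i-\hat q_i|/\sigma_b$ (with $\sigma_b$ the bucket's noise scale), recording the sign of $\hat p_i-\hat q_i$ at the maximizing $i$; output $B=0$ (resp.\ $B=1$) if some bucket exhibits a $+$ (resp.\ $-$) signal significant against thresholds calibrated for failure probability $\delta/\mathrm{poly}(n)$ across the $O(\log n)$ buckets and $n$ coordinates.

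\textbf{Analysing it.} Fix $r\ge\tilde{\Omega}(\sqrt n)/H(C)$. Since $H(C)=\sum_b H_b$ over $O(\log n)$ buckets, some bucket $b^\star$ has $H_{b^\star}\ge H(C)/O(\log n)$, hence $rH_{b^\star}\gtrsim\sqrt n/\log n$. Split $b^\star$ by whether $r\,h^2(p_i,q_i)$ exceeds $\log n$ (``resolvable'') or not (``weak''). If a constant fraction of $H_{b^\star}$ is carried by resolvable coordinates, at least one of them has $|\hat p_i-\hat q_i|$ of the correct sign and magnitude $\gg\sigma_b\sqrt{\log n}$, so $M_{b^\star}$ flags it while no null coordinate crosses threshold — and this is precisely the regime in which there are few ($\tilde{O}(\sqrt n)$) resolvable coordinates, so the union bound over $n$ coordinates is affordable. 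Otherwise most of $H_{b^\star}$ is weak, and using $\tfrac18(p_i-q_i)^2\le h^2(p_i,q_i)\le p_i-q_i$ to relate $H_{b^\star}$, $\sum_{i\in b^\star}(p_i-q_i)$, $|b^\star|\le n$, and the per-coordinate variances, one checks that $T_{b^\star}$ has signal exceeding $\sqrt{\log(1/\delta)}$ times its standard deviation once $rH_{b^\star}\gtrsim\sqrt n$. The delicate points — and where almost all the work lies — are (i) performing the resolvable/weak split so that the two cases provably cover \emph{every} instance with the claimed bound and the $\sqrt n$ crossover occurs simultaneously at every variance level; (ii) controlling the non-Gaussian tails of $\hat p_i-\hat q_i$ near the endpoints of $[0,1]$, where a Bernstein-type inequality must replace the Gaussian one; (iii) calibrating the thresholds so that no false positive arises among the $O(n\log n)$ statistics tested; and (iv) making the bucketing robust to the fact that the variance scale is itself estimated (a coordinate with $r\,v_i\lesssim 1$ can land in a ``too-small'' bucket). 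This step is the main obstacle.

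\textbf{Embedding into \topk.} Lower bound: fix $S=(n,k,\bP)$ and, for a fixed injection of the remaining ranks and a fixed pair $\{a_0,b_0\}$, compare the scenario where $a_0$ is at rank $k$ with the one where $a_0$ is at rank $k+1$; a correct Top-$K$ algorithm must distinguish these (its output flips), they differ only in comparisons involving $a_0$ or $b_0$, and those comparisons form a \domination\ pair built from rows $k$ and $k+1$ of $\bP$ (SST's row-domination property gives $p_i\ge q_i$). Taking the conditioning on which the algorithm does best, the two-point bound of paragraph~1 yields $r_{min}(S)=\Omega(1/H(C_S))$ with $H(C_S)=\sum_l h^2(\bP_{k,l},\bP_{k+1,l})$. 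Upper bound: after an $O(n^2r)$ pass tabulating all empirical win-frequencies $\hat P_{a,c}$, run the \domination\ subroutine (with $\hat P_{a,\cdot}$ as the $X$-samples, $\hat P_{b,\cdot}$ as the $Y$-samples, failure probability $1/n^3$) to guess the orientation of a near-linear collection of pairs selected by a tournament-style scheme, and output the $K$ items with the most wins. Squared Hellinger distance is monotone under widening of $[\bP_{v,l},\bP_{u,l}]$, so every pair straddling the top-$K$ boundary is at least as easy as $C_S$; hence all straddling pairs are decided correctly with probability $1-O(1/n)$, every top-$K$ item then beats all $n-K$ others in the tournament, and the win-count partition is exactly correct regardless of how the ``easy'' same-side pairs come out. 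The sample cost is $O(\sqrt n\log n)\cdot\log(n^3)\cdot(1/H(C_S))$, which with $r_{min}(S)=\Theta(1/H(C_S))$ is $O(\sqrt n\log n)\,r_{min}(S)$, and with the subroutine limited to $\tilde{O}(n)$ pairs the running time is $O(n^2r)$. The embedding, while requiring care (especially items (i)–(iv) above inside the subroutine), is otherwise comparatively routine.
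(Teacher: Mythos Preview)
Your overall architecture matches the paper's: characterize $r_{\min}(C)$ for \domination\ by an information quantity (your squared Hellinger $H(C)$ is equivalent up to constants to the paper's $\I(\bp,\bq)$), design a single $\tilde{O}(\sqrt{n})$-competitive \domination\ algorithm, then reduce \topk\ to $O(n)$ calls of it at failure probability $1/\mathrm{poly}(n)$ via a tournament, with the lower bound coming from rows $k$ and $k+1$ of $\bP$ (the paper's Lemma~\ref{lem:topk_lb} and Lemma~\ref{lem:information_monotone} are precisely your monotonicity-of-Hellinger step). Where you diverge is the \domination\ algorithm itself, and this is exactly where you flag ``the main obstacle.'' Your bucketing-by-empirical-variance scheme with per-bucket max and count tests could in principle be made to work, but your points (ii)--(iv) are not bookkeeping: coordinates with $rv_i\lesssim 1$ can carry most of the information (take $p_i=\Theta(1/r)$, $q_i=0$) while showing empirical variance $0$ with constant probability, so both the studentization in $M_b$ and the threshold calibration in (iii) are genuinely delicate, and your sketch does not supply them. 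The paper's $\mathcal{A}_{coup}$ (Algorithm~\ref{alg:gcoupling}, Theorem~\ref{thm:main}) sidesteps all of this by a scale-free device: split the $r$ samples into $\ell=\Theta(\log(n/\alpha))$ equal segments and for each coordinate $i$ and segment $j$ record only the \emph{sign} $S_{i,j}\in\{\pm 1\}$ of $\sum_t(X_{i,t}-Y_{i,t})$. There is no variance to estimate and no bucketing; the analysis is a clean dichotomy on whether some single $i$ carries an $\Omega(1/\sqrt{n})$ fraction of $\I(\bp,\bq)$ (then a Sanov bound, Lemma~\ref{lem:sanovsalgorithm)}, gives $\Pr[S_{i,j}=(-1)^B]\ge 5/6$ and the max over coordinates wins) or not (then an information argument lower-bounds each $\Pr[S_{i,j}=(-1)^B]-\tfrac12$ by $\Theta(\sqrt{n}\, I_i/\sum_k I_k)$, and a double majority across coordinates and then segments succeeds). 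This resolves your obstacles (i)--(iv) uniformly, including at the endpoints of $[0,1]$.

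One smaller point on the reduction: ``output the $K$ items with the most wins'' is not meaningful after only $O(n)$ oriented comparisons, since items participate in varying numbers of them. The paper instead uses an explicit randomized elimination (Lemma~\ref{lem:topktournament}) that recovers the top-$K$ set from $O(n)$ edge queries, and it is this that keeps the running time at $O(n^2r)$.
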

In our main lower bound result, we show that up to logarithmic factors, this $\sqrt{n}$ competitive ratio is optimal (restatement of Theorem \ref{thm:sstlb}):
\begin{theorem}
\label{thm:lbintro}
For any algorithm $A$ for \topk, there exists an instance $S$ of \topk\ on $n$ items such that
\[
r_{min}(S,A) \geq \Omega\left(\frac{\sqrt{n}}{\log n}\right) r_{min}(S).
\]
\end{theorem}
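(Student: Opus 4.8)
The plan is to prove the theorem by first establishing the analogous competitive lower bound for \domination\ --- that for every algorithm $A'$ there is an instance $C$ with $r_{min}(C,A')\geq\tilde\Omega(\sqrt n)\,r_{min}(C)$ --- and then transferring it to \topk\ by embedding a copy of \domination. For the \domination\ lower bound I would use a ``planted subset'' family: fix a size $m$ and a gap $\delta$, and for each $S\subseteq[n]$ with $|S|=m$ let $C_S$ be the instance with $p_i=\tfrac12$ for all $i$, $q_i=\tfrac12-\delta$ for $i\in S$, and $q_i=\tfrac12$ for $i\notin S$. The squared Hellinger distance between the two coordinate hypotheses is $\Theta(\delta^2)$ on the $m$ planted coordinates and $0$ elsewhere, so by tensorization $r_{min}(C_S)=\Theta\!\big(1/(m\delta^2)\big)$ for every $S$ --- the upper bound from the estimator that thresholds $\sum_{i,j}(X_{i,j}-Y_{i,j})$, the lower bound from a standard two-hypothesis (Hellinger) argument. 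The difficulty of $C_S$ for a \emph{fixed} algorithm is that it cannot \emph{locate} $S$: I would put the uniform prior $\mu$ on size-$m$ subsets and bound the total variation distance between the law of the data given $B=0$ (mixed over $S\sim\mu$) and given $B=1$. A second-moment ($\chi^2$) computation --- equivalently, the Gaussian approximation $\sum_j(X_{i,j}-Y_{i,j})\approx N(\pm r\delta\cdot\mathbf 1_{\{i\in S\}},\,\Theta(r))$ --- reduces this to estimating $\E_{S,S'}\!\big[\exp(\Theta(r\delta^2)\,|S\cap S'|)\big]$ for independent uniform size-$m$ sets $S,S'$.

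This expectation is governed by two regimes of the overlap: its typical value $\approx m^2/n$ contributes a term of order $\exp(\Theta(r\delta^2)\,m^2/n)$, while the rare full overlap $|S\cap S'|=m$ (probability $\binom{n}{m}^{-1}$) contributes of order $\binom{n}{m}^{-1}\exp(\Theta(r\delta^2)\,m)$. Keeping both bounded requires $r\delta^2\lesssim n/m^2$ and $r\delta^2\lesssim\log(n/m)$; equivalently, since $r_{min}(C_S)=\Theta(1/(m\delta^2))$, it requires $r\leq\min\!\big(n/m,\ m\log(n/m)\big)\cdot r_{min}(C_S)$. In that regime the two data distributions are within total variation $\tfrac14$ (adjusting constants), so $A'$ outputs $B$ correctly with probability $<\tfrac34$ --- first on the $\mu$-random instance, hence on some fixed $C_S$ --- whence $r_{min}(C_S,A')\geq\Omega\!\big(\min(n/m,\,m\log(n/m))\big)\cdot r_{min}(C_S)$. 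Choosing $m\asymp\sqrt{n/\log n}$ balances the two terms and makes this factor $\tilde\Theta(\sqrt n)$. The delicate point here is keeping \emph{both} overlap regimes in the $\chi^2$ bound: it is precisely their trade-off that yields a square root, whereas the extreme choices $m=1$ or $m=n$ would each (misleadingly) suggest a factor of $n$.

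To finish, I would embed \domination\ into \topk. Given a hard instance $C_S=(n,\bp,\bq)$ from above --- which conveniently has $p_i=\tfrac12$ for all $i$, so the coordinate probabilities are co-monotone within a block --- build a \topk\ instance $S_C$ on $\Theta(n)$ items consisting of two ``swap'' items $a,b$, $n$ ``reference'' items $c_1,\dots,c_n$, and filler items with near-deterministic comparisons that pin down $K$. Choose the probability matrix so that: comparing $a$ (resp.\ $b$) with $c_i$ realizes exactly the \domination\ samples $X_{i,\cdot}$ (resp.\ $Y_{i,\cdot}$); the references form a mild chain among themselves (a constant bias) so the block structure is recoverable from $O(1)$ samples; and the fillers and this chain place exactly one of $a,b$ --- the one dominating the reference block, determined by the hidden bit $B$ --- in the top $K$. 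Verifying that this matrix is SST is where care is needed (this is where $p_i=\tfrac12$ and the placement of planted versus unplanted references inside the chain are used). Then $r_{min}(S_C)=\Theta(r_{min}(C_S))$ --- an algorithm that recovers the structure cheaply and then runs the optimal \domination\ algorithm on the embedded instance matches the lower bound --- and, given any \topk\ algorithm $A$, simulating the embedding (feeding $A$ the true \domination\ data on the $\{a,b\}$-versus-reference pairs and fresh randomness elsewhere) yields a \domination\ algorithm $A'$ with $r_{min}(C_S,A')\leq r_{min}(S_C,A)$. Combining: if $A$ were better than $\tilde\Omega(\sqrt n)$-competitive on all of \topk\ then $A'$ would beat the \domination\ lower bound, a contradiction; hence some $S_C$ satisfies $r_{min}(S_C,A)\geq\Omega\!\big(\sqrt n/\log n\big)\,r_{min}(S_C)$. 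I expect this embedding --- simultaneously keeping the matrix SST, the embedded \domination\ instance hard, and the \topk\ instance no harder than it --- to be the main obstacle.
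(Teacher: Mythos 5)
Your overall architecture matches the paper's: a planted-sparse-set hard distribution for \domination\ with planted size $m=\Theta(\sqrt n)$ (the paper plants each coordinate independently with probability $\gamma=1/(100\sqrt n)$ rather than fixing $|S|=m$, which is immaterial), per-instance easiness via the counting estimator restricted to the planted coordinates, an information-theoretic argument that no algorithm can exploit the planted coordinates without locating them, and finally an embedding of the two \domination\ rows as rows $k$ and $k+1$ of an SST matrix. Your $\chi^2$/second-moment route to the distributional hardness is a legitimate alternative to the paper's coordinate-wise mutual-information-plus-Fano argument (Lemmas \ref{lem:infdecomp} and \ref{lem:inflb}) and, done carefully, can even avoid the $\log n$ loss; two small cautions there: the instance-optimal upper bound $r_{min}(C_S)=O(1/(m\delta^2))$ needs the estimator to threshold $\sum_{i\in S}\sum_j(X_{i,j}-Y_{i,j})$, not the sum over all $i$ (which only gives $O(n/(m^2\delta^2))$), and your two-regime accounting of $\E[\exp(\Theta(r\delta^2)|S\cap S'|)]$ omits the intermediate overlaps --- use the negative-association bound $\E[e^{t|S\cap S'|}]\le\exp\bigl(\tfrac{m^2}{n}(e^t-1)\bigr)$, which at $m=\Theta(\sqrt n)$ forces $t=O(1)$ and still delivers the $\Omega(\sqrt n)$ ratio.

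The genuine gap is the SST embedding, which you correctly identify as the main obstacle but for which your sketch points in a direction that fails. SST forces row-wise domination, hence (by skew-symmetry) each row is monotone in the column rank; in particular the row of $b$ restricted to the reference block, read in rank order, is $\bq$, so $\bq$ must be monotone in rank. Your instance has $q_i\in\{\tfrac12,\tfrac12-\delta\}$ in an arbitrary pattern determined by $S$, so the only way to satisfy SST is to make the planted references exactly a prefix of the reference ranking. But then $S$, as a set of ranks, is deterministic, and with a ``constant bias'' chain among the references any algorithm can sort the references with $O(\log n)$ samples, identify the planted ones, restrict to them, and solve the embedded \domination\ instance with $O(1/(m\delta^2))$ samples --- the competitive ratio collapses to $\tilde O(1)$ and the lower bound evaporates. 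The randomness that your $\chi^2$ argument needs (the algorithm cannot locate $S$) is precisely what the constant-bias chain destroys. The paper's fix is the content of its Section 8.3: give the references rank-dependent baselines $R_i=\tfrac14+\tfrac{i}{8n}$ whose consecutive gaps $\tfrac{1}{8n}$ strictly exceed the perturbation (forcing $\eps=1/(100n^2)$, hence hard instances with $r_{min}\sim n^{3.5}$), so that $\bp$ and $\bq$ are monotone for \emph{every} realization of the planted sets, and make all reference-versus-reference comparisons exactly $\tfrac12$ so that nothing outside the two embedded rows leaks the planted coordinates. Without this (or an equivalent) modification of your flat $p_i=\tfrac12$ instance, the transfer from \domination\ to \topk\ does not go through.
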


In comparison, for the counting algorithm $A'$ of \cite{Shah15Sim}, there exist instances $S$ such that $r_{min}(S, A') \geq \tilde{\Omega}(n)r_{min}(S)$. For example, consider the instance $S = (n,k,\bP)$ with

\[
\bP = \begin{bmatrix}
\frac{1}{2}&\frac{1}{2}+\varepsilon&\cdots&\cdots &\frac{1}{2}+\varepsilon\\
\frac{1}{2}-\varepsilon & & &  & \vdots \\
\vdots& \ddots&\ddots&\ddots & \vdots\\
\vdots & & & & \frac{1}{2}+\varepsilon \\
\frac{1}{2}-\varepsilon & \cdots & \cdots &\frac{1}{2}-\varepsilon & \frac{1}{2}
\end{bmatrix}
\]

It is straightforward to show that with $\Theta(\log n/ \varepsilon^2)$ samples, we can learn all pairwise comparisons correctly with high probability by taking a majority vote, and therefore even sort all the elements correctly. This implies that $r_{min}(S) =  O(\log n/ \varepsilon^2)$. On the other hand, we show in Corollary \ref{cor:countingfails} that $r_{min}(S,A') =\Omega(n/ \varepsilon^2)$ when $\varepsilon < 1/10$.
%\jonnote{say something about domination}

%We briefly summarize our main results as follows. Under the oracle case that $\bP$ is known to the algorithm, let $r_{\min}$ be the minimum sample complexity for any algorithm to solve \topk$(n,k,\bP,r)$  with high probability (w.h.p.).  Using $r_{\min}$ as the benchmark, our main results include the following upper and lower bound results.
%\begin{enumerate}
%\item Upper bound result:  Theorem \ref{thm:algmain} together with Lemma \ref{lem:topk_lb} show that we have an algorithm uses at most $\tilde{O}(\sqrt{n} \cdot r_{min})$ samples to solve \topk$(n,k,\bP,r)$ w.h.p. on any SST instance.
%\item Lower bound result: Theorem \ref{thm:sstlb} shows that for any algorithm, there exists a SST instance that it needs at least $\tilde{\Omega}(\sqrt{n} \cdot r_{min})$ samples to solve \topk$(n,k,\bP,r)$ w.h.p.
%\end{enumerate}

%Our key observation is that \topk$(n,k,\bP,r)$ can be reduced to the following \emph{domination} problem.

%\xnote{Xi: more comparisons to the counting procedure in \cite{Shah15Sim}.}

%We can start another subsection called "main techniques" to discuss about our results in \domination ~and state that there are reductions between \domination~ and \topk.

\subsection{Main Techniques and Overview}
We prove our main results by first proving similar results for \domination~which we defined in Definition~\ref{def:dom}. Intuitively $\domination$ captures the main hardness of \topk\ while being much simpler to analyze. Once we prove upper bound and lower bounds for the sample complexity of \domination, we will use reductions to prove analogous results for \topk.

We begin in Section~\ref{sec:domlb}, by proving a general lower bound on the sample complexity of domination. Explicitly, for a given instance $C = (n, \bp, \bq)$ of $\domination$, we show that $r_{min}(C)\ge \Omega(1/\I(\bp,\bq))$ where $\I(\bp,\bq)$ is the amount of information we can learn about the bit $B$ from one sample of pairwise comparison in each of the coordinates. 

In Section \ref{sec:d2}, we proceed to design algorithms for $\domination$ restricted to instances $C = (n,\bp,\bq)$ where $\delta \leq p_i,q_i \leq 1- \delta$ for some constant $0<\delta\le 1/2$. In this regime $\I(\bp,\bq)=\Theta(1/\norm{\bp-\bq}_2^2)$, which makes it easier to argue our algorithms are not too bad compared with the optimal one. We first consider an algorithm we call the counting algorithm $\A_{count}$ (Algorithm~\ref{alg:counting}), which is a \domination~analogue of the counting algorithm proposed by \cite{Shah15Sim}. We show that $\A_{count}$ has a competitive ratio of $\tilde{\Theta}(n)$. % i.e. for any $C$ in the special regime, $r_{min}(C,A_1) = \tilde{O}(n)\cdot r_{min}(C)$ and there exists an instance $C$ in the regime that  $r_{min}(C,A_1) = \tilde{\Omega}(n)\cdot r_{min}(C)$. 
Intuitively, the main reason $\A_{count}$ fails is that $\A_{count}$ tries to consider samples from different coordinates equally important even when they are sampled from a very unbalanced distribution (for example, $p_1 \neq q_1, p_2=q_2,...,p_n=q_n$). We then consider another algorithm we call the max algorithm $\A_{max}$ (Algorithm~\ref{alg:max}) which simply finds $i'= \max_i |\sum_{j=1}^r (X_{i,j}-Y_{i,j})|$ and outputs $B$ according the sign of $\sum_{j=1}^r (X_{i',j}-Y_{i',j})$. We show $\A_{max}$ also has a competitive ratio of $\tilde{\Theta}(n)$. Interestingly, $\A_{max}$ fails for a different reason from $\A_{count}$, namely that $\A_{max}$ does not use the information fully from all coordinates when the samples are sampled from a very balanced distribution. In fact, $\A_{count}$ performs well whenever $\A_{max}$ fails and vice versa. We therefore show how combine $\A_{count}$ and $\A_{max}$ in two different ways to get two new algorithms: $\A_{comb}$ (Algorithm~\ref{alg:combining}) and $\A_{cube}$ (Algorithm~\ref{alg:sumofcubes}). We show that both of these new algorithms have a competitive ratio of $\tilde{O}(\sqrt{n})$, which is tight by Theorem~\ref{thm:domlb}.

In Section \ref{sec:domination_general}, we design algorithms for \domination~in the general regime. In this regime, $\I(\bp,\bq)$ can be much larger than $\norm{\bp-\bq}_2^2$, particularly for values of $p_i$ and $q_i$ very close to $0$ or $1$. In these corner cases, the counting algorithm $\A_{count}$ and max algorithm $\A_{max}$ can fail very badly; we will show that even for fixed $n$, their competitive ratios can grow arbitrarily large (Lemma~\ref{lem:countingfails2} and Lemma~\ref{lem:maxfails2}). One main reason for this failure is that, even when $|p_i -q_i| < |p_j-q_j|$, samples from coordinate $i$ could convey much more information than the samples from coordinate $j$ (consider, for example, $p_i =\varepsilon/2, q_i =0$, and $p_j= 1/2+\varepsilon, q_j =1/2$). Taking this into account, we design a new algorithm $\A_{coup}$ (Algorithm~\ref{alg:gcoupling}) which has a competitive ratio of $\tilde{O}(\sqrt{n})$ in the general regime. The new algorithm still combines features from both $\A_{count}$ and $\A_{max}$, but also better estimates the importance of each coordinate. To estimate how much information each coordinate has, the new algorithm divides the samples into $\Theta(\log n)$ groups and checks how often samples from coordinate $i$ are consistent with themselves. If one coordinate has a large proportion of the total information, it uses samples from that coordinate to decide $B$, otherwise it takes a majority vote on samples from all coordinates. 

In Section \ref{sec:topk}, we return to $\topk$ and present an algorithm that has a competitive ratio of $\tilde{O}(\sqrt{n})$, thus proving Theorem~\ref{thm:ubintro}. Our algorithm works by reducing the \topk\ problem to several instances of the \domination\ problem (see Theorem~\ref{thm:main}). At a high level, the algorithm tries to find the top $k$ rows by pairwise comparisons of rows, each of which can be thought of as an instance of $\domination$. We use algorithm $\A_{coup}$ to solve these \domination\ instances. Since we only need to make at most $n^2$ comparisons, if $\A_{coup}$ outputs the correct answer with at least $1-\frac{\eps}{n^2}$ probability for each comparison, then by union bound all the comparisons will be correct with probability at least $1-\eps$. However, to find the top $k$ rows, we do not actually need to compare all the rows to each other; Lemma~\ref{lem:topktournament} shows that we can find the top $k$ rows with high probability while making only $O(n)$ comparisons. Using this lemma, we get a linear time algorithm for solving \topk. Finally in Lemma~\ref{lem:topk_lb}, we extend the lower bound for \domination\ proved in Lemma~\ref{lem:dom_lb} to show a lower bound on the number of samples any algorithm would need on a specific instance of $\topk$. Combining these results, we prove Theorem~\ref{thm:ubintro}.

Finally, in Section \ref{sec:lowerbound}, we show that the algorithms for both $\domination$ and $\topk$ presented in the previous sections have the optimal competitive ratio (up to polylogarithmic factors). Specifically, we show that for any algorithm $A$ solving $\domination$, there exists an instance $C$ of domination where $r_{min}(C, A) \geq \tilde{\Omega}(\sqrt{n})r_{min}(C)$ (Theorem \ref{thm:domlb}). We accomplish this by constructing a distribution $\mathcal{C}$ over instances of $\domination$ such that each instance in the support of this distribution can by solved by an algorithm with low sample complexity (Theorem~\ref{thm:hardub}) but any algorithm that succeeds over the entire distribution requires $\tilde{\Omega}(\sqrt{n})$ times more samples (Theorem~\ref{thm:harddist}). We then embed $\domination$ in $\topk$ (similarly as in Section \ref{sec:topk}) to show an analogous $\tilde{\Omega}(\sqrt{n})$ lower bound for $\topk$ (Theorem ~\ref{thm:sstlb}).

%-----------------------------------------------------------------------------------------------------------------------------

\section{Lower bounds on the sample complexity of domination}
\label{sec:domlb}
%!TEX root =  main.tex

We start by establishing lower bounds on the number of samples $r_{min}(C)$ needed by any algorithm to succeed with constant probability on a given instance $C = (n, \bp, \bq)$ of $\domination$. This is controlled by the quantity $\I(\bp,\bq)$, which is the amount of information we can learn about the bit $B$ given one sample of pairwise comparison between each of the coordinates of $\bp$ and $\bq$. 

\begin{definition}
Given $0\le p,q \le 1$, define $$\I(p,q)=\left(p(1-q)+q(1-p)\right)\left(1-H\left(\frac{p(1-q)}{p(1-q)+q(1-p)}\right)\right).$$
Given $\bp=(p_1,\cdots,p_n)\in [0,1]^n,\bq=(q_1,\cdots,q_n)\in [0,1]^n$, define $$\I(\bp,\bq)=\sum_{i=1}^n \I(p_i,q_i).$$
\end{definition}

%\jonnote{restate in terms of $r_{min}$?}
\begin{lemma}
\label{lem:dom_lb}
Let $C = (n, \bp, \bq)$ be an instance of $\domination$. Then $r_{min}(C) \geq 0.05/\I(\bp, \bq)$. %If $r \leq 0.05/\I(\bp,\bq)$, then no algorithm can successfully solve \domination$(n,\bp,\bq,r)$ with probability at least $2/3$ even if $\bp$ and $\bq$ are known to the algorithm.
\end{lemma}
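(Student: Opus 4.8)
The plan is to lower-bound $r_{min}(C)$ by an information-theoretic argument: since the algorithm must recover the uniformly random bit $B$, Fano's inequality (or the equivalent two-point / data-processing argument) tells us that the mutual information between $B$ and the observed samples must be at least a constant (roughly $1 - H(3/4)$, using that the algorithm succeeds with probability $3/4$). So the key step is to upper-bound the mutual information $I(B; \{X_{i,j}, Y_{i,j}\})$ in terms of $r$ and $\I(\bp,\bq)$, and then solve for $r$. First I would note that conditioned on $B$, all the samples are independent across coordinates $i$ and across repetitions $j$, and that for a fixed coordinate $i$ the pair $(X_{i,j},Y_{i,j})$ in the two cases $B=0$ vs $B=1$ is just $(\Ber(p_i),\Ber(q_i))$ vs $(\Ber(q_i),\Ber(p_i))$. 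By the chain rule and subadditivity of mutual information under this product structure, $I(B; \text{samples}) \le \sum_{i=1}^n r \cdot I(B; (X_{i,1},Y_{i,1}))$, so it suffices to show that $I(B;(X_{i,1},Y_{i,1})) \le \I(p_i,q_i)$ for each $i$ — i.e., $\I(p,q)$ as defined is exactly (an upper bound on) the per-sample, per-coordinate mutual information with the swap bit.

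The second step is therefore to identify $\I(p,q)$ with the mutual information $I(B; (X,Y))$ where $B\sim\Ber(1/2)$, $(X,Y)\sim(\Ber(p),\Ber(q))$ if $B=0$ and $(\Ber(q),\Ber(p))$ if $B=1$. Here the useful observation is that $(X,Y)$ is only informative about $B$ on the event $X\neq Y$: when $X=Y$ the pair is symmetric in $p,q$ and carries no information about which of $p,q$ was used for which coordinate. The event $X\neq Y$ has probability $p(1-q)+q(1-p)$ (same under both values of $B$), and conditioned on $X\neq Y$ the bit $B$ is a biased coin with bias $\frac{p(1-q)}{p(1-q)+q(1-p)}$. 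Hence $I(B;(X,Y)) = \Pr[X\neq Y]\cdot\big(1 - H(\Pr[B=0 \mid X\neq Y])\big)$, which is precisely the stated formula for $\I(p,q)$. (One should double-check that conditioning on the uninformative event $X=Y$ contributes nothing — this follows since $B$ is independent of whether $X=Y$, and given $X=Y$ is still uniform.)

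Putting these together: $c \le I(B;\text{samples}) \le r\,\I(\bp,\bq)$ for an appropriate absolute constant $c>0$ derived from Fano with success probability $3/4$ (for binary $B$, $H(B\mid\text{samples}) \le H(1/4)$, so $I(B;\text{samples}) \ge 1 - H(1/4) \approx 0.189$), which rearranges to $r \ge c/\I(\bp,\bq)$; tracking constants carefully should yield the claimed $0.05/\I(\bp,\bq)$. The main obstacle I anticipate is purely bookkeeping: verifying the identity $I(B;(X,Y)) = \I(p,q)$ cleanly (handling the boundary cases $p=q$, $p=1$, $q=0$, etc., where the conditional entropy term is $0/0$ but $\Pr[X\neq Y]$ may vanish, so the product is $0$), and chasing the exact numerical constant through Fano's inequality so that $1 - H(1/4)$ comfortably exceeds $0.05$ even after any slack introduced by the subadditivity step. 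No step is conceptually deep; the content is entirely in setting up the right decomposition of the per-coordinate mutual information.
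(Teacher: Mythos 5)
Your proposal is correct and follows essentially the same route as the paper's proof: subadditivity of mutual information across the $nr$ independent sample pairs via the chain rule, the identification of $\I(p,q)$ with the per-pair mutual information $I(B;(X,Y))$, and Fano's inequality to convert the information bound into a lower bound on $r$ (the constant $0.05$ is indeed comfortably below $1-H(1/4)$). The only cosmetic difference is that you evaluate $I(B;(X,Y))$ by conditioning on the informative event $X\neq Y$, while the paper evaluates the same quantity as a mixture of KL divergences; both yield the stated formula.
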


\begin{proof}
The main idea is to bound the mutual information between the samples and the correct output, and then apply Fano's inequality.
Let $\bp=(p_1,\cdots,p_n)$ and $\bq=(q_1,\cdots,q_n)$.
Recall that $B$ indicates the correct output and that $X_{1,1}, X_{1,2}, ... ,X_{n,r}, Y_{1,1}, ...,Y_{n,r}$ are the samples given to the algorithm.  By Fact \ref{fact:cr},
\[
I(B;X_{1,1}, X_{1,2}, ... ,X_{n,r}, Y_{1,1}, ...,Y_{n,r}) = I(B;X_{1,1}Y_{1,1}) + I(B;X_{1,2}, ... ,X_{n,r}, Y_{1,2}, ...,Y_{n,r}|X_{1,1}Y_{1,1}).
\]
When $\bp$, $\bq$ and $B$ are given, each sample ($X_{i,j}$ or $Y_{i,j}$) is independent of the other samples, and thus $I(X_{1,1} Y_{1,1}; X_{1,2}, ... ,X_{n,r}, Y_{1,2}, ...,Y_{n,r} |B )= 0$. By Fact \ref{fact:it1}, we then have
\[
I(B;X_{1,2}, ... ,X_{n,r}, Y_{1,2}, ...,Y_{n,r}|X_{1,1}Y_{1,1}) \leq I(B;X_{1,2}, ... ,X_{n,r}, Y_{1,2}, ...,Y_{n,r})
\]
and therefore
\[
I(B;X_{1,1}, X_{1,2}, ... ,X_{n,r}, Y_{1,1}, ...,Y_{n,r}) \leq I(B;X_{1,1}Y_{1,1}) + I(B;X_{1,2}, ... ,X_{n,r}, Y_{1,2}, ...,Y_{n,r}).
\]
Repeating this, we get
\[
I(B;X_{1,1}, X_{1,2}, ... ,X_{n,r}, Y_{1,1}, ...,Y_{n,r}) \leq \sum_{i=1}^n\sum_{j =1}^r I(B;X_{i,j}Y_{i,j}).
\]
By Fact \ref{fact:div}, we have
\begin{eqnarray*}
&&I(B;X_{i,j}Y_{i,j}) \\
&=& \Pr[B = 0] \cdot D(X_{i,j}Y_{i,j} | B = 0\| X_{i,j}Y_{i,j}) + \Pr[B = 1] \cdot D(X_{i,j}Y_{i,j} | B = 1 \| X_{i,j}Y_{i,j}) \\
&=& \left(p_i(1-q_i)+q_i(1-p_i)\right)\left(1-H\left(\frac{p_i(1-q_i)}{p_i(1-q_i)+q_i(1-p_i)}\right)\right)\\ 
&=& \I(p_i,q_i). 
\end{eqnarray*}
It follows that
\[
I(B;X_{1,1}, X_{1,2}, ... ,X_{n,r}, Y_{1,1}, ...,Y_{n,r}) \leq \sum_{i=1}^n\sum_{j =1}^r I(B;X_{i,j}Y_{i,j}) = r \cdot \sum_{i=1}^n \I(p_i,q_i) = r\I(\bp,\bq).
\]
For any algorithm, let $p_e$ be its error probability on \domination$(n,\bp,\bq,r)$. By Fano's inequality, we have that
\begin{eqnarray*}
H(p_e) &\geq& H(B|X_{1,1}, X_{1,2}, ... ,X_{n,r}, Y_{1,1}, ...,Y_{n,r})\\
&=& H(B) - I(B;X_{1,1}, X_{1,2}, ... ,X_{n,r}, Y_{1,1}, ...,Y_{n,r})\\
&=&1  - r\I(\bp,\bq) \geq  0.95.
\end{eqnarray*}
Since $H(p_e) \geq 0.95$, we find that $p_e \geq 1/4$, as desired. 
\end{proof}

In the following section, we will concern ourselves with instances $C = (n, \bp, \bq)$ that satisfy $\delta \leq p_i, q_i \leq 1-\delta$ for some constant $\delta$ for all $i$. For such instances, we can approximate $\I(p, q)$ by the $\ell_2$ distance between $\bp$ and $\bq$. 

\begin{lemma}
\label{lem:Ipq_approx_square}
For some $0<\delta\le \frac{1}{2}$, let $\delta\le p,q\le 1-\delta$. Then  $$\frac{1}{4\ln 2} (p-q)^2\le \I(p,q)\le \frac{1}{\delta\ln 2} (p-q)^2.$$
\end{lemma}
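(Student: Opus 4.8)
The plan is to analyze the function $\I(p,q)$ directly for $p \geq q$ with both bounded away from $0$ and $1$, and compare it to $(p-q)^2$. Write $a = p(1-q)$ and $b = q(1-p)$, so that $a \geq b \geq 0$ and $a - b = p - q$. Then $\I(p,q) = (a+b)\bigl(1 - H(\tfrac{a}{a+b})\bigr)$, where $H$ is the binary entropy in bits. Setting $s = a+b$ and $t = \tfrac{a-b}{a+b} = \tfrac{p-q}{a+b}$, we have $\I(p,q) = s\bigl(1 - H(\tfrac{1+t}{2})\bigr)$. The key is the elementary two-sided bound on the entropy deficit near $1/2$: there are absolute constants such that $\tfrac{1}{2\ln 2}\, t^2 \le 1 - H\!\left(\tfrac{1+t}{2}\right) \le \tfrac{1}{\ln 2}\, t^2$ for all $t \in [-1,1]$. (The lower bound follows from a Taylor expansion with Lagrange remainder, using that the second derivative of $x \mapsto 1-H(x)$ at any point in $(0,1)$ is at least $\tfrac{4}{\ln 2}$; the upper bound likewise from bounding that second derivative, or from the standard $\chi^2 \geq \mathrm{KL}$ / Pinsker-type estimates.)

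Plugging this in gives $\tfrac{1}{2\ln 2}\cdot \tfrac{(p-q)^2}{a+b} \le \I(p,q) \le \tfrac{1}{\ln 2}\cdot\tfrac{(p-q)^2}{a+b}$, so everything reduces to bounding $s = a+b = p(1-q) + q(1-p) = p + q - 2pq$ from above and below. For the upper bound on $\I$ we need a lower bound on $s$: since $p, q \ge \delta$ and $1 - q, 1 - p \ge \delta$, we get $s = p(1-q) + q(1-p) \ge \delta\cdot\delta + \delta\cdot\delta$... more carefully, $p(1-q) \ge \delta \cdot \delta$ is too weak; instead note $s \ge \max(p(1-q), q(1-p)) \ge \delta(1-\delta) \cdot$(something) — actually the clean bound is $s = p+q-2pq \ge 2\delta(1-\delta) \geq \delta$ when... let me instead just use $s \ge p(1-q) \ge \delta^2$ won't give the stated constant. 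The right move: $s \geq 2\sqrt{ab} \cdot$ is wrong direction; use instead that since the two terms sum and one of $p(1-q), q(1-p)$ is at least $\delta(1-\delta)$... Hmm, take $q \ge \delta$ and $1 - p \ge \delta$, so $q(1-p) \ge \delta^2$, hence $s \ge \delta^2$. That gives $\I(p,q) \le \tfrac{(p-q)^2}{\delta^2 \ln 2}$, slightly weaker than claimed; to recover $\tfrac{1}{\delta \ln 2}$ one observes $s \ge p(1-q) + q(1-p) \ge \delta$ directly since, e.g., if $p \ge 1/2$ then $p(1-q) \ge \tfrac12 \cdot \delta$ — wait that needs $1-q \ge \delta$, true — giving $s \ge \delta/2$; a small-constant nuisance that one reconciles by being slightly more careful (splitting on whether $p \geq 1/2$ or $q \leq 1/2$). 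For the lower bound on $\I$ we need an upper bound on $s$: trivially $s = p(1-q)+q(1-p) \le \tfrac14 + \tfrac14 = \tfrac12$, which immediately yields $\I(p,q) \ge \tfrac{1}{2\ln 2}\cdot\tfrac{(p-q)^2}{1/2} = \tfrac{(p-q)^2}{\ln 2}$; this is in fact stronger than the claimed $\tfrac{1}{4\ln 2}(p-q)^2$, so there is slack on this side.

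The main obstacle — such as it is — is purely bookkeeping of constants: making the entropy-deficit inequality $c_1 t^2 \le 1 - H(\tfrac{1+t}{2}) \le c_2 t^2$ precise with $c_1 = \tfrac{1}{2\ln 2}$, $c_2 = \tfrac{1}{\ln 2}$ over the whole range $t\in[-1,1]$ (the endpoints are the worst case for one of the two inequalities, so one checks monotonicity of $t \mapsto (1-H(\tfrac{1+t}{2}))/t^2$), and then chasing the bound $\delta \lesssim s \le \tfrac12$ with enough care to land exactly on the stated $\tfrac{1}{4\ln 2}$ and $\tfrac{1}{\delta\ln 2}$ constants (both of which, per the above, have room to spare). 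No conceptual difficulty is expected; I would present it as: (i) reduce to the entropy-deficit bound via the substitution above, (ii) prove the entropy-deficit bound, (iii) bound $a+b$ on both sides using $\delta \le p,q \le 1-\delta$.
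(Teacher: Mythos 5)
Your overall route is the paper's: substitute $x=p(1-q)$, $y=q(1-p)$, write $\I(p,q)=s\bigl(1-H(\tfrac{1+t}{2})\bigr)$ with $s=x+y$ and $t=(p-q)/s$, bound the entropy deficit by constants times $t^2$, and then bound $s$ on both sides. The entropy-deficit inequalities you state are correct (the lower constant $\tfrac{1}{2\ln 2}$ follows from the series $\tfrac{1+u}{2}\ln(1+u)+\tfrac{1-u}{2}\ln(1-u)=\sum_{k\ge1}\tfrac{u^{2k}}{2k(2k-1)}\ge u^2/2$; the paper instead invokes its Fact on KL divergence and gets $\tfrac{1}{4\ln 2}t^2$, which is why its final lower constant is $\tfrac{1}{4\ln2}$). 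However, both places where you defer the "bookkeeping" are exactly where the proof lives, and one step as written is false: the claim $s=p(1-q)+q(1-p)\le\tfrac14+\tfrac14=\tfrac12$ confuses $p(1-q)$ with $p(1-p)$. For $p=0.9$, $q=0.1$ one has $s=0.82$, and your asserted consequence $\I(p,q)\ge\tfrac{1}{\ln2}(p-q)^2$ also fails there ($\I\approx0.742$ while $(p-q)^2/\ln2\approx0.923$). The correct trivial bound is $s=1-pq-(1-p)(1-q)\le1$, which together with your constant $\tfrac{1}{2\ln2}$ (or the paper's $\tfrac{1}{4\ln2}$) still gives the stated lower bound, so this is repairable but must be repaired.

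The second gap is the lower bound $s\ge\delta$ needed for the upper bound on $\I$. Here there is no room to spare: the entropy-deficit upper constant is $\tfrac{1}{\ln2}$, so you need $s\ge\delta$ exactly to land on $\tfrac{1}{\delta\ln2}(p-q)^2$, and neither of your attempts ($s\ge\delta^2$, nor the split giving $s\ge\delta/2$ from a single term) reaches it. The missing step is AM--GM:
$$s=p(1-q)+q(1-p)\ \ge\ 2\sqrt{p(1-q)\,q(1-p)}\ =\ 2\sqrt{p(1-p)\,q(1-q)}\ \ge\ 2\delta(1-\delta)\ \ge\ \delta,$$
using $x(1-x)\ge\delta(1-\delta)$ for $x\in[\delta,1-\delta]$ and $1-\delta\ge\tfrac12$. (Alternatively, $s$ is linear in $p$ for fixed $q$, hence minimized at $p\in\{\delta,1-\delta\}$, and both endpoints give $s\ge2\delta(1-\delta)$.) This is precisely the inequality the paper uses. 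With these two repairs your argument closes and coincides with the paper's proof.
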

\begin{proof}
Let $x=p(1-q)$ and $y=q(1-p)$. Then $\I(p,q)=(x+y)(1-H(\frac{x}{x+y}))$ and $p-q=x-y$. We need to show that
$$(x+y)\left(1-H\left(\frac{x}{x+y}\right)\right)\le \frac{1}{\delta\ln 2} (x-y)^2.$$
By Fact~\ref{fact:repinsker}, $$\frac{1}{\ln 2}z^2\le 1-H\left(\frac{1}{2}+z\right)=D\left(\frac{1}{2}+z \biggr|\biggr| \frac{1}{2}\right)\le \frac{4}{\ln 2}z^2,$$ 
\noindent
and therefore
$$\frac{1}{4\ln 2} \frac{(x-y)^2}{(x+y)}\le (x+y)\left(1-H\left(\frac{x}{x+y}\right)\right)\le  \frac{1}{\ln 2} \frac{(x-y)^2}{(x+y)}.$$
\noindent
Since
$$x+y=p(1-q)+q(1-p)\ge 2\sqrt{p(1-p)q(1-q)}\ge 2\delta (1-\delta)\ge \delta,$$ 
\noindent
this implies the desired upper bound. The lower bound also holds since,
$$x+y=p(1-q)+q(1-p)\le \sqrt{p^2+(1-p)^2}\cdot \sqrt{q^2+(1-q)^2} \le \delta^2+(1-\delta)^2\le 1.$$
\end{proof}

%\jonnote{restate in terms of $r_{min}$?}
\begin{corollary}
\label{cor:dom_lb_half}
Let $C = (n, \bp, \bq)$ be an instance of $\domination$ satisfying $\delta\le \bp_i,\bq_i\le 1-\delta$ for all $i \in [n]$. Then $$r_{min}(C) \geq 0.05 \ln(2) \cdot \frac{\delta}{\norm{\bp-\bq}_2^2}.$$

%Consider \domination$(n,\bp,\bq,r)$ where for every $i\in [n]$, $\delta\le \bp_i,\bq_i\le 1-\delta$ for some $0 < \delta\le \frac{1}{2}$. If $$r \leq 0.05\cdot \ln(2) \cdot \frac{\delta}{\norm{\bp-\bq}_2^2},$$ then no algorithm can succeed to solve \domination$(n,\bp,\bq,r)$ with probability $2/3$ even if $\bp$ and $\bq$ are given to the algorithm.
\end{corollary}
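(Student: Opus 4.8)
The plan is to simply chain together the two results that immediately precede the statement: the general information-theoretic lower bound of Lemma~\ref{lem:dom_lb} and the estimate on $\I(p,q)$ in the bounded regime given by Lemma~\ref{lem:Ipq_approx_square}. First I would invoke Lemma~\ref{lem:dom_lb} to get $r_{min}(C) \geq 0.05/\I(\bp,\bq)$, so it suffices to upper bound $\I(\bp,\bq)$.

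Next I would bound $\I(\bp,\bq)$ coordinatewise. Since the instance $C$ satisfies $\delta \le p_i, q_i \le 1-\delta$ for every $i \in [n]$, Lemma~\ref{lem:Ipq_approx_square} applies to each pair $(p_i, q_i)$ and yields $\I(p_i,q_i) \le \frac{1}{\delta \ln 2}(p_i - q_i)^2$. Summing this over $i$ and using the definition $\I(\bp,\bq) = \sum_{i=1}^n \I(p_i,q_i)$ gives
\[
\I(\bp,\bq) \;\le\; \frac{1}{\delta \ln 2} \sum_{i=1}^n (p_i - q_i)^2 \;=\; \frac{1}{\delta \ln 2}\norm{\bp-\bq}_2^2.
\]

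Finally, substituting this upper bound on $\I(\bp,\bq)$ into the inequality $r_{min}(C) \geq 0.05/\I(\bp,\bq)$ from Lemma~\ref{lem:dom_lb} gives
\[
r_{min}(C) \;\geq\; \frac{0.05}{\I(\bp,\bq)} \;\geq\; 0.05 \ln(2) \cdot \frac{\delta}{\norm{\bp-\bq}_2^2},
\]
which is exactly the claimed bound. There is no real obstacle here — the corollary is a direct composition of Lemma~\ref{lem:dom_lb} and Lemma~\ref{lem:Ipq_approx_square}, the only minor point being to check that the hypothesis $\delta \le p_i, q_i \le 1-\delta$ is precisely what is needed to apply the coordinatewise estimate and that the constants multiply out correctly.
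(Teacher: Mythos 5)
Your proof is correct and is essentially identical to the paper's own argument: the paper likewise applies Lemma~\ref{lem:Ipq_approx_square} coordinatewise to get $\I(\bp,\bq)\le \norm{\bp-\bq}_2^2/(\delta\ln 2)$ and then substitutes this into Lemma~\ref{lem:dom_lb}.
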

\begin{proof}
By Lemma~\ref{lem:Ipq_approx_square}, $\I(\bp,\bq)\le \norm{\bp-\bq}_2^2/(\delta \ln 2)$. The result then follows from Lemma~\ref{lem:dom_lb}.
\end{proof}

\section{Domination in the well-behaved regime}
\label{sec:d2}
%!TEX root =  main.tex

We now proceed to the problem of designing algorithms for $\domination$ which are competitive on all instances. As a warmup, we begin by considering only instances $C = (n, \bp, \bq)$ of $\domination$ satisfying $\delta \leq p_i,q_i \leq 1- \delta$ for all $i\in [n]$ where $0<\delta\le 1/2$ is some fixed constant. This regime of instances captures much of the interesting behavior of $\domination$, but with the added benefit that the mutual information between the samples and $B$ behaves nicely in this regime: in particular $\I(\bp,\bq) = \Theta( \norm{\bp-\bq}_2^2)$ (see Lemma~\ref{lem:Ipq_approx_square}). By Corollary~\ref{cor:dom_lb_half}, we have $r_{min}\ge \Omega(\frac{1}{\norm{\bp-\bq}_2^2})$. This fact will make it easier to design algorithms for $\domination$ which are competitive in this regime. 

 In Section \ref{d2countmax}, we give two simple algorithms (counting algorithm and max algorithm) which can solve $\domination$ given $\tilde{O}(n/\norm{\bp-\bq}_2^2)$ samples which gives them a competitive ratio of $\tilde{O}(n)$. We will then show that this is tight, i.e. their competitive ratio is $\tilde{\Theta}(n)$ in Lemma~\ref{lem:countingfails} and Lemma~\ref{lem:maxfails}. While the sample complexities of these two algorithms are not optimal, they have the nice property that whenever one performs badly, the other performs well. In Section \ref{d2ub}, we show how to combine the counting algorithm and the max algorithm to give two different algorithms which can solve $\domination$ using only $\tilde{O}(\sqrt{n}/\norm{\bp-\bq}_2^2)$ samples i.e. they have a competitive ratio of $\tilde{O}(\sqrt{n})$. According to Theorem \ref{thm:domlb}, this is the best we can do up to polylogarithmic factors.

\subsection{Counting algorithm and max algorithm}
\label{d2countmax}
%!TEX root =  main.tex

We now consider two simple algorithms for \domination$(n,\bp,\bq)$, which we call the \emph{counting algorithm} (Algorithm \ref{alg:counting}) and the \emph{max algorithm} (Algorithm \ref{alg:max}) denoted by $\A_{count}$ and $\A_{max}$ respectively. We show that both algorithms require $\tilde{O}(\frac{n}{\norm{\bp-\bq}_2^2})$ samples to solve $\domination$ (Lemmas \ref{lem:counting} and \ref{lem:max}). By Corollary~\ref{cor:dom_lb_half}, we have $r_{min}\ge \Omega(\frac{1}{\norm{\bp-\bq}_2^2})$, leading to a $\tilde{O}(n)$ competitive ratio for these algorithms. We show in Lemma \ref{lem:countingfails} and Lemma \ref{lem:maxfails} that this is tight up to polylogarithmic factors i.e. their competitive ratio is $\tilde{\Theta}(n)$.
\begin{algorithm}[ht]
        \caption{The counting algorithm $\mathcal{A}_{count}$ for \domination$(n,\bp,\bq,r)$}
    \begin{algorithmic}[1]\label{alg:counting}
    	\FOR{$i=1$ to $n$}
        \STATE $S_i=\sum_{j=1}^r (X_{i,j} - Y_{i,j})$
        \ENDFOR
        \STATE $Z=\sum_{i=1}^n S_i$
        \STATE If $Z> 0$, output $B=0$. If $Z<0$, output $B=1$. If $Z =0$, output $B=0$ with probability $1/2$ and output $B=1$ with probability $1/2$. 
     \end{algorithmic}
\end{algorithm}

\begin{algorithm}[ht]
        \caption{The max algorithm $\mathcal{A}_{max}$ for \domination$(n,\bp,\bq,r)$}
    \begin{algorithmic}[1]\label{alg:max}
    	\FOR{$i=1$ to $n$}
        \STATE $S_i=\sum_{j=1}^r (X_{i,j} - Y_{i,j})$
        \ENDFOR
        \STATE $i' = \arg\max |S_i|$
        \STATE $Z = S_{i'}$ 
        \STATE If $Z> 0$, output $B=0$. If $Z<0$, output $B=1$. If $Z =0$, output $B=0$ with probability $1/2$ and output $B=1$ with probability $1/2$. 
     \end{algorithmic}
\end{algorithm}

Both the counting algorithm and the max algorithm begin by computing (for each coordinate $i$) the differences between the number of ones in the $X_{i,j}$ samples and $Y_{i, j}$ samples; i.e., we compute the values $S_{i} = \sum_{j=1}^{r} (X_{i, j} - Y_{i, j})$. The counting algorithm $\mathcal{A}_{count}$ decides whether to output $B=0$ or $B=1$ based on the sign of $\sum_{i} S_{i}$, whereas the max algorithm decides its output based on the sign of the $S_i$ with the largest absolute value. See Algorithms \ref{alg:counting} and \ref{alg:max} for detailed pseudocode for both $\mathcal{A}_{count}$ and $\mathcal{A}_{max}$.

We begin by proving upper bounds for the sample complexities of both $\mathcal{A}_{count}$ and $\mathcal{A}_{max}$. In particular, both $\mathcal{A}_{count}$ and $\mathcal{A}_{max}$ need at most $\tilde{O}(n)$ times as many samples as the best possible algorithm for any instance in this regime.

\begin{lemma}
\label{lem:counting}
Let $C = (n, \bp, \bq)$ be an instance of $\domination$. Then 

$$r_{min}(C, \mathcal{A}_{count}, 1-\alpha) \leq \frac{2n\ln (\alpha^{-1})}{\norm{\bp-\bq}_1^2}.$$

\noindent
If $C$ further satisfies $\delta \leq p_i, q_i \leq 1-\delta$ for all $i$ for some constant $\delta>0$, then

$$r_{min}(C, \mathcal{A}_{comb}) \leq O(n)r_{min}(C).$$
\end{lemma}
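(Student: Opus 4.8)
I need to prove two bounds on $\mathcal{A}_{count}$ (the statement labels the second algorithm $\mathcal{A}_{comb}$, but in context this is $\mathcal{A}_{count}$): a general tail bound showing $r_{min}(C, \mathcal{A}_{count}, 1-\alpha) \le 2n\ln(\alpha^{-1})/\norm{\bp-\bq}_1^2$, and then the competitive-ratio consequence $r_{min}(C, \mathcal{A}_{count}) \le O(n)\, r_{min}(C)$ in the bounded regime. The first bound is a Chernoff/Hoeffding argument; the second combines it with the lower bound from Corollary~\ref{cor:dom_lb_half}.

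\textbf{Step 1: the tail bound.} Condition on $B = 0$ (the case $B=1$ is symmetric). Then $Z = \sum_{i=1}^n S_i = \sum_{i=1}^n \sum_{j=1}^r (X_{i,j} - Y_{i,j})$ is a sum of $2nr$ independent random variables, each bounded in $[-1,1]$, with $\E[Z] = r\sum_{i=1}^n (p_i - q_i) = r\norm{\bp-\bq}_1$ (using $p_i \ge q_i$, so $|p_i - q_i| = p_i - q_i$). The algorithm errs (or ties) only if $Z \le 0$, i.e. if $Z - \E[Z] \le -r\norm{\bp-\bq}_1$. By Hoeffding's inequality over the $2nr$ terms each of range $2$,
\[
\Pr[Z \le 0] \le \exp\!\left(-\frac{(r\norm{\bp-\bq}_1)^2}{2 \cdot 2nr \cdot 1}\right)= \exp\!\left(-\frac{r\norm{\bp-\bq}_1^2}{4n}\right).
\]
(One has to be slightly careful with the constant — separating the $X$ and $Y$ contributions and applying Hoeffding to each, or noting each $(X_{i,j}-Y_{i,j})$ has variance at most... — but the cleanest route giving the stated constant $2$ rather than $4$ is to apply Hoeffding directly to the $nr$ variables $W_{i,j} = X_{i,j} - Y_{i,j} \in [-1,1]$, each of range $2$: $\Pr[Z \le 0] \le \exp(-2(r\norm{\bp-\bq}_1)^2/(nr \cdot 4)) = \exp(-r\norm{\bp-\bq}_1^2/(2n))$.) Setting this $\le \alpha$ and solving gives $r \ge 2n\ln(\alpha^{-1})/\norm{\bp-\bq}_1^2$, which is exactly the claimed bound; the tie case contributes only a further factor of $\tfrac12$ to the error and is absorbed.

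\textbf{Step 2: the competitive ratio.} Apply Step 1 with $\alpha = \tfrac14$ to get $r_{min}(C, \mathcal{A}_{count}) \le 2n\ln 4 / \norm{\bp-\bq}_1^2 = O(n)/\norm{\bp-\bq}_1^2$. Now assume $\delta \le p_i, q_i \le 1-\delta$ for all $i$. By Cauchy--Schwarz, $\norm{\bp-\bq}_1^2 \le n\norm{\bp-\bq}_2^2$, so $r_{min}(C, \mathcal{A}_{count}) \le O(n^2)/\norm{\bp-\bq}_2^2$ — that is only $O(n^2)$-competitive, which is too weak. The right move is the reverse: I want to relate $\norm{\bp-\bq}_1$ directly to $r_{min}(C)$, not to $\norm{\bp-\bq}_2$. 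But Corollary~\ref{cor:dom_lb_half} gives $r_{min}(C) \ge \Omega(\delta/\norm{\bp-\bq}_2^2) \ge \Omega(\delta/\norm{\bp-\bq}_1^2)$ (since $\norm{\bp-\bq}_2 \le \norm{\bp-\bq}_1$). Combining, $r_{min}(C, \mathcal{A}_{count}) \le O(n)/\norm{\bp-\bq}_1^2 \le O(n/\delta)\, r_{min}(C) = O(n)\, r_{min}(C)$, treating $\delta$ as a constant. This is the desired bound.

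\textbf{Main obstacle.} The only real subtlety is matching constants in the Hoeffding step and making sure the $Z = 0$ tie case is handled (it only helps, since a tie is resolved correctly with probability $\tfrac12$, so the true error probability is at most $\tfrac12\Pr[Z \le 0] \le \Pr[Z \le 0]$, and the bound stands). The conceptual point worth stating clearly is that the factor-$n$ loss comes precisely from the gap between $\norm{\bp-\bq}_1^2$ (what $\mathcal{A}_{count}$ is sensitive to) and $\norm{\bp-\bq}_2^2$ (what the information-theoretic optimum is sensitive to), and that in the worst case $\norm{\bp-\bq}_1^2 = \Theta(n)\norm{\bp-\bq}_2^2$ — foreshadowing the tightness claim in Lemma~\ref{lem:countingfails}. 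No step here is genuinely hard; it's a Chernoff bound plus a norm inequality plus citing the lower bound already proved.
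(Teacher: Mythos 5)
Your proof is correct and follows essentially the same route as the paper's: a Hoeffding/Chernoff bound on $Z=\sum_{i,j}(X_{i,j}-Y_{i,j})$ giving error probability $\exp(-r\norm{\bp-\bq}_1^2/2n)$ for the first part, and Corollary~\ref{cor:dom_lb_half} together with $\norm{\bp-\bq}_2^2\le\norm{\bp-\bq}_1^2$ for the second. Your handling of the tie case and your (correct) reading of the statement's $\mathcal{A}_{comb}$ as a typo for $\mathcal{A}_{count}$ are both fine.
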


\begin{proof}
Let $p_e$ be the probability that $B=0$ and $\mathcal{A}_{count}$ outputs $B=1$ when provided with $r =  \frac{2n\ln (\alpha^{-1})}{\norm{\bp-\bq}_1^2}$ samples. By symmetry $p_e$ is equal to the probability that we are in the case $B=1$ and $\mathcal{A}_{count}$ outputs $B=0$ when provided with $r$ samples. It therefore suffices to show that $p_e$ is at most $\alpha$. When $B=0$, 
\[
\E[Z] = \E\left[\sum_{i=1}^n S_i\right] = r\sum_{i=1}^n  (p_i-q_i). 
\]
By the Chernoff bound,
\[
p_e \leq \Pr[Z \leq 0] \leq \exp\left(-\frac{nr}{2} \cdot \left(\frac{\sum_{i=1}^n  (p_i-q_i)}{n}\right)^2\right) \leq \alpha.
\]

The second part of the lemma follows from Corollary \ref{cor:dom_lb_half}, along with the observation that $\norm{\bp-\bq}_1^2 \geq \norm{\bp-\bq}_2^2$.
\end{proof}

\begin{lemma}
\label{lem:max}
Let $C = (n, \bp, \bq)$ be an instance of $\domination$. Then 

$$r_{min}(C, \mathcal{A}_{max}, 1-\alpha) \leq \frac{8\ln (2n\alpha^{-1})}{\norm{\bp-\bq}_\infty^2}$$

\noindent
If $C$ further satisfies $\delta \leq p_i, q_i \leq 1-\delta$ for all $i$ for some constant $\delta$, then

$$r_{min}(C, \mathcal{A}_{comb}) \leq O(n\log n)r_{min}(C).$$
%If $r\ge 8C\log n/\norm{\bp-\bq}_\infty^2$ for some constant $C>1$, then Algorithm~\ref{alg:max} solves \domination$(n,\bp,\bq,r)$ with probability at least $1-2/n^{C-1}$. (Notice $8C\log n/\norm{\bp-\bq}_\infty^2 \leq 8C n\log n/\norm{\bp-\bq}_2^2$)
\end{lemma}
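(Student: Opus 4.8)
The plan is to follow the same two-step template as the proof of Lemma~\ref{lem:counting}: first prove the stated absolute sample-complexity bound for $\mathcal{A}_{max}$ by a concentration-plus-union-bound argument, and then convert it into the claimed $O(n\log n)$ competitive ratio using the lower bound from Corollary~\ref{cor:dom_lb_half} together with an elementary norm comparison.

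For the absolute bound, assume without loss of generality that $B=0$, so $X_{i,j}\sim\Ber(p_i)$, $Y_{i,j}\sim\Ber(q_i)$, and each $S_i=\sum_{j=1}^r(X_{i,j}-Y_{i,j})$ has nonnegative mean $\E[S_i]=r(p_i-q_i)$. Write $\Delta=\norm{\bp-\bq}_\infty$ (if $\Delta=0$ the two cases are indistinguishable and both sides of the inequality are infinite, so assume $\Delta>0$), and let $i^*$ be a coordinate with $p_{i^*}-q_{i^*}=\Delta$. The crucial point is that $\mathcal{A}_{max}$ errs only when $S_{i'}<0$, and this is impossible on the event $E_1\cap E_2$, where $E_1=\{S_{i^*}\ge r\Delta/2\}$ and $E_2=\{S_i\ge -r\Delta/2 \text{ for all } i\}$: under $E_2$ every coordinate with a negative partial sum has $|S_i|<r\Delta/2\le |S_{i^*}|$ and hence is never the argmax, so $S_{i'}\ge r\Delta/2>0$ and $\mathcal{A}_{max}$ outputs $B=0$. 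Now $E_1^c$ and, for each fixed $i$, the failure of the $i$th constraint defining $E_2$, are each lower-tail events in which a sum of $r$ independent $[-1,1]$-valued variables falls at least $r\Delta/2$ below its mean; Hoeffding's inequality bounds each by $\exp(-r\Delta^2/8)$. A union bound over the $n$ coordinates then gives error probability at most $(n+1)\exp(-r\Delta^2/8)\le 2n\exp(-r\Delta^2/8)$, and forcing this to be at most $\alpha$ gives exactly $r\ge 8\ln(2n\alpha^{-1})/\Delta^2$, which is the claimed bound.

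For the competitive ratio, specialize to $\alpha=1/4$, so $r_{min}(C,\mathcal{A}_{max})\le 8\ln(8n)/\norm{\bp-\bq}_\infty^2$. Since $\norm{v}_\infty^2\ge\norm{v}_2^2/n$ for any $v\in\mathbb{R}^n$, this is at most $8n\ln(8n)/\norm{\bp-\bq}_2^2$. In the regime $\delta\le p_i,q_i\le 1-\delta$, Corollary~\ref{cor:dom_lb_half} gives $r_{min}(C)\ge\Omega(\delta/\norm{\bp-\bq}_2^2)$, so $1/\norm{\bp-\bq}_2^2=O(r_{min}(C))$ for constant $\delta$; substituting yields $r_{min}(C,\mathcal{A}_{max})\le O(n\log n)\,r_{min}(C)$.

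There is no genuine obstacle here — the argument is a routine tail bound together with the inequality $\norm{v}_\infty^2\ge\norm{v}_2^2/n$. The only place needing care is the selection step: controlling $S_{i^*}$ alone does not suffice, because $\mathcal{A}_{max}$ reports the sign of whichever coordinate maximizes $|S_i|$, so one must additionally exclude a spuriously large negative $S_i$ in some other coordinate. Handling this via the union-bound event $E_2$ is exactly what introduces the extra $\log n$ factor (more precisely the $\log(2n\alpha^{-1})$ in place of $\log(\alpha^{-1})$) in the sample complexity, and hence the $\log n$ in the competitive ratio.
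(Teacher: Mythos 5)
Your proof is correct and follows essentially the same route as the paper's: a Hoeffding bound ensuring the maximal-gap coordinate has $S_{i^*}\ge r\Delta/2$, a union bound over all coordinates excluding any $S_i<-r\Delta/2$, and then the norm comparison $\norm{\bp-\bq}_2^2\le n\norm{\bp-\bq}_\infty^2$ combined with Corollary~\ref{cor:dom_lb_half} for the competitive ratio. Your explicit handling of the selection step (why controlling $S_{i^*}$ alone is insufficient) and of the degenerate case $\Delta=0$ is slightly more careful than the paper's exposition, but the argument is the same.
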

\begin{proof}
Assume without loss of generality that $B=0$, and let $\eps=p_1-q_1=\norm{\bp-\bq}_\infty$. Let $E$ be the event that $\mathcal{A}_{max}$ makes an error and outputs $B=1$ when given $r=\frac{8\ln 2n\alpha^{-1}}{\eps^2}$ samples. We can upper bound the probability of error as $$\Pr[E]\le \Pr[E|S_1> r\eps/2]+\Pr[S_1\le r \eps/2].$$ We will bound each term separately.
Since $\E[S_1]=r(p_1-q_1)=r\eps$, by Hoeffding's inequality,
$$\Pr[S_1\le r\eps/2]\le \exp(-r\eps^2/8)\le \frac{\alpha}{2}.$$
Similarly, by Hoeffding's inequality and the union bound,
$$\Pr[E|S_1>r\eps/2] \le \Pr[\exists i : S_i<-r\eps/2 ]\le n\exp(-r\eps^2/8)\le \frac{\alpha}{2}.$$
It follows that $\Pr[E] \leq \alpha$. The second part of the lemma follows from Corollary \ref{cor:dom_lb_half}, along with the observation that $\norm{\bp - \bq}_2^2 \leq n \norm{\bp - \bq}_{\infty}^2$.
\end{proof}

We now show that the upper bounds we proved above are essentially tight. In particular, we demonstrate instances where both $\mathcal{A}_{count}$ and $\mathcal{A}_{max}$ need $\tilde{\Omega}(n)$ times as many samples as the best possible algorithms for those instances. Interestingly, on the instance where $\mathcal{A}_{count}$ suffers, $\mathcal{A}_{max}$ performs well, and vice versa. This fact will prove useful in the next section.

\begin{lemma}
\label{lem:countingfails}
For each $\eps < \frac{1}{10}$ and each sufficiently large $n$, there exists an instance $C=(n, \bp,\bq)$ of $\domination$ such that the following two statements are true:
\begin{enumerate}
\item $r_{min}(C, \mathcal{A}_{max}, 1 - \frac{2}{n}) \leq \frac{16 \ln n}{\eps^2}$.
\item $r_{min}(C, \mathcal{A}_{count}) \geq \frac{n}{128\eps^2}$.
\end{enumerate}
%\begin{enumerate}
%\item Algorithm \ref{alg:max} solves \domination$(n,\bp,\bq,4\log n/\varepsilon^2)$ with probability at least $1 - 2/n$.
%\item Algorithm \ref{alg:counting} solves \domination$(n,\bp,\bq,r)$ with probability at most $2/3$ when $r \leq n/128\varepsilon^2$. 
%\end{enumerate}
\end{lemma}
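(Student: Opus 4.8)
The plan is to use the instance $C = (n, \bp, \bq)$ in which a single coordinate carries all of the signal and the remaining $n-1$ coordinates are maximally noisy: set $\bp = (\tfrac12 + \eps, \tfrac12, \tfrac12, \dots, \tfrac12)$ and $\bq = (\tfrac12 - \eps, \tfrac12, \dots, \tfrac12)$. This is a valid $\domination$ instance (indeed it lies in the well-behaved regime with $\delta = \tfrac25$, since $\eps < \tfrac1{10}$). Statement 1 is immediate: $\norm{\bp - \bq}_\infty = 2\eps$, so Lemma~\ref{lem:max} with $\alpha = 2/n$ gives $r_{min}(C, \mathcal{A}_{max}, 1 - 2/n) \le \frac{8\ln(2n\cdot(n/2))}{(2\eps)^2} = \frac{8\ln n^2}{4\eps^2} = \frac{4\ln n}{\eps^2} \le \frac{16\ln n}{\eps^2}$. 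So all the work is in statement 2: showing that $\mathcal{A}_{count}$ still errs with probability $> \tfrac14$ whenever $r < \frac{n}{128\eps^2}$.

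For statement 2 I would first use the $B=0 \leftrightarrow B=1$ symmetry of $\domination$ to reduce to the case $B = 0$, so that $\Pr[\text{success}] = \Pr[\text{success}\mid B=0]$. Write the statistic of $\mathcal{A}_{count}$ as $Z = S_1 + W$, where $S_1 = \sum_{j=1}^r (X_{1,j} - Y_{1,j})$ is the contribution of the signal coordinate and $W = \sum_{i=2}^n \sum_{j=1}^r (X_{i,j} - Y_{i,j})$ is the contribution of the noise coordinates; $S_1$ and $W$ are independent and $W$ is symmetric about $0$ (each summand is a symmetric $\{-1,0,1\}$-valued variable). Conditioned on $B=0$, $\mathcal{A}_{count}$ succeeds iff $Z > 0$, or $Z = 0$ and its tie-breaking coin comes up $0$; using independence and the symmetry $\Pr[W > -s] = \Pr[W < s]$,
\[
\Pr[\text{success}\mid B=0] = \E_{S_1}\!\left[\Pr[W < S_1] + \tfrac12\Pr[W = S_1]\right].
\]

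The core estimate is anticoncentration of $W$. Writing $W = A - B$ with $A, B$ independent $\mathrm{Bin}(m, \tfrac12)$ and $m = (n-1)r$, the Vandermonde identity gives $\Pr[W = k] = \binom{2m}{m+k} 4^{-m} \le \binom{2m}{m} 4^{-m} \le \frac{1}{\sqrt{\pi m}}$ for every $k$. Hence for integer $s$, $\Pr[W < s] \le \tfrac12 + s^+/\sqrt{\pi m}$ where $s^+ = \max(s, 0)$ (split into $\Pr[W \le -1] = \tfrac12(1 - \Pr[W=0])$ plus at most $s^+$ point masses each of size $\le \Pr[W=0]$). Plugging this in, and bounding $\E[S_1^+] \le \E|S_1| \le \sqrt{\var(S_1) + \E[S_1]^2} \le \sqrt{r/2} + 2r\eps$, gives
\[
\Pr[\text{success}\mid B=0] \le \tfrac12 + \frac{2r\eps + \sqrt{r/2}}{\sqrt{\pi (n-1) r}} + \frac{1}{2\sqrt{\pi(n-1)r}}.
\]
Substituting $r < \frac{n}{128\eps^2}$ makes the leading term $2\eps\sqrt{r/(\pi(n-1))} < \frac{2}{\sqrt{128\pi}}\sqrt{n/(n-1)} < 0.11$ for $n$ large, while the other two terms are $O(1/\sqrt n) = o(1)$; thus $\Pr[\text{success}\mid B=0] < \tfrac34$ for every $r < \frac{n}{128\eps^2}$ once $n$ is large enough, which gives $r_{min}(C, \mathcal{A}_{count}) \ge \frac{n}{128\eps^2}$.

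The main obstacle is exactly this lower bound on the failure probability of $\mathcal{A}_{count}$: Chebyshev/Chernoff applied to $Z$ only control tails from above, so one genuinely needs an anticoncentration statement for the noise sum $W$ in order to conclude that $Z < 0$ a constant fraction of the time. Exploiting the exact binomial structure via Vandermonde keeps the argument self-contained; alternatively one could run a Berry–Esseen comparison of $W/\sqrt{\var W}$ to a standard Gaussian, at the cost of carrying an explicit universal constant (which is presumably why the threshold constant is a comfortably large $128$ rather than something tight). One minor point worth noting is that the success-probability bound must hold for \emph{all} $r < \frac{n}{128\eps^2}$, not just near the threshold — but this is automatic since the right-hand side above is nondecreasing in $r$.
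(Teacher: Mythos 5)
Your proof is correct, and the key step --- showing that the statistic of $\mathcal{A}_{count}$ is negative (or zero) with probability bounded away from $0$ when $r \le n/(128\eps^2)$ --- is handled by a genuinely different technique from the paper's. The paper recenters all $nr$ summands, checks second- and third-moment bounds, and applies the Berry--Esseen theorem (Lemma~\ref{lem:berryesseen}) to the \emph{entire} sum to conclude $\Pr[Z<0]\ge \Phi(-1/4)-8/\sqrt{nr}\ge 1/4$. You instead split $Z=S_1+W$, use the exact symmetry of the noise part together with the Vandermonde identity and the central binomial coefficient bound to get the pointwise estimate $\Pr[W=k]\le 1/\sqrt{\pi m}$, and control the signal part by the first-moment bound $\E[S_1^+]\le\sqrt{r/2}+2r\eps$. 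Your route is more elementary and fully self-contained (no normal approximation, no universal Berry--Esseen constant), but it leans on the noise coordinates being exactly fair coins; the paper's instance allows arbitrary $p_i=q_i\in(1/4,3/4)$ on the non-signal coordinates and puts the gap $\eps$ on \emph{two} coordinates rather than one of gap $2\eps$, which is precisely what lets Corollary~\ref{cor:countingfails} reuse the same instance family for consecutive rows of the matrix $\bP$ of Section~\ref{sec:result}. Since the lemma only asserts existence of an instance, your choice is perfectly valid for the lemma itself. One trivial slip: your closing remark that the upper bound on the success probability is nondecreasing in $r$ is not literally true (the term $\tfrac{1}{2\sqrt{\pi(n-1)r}}$ decreases in $r$), but this is harmless because your argument already bounds each term uniformly over all $1\le r< n/(128\eps^2)$.
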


\begin{proof}

Let $k$ be an arbitrary integer between 1 and $n-1$. Let $\bp,\bq$ be any vectors satisfying the following constraints:
\begin{enumerate}
\item For all $i \in [n]$, $ \frac{1}{4} < p_i, q_i <  \frac{3}{4}$. 
\item If $i \not\in \{k, k+1\}$, $p_i = q_i$ . 
\item If $i \in \{k, k+1\}$, $q_i = p_i - \eps$ . 
\end{enumerate}
Note that $\norm{\bp-\bq}_\infty =\varepsilon$. Therefore, by Lemma \ref{lem:max}, $r_{min}(C, \mathcal{A}_{max}, 1 - \frac{2}{n}) \leq \frac{16 \ln n}{\eps^2}$, thus proving the first part of the lemma. 

Now assume that $r \leq n/128\varepsilon^2$. We will show that with this many samples, $\mathcal{A}_{count}$ solves instance $C$ with probability at most $3/4$, thus implying the second part of the lemma. Without loss of generality, assume that $B=0$. Define the following random variables $U_{i,j}$:
\begin{enumerate}
\item $U_{i,j} = X_{i,j} - Y_{i,j}$ for $i=1,...,k-1,k+2,...,n$ and $j= 1,..,r$. 
\item $U_{i,j} = X_{i,j} - Y_{i,j} -\varepsilon$. $i = k,k+1$ and $j=1,...,r$. 
\end{enumerate} 
It is straightforward to check that for all $i = 1,...,n, j= 1,..,r$, $\E[U_{i,j}] = 0$, $\E[U^2_{i,j}] \geq 1/4$ and $\E[|U_{i,j}|^3] \leq 1$. Let $\Phi$ be the cdf of the standard normal distribution.
\begin{align*}
&\Pr[\text{$\mathcal{A}_{count}$ outputs $B = 1$ (incorrectly)}]\\
 &= \Pr[\sum_{i=1}^n \sum_{j=1}^r (X_{i,j}-Y_{i,j}) < 0] 
 = \Pr[\sum_{i=1}^n \sum_{j=1}^r U_{i,j} <-2r\varepsilon] \\
 &\geq \Phi\left(-2r\varepsilon\cdot \frac{1}{\sqrt{\sum_{i=1}^n \sum_{j=1}^r \E[U^2_{i,j}]}}\right) - \frac{\sum_{i=1}^n \sum_{j=1}^r \E[|U_{i,j}|^3]}{(\sum_{i=1}^n \sum_{j=1}^r \E[U^2_{i,j}] )^{-3/2}} \tag{By Berry-Esseen theorem (Lemma~\ref{lem:berryesseen})}\\
 &\geq \Phi\left(-\sqrt{\frac{8r\varepsilon^2}{n}}\right) - \frac{8}{\sqrt{nr}}\geq \Phi(-1/4)-\frac{8}{\sqrt{nr}} \geq 1/4.
\end{align*}
\end{proof}

It is not hard to observe that in certain cases, the counting algorithm of \cite{Shah15Sim} for $\topk$ reduces to the algorithm $\mathcal{A}_{count}$ for $\domination$. It follows that there also exists an $\Omega(n)$ multiplicative gap between the sample complexity of their counting algorithm and the sample complexity of the best algorithm on some instances.

\begin{corollary}
\label{cor:countingfails}
Let $A'$ be the $\topk$ algorithm of \cite{Shah15Sim}, and let $S = (n, k, \bP)$ be a $\topk$ instance, with $\bP$ as described in Section \ref{sec:result}. Then, for sufficiently large $n$ and $\eps < 1/10$, $r_{min}(S, A') \geq \Omega(\frac{n}{\eps^2})$.
\end{corollary}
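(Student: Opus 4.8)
The plan is to reduce the analysis of $A'$ on the instance $S$ to the analysis of the domination counting algorithm $\A_{count}$ carried out in Lemma~\ref{lem:countingfails}. Recall that the algorithm $A'$ of \cite{Shah15Sim} assigns to each item $i$ the score $\mathrm{count}(i) = \sum_{j \ne i} \sum_{l=1}^{r} X_{i,j,l}$ (the total number of pairwise comparisons item $i$ wins) and outputs the $k$ items with the largest scores. In order for $A'$ to succeed on $S$ it is \emph{necessary} that the item of true rank $k$ receive a score at least as large as the item of true rank $k+1$; otherwise the latter appears strictly before the former in the sorted order, and $A'$ cannot output the correct top-$k$ set. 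By the symmetry of $\bP$ we may condition on $\pi = \mathrm{id}$, so that item $i$ has true rank $i$, and it suffices to analyze the integer random variable $D := \mathrm{count}(k) - \mathrm{count}(k+1)$ and to show $\Pr[D < 0] > \tfrac14$ whenever $r$ is below the claimed threshold, since then $A'$ succeeds with probability at most $\tfrac34$.

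Next I would decompose $D = D_0 + D_1$, where $D_0 = \sum_{l=1}^r (X_{k,k+1,l} - X_{k+1,k,l})$ collects the comparisons between items $k$ and $k+1$, and $D_1 = \sum_{j \notin \{k,k+1\}} \sum_{l=1}^r (X_{k,j,l} - X_{k+1,j,l})$ collects the comparisons against the $n-2$ common opponents. Since $\bP_{k,j} = \bP_{k+1,j}$ for every $j \notin \{k,k+1\}$ (both equal $\tfrac12+\eps$ if $j>k+1$, and $\tfrac12-\eps$ if $j<k$), we have $\E[D_1]=0$, while $\E[D_0] = r\big((\tfrac12+\eps)-(\tfrac12-\eps)\big) = 2r\eps$, so $\E[D] = 2r\eps$. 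Crucially, $D$ is \emph{exactly} the decision statistic of $\A_{count}$ applied to a domination instance with $n-1$ coordinates in the well-behaved regime: one coordinate with $(p,q) = (\tfrac12+\eps,\tfrac12-\eps)$ and $n-2$ coordinates with $p_j=q_j=\bP_{k,j} \in \{\tfrac12\pm\eps\}$, all parameters lying in $(\tfrac14,\tfrac34)$ because $\eps<\tfrac1{10}$. This is the precise sense in which ``$A'$ reduces to $\A_{count}$,'' and it lets me import the calculation from Lemma~\ref{lem:countingfails}.

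I would then repeat that Berry--Esseen argument. The variable $D$ is an integer-valued sum of $2r(n-1)$ independent terms supported on $\{-1,0,1\}$; a direct computation gives $\var(D) = \var(D_0)+\var(D_1) = 2(n-1)r(\tfrac14-\eps^2) = \Theta(nr)$, and the centered third absolute moments sum to $O(nr)$. By the Berry--Esseen theorem (Lemma~\ref{lem:berryesseen}), using that $D$ takes integer values (so that $\Pr[D<0]=\Pr[D\le -1]$ and the extra shift of $1$ is absorbed because $\var(D)\to\infty$),
$$
\Pr[D < 0] \ \ge\ \Phi\!\left(-\frac{2r\eps + 1}{\sqrt{\var(D)}}\right) - O\!\left(\frac{1}{\sqrt{nr}}\right),
$$
where $\Phi$ is the standard normal cdf. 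Taking $r \le \tfrac{n}{c\eps^2}$ for a suitable absolute constant $c$, the quantity $\tfrac{2r\eps}{\sqrt{\var(D)}} = \Theta(\eps\sqrt{r/n})$ is a small constant, $\tfrac{1}{\sqrt{\var(D)}} = O(1/\sqrt{nr}) = o(1)$, and the error term is $o(1)$, so $\Pr[D<0] \ge \Phi(-\tfrac14) - o(1) > \tfrac14$ for $n$ sufficiently large. Since $A'$ can succeed only when $D \ge 0$, it succeeds with probability at most $\tfrac34$ on $S$ given $r \le n/(c\eps^2)$ samples, hence $r_{min}(S, A') \ge \Omega(n/\eps^2)$.

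The one substantive place where the $\topk$ setting differs from the bare domination analysis — and the step I expect to be the main obstacle to state cleanly — is that $\mathrm{count}(k)$ and $\mathrm{count}(k+1)$ also tally comparisons against all $n-2$ other items; although those contributions are mean-zero, they inject $\Theta(nr)$ units of variance into $D$, and it is precisely this variance that drowns out the $\Theta(r\eps)$ signal once $r \ll n/\eps^2$. Capturing this tradeoff needs the sharp Gaussian estimate of Berry--Esseen rather than a crude Chernoff or Chebyshev bound (which would spoil the constant in the exponent, and hence the exact $n/\eps^2$ threshold), together with the observation that the tie $D=0$ is harmless because $\var(D)$ is large; both of these ingredients already appear in the proof of Lemma~\ref{lem:countingfails} and transfer essentially verbatim. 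The remaining points — conditioning away the uniform $\pi$ by symmetry of $\bP$, and checking that $\eps < \tfrac1{10}$ keeps every relevant Bernoulli parameter strictly inside $(\tfrac14,\tfrac34)$ — are routine.
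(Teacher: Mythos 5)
Your proposal is correct and follows essentially the same route as the paper: the paper's proof is exactly the observation that the score difference between the items of true rank $k$ and $k+1$ is the decision statistic of $\A_{count}$ on a domination instance formed by two consecutive rows of $\bP$, which satisfies the constraints of Lemma~\ref{lem:countingfails}, whose Berry--Esseen argument then gives the bound. Your re-derivation of that Berry--Esseen computation in the \topk{} setting (with the minor bookkeeping difference of treating the mutual comparison as one coordinate of gap $2\eps$ rather than two coordinates of gap $\eps$) matches the paper's intended reduction.
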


\begin{proof}
Let $i=\pi^{-1}(k)$ and $j=\pi^{-1}(k+1)$. The algorithm $A'$ correctly places $i$ in the set of the top $k$ rows exactly when $\mathcal{A}_{count}$ correctly outputs that row $i$ dominates row $j$. On the other hand, any two consecutive rows of $\bP$ satisfy the constraints in the proof of Lemma \ref{lem:countingfails}. It follows that $r_{min}(S, A') \geq \Omega(\frac{n}{\eps^2})$.
\end{proof}

 We will now show that $\A_{max}$ has a competitive ratio of $\tilde{\Omega}(n)$.
\begin{lemma}
\label{lem:maxfails}
For each sufficiently large $n$, there exists an instance $C = (n, \bp,\bq)$ of $\domination$ such that the following two statements are true:

\begin{enumerate}
\item $r_{min}(C, \mathcal{A}_{count}, 1 - \frac{1}{n}) \leq 2n^3\ln n$.
\item $r_{min}(C, \mathcal{A}_{max}, \frac{4}{5}) \geq \frac{n^4}{2^{14}\ln n}$. 
\end{enumerate}

%\begin{enumerate}
%\item Algorithm \ref{alg:counting} solves \domination$(n,\bp,\bq,2n^3\log n)$ with probability at least $1 - 2/n$.
%\item Algorithm \ref{alg:max} solves \domination$(n,\bp,\bq,r)$ with probability at most $4/5$ when $r \leq \frac{n^4}{2^{14}\ln n}$. 
%\end{enumerate}
\end{lemma}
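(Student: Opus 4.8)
\emph{The construction.} I would make all $n$ coordinates identical with a minuscule uniform gap: set $\eps = 1/n^2$ and $\bp = (\tfrac12+\eps,\dots,\tfrac12+\eps)$, $\bq = (\tfrac12,\dots,\tfrac12)$. Then $\norm{\bp-\bq}_1 = n\eps = 1/n$, so Part~1 is immediate from the first inequality of Lemma~\ref{lem:counting} with $\alpha = 1/n$: $r_{min}(C,\mathcal{A}_{count},1-\tfrac1n) \le 2n\ln(n)/\norm{\bp-\bq}_1^2 = 2n^3\ln n$. All the content is in Part~2: showing $\mathcal{A}_{max}$ errs with probability more than $\tfrac15$ whenever $r \le \frac{n^4}{2^{14}\ln n}$.

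\emph{Reducing Part~2 to a tail event.} Assume $B=0$ by symmetry, and write $S_i = \sum_{j=1}^r (X_{i,j}-Y_{i,j})$; the $S_i$ are i.i.d.\ across $i$, each integer-valued with mean $r\eps$, variance $rv$ for $v=\tfrac12-\eps^2\in[\tfrac14,\tfrac12]$, and bounded third absolute central moment. $\mathcal{A}_{max}$ errs exactly when the coordinate attaining $\max_i|S_i|$ is negative. For any non-integer $a>0$, the event $\{\text{all }S_i\le a\}\cap\{\text{some }S_i\le -a\}$ forces $\max_i S_i < -\min_i S_i$, hence an error. So, writing $F$ for the CDF of $S_1$,
\[
\Pr[\mathcal{A}_{max}\text{ errs}] \ \ge\ F(a)^n - \bigl(F(a)-F(-a)\bigr)^n ,
\]
and I would choose $a$ with $\Pr[\mathcal{N}(r\eps,rv)>a] = (\ln 2)/n$ and estimate both terms via a Gaussian approximation.

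\emph{The Gaussian estimate.} Since $r=\Theta(n^4/\ln n)$, the Berry--Esseen theorem (Lemma~\ref{lem:berryesseen}) bounds $\sup_x|F(x)-\Phi_{r\eps,rv}(x)|$ by $O(1/\sqrt r)=o(1/n)$, which is negligible after raising to the $n$th power; so up to an additive $o(1)$ one may treat $S_i = r\eps + \sqrt{rv}\,Z_i$ with $Z_i$ i.i.d.\ standard normal. Put $z = (a-r\eps)/\sqrt{rv}$ (so $\bar\Phi(z)=(\ln 2)/n$, whence $z<\sqrt{2\ln n}$ for $n\ge 2$) and $\theta = 2r\eps/\sqrt{rv} = 2\eps\sqrt{r/v}$. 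Then $\Pr[\text{all }S_i\le a]\to \tfrac12$, and $\Pr[S_1\le -a] = \bar\Phi(z+\theta) = (1+o(1))e^{-z\theta}\bar\Phi(z)$ by the standard Gaussian tail ratio, so $\Pr[-a<S_1\le a]^n \to 2^{-(1+e^{-z\theta})}$. Hence $\Pr[\mathcal{A}_{max}\text{ errs}] \ge \tfrac12\bigl(1 - 2^{-e^{-z\theta}}\bigr) - o(1)$. Finally, plugging $\eps = 1/n^2$, $r\le \frac{n^4}{2^{14}\ln n}$ and $v\ge\tfrac14$ into $z\theta < \sqrt{2\ln n}\cdot 2\eps\sqrt{r/v}$ gives $z\theta < \sqrt 2/32 < \tfrac13$, so $2^{-e^{-z\theta}} < 2^{-e^{-\sqrt2/32}} < \tfrac35$ and the error probability exceeds $\tfrac15$ for all large $n$, proving Part~2.

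\emph{Where the difficulty lies.} The crux is the displayed lower bound: it asserts that the ``all $n$ coordinates in a one-sided band'' probability exceeds the ``all in a two-sided band'' probability by a constant, which after Gaussianization comes down to the tail ratio $\bar\Phi(z+\theta)/\bar\Phi(z)\approx e^{-z\theta}$ together with the fact that $z\theta$ --- morally (scale of the maximum of $n$ standard normals) times (per-coordinate signal-to-noise ratio) --- is a small absolute constant for these parameters. The only delicate bookkeeping is checking that the $O(1/\sqrt r)$ Berry--Esseen error is genuinely $o(1/n)$, which is exactly why $r$ must be polynomially small in $n$; the generous constant $2^{14}$ leaves ample room in all of these estimates.
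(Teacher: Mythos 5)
Your proposal is correct and follows essentially the same route as the paper: the identical instance ($p_i=\tfrac12+n^{-2}$, $q_i=\tfrac12$), Part~1 via Lemma~\ref{lem:counting}, and for Part~2 the same lower bound $F(a)^n-(F(a)-F(-a))^n$ with the same threshold $\bar\Phi(z)=(\ln 2)/n$ and the same Berry--Esseen Gaussianization. The only (cosmetic) difference is the last step: you control $\bar\Phi(z+\theta)$ multiplicatively via the Gaussian tail ratio $e^{-z\theta}$, whereas the paper bounds $\bar\Phi(z)-\bar\Phi(z+\theta)$ additively by $(b-a)e^{-a^2/2}/\sqrt{2\pi}\le \tfrac{1}{8n}$; both yield an error probability bounded away from $\tfrac15$.
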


\begin{proof}
Consider the instance $C = (n, \bp,\bq)$ where $p_i=\frac{1}{2}+\eps$ and $q_i=\frac{1}{2}$, with $\eps=\frac{1}{n^2}$. Since $\norm{\bp-\bq}_1 = \frac{1}{n}$, by Lemma \ref{lem:counting}, $r_{min}(C, \mathcal{A}_{max}, 1 - \frac{1}{n}) \leq 2n^3\ln n$.

%By Corollary~\ref{cor:dom_lb_half}, if $r<\frac{1000}{\norm{\bp-\bq}_2^2}=\frac{n^3}{1000}$, then no algorithm can succeed with probability more than $3/4$. 
Now assume $ r = \frac{n^4}{2^{14}\ln n}$. We will now show that $\mathcal{A}_{max}$ solves $\domination(n,\bp,\bq,r)$ with probability at most $4/5$. Without loss of generality, assume that $B=0$. Define random variables $S_i=\sum_{j=1}^r (X_{i,j}-Y_{i,j})$. Note that $S_1,\cdots,S_n$ are i.i.d random variables with $\E[S_i]=r\eps$ and $\var[S_i]=r(\frac{1}{2}-\eps^2)$. Our algorithm $\mathcal{A}_{max}$ outputs $B=1$ whenever $\inf_i S_i + \sup_i S_i < 0.$ Let $\lambda>0$ be a parameter whose value we will choose later. Note that:

\begin{align*}
\Pr[\inf_i S_i + \sup_i S_i < 0] &\ge \Pr[\inf_i S_i < -\lambda,\ \sup_i S_i < \lambda]\\
&\ge \Pr[\sup_i S_i < \lambda]-\Pr[\inf_i S_i \ge -\lambda,\ \sup_i S_i < \lambda]\\
&=\prod_{i=1}^n\Pr[S_i<\lambda]^n-\prod_{i=1}^n\Pr[-\lambda\le S_i<\lambda]^n\\
&= \Pr[S_1<\lambda]^n-\Pr[-\lambda\le S_1<\lambda]^n\\
&=\Pr[S_1< \lambda]^n-(\Pr[S_1< \lambda]-\Pr[S_1< -\lambda])^n
\end{align*}
We will now apply the Berry-Esseen Theorem (Lemma~\ref{lem:berryesseen}) with $Z_j=(X_{1,j}-Y_{1,j})$ to approximate the CDF of $S_1$. We have $\mu=\E[S_1]=r\eps$, $\sigma^2=\var[S_1]=r(\frac{1}{2}-\eps^2) \ge \frac{r}{4}$. and $\gamma=\sum_{j=1}^r \E[|Z_j-\eps|^3]\le 8r$. Therefore for all $t\in \mathbb{R}$,
$$\left|\Pr[S_1< t]-\Phi\left(\frac{t-\mu}{\sigma}\right)\right|\le \frac{\gamma}{\sigma^3}\le \frac{64}{\sqrt{r}} = \frac{2^{15}\sqrt{\ln n}}{n^{2}}\le \frac{1}{n^{3/2}}$$ when $n$ is large enough. Let us choose $\lambda=\mu+\sigma \Phi^{-1}(1-\frac{\ln 2}{n})$ and let $a=\frac{\lambda-\mu}{\sigma}$, $b=\frac{\lambda+\mu}{\sigma}$. Therefore $\Phi(a)=1-\frac{\ln 2}{n}.$ When $n$ is large enough, $a>10$. By Fact ~\ref{fact:gaussiancdf}, $$\frac{1}{\sqrt{2\pi}}\exp(-a^2/2)\frac{1}{a}\ge \frac{\ln 2}{n}=1-\Phi(a)\ge \frac{1}{\sqrt{2\pi}} \exp(-a^2/2)\frac{1}{2a}.$$ From the left hand side of the above inequality, we can conclude that $a\le 2\sqrt{\ln n}$.
Also,
\begin{align*}
\Phi(-b)=1-\Phi(b) &=\frac{\ln 2}{n} - (\Phi(b)-\Phi(a))\\
&=\frac{\ln 2}{n}-\frac{1}{\sqrt{2\pi}}\int_a^b \exp\left(-t^2/2\right)dt\\
&\ge \frac{\ln 2}{n}-\frac{1}{\sqrt{2\pi}}(b-a)\exp(-a^2/2)\\
&\ge \frac{\ln 2}{n}-\frac{2a(\ln 2)(b-a)}{n}\tag{Since $\frac{1}{\sqrt{2\pi}}\exp(-a^2/2)\frac{1}{2a}\le \frac{\ln 2}{n}$}\\
&\ge \frac{\ln 2}{n}-\frac{4a\mu}{n\sigma}\\
&\ge \frac{\ln 2}{n}-\frac{16\eps \sqrt{r\ln n}}{n} \tag{$\mu=r\eps, \sigma^2\ge \frac{r}{4}, a\le 2\sqrt{\ln n}$}\\
&\ge \frac{\ln 2}{n}-16 \frac{1}{n^2}\frac{1}{n}\sqrt{\frac{n^4}{2^{14}\ln n}\ln n}\\
&\ge \frac{\ln 2}{n}-\frac{1}{8n}
\end{align*}
Now we can bound the probability of error as follows:
\begin{align*}
\Pr[\inf_i S_i + \sup_i S_i < 0] &\ge\Pr[S_1< \lambda]^n-(\Pr[S_1< \lambda]-\Pr[S_1< -\lambda])^n\\
&\ge\left(\Phi\left(\frac{\lambda-\mu}{\sigma}\right)-\frac{1}{n^{3/2}}\right)^n-\left(\Phi\left(\frac{\lambda-\mu}{\sigma}\right)-\Phi\left(\frac{-\lambda-\mu}{\sigma}\right)+2\cdot \frac{1}{n^{3/2}}\right)^n\\
&=\left(\Phi(a)-\frac{1}{n^{3/2}}\right)^n-\left(\Phi(a)-\Phi(-b)+\frac{2}{n^{3/2}}\right)^n\\
&\ge \left(1-\frac{\ln 2}{n}-\frac{1}{n^{3/2}}\right)^n - \left(1-\frac{2\ln 2}{n}+\frac{1}{8n}+\frac{2}{n^{3/2}}\right)^n\\
&\ge \exp(-\ln 2)-\exp(-2\ln 2+1/8)-0.01 \tag{when $n$ is large enough}\\
&> \frac{1}{5}.
\end{align*}
\end{proof}

\subsection{$\tilde{O}(\sqrt{n})$-competitive algorithms}
\label{d2ub}
%!TEX root =  main.tex

We will now demonstrate two algorithms for $\domination$ that use at most $\tilde{O}(\sqrt{n})$ times more samples than the best possible algorithm for each instance. According to Theorem \ref{thm:domlb}, this is the best we can do up to polylogarithmic factors.

Note that the counting algorithm $\mathcal{A}_{count}$ tends to work well when the max algorithm $\mathcal{A}_{max}$ fails, and vice versa (e.g., Lemmas \ref{lem:countingfails} and \ref{lem:maxfails}). Therefore, intuitively, combining both algorithms in some way should lead to better performance. 

Both of the algorithms we present in this section share this intuition. We begin (in Lemma \ref{lem:combining}) by demonstrating a very general method for combining any two algorithms for $\domination$. Applying this to $\mathcal{A}_{count}$ and $\mathcal{A}_{max}$, we obtain an algorithm $\mathcal{A}_{comb}$ that satisfies $r_{min}(C, \mathcal{A}_{comb}) \leq O(\sqrt{n\log n})\cdot r_{min}(C)$ (Corollary \ref{cor:combinecountmax}) for instances $C$ in this regime. We then show an alternate algorithm with slightly better performance than $\mathcal{A}_{comb}$, which we call the \textit{sum of cubes} algorithm $\mathcal{A}_{cube}$. This algorithm satisfies $r_{min}(C, \mathcal{A}_{cube}) \leq O(\sqrt{n})\cdot r_{min}(C)$ for instances $C$ in this regime (Theorem \ref{thm:sumofcubes}). 
\subsubsection{Combining counting and max}

We first show how to combine any two algorithms for $\domination$ to get an algorithm that always does at least as well as the better of the two algorithms. Call an algorithm $\mathcal{A}$ for $\domination$ \emph{stable} if it always outputs the correct answer with probability at least $1/2$ (i.e. it always does at least as well as a random guess). Note that $\mathcal{A}_{count}$ and $\mathcal{A}_{max}$ are both stable. We have the following lemma.

\begin{lemma} \label{lem:combining}
Let $\mathcal{A}_1$ and $\mathcal{A}_2$ be two stable algorithms for $\domination$. Then there exists an algorithm $\mathcal{A}_{comb}$ such that for all instances $C$ of $\domination$,

$$r_{min}(C, \mathcal{A}_{comb}, 1-\alpha) \leq 32\ln (\alpha^{-1}) \cdot \min\left(r_{min}(C, \mathcal{A}_1), r_{min}(C, \mathcal{A}_2)\right)$$
\end{lemma}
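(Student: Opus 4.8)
The plan is to combine $\mathcal{A}_1$ and $\mathcal{A}_2$ by running each on independent halves of the sample set and amplifying the confidence of each via independent repetition before merging. Concretely, given $r$ samples of $\domination(n,\bp,\bq)$, I would split the samples into two disjoint groups of $r/2$ samples each; use the first group to run $\mathcal{A}_1$ and the second to run $\mathcal{A}_2$. Within each group, I further subdivide into $t = \Theta(\ln(\alpha^{-1}))$ independent sub-batches, run the relevant algorithm on each sub-batch, and take the majority vote of the $t$ outcomes. Finally $\mathcal{A}_{comb}$ outputs... well, here is the subtlety: we cannot just take a majority of the two (amplified) answers, since we do not know which of $\mathcal{A}_1,\mathcal{A}_2$ is the good one on this instance. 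Instead, I would have $\mathcal{A}_{comb}$ output the amplified answer of $\mathcal{A}_1$ if the $t$ votes for $\mathcal{A}_1$ were sufficiently lopsided (say, at least a $2/3$ supermajority), and otherwise fall back to the amplified answer of $\mathcal{A}_2$. The point is that stability ensures each vote is correct with probability $\ge 1/2$, so a lopsided vote is strong evidence of correctness, while the good algorithm (whichever it is) produces lopsided votes once $r/2 \ge r_{min}(C,\mathcal{A}_i)\cdot t$, roughly.

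The key steps, in order: (1) Fix $\alpha$ and set $t = c\ln(\alpha^{-1})$ for an absolute constant $c$ to be pinned down. Assume $r = 32\ln(\alpha^{-1})\cdot\min(r_{min}(C,\mathcal{A}_1),r_{min}(C,\mathcal{A}_2))$; WLOG say the min is achieved by $\mathcal{A}_1$, so each sub-batch of size $r/(2t) \ge r_{min}(C,\mathcal{A}_1)$ (after choosing $c$ appropriately so that $r/(2t) \ge r_{min}(C,\mathcal{A}_1)$), hence each of the $t$ runs of $\mathcal{A}_1$ is correct independently with probability $\ge 3/4$. (2) By a Chernoff/Hoeffding bound on $t$ i.i.d.\ $\mathrm{Ber}(\ge 3/4)$ votes, the probability that the $\mathcal{A}_1$-vote fails to reach a $2/3$ supermajority for the correct answer is at most $\exp(-\Omega(t)) \le \alpha/2$ for suitable $c$. (3) Conversely, by stability each run of $\mathcal{A}_2$ is correct with probability $\ge 1/2$; so the probability that the $\mathcal{A}_2$-vote achieves a $2/3$ supermajority for the \emph{wrong} answer is also $\exp(-\Omega(t)) \le \alpha/2$ (a biased-coin large-deviation estimate, using that a $\mathrm{Ber}(\le 1/2)$ sequence rarely has $2/3$ of its outcomes equal to a fixed value). (4) Now case-analyze the output rule: if the $\mathcal{A}_1$-vote is a $2/3$-supermajority, $\mathcal{A}_{comb}$ returns it, which is correct except on the event in (2); if not, it returns the $\mathcal{A}_2$-vote. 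In the latter case, by (2) we know we are on a low-probability event \emph{for $\mathcal{A}_1$} — wait, that does not immediately help, since if $\mathcal{A}_1$'s vote is not lopsided it might still be that $\mathcal{A}_2$ is bad. This is where I need to be careful: the correct output rule should instead be to return $\mathcal{A}_1$'s amplified majority whenever it is a supermajority, else $\mathcal{A}_2$'s, and to argue that at least one of the two amplified votes is both a supermajority and correct, so whichever branch we land in gives the right answer off a failure event of total probability $\le \alpha$. Since $\mathcal{A}_1$ is the good algorithm here, step (2) gives that its vote is a correct supermajority w.p.\ $\ge 1-\alpha/2$, so $\mathcal{A}_{comb}$ returns that correct value w.p.\ $\ge 1-\alpha/2$; the bound $r_{min}(C,\mathcal{A}_{comb},1-\alpha) \le r$ follows.

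The main obstacle I anticipate is handling the branch where $\mathcal{A}_{comb}$ consults $\mathcal{A}_2$: we must ensure that a bad $\mathcal{A}_2$ cannot masquerade as correct via a spurious supermajority, which is exactly what stability buys us (step (3)) — a coin biased toward the right answer cannot produce a lopsided vote for the wrong answer — but making the two large-deviation constants ($3/4 \to 2/3$ from above, $1/2 \to 2/3$ from below) fit together cleanly into the single clean constant $32$ in the statement requires picking the supermajority threshold and $t$ carefully. Everything else is a routine Chernoff-bound bookkeeping exercise, and the splitting-into-halves step costs only a factor of $2$ in samples, absorbed into the $32$.
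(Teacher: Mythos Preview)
Your approach is workable but genuinely different from the paper's, and it has one real gap and one constant-tracking issue worth flagging.

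\textbf{Comparison with the paper.} The paper's combination rule is symmetric and much simpler than your supermajority gate: it splits the $r$ samples into $32\ln(\alpha^{-1})$ equal groups, runs $\mathcal{A}_1$ on the first half and $\mathcal{A}_2$ on the second, averages \emph{all} $32\ln(\alpha^{-1})$ binary outputs together, and thresholds at $1/2$. The analysis is a single Hoeffding bound: if (say) $\mathcal{A}_1$ is the good one, its groups contribute expected error $\le 1/4$ while stability caps $\mathcal{A}_2$'s groups at $\le 1/2$, so the overall average of errors has mean $\le 3/8$; Hoeffding with gap $1/8$ over $32\ln(\alpha^{-1})$ variables gives failure $\le \alpha$ directly. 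This symmetry is exactly what lets the paper avoid your supermajority branching and get the clean constant $32$.

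\textbf{The gap in your argument.} Your ``WLOG the min is achieved by $\mathcal{A}_1$'' is not actually without loss of generality, because your output rule is asymmetric (it consults $\mathcal{A}_1$ first). If $\mathcal{A}_2$ is the good one, you must separately argue: (i) by stability, $\mathcal{A}_1$ produces a \emph{wrong} $2/3$-supermajority with probability $\le \alpha/2$ (your step (3), applied to $\mathcal{A}_1$, not $\mathcal{A}_2$); and (ii) when you fall through to $\mathcal{A}_2$'s majority, that majority is correct with probability $\ge 1-\alpha/2$. You have the ingredients for this, but the write-up as it stands only covers one of the two cases.

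\textbf{The constant.} With threshold $2/3$, the two Hoeffding margins are $3/4-2/3=1/12$ and $2/3-1/2=1/6$; the tighter one forces $t \gtrsim 72\ln(\alpha^{-1})$, and hence $r \gtrsim 144\ln(\alpha^{-1})\cdot r_{\min}$, not $32$. You can tune the threshold, but the paper's averaging trick is what makes the constant $32$ fall out cleanly; your route will give a correct $O(\ln(\alpha^{-1}))$ bound but with a worse leading constant.
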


\begin{proof}
See Algorithm \ref{alg:combining} for a description of $\mathcal{A}_{comb}$. Assume without loss of generality that $B=0$, and let $r=32\log(n\alpha^{-1})\min\left(r_{min}(C, \mathcal{A}_1), r_{min}(C, \mathcal{A}_2)\right)$. We will show that $\mathcal{A}_{comb}$ outputs $B=0$ correctly with probability at least $1-\alpha$. 

Let $r' = \frac{r}{32\ln n}$; note that either $r' \geq r_{min}(C, \mathcal{A}_1)$ or $r_{min}(C, \mathcal{A}_2)$. Assume first that $r' \geq r_{min}(C, \mathcal{A}_1)$. Then, $\mathcal{A}_{1}$ will output $B=0$ in each of its $16\ln \alpha^{-1}$ groups with probability at least $\frac{3}{4}$. On the other hand, since it is stable, $\mathcal{A}_{2}$ will output $B=0$ in each of its groups with probability at least $\frac{1}{2}$. Therefore

$$\E\left[\frac{Z_1+Z_2}{2}\right]\le  \frac{1}{8}+\frac{1}{4}\le \frac{3}{8}.$$

Since $\frac{Z_1+Z_2}{2}$ is the average of $32\ln \alpha^{-1}$ random variables, by Hoeffding's inequality, the probability that $\frac{Z_1+Z_2}{2} \geq \frac{1}{2}$ is at most $\exp\left(-2(32\ln \alpha^{-1})(\frac{1}{8})^2\right) \leq \alpha$. 

Similarly, if $r' \geq r_{min}(C, \mathcal{A}_2)$, the probability that $\frac{Z_1+Z_2}{2} \geq \frac{1}{2}$ is also at most $\alpha$. This concludes the proof.
\end{proof}

\begin{algorithm}[ht]
        \caption{Combining two algorithms $\mathcal{A}_1$ and $\mathcal{A}_2$ for \domination$(n,\bp,\bq,r)$}
    \begin{algorithmic}[1]\label{alg:combining}
        \STATE Divide the samples into $32\ln \alpha^{-1}$ groups.
        \STATE Run $\mathcal{A}_1$ on each of the first $16\ln \alpha^{-1}$ groups and let $Z_1$ be the average of the outputs.
        \STATE Run $\mathcal{A}_2$ on each of the last $16\ln \alpha^{-1}$ groups and let $Z_2$ be the average of the outputs.
        \STATE If $\frac{Z_1+Z_2}{2}\le \frac{1}{2}$ output $B=0$, else output $B=1$.
     \end{algorithmic}
\end{algorithm}

\begin{corollary}
\label{cor:combinecountmax}
Let $\mathcal{A}_{comb}$ be the algorithm we obtain by combining $\mathcal{A}_{count}$ and $\mathcal{A}_{max}$ in the manner of Lemma \ref{lem:combining}. Then for any instance $C = (n, \bp, \bq)$ of $\domination$,

$$r_{min}(C, \mathcal{A}_{comb}) \leq O\left(\frac{\sqrt{n\log n}}{\norm{\bp-\bq}_2^2}\right).$$

\noindent
If $C$ further satisfies $\delta \leq p_i, q_i \leq 1-\delta$ for all $i$ for some constant $\delta$, then

$$r_{min}(C, \mathcal{A}_{comb}) \leq O(\sqrt{n\log n})r_{min}(C).$$
\end{corollary}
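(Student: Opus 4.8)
The plan is to combine the two upper bounds already established for $\mathcal{A}_{count}$ and $\mathcal{A}_{max}$ (Lemmas~\ref{lem:counting} and~\ref{lem:max}) through the generic combining construction of Lemma~\ref{lem:combining}, and then to show that for every probability vector pair $(\bp,\bq)$ at least one of the two base algorithms already achieves the $\tilde O(\sqrt n)$ rate relative to $\norm{\bp-\bq}_2^2$. Concretely, I would first note that both $\mathcal{A}_{count}$ and $\mathcal{A}_{max}$ are stable, so Lemma~\ref{lem:combining} applies and yields, for $\alpha = 1/4$,
$$r_{min}(C,\mathcal{A}_{comb}) \le O(1)\cdot \min\!\left(r_{min}(C,\mathcal{A}_{count}),\, r_{min}(C,\mathcal{A}_{max})\right).$$
Here $\mathcal{A}_{comb}$ is the algorithm of Algorithm~\ref{alg:combining} instantiated with $\mathcal{A}_1 = \mathcal{A}_{count}$ and $\mathcal{A}_2 = \mathcal{A}_{max}$.

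Next I would plug in the two unconditional bounds: from Lemma~\ref{lem:counting}, $r_{min}(C,\mathcal{A}_{count}) \le O(n/\norm{\bp-\bq}_1^2)$, and from Lemma~\ref{lem:max}, $r_{min}(C,\mathcal{A}_{max}) \le O(\log n / \norm{\bp-\bq}_\infty^2)$. Taking the minimum, it suffices to show
$$\min\!\left(\frac{n}{\norm{\bp-\bq}_1^2},\ \frac{\log n}{\norm{\bp-\bq}_\infty^2}\right) \le O\!\left(\frac{\sqrt{n\log n}}{\norm{\bp-\bq}_2^2}\right).$$
Write $d = \bp - \bq$. The key inequality is the standard interpolation $\norm{d}_2^2 \le \norm{d}_1 \norm{d}_\infty$: indeed $\norm{d}_2^2 = \sum_i d_i^2 \le \norm{d}_\infty \sum_i |d_i| = \norm{d}_\infty \norm{d}_1$. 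I would split on whether $\norm{d}_1 / \norm{d}_\infty$ is larger or smaller than $\sqrt{n/\log n}$: if $\norm{d}_1 \ge \sqrt{n/\log n}\,\norm{d}_\infty$ then the max bound dominates, since $\log n / \norm{d}_\infty^2 \le (\log n / \norm{d}_\infty^2)\cdot(\norm{d}_1 \norm{d}_\infty / \norm{d}_2^2) \le \sqrt{n\log n}/\norm{d}_2^2$ after a short manipulation using $\norm{d}_2^2 \le \norm{d}_1\norm{d}_\infty$; symmetrically, if $\norm{d}_1 \le \sqrt{n/\log n}\,\norm{d}_\infty$ then the counting bound dominates. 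Either way the minimum is $O(\sqrt{n\log n}/\norm{d}_2^2)$, proving the first displayed inequality of the corollary.

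For the second statement, I would invoke Corollary~\ref{cor:dom_lb_half}, which gives $r_{min}(C) \ge \Omega(\delta/\norm{\bp-\bq}_2^2)$ whenever $\delta \le p_i,q_i \le 1-\delta$; dividing the upper bound from the first part by this lower bound gives $r_{min}(C,\mathcal{A}_{comb}) \le O(\sqrt{n\log n})\cdot r_{min}(C)$, with the constant absorbing the $1/\delta$ factor. I do not expect a serious obstacle here — the only mildly delicate point is getting the norm-interpolation case analysis to come out cleanly with the right power of $\log n$, and making sure the constants from Lemma~\ref{lem:combining} (which are stated with an explicit $\ln(\alpha^{-1})$) are correctly folded into the $O(\cdot)$ at $\alpha$ a fixed constant.
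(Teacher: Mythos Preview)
Your approach is essentially the paper's: combine the $\mathcal{A}_{count}$ and $\mathcal{A}_{max}$ bounds via Lemma~\ref{lem:combining}, use the interpolation $\norm{d}_2^2 \le \norm{d}_1\norm{d}_\infty$, and finish with Corollary~\ref{cor:dom_lb_half}. Two remarks. First, your case analysis has the labels swapped: in the case $\norm{d}_1 \ge \sqrt{n/\log n}\,\norm{d}_\infty$ it is the \emph{counting} bound $n/\norm{d}_1^2$ that becomes small, not the max bound. Indeed, the second inequality in your displayed chain, $(\log n/\norm{d}_\infty^2)\cdot(\norm{d}_1\norm{d}_\infty/\norm{d}_2^2) \le \sqrt{n\log n}/\norm{d}_2^2$, simplifies to $\norm{d}_1/\norm{d}_\infty \le \sqrt{n/\log n}$, which is precisely the opposite of the case you are in. Swapping the roles of the two cases fixes this. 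Second, the paper sidesteps the case split altogether with the one-line observation $\min(a,b) \le \sqrt{ab}$: taking $a = n/\norm{d}_1^2$ and $b = \log n/\norm{d}_\infty^2$ gives
\[
\min(a,b) \le \sqrt{ab} = \frac{\sqrt{n\log n}}{\norm{d}_1\norm{d}_\infty} \le \frac{\sqrt{n\log n}}{\norm{d}_2^2}
\]
directly.
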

\begin{proof}
This follows from Lemmas \ref{lem:counting}, \ref{lem:max}, \ref{lem:combining}, and the following observation:

\begin{eqnarray*}
\min\left(\frac{n}{\norm{\bp-\bq}_1^2},\frac{\log n}{\norm{\bp-\bq}_\infty^2}\right) &\le & \sqrt{\frac{n}{\norm{\bp-\bq}_1^2}\cdot \frac{\log n}{\norm{\bp-\bq}_\infty^2}} \\
&\le & \frac{\sqrt{n\log n}}{\norm{\bp-\bq}_2^2}.
\end{eqnarray*}

The last inequality follows from the fact that for any vector $\mathbf{x}$, $\norm{\mathbf{x}}_2^2 \leq \norm{\mathbf{x}}_1 \cdot \norm{\mathbf{x}}_{\infty}$. The second part of the corollary then follows directly from Corollary \ref{cor:dom_lb_half}.
\end{proof}

\subsubsection{The sum of cubes algorithm}

We now give a different algorithm for $\domination$ which we call the \emph{sum of cubes algorithm}, $\mathcal{A}_{cube}$. If we let $S_i = \sum_{j} (X_i - Y_i)$, then intuitively, whereas $\mathcal{A}_{count}$ decides its output based on the signed $\ell_1$ norm of the $S_i$ and whereas $\mathcal{A}_{max}$ decides its output based on the signed $\ell_{\infty}$ norm of the $S_i$, $\mathcal{A}_{cube}$ decides its output based on the signed $\ell_{3}$ norm of the $S_i$. See Algorithm~\ref{alg:sumofcubes} for a detailed description of the algorithm.

\begin{algorithm}[ht]
        \caption{Sum of cubes algorithm $\mathcal{A}_{cube}$ for \domination$(n,\bp,\bq,r)$}
    \begin{algorithmic}[1]\label{alg:sumofcubes}
        \STATE $T_{i,j}=1$ with probability $\frac{1}{2}+\frac{(X_{i,j}-Y_{i,j})}{2}$ and $T_{i,j}=-1$ with probability $\frac{1}{2}-\frac{(X_{i,j}-Y_{i,j})}{2}$
        \STATE $S_i=\sum_{j=1}^r T_{i,j}$
        \STATE $Z=\sum_{i=1}^n S_i^3$
        \STATE If $Z\ge 0$, output $B=0$. If $Z<0$, output $B=1$.
     \end{algorithmic}
\end{algorithm}

To analyze the performance of $\mathcal{A}_{cube}$, we begin by analyzing statistical properties of the random variable $S$.
%We write $A\gtrsim B$ when there is some absolute constant $c>0$ such that $A\ge c\cdot B$. We define $A \lesssim B$ in a similar way. We write $A\gg B$ to denote that $A\ge C\cdot B$ for some sufficiently large constant $C$.
\begin{lemma}
\label{lem:thirdmoments}
Let $S=\sum_{j=1}^r X_{j}$ where $X_{1},\cdots,X_{r}$ are i.i.d $\{-1,1\}$-valued random variables with mean $\eps\ge 0$ and $r\ge 8$. Let $Z=S^3$. Then
\begin{align*}
\E[Z]& \ge 2r^2 \eps +\frac{1}{2} r^3 \eps^3 \\
\var[Z]& \le 15 r^3+36 r^4\eps^2+9r^5\eps^4.
\end{align*}
\end{lemma}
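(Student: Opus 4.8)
The plan is to compute $\E[Z]$ and $\var[Z] = \E[Z^2] - \E[Z]^2$ directly by expanding $S = \sum_j X_j$ as a sum of i.i.d.\ $\pm 1$ variables and using multilinearity of expectation, keeping careful track of which cross-terms survive. Write $X_j = \eps + Y_j$ where $Y_j = X_j - \eps$ are i.i.d., mean zero, with $\E[Y_j^2] = 1 - \eps^2 \le 1$ and all moments bounded by $2$ in absolute value (since $|X_j| \le 1$, so $|Y_j| \le 1 + \eps \le 2$). The key structural fact is that any monomial in the $Y_j$'s in which some index appears exactly once has expectation zero; this kills most terms in the expansions of $S^3$ and $S^6$.

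First I would handle $\E[Z] = \E[S^3]$. Expanding $S^3 = \sum_{j_1,j_2,j_3} X_{j_1}X_{j_2}X_{j_3}$, the surviving contributions after substituting $X_j = \eps + Y_j$ are: the term with all three factors equal to $\eps$ (contributing $r^3\eps^3$); the terms with one factor $Y_j^2$ and one factor $\eps$, of which there are $3$ orderings times $r$ choices of the repeated index times $(r-1)$ choices of the $\eps$-index, contributing $3r(r-1)\eps\,\E[Y_1^2] = 3r(r-1)\eps(1-\eps^2)$; and the term with a single $Y_j^3$, contributing $r\,\E[Y_1^3]$. (Terms with two distinct singly-occurring indices vanish.) Since $\E[Y_1^3]$ could be negative, I bound $r\E[Y_1^3] + $ the leftover from $3r(r-1)\eps(1-\eps^2)$ versus the claimed $2r^2\eps + \tfrac12 r^3\eps^3$; using $r \ge 8$, $\eps \le 1$, and $1 - \eps^2 \ge \tfrac12 r^3\eps^3 / (3r(r-1)\eps)$ is not quite the right bookkeeping, so more precisely I would note $3r(r-1)(1-\eps^2)\eps \ge 2r^2\eps$ when $r\ge 8$ and $\eps$ is small, and $r^3\eps^3 \ge \tfrac12 r^3\eps^3$, absorbing the $r\E[Y_1^3] \ge -2r$ error term into the slack of $3r(r-1)\eps$ — this case-splitting on the size of $\eps$ relative to $1/r$ is the only slightly delicate part of the mean bound.

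For the variance I would compute $\E[Z^2] = \E[S^6] = \sum \E[X_{j_1}\cdots X_{j_6}]$, again expanding each $X_{j} = \eps + Y_j$ and discarding monomials with a singly-occurring $Y$-index. Grouping surviving terms by the partition of $\{1,\dots,6\}$ into blocks of equal indices (each block of size $\ge 2$, or size $1$ if that factor is taken as $\eps$), one gets a polynomial in $r$ and $\eps$ whose leading terms in each degree of $\eps$ are $O(r^3)$ (degree $0$, from three pairs or a pair-plus-quadruple etc.), $O(r^4\eps^2)$, and $O(r^6\eps^6)$ type contributions; crucially the degree-$6$-in-$\eps$ piece is exactly $r^6\eps^6$, matching $\E[Z]^2 \ge r^6\eps^6$, so it cancels in $\E[Z^2] - \E[Z]^2$, leaving $\var[Z] = O(r^3 + r^4\eps^2 + r^5\eps^4)$. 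Then I would just verify the explicit constants $15, 36, 9$ by bounding each combinatorial coefficient (number of index-tuples realizing each partition type is at most $6!\, r^{\#\text{blocks}}$) and each moment factor by $2$, and checking the bound holds for $r \ge 8$.

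\textbf{Main obstacle.} The real work is the $\E[S^6]$ expansion: organizing the $\pm 1$ sixth-moment computation so that the $\eps^6 r^6$ term is seen to cancel against $\E[Z]^2$ exactly (it must, since both equal $r^6\eps^6$ plus lower order), and then grouping the remaining $O(r^5\eps^4)$, $O(r^4\eps^2)$, and $O(r^3)$ pieces with constants small enough to fit under $15r^3 + 36r^4\eps^2 + 9r^5\eps^4$. This is a finite but bookkeeping-heavy calculation; I would do it by partitions of a $6$-element set into blocks, which keeps the number of cases manageable (there are only a handful of partition types once singletons are forced to carry an $\eps$ factor), and use the crude moment bound $|\E[Y_1^k]| \le 1$ for $k \ge 2$ throughout rather than exact values, since the claimed constants have plenty of room.
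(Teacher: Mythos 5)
Your overall strategy --- expand $\E[S^3]$ and $\E[S^6]$ multilinearly and track which monomials survive --- is the same as the paper's; the paper does the expansion in the raw variables, uses $X_i^2=1$ to reduce every monomial, and obtains the \emph{exact} polynomials in $r$ and $\eps$ (e.g.\ $\E[S^3]=(-2r+3r^2)\eps+(2r-3r^2+r^3)\eps^3$), after which the two inequalities are verified term by term for $r\ge 8$. The difference is that you propose to replace the exact computation with crude bounds, and that is where the proposal breaks.

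First, the mean. Your surviving terms are correct, but the bound $r\,\E[Y_1^3]\ge -2r$ is fatally lossy: when $\eps$ is small (already at $\eps=0$) the claimed lower bound is $2r^2\eps+\tfrac12 r^3\eps^3=0$ while your estimate gives $3r(r-1)\eps(1-\eps^2)+r^3\eps^3-2r$, which is negative; no case split on $\eps$ versus $1/r$ rescues this. You must use that the third central moment itself carries a factor of $\eps$: since $X_1^3=X_1$ and $X_1^2=1$, one gets $\E[(X_1-\eps)^3]=-2\eps+2\eps^3$, and then the inequality follows cleanly for $r\ge 8$. Second, and more seriously, the variance constants have \emph{no} room: $15$, $36$, and $9$ are exactly the leading-order coefficients of $\var[S^3]$. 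Writing $T=\sum_j Y_j$, the dominant contributions are $\var(T^3)\approx 15r^3\mu_2^3$, the pair $2\cdot(3r^2\eps^2)\,\E[T^4]+ (3r\eps)^2\var(T^2)\approx 18r^4\eps^2+18r^4\eps^2=36r^4\eps^2$, and $(3r^2\eps^2)^2\var(T)\approx 9r^5\eps^4$. So bounding moment factors by $2$ (inflating $\mu_2^3\le 1$ to $8$), counting tuples per partition type by $6!$, or using $(a+b+c)^2\le 3(a^2+b^2+c^2)$ all overshoot the stated constants by large factors. You also need the $\eps^6$ coefficient of the variance to be nonpositive (this is one place the hypothesis $r\ge 8$ is actually used), which again requires knowing that coefficient exactly rather than observing that the $r^6\eps^6$ terms cancel. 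In short: the lemma as stated forces you to carry out essentially the exact moment computation the paper performs; the "plenty of room in the constants" assumption is false.
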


\begin{proof}
By applying the multinomial theorem and using the fact that $X_i^2=1$ for each $i$, we can write multilinear expressions for $S^3$ and $S^6$. We can now use linearity of expectation and the independence among the $X_i$'s to compute the mean and variance exactly.
\begin{align*}
\E[Z]&=\E[S^3]=(-2 r + 3 r^2) \eps + (2 r - 3 r^2 + r^3) \eps^3 \ge 2r^2 \eps +\frac{1}{2} r^3 \eps^3 \\
\var[Z]&=E[S^6]-\E[S^3]^2=(16 r - 30 r^2 + 15 r^3) + (-136 r + 282 r^2 - 183 r^3 + 36 r^4) \eps^2 +\\
&\hspace{1cm} (240r - 522 r^2 + 381 r^3 - 108 r^4 + 9 r^5) \eps^4 + (-120 r + 270 r^2 - 213 r^3 + 72r^4 - 9 r^5) \eps^6\\
& \le 15r^3+36r^4\eps^2+9r^5\eps^4
\end{align*}
\end{proof}

\begin{lemma}
\label{lem:sumofcubes}
Let $S_i=\sum_{j=1}^r X_{i,j}$ where for each $i\in [n]$, $X_{i,1},\cdots,X_{i,r}$ are i.i.d $\{-1,1\}$-valued random variables with mean $\eps_i$, along with the condition that either all $\eps_i \ge 0$ or all $\eps_i \le 0$. Let $Z=\sum_{i=1}^n S_i^3$. If $r\ge 8$ and $r\ge \eta \sqrt{n}/(\sum_{i=1}^n \eps_i^2)$ for some $\eta \ge 1$ then, $\E[Z]^2\ge \frac{\eta}{36} \var[Z]$.
\end{lemma}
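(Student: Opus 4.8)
The plan is to apply Lemma~\ref{lem:thirdmoments} coordinate by coordinate and then combine the estimates with a handful of Cauchy--Schwarz inequalities among the power sums $a_k:=\sum_{i=1}^n\eps_i^k$. First I would reduce to the case $\eps_i\ge 0$ for all $i$ (replacing each $\eps_i$ by $-\eps_i$ and $Z$ by $-Z$ changes neither $\E[Z]^2$ nor $\var[Z]$), and note that we may assume $a_2>0$, since otherwise the hypothesis $r\ge\eta\sqrt n/a_2$ is vacuous. Since the $S_i$ are independent, $\E[Z]=\sum_i\E[S_i^3]$ and $\var[Z]=\sum_i\var[S_i^3]$, so Lemma~\ref{lem:thirdmoments} (which applies because $r\ge 8$) gives
\[
\E[Z]\ \ge\ 2r^2a_1+\tfrac12 r^3a_3,\qquad \var[Z]\ \le\ 15nr^3+36r^4a_2+9r^5a_4 .
\]
Both summands in the lower bound for $\E[Z]$ are nonnegative, so $\E[Z]^2\ge 4r^4a_1^2+2r^5a_1a_3+\tfrac14 r^6a_3^2$, and the goal becomes to charge each of the three terms of the variance bound against a piece of this expansion.

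The elementary inequalities I would use (each needing only $\eps_i\ge0$) are $a_1^2\ge a_2$, $a_1a_3\ge a_2^2$, $a_1\le\sqrt{n\,a_2}$ (hence $a_3\ge a_2^2/a_1\ge a_2^{3/2}/\sqrt n$), and $a_4\le(\max_i\eps_i)a_3\le\sqrt{a_2}\,a_3$; the hypothesis reads $r a_2\ge\eta\sqrt n$. For the term $9r^5a_4$, I would show $\tfrac14 r^6a_3^2\ge\tfrac{\eta}{4}r^5a_4$, which via $a_4\le\sqrt{a_2}\,a_3$ reduces to $r a_3\ge\eta\sqrt{a_2}$, and this follows at once from $a_3\ge a_2^{3/2}/\sqrt n$ and $r\ge\eta\sqrt n/a_2$. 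For the remaining two terms I would use only the cross term: since $a_1a_3\ge a_2^2$ and $r a_2\ge\eta\sqrt n$,
\[
2r^5a_1a_3\ \ge\ 2r^5a_2^2\ =\ 2r^4a_2(r a_2)\ \ge\ 2\eta\sqrt n\,r^4a_2\ \ge\ \eta r^4a_2+\eta\sqrt n\,r^4a_2
\]
(the last step uses $\sqrt n\ge 1$). The piece $\eta r^4a_2=\tfrac{\eta}{36}(36r^4a_2)$ covers the second variance term, and the surplus obeys $\eta\sqrt n\,r^4a_2=\eta\sqrt n\,r^3(r a_2)\ge\eta^2 n r^3\ge\tfrac{\eta}{36}(15nr^3)$ using $r a_2\ge\eta\sqrt n$ and $\eta\ge1$. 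Discarding the nonnegative leftover $4r^4a_1^2$ and adding up yields $\E[Z]^2\ge\tfrac{\eta}{36}(15nr^3+36r^4a_2+9r^5a_4)\ge\tfrac{\eta}{36}\var[Z]$.

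I expect the only subtle point to be the ``noise-floor'' term $15nr^3$, contributed by the many coordinates that carry negligible signal: bounding $\E[Z]^2$ below by $4r^4a_1^2$ alone fails when the signal is concentrated (then $a_1^2$ can be as small as $a_2$), while bounding it below by $\tfrac14 r^6a_3^2$ alone fails when the signal is spread out. The device that avoids splitting into these two regimes is that the cross term $2r^5a_1a_3\ge 2r^5a_2^2$ is, by the sampling hypothesis, a factor $2\sqrt n$ larger than what is needed to cover $36r^4a_2$, and exactly this slack absorbs $15nr^3$. Everything else is bookkeeping; in particular no concentration inequality is needed for this lemma itself (Chebyshev enters only afterward, when this bound is used to control the error probability of $\mathcal{A}_{cube}$).
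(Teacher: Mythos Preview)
Your proof is correct and follows essentially the same route as the paper. Both arguments reduce to $\eps_i\ge 0$, invoke Lemma~\ref{lem:thirdmoments} to get $\E[Z]\ge 2r^2a_1+\tfrac12 r^3a_3$ and $\var[Z]\le 15nr^3+36r^4a_2+9r^5a_4$, expand $\E[Z]^2$, and then dominate each variance term by a piece of the expansion via the same power-sum inequalities (in particular $a_1a_3\ge a_2^2$ and $\ell_p$-monotonicity); the only cosmetic difference is that the paper bounds $r^5a_4$ via $a_4^{3/2}\le a_3^2$ whereas you use the equivalent $a_4\le\sqrt{a_2}\,a_3$, and the paper handles the $nr^3$ and $r^4a_2$ terms as two separate one-line inequalities rather than packaging them into a single chain.
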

\begin{proof}
Without loss of generality, we can assume that $\eps_i\ge 0$ for all $i\in [n]$.
By Lemma~\ref{lem:thirdmoments}, 
\begin{align}
\E[Z]^2&\ge 4 r^4(\sum_i \eps_i )^2+ \frac{1}{4} r^6 (\sum_i \eps_i^3)^2 + 2r^5 (\sum \eps_i)(\sum_i \eps_i^3)\label{eqn:expsquare}\\
\var[Z]&\le 15nr^3 +36 r^4\sum_i \eps_i^2 +9 r^5 \sum_i \eps_i^4. \label{eqn:variance}
\end{align}

We will show that each term in the Equation~\ref{eqn:variance} is dominated by some term in Equation~\ref{eqn:expsquare}. 
\begin{align*}
nr^3 &= r^5 \frac{n}{r^2} \le \frac{1}{\eta^2}r^5 (\sum_i \eps_i ^2)^2 \le \frac{1}{\eta^2}r^5 (\sum_i \eps_i)(\sum_i \eps_i^3) \tag{Cauchy-Schwarz inequality}\\
r^4(\sum_i \eps_i^2) &\le \frac{1}{\eta\sqrt{n}} r^5 (\sum_i\eps_i^2)^2 \le \frac{1}{\eta\sqrt{n}} r^5 (\sum_i \eps_i)(\sum_i \eps_i^3)\\
r^5(\sum_i \eps_i^4) &\le r^6 \frac{1}{\eta\sqrt{n}} (\sum_i \eps_i^2)(\sum_i \eps_i^4) \le r^6 \frac{1}{\eta\sqrt{n}} \left(\sqrt{n}\cdot (\sum_i \eps_i^4)^{1/2}\right)(\sum_i \eps_i^4) \tag{Cauchy-Schwarz inequality}\\
& \hspace{4cm}= \frac{r^6}{\eta} (\sum_i \eps_i^4)^{3/2} \le \frac{r^6}{\eta} (\sum_i \eps_i^3)^2 \tag{monotonicity of $\ell_p$ norms}
\end{align*}

Adding the above inequalities, we get $\var[Z]\le \frac{36}{\eta} \E[Z]^2$.
\end{proof}

\begin{theorem}
\label{thm:sumofcubes}
If $C = (n, \bp, \bq)$ is any instance of $\domination$, then

$$r_{min}(C, \mathcal{A}_{cube}) \leq \max\left(\frac{144\sqrt{n}}{\norm{\bp-\bq}_2^2},8\right).$$

\noindent
If $C$ satisfies $\delta \leq p_i, q_i \leq 1-\delta$ for all $i$ for some constant $\delta$, then

$$r_{min}(C, \mathcal{A}_{cube}) \leq O(\sqrt{n})r_{min}(C).$$
\end{theorem}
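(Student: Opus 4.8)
The plan is to apply the moment estimates of Lemma~\ref{lem:sumofcubes} together with Chebyshev's inequality. First I would set up the reduction: assume without loss of generality that $B=0$, and recall that the algorithm forms $T_{i,j}\in\{-1,1\}$ with $\E[T_{i,j}] = X_{i,j}-Y_{i,j}$ in expectation, so that after conditioning appropriately the $T_{i,j}$ are i.i.d. $\{-1,1\}$-valued with mean $\eps_i := p_i - q_i \ge 0$ for every $i$ (this is exactly why the randomized rounding in Algorithm~\ref{alg:sumofcubes} is used — it symmetrizes and makes all coordinate means nonnegative when $B=0$). Then $S_i = \sum_j T_{i,j}$ and $Z = \sum_i S_i^3$, and the algorithm errs precisely when $Z < 0$.

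Next, set $r = \max\left(\tfrac{144\sqrt{n}}{\norm{\bp-\bq}_2^2}, 8\right)$ so that $r \ge 8$ and $r \ge \eta\sqrt{n}/\left(\sum_i \eps_i^2\right)$ with $\eta = 144$ (using $\sum_i \eps_i^2 = \norm{\bp-\bq}_2^2$). Lemma~\ref{lem:sumofcubes} then gives $\E[Z]^2 \ge \tfrac{144}{36}\var[Z] = 4\var[Z]$, i.e. $\var[Z] \le \tfrac{1}{4}\E[Z]^2$. Since $\eps_i \ge 0$ and $r \ge 8$, Lemma~\ref{lem:thirdmoments} gives $\E[S_i^3] \ge 0$ for each $i$, hence $\E[Z] \ge 0$; in fact $\E[Z] > 0$ unless $\bp = \bq$ (the degenerate case is vacuous). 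By Chebyshev's inequality,
\[
\Pr[Z < 0] \le \Pr\left[\,|Z - \E[Z]| \ge \E[Z]\,\right] \le \frac{\var[Z]}{\E[Z]^2} \le \frac{1}{4},
\]
so $\mathcal{A}_{cube}$ succeeds with probability at least $3/4$ with this many samples, proving $r_{min}(C, \mathcal{A}_{cube}) \le \max\left(\tfrac{144\sqrt{n}}{\norm{\bp-\bq}_2^2}, 8\right)$.

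For the second part, if $\delta \le p_i, q_i \le 1-\delta$ for all $i$, then by Corollary~\ref{cor:dom_lb_half} we have $r_{min}(C) \ge 0.05\ln(2)\cdot \delta/\norm{\bp-\bq}_2^2$, so $\tfrac{144\sqrt{n}}{\norm{\bp-\bq}_2^2} = O(\sqrt{n})\cdot r_{min}(C)$; the additive constant $8$ is absorbed since $r_{min}(C) \ge \Omega(1)$ (any instance with $\bp \ne \bq$ needs at least a constant number of samples, as $\I(\bp,\bq)$ is bounded by a constant times $\norm{\bp-\bq}_\infty^2 \le 1$ via Lemma~\ref{lem:Ipq_approx_square}, or one can note $r_{min}(C)\ge 1$ trivially). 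Hence $r_{min}(C,\mathcal{A}_{cube}) \le O(\sqrt{n})r_{min}(C)$. The only slightly delicate point is the conditioning argument in the first step — making precise that the distribution of $Z$ given $B=0$ is exactly that of a sum of cubes of independent centered-at-$\eps_i$ random walks — but this is immediate from the definition of the sampling process in Definition~\ref{def:dom} and the rounding step, so I do not expect a genuine obstacle here; the substantive work is all in the already-established Lemmas~\ref{lem:thirdmoments} and~\ref{lem:sumofcubes}.
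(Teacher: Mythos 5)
Your proposal is correct and follows essentially the same route as the paper's proof: apply Lemma~\ref{lem:sumofcubes} with $\eta=144$ to get $\E[Z]^2\ge 4\var[Z]$, conclude $\Pr[Z<0]\le 1/4$ by Chebyshev, and invoke Corollary~\ref{cor:dom_lb_half} for the competitive-ratio statement. Your explicit handling of the additive constant $8$ and the degenerate case $\bp=\bq$ is slightly more careful than the paper's (which also contains a harmless arithmetic typo writing $108/36$ where $144/36=4$ is meant), but the substance is identical.
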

\begin{proof}
%\[
%\E[T_{i,j}]=\E[X_{i,j}-Y_{i,j}]=\left\{
%\begin{array}{lr}
%p_i-q_i,& B=0\\
%q_i-p_i,& B=1
%\end{array}.\right.
%\]
Assume without loss of generality that $B=0$. We have $S_i=\sum_{j=1}^r T_{i,j}$ and $Z=\sum_{i=1}^n S_i^3$. Note that for each $i$, the $T_{i,j}$ are i.i.d. $\{-1, 1\}$ random variables with mean $\E[T_{i,j}] = p_i - q_i$. Applying Lemma~\ref{lem:sumofcubes} with $\eta=144$, if $r \geq \max\left(\frac{144\sqrt{n}}{\norm{\bp-\bq}_2^2},8\right)$ we have that $\E[Z]^2\ge \frac{108}{36} \var[Z]=\frac{\var[Z]}{3}$.	
Since the algorithm makes an error (i.e. outputs $B=1$) when $Z< 0$, we can use Chebyshev's inequality to bound the probability that $Z<0$.
$$\Pr[Z< 0]\le \Pr[|Z-\E[Z]|\ge \E[Z]]\le \frac{\var[Z]}{\E[Z]^2}\le \frac{1}{4}.$$

The second part of the theorem then follows directly from Corollary \ref{cor:dom_lb_half}.
\end{proof}

%-----------------------------------------------------------------------------------------------------------------------------
\section{Domination in the general regime}\label{sec:domination_general}
In this section, we consider \domination~in the general regime. Unlike in the previous section, it is no longer true that $I(X_{i,j}Y_{i,j} ; B) =\I(p_i,q_i)= \Theta( (p_i-q_i)^2)$. In particular, when $p_i$ and $q_i$ are both very small, $\I(p_i,q_i)$ can be much bigger than $(p_i-q_i)^2$; as a result, the algorithms designed in the previous section can fail under these circumstances. 

In Section \ref{dub}, we present a new algorithm which is $\tilde{O}(\sqrt{n} \cdot r_{min})$-competitive. According to Theorem \ref{thm:domlb}, this is the best we can do up to polylogarithmic factors. In Section \ref{infgap}, we then demonstrate that the general regime is indeed harder than the restricted regime in Section \ref{sec:d2}. In particular, we give instances where the algorithms presented in the previous section fail; we show that the competitive ratio of these algorithms is unbounded (even for fixed $n$). 

\subsection{An $\tilde{O}(\sqrt{n})$-competitive algorithm}
\label{dub}
%!TEX root =  main.tex
Here we give an algorithm that only needs $O(\sqrt{n}\log(n)/\I(\bp,\bq))$ samples to solve $\domination$ (Theorem \ref{thm:main}). By Lemma \ref{lem:dom_lb}, this is only $\tilde{O}(\sqrt{n})$ times as many samples as the optimal algorithm needs. Intuitively, the algorithm works as follows: if for some coordinate $i$, $X_{i,1}Y_{i,1}...X_{i,r},Y_{i,r}$ conveys enough information about $B$, we will only use samples from coordinate $i$ to determine $B$. Otherwise, the information about $B$ must be well-spread throughout all the coordinates, and a majority vote will work.

We begin by bounding the probability we can determine the answer from a single fixed coordinate.

\begin{lemma}[Sanov's theorem]
\label{lem:sanovthm}
Let $\cP(\Sigma)$ denote the space of all probability distributions on some finite set $\Sigma$. Let $R\in \cP(\Sigma)$ and let $Z_1,\cdots,Z_k$ be i.i.d random variables with distribution $R$. For every $x\in \Sigma^k$, we can define an empirical probability distribution $\hP_x$ on $\Sigma$ as $$\forall \sigma\in\Sigma\quad \hP_x(\sigma)=\frac{|\{i\in [k]: x_i=\sigma\}|}{k}.$$ Let $C$ be a closed convex subset of $\cP(\Sigma)$ such that for some $P\in C$, $D(P||R)<\infty$. Then 
$$\Pr\left[\hP_{(Z_1,\cdots,Z_k)}\in C\right]\le \exp\left(-k(\ln 2) D(Q^*||R)\right)$$ where $Q^*=\argmin_{Q\in C} D(Q||R)$ is unique. In the case when $D(Q||R)=\infty$ for all $Q\in C$,  $\Pr\left[\hP_{(Z_1,\cdots,Z_k)}\in C\right]=0.$
\end{lemma}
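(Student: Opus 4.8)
The plan is to prove the clean, prefactor-free bound by a change-of-measure (tilting) argument rather than the usual counting-of-types estimate, which would only give the bound up to a $(k+1)^{|\Sigma|}$ factor. First I would dispose of the structural preliminaries. The map $Q\mapsto D(Q\|R)$ is lower semicontinuous, and on the set where it is finite it is strictly convex (it is the sum of a linear function of $Q$ and the separable strictly convex function $\sum_\sigma Q(\sigma)\log Q(\sigma)$); since $\cP(\Sigma)$ is compact, $C$ is compact, and by hypothesis $C$ contains some $P$ with $D(P\|R)<\infty$, the minimum over $C$ is attained, is finite, and the minimizer $Q^*$ is unique. In the remaining case where $D(Q\|R)=\infty$ for every $Q\in C$: every empirical distribution lying in $C$ then puts positive mass on some $\sigma$ with $R(\sigma)=0$, hence cannot occur under $R$, so $\Pr[\hP_{(Z_1,\dots,Z_k)}\in C]=0$ and there is nothing more to prove.

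The engine of the proof is the information-projection (``Pythagorean'') inequality for the $I$-projection $Q^*$: for every $Q\in C$,
\[
\sum_{\sigma\in\Sigma} Q(\sigma)\log_2\frac{Q^*(\sigma)}{R(\sigma)} \;\ge\; D(Q^*\|R),
\]
equivalently $D(Q\|R)\ge D(Q\|Q^*)+D(Q^*\|R)$. To prove it, fix $Q\in C$ with $D(Q\|R)<\infty$ (otherwise the inequality is trivial), set $Q_\lambda=(1-\lambda)Q^*+\lambda Q\in C$ for $\lambda\in[0,1]$, and let $f(\lambda)=D(Q_\lambda\|R)$. Since $\lambda=0$ minimizes $f$, the one-sided derivative must satisfy $f'(0^+)\ge 0$ wherever it exists. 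If $Q^*(\sigma)=0<Q(\sigma)$ for some $\sigma$, the contribution of that coordinate to $f(\lambda)$ is $\lambda Q(\sigma)\log_2(\lambda Q(\sigma)/R(\sigma))$, which is negative for small $\lambda>0$ and dominates the $O(\lambda)$ contributions of the other coordinates, forcing $f(\lambda)<f(0)$ — a contradiction. Hence $\supp Q\subseteq\supp Q^*$ (and $\supp Q^*\subseteq\supp R$ since $D(Q^*\|R)<\infty$), so $f$ is differentiable at $0^+$; computing $f'(0^+)$ and using that the $\log_2 e$ terms cancel because $\sum_\sigma(Q(\sigma)-Q^*(\sigma))=0$, we get $0\le f'(0^+)=\sum_\sigma Q(\sigma)\log_2(Q^*(\sigma)/R(\sigma))-D(Q^*\|R)$, as claimed.

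With this in hand the bound follows by change of measure. For $x\in\Sigma^k$ write $Q_x=\hP_x$, so $\Pr_R[x]=\prod_\sigma R(\sigma)^{kQ_x(\sigma)}$ and $\Pr_{Q^*}[x]=\prod_\sigma Q^*(\sigma)^{kQ_x(\sigma)}$. Any $x$ with $Q_x\in C$ and $\Pr_R[x]>0$ has $D(Q_x\|R)<\infty$, hence $\supp Q_x\subseteq\supp Q^*$ by the previous step, so $\Pr_{Q^*}[x]>0$ and
\[
\frac{\Pr_R[x]}{\Pr_{Q^*}[x]}=\prod_{\sigma}\left(\frac{R(\sigma)}{Q^*(\sigma)}\right)^{kQ_x(\sigma)}=\exp\!\left(-k\ln 2\sum_{\sigma}Q_x(\sigma)\log_2\frac{Q^*(\sigma)}{R(\sigma)}\right)\le \exp\!\left(-k\ln 2\,D(Q^*\|R)\right),
\]
where the last step is the projection inequality applied to $Q_x\in C$. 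Summing over all $x$ with $Q_x\in C$ (dropping those with $\Pr_R[x]=0$) and using $\sum_x\Pr_{Q^*}[x]\le 1$,
\[
\Pr[\hP_{(Z_1,\dots,Z_k)}\in C]=\sum_{x:\,Q_x\in C}\Pr_R[x]\le \exp(-k\ln 2\,D(Q^*\|R))\sum_{x:\,Q_x\in C}\Pr_{Q^*}[x]\le \exp(-k\ln 2\,D(Q^*\|R)).
\]

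I expect the main obstacle to be the information-projection inequality, specifically making the perturbation argument rigorous when $Q^*$ (or some $Q\in C$) is not fully supported: one must both rule out $f'(0^+)=-\infty$ and thereby extract the support containment $\supp Q\subseteq\supp Q^*$ that is needed for the change of measure to even be well defined. Everything else — the compactness/strict-convexity argument for existence and uniqueness of $Q^*$, and the method-of-types bookkeeping in the last step — is routine.
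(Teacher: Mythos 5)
Your proof is correct: the information-projection (Pythagorean) inequality for $Q^*$ followed by the change of measure to $Q^*$ is exactly the standard Csisz\'ar argument that yields the prefactor-free bound for closed convex $C$, and you handle the support issues (both for differentiating at $\lambda=0^+$ and for the ratio $\Pr_R[x]/\Pr_{Q^*}[x]$ to be defined) correctly. The paper does not prove this lemma at all --- it simply cites exercises 2.7 and 3.20 of Csisz\'ar--K\"orner, and your argument is precisely the one those exercises are based on, so there is nothing to compare beyond noting that you have filled in the omitted proof.
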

\begin{proof}
See exercise 2.7 and 3.20 in \cite{CK11}.
\end{proof}

\begin{lemma} 
\label{lem:sanovoptimaltest}
Let $0\le q<p\le 1$ and let $X_1,\cdots,X_k$ be i.i.d $\Ber(p)$ and $Y_1,\cdots,Y_k$ be i.i.d $\Ber(q)$. Then $$\Pr\left[\sum_{i=1}^k (X_i-Y_i) \le 0\right] \le \exp\left(-2(\ln 2)k\log\left(\frac{1}{\sqrt{pq}+\sqrt{(1-p)(1-q)}}\right)\right).$$
\end{lemma}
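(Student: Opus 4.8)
The plan is to recast the one-sided event $\{\sum_{i=1}^k (X_i-Y_i)\le 0\}$ as a large-deviations event for an empirical distribution and then apply Sanov's theorem (Lemma~\ref{lem:sanovthm}). First I would group the samples into pairs $Z_i=(X_i,Y_i)\in\Sigma:=\{0,1\}^2$ for $i\in[k]$; these are i.i.d.\ with distribution $R$ given by $R(1,1)=pq$, $R(1,0)=p(1-q)$, $R(0,1)=(1-p)q$, $R(0,0)=(1-p)(1-q)$. If $\hP$ denotes the empirical distribution of $(Z_1,\dots,Z_k)$, then $\sum_{i=1}^k X_i=k(\hP(1,0)+\hP(1,1))$ and $\sum_{i=1}^k Y_i=k(\hP(0,1)+\hP(1,1))$, so $\frac1k\sum_{i=1}^k(X_i-Y_i)=\hP(1,0)-\hP(0,1)$ and the event in question is exactly $\{\hP\in C\}$, where $C=\{Q\in\cP(\Sigma):Q(1,0)\le Q(0,1)\}$ is a closed, convex subset of $\cP(\Sigma)$. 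Assuming $0<q<p<1$ (the degenerate cases $q=0$ or $p=1$ I would handle at the end, either by continuity of both sides of the claimed inequality in $(p,q)$ or directly via the ``$\Pr=0$'' clause of Lemma~\ref{lem:sanovthm}), $R$ has full support, so every $Q\in C$ has $D(Q\|R)<\infty$ and Sanov's theorem applies, giving $\Pr[\hP\in C]\le\exp(-k(\ln 2)\,D(Q^*\|R))$ with $Q^*=\argmin_{Q\in C}D(Q\|R)$. It then remains to prove the lower bound $D(Q^*\|R)\ge 2\log_2\!\big(1/(\sqrt{pq}+\sqrt{(1-p)(1-q)})\big)$ on the information projection of $R$ onto the half-space $C$.

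The heart of the argument is this projection bound, and the trick I would use is to tilt $R$ along the two ``off-diagonal'' coordinates by their geometric mean. Set $m=\sqrt{R(1,0)\,R(0,1)}=\sqrt{pq(1-p)(1-q)}$ and $\Lambda=2m+R(1,1)+R(0,0)$; a direct computation gives $\Lambda=\big(\sqrt{pq}+\sqrt{(1-p)(1-q)}\big)^2$, and $0<\Lambda\le 1$ by Cauchy--Schwarz applied to $(\sqrt p,\sqrt{1-p})$ and $(\sqrt q,\sqrt{1-q})$. Let $\bar R$ be the probability distribution on $\Sigma$ proportional to $(m,m,R(1,1),R(0,0))$ on the coordinates $((1,0),(0,1),(1,1),(0,0))$, i.e.\ $\bar R=\Lambda^{-1}(m,m,pq,(1-p)(1-q))$. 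For \emph{any} $Q\in\cP(\Sigma)$ one has the identity $D(Q\|R)=D(Q\|\bar R)+\sum_{\sigma\in\Sigma}Q(\sigma)\log_2\frac{\bar R(\sigma)}{R(\sigma)}$, and evaluating the four logarithms and summing yields $\sum_{\sigma}Q(\sigma)\log_2\frac{\bar R(\sigma)}{R(\sigma)}=\log_2(1/\Lambda)+\theta\,(Q(0,1)-Q(1,0))$, where $\theta=\tfrac12\log_2\frac{p(1-q)}{(1-p)q}\ge 0$ (nonnegativity uses $p>q$). Since $D(Q\|\bar R)\ge 0$, and since $Q\in C$ forces $Q(0,1)-Q(1,0)\ge 0$, we obtain $D(Q\|R)\ge\log_2(1/\Lambda)$ for every $Q\in C$; in particular $D(Q^*\|R)\ge\log_2(1/\Lambda)=2\log_2\!\big(1/(\sqrt{pq}+\sqrt{(1-p)(1-q)})\big)$.

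Plugging this into the Sanov bound from the first paragraph gives
\[
\Pr\!\left[\sum_{i=1}^k(X_i-Y_i)\le 0\right]=\Pr[\hP\in C]\le\exp\!\big(-k(\ln 2)\,D(Q^*\|R)\big)\le\exp\!\left(-2(\ln 2)\,k\log_2\frac{1}{\sqrt{pq}+\sqrt{(1-p)(1-q)}}\right),
\]
which is precisely the claimed inequality (reading $\log=\log_2$, consistent with the $\ln 2$ normalization in Lemma~\ref{lem:sanovthm}); the degenerate boundary cases are disposed of as noted above. I expect the only genuinely nontrivial step to be the projection computation in the second paragraph---recognizing that the tilted distribution $\bar R$ is the right reference measure and that changing reference measure splits $D(Q\|R)$ into a nonnegative divergence plus a linear functional that is controlled on $C$; the pair encoding, the verification that $C$ is closed and convex, the Cauchy--Schwarz bound $\Lambda\le 1$, and the boundary cases are all routine.
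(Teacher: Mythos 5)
Your proof is correct, and it follows the paper's skeleton exactly up to the key step: same encoding of the samples as i.i.d.\ elements of $\{0,1\}^2$ with law $R=\bigl((1-p)(1-q),(1-p)q,p(1-q),pq\bigr)$, same closed convex half-space $C=\{Q:Q(1,0)\le Q(0,1)\}$, and the same application of Sanov's theorem (Lemma~\ref{lem:sanovthm}), reducing everything to a lower bound on $D(Q^*\|R)$. Where you diverge is in how that projection is bounded. The paper parametrizes the boundary $\{Q(1,0)=Q(0,1)\}$, writes the divergence as a convex function $f(x,y)$, solves the two stationarity equations explicitly, checks feasibility of the solution, and substitutes to get $D(Q^*\|R)=-2\log\bigl(\sqrt{pq}+\sqrt{(1-p)(1-q)}\bigr)$ exactly. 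You instead guess the tilted measure $\bar R=\Lambda^{-1}(m,m,pq,(1-p)(1-q))$ with $m=\sqrt{pq(1-p)(1-q)}$ and use the change-of-reference identity $D(Q\|R)=D(Q\|\bar R)+\log_2(1/\Lambda)+\theta\,(Q(0,1)-Q(1,0))$ with $\theta\ge 0$, so that nonnegativity of both correction terms on $C$ gives $D(Q\|R)\ge\log_2(1/\Lambda)$ uniformly over $C$ --- which is all the lemma needs. I checked the four log-ratios and the sign of $\theta$; they are right, and in fact your $\bar R$ coincides with the paper's computed minimizer $Q^*$ (it lies in $C$ and achieves $D(\bar R\|R)=\log_2(1/\Lambda)$), so the two arguments land on the same object. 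Your route buys a cleaner verification: no calculus, no need to argue the stationary point is feasible or that the minimizer sits on the boundary, and a bound valid for every $Q\in C$ rather than just the minimizer; the paper's route buys the exact value of the exponent. Your treatment of the degenerate cases $q=0$ or $p=1$ (continuity, or the $\Pr=0$ clause of Sanov) is slightly informal but no less careful than the paper's own one-line dismissal, and one can check directly that Sanov still applies there with the correct finite exponent.
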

\begin{proof}
We will use Sanov's theorem (Lemma~\ref{lem:sanovthm}). Let $\Sigma=\{0,1\}^2$. Consider the set of distributions on $\Sigma$, $$\cP(\Sigma)=\{(p_{00},p_{01},p_{10},p_{11}): 0\le p_{00},p_{01},p_{10},p_{11}\le 1, p_{00}+p_{01}+p_{10}+p_{11}=1\},$$
and define $C\subset \cP(\Sigma)$ as $C=\{(p_{00},p_{01},p_{10},p_{11}): p_{01}\ge p_{10}\}$. Clearly $C$ is a closed convex set. Define $R=\left((1-p)(1-q),(1-p)q,p(1-q),pq\right) \in \cP(\Sigma)$; note that this is exactly the distribution of $(X_i,Y_i)$ for each $i\in [k]$. Since $p>q$, $R\notin C$. Observe that $\sum_{i=1}^r (X_i-Y_i)\le 0$ iff the empirical distribution generated by $(X_1,Y_1),\cdots,(X_k,Y_k)$, $\hP_{\left((X_1,Y_1),\cdots,(X_k,Y_k)\right)}$ belongs to  $C$. We can assume that there is some $Q\in C$ such that $D(Q||R)<\infty$, otherwise the lemma is trivially true. Therefore by Lemma~\ref{lem:sanovthm}, $$\Pr\left[\sum_{i=1}^r (X_i-Y_i)\le 0\right]\le \exp\left(-k(\ln 2)D(Q^*||R)\right)$$ where $Q^*=\argmin_{Q\in C} D(Q||R)$ is unique. In addition, $Q^*$ should lie on the boundary of $C$ i.e. $Q^*$ should satisfy $p_{01}=p_{10}$. So $$D(Q^*||R)=\min_{0\le x,y\le 1,\ x+2y\le 1} D((1-x-2y,y,y,x)||R).$$ Let $f(x,y)=(\ln 2)D((1-x-2y,y,y,x||R)$. Since $D(Q||R)$ is convex as a function of $Q$, $f(x,y)$ is convex as well. We will show that there is always a point in the region $\{0\le x,y\le 1,\ x+2y\le 1\}$ where the gradient of $f(x,y)$ is zero. Since $f$ is convex, this must be the minimizer of $f$. Note that
\begin{align*}
\frac{\partial f(x,y)}{\partial x}&= -1 -\ln(1-x-2y)+\ln((1-p)(1-q))+1+\ln x-\ln(pq)=0\\
\frac{\partial f(x,y)}{\partial y}&= -2 -2\ln(1-x-2y)+\ln((1-p)(1-q))+2+2\ln y-\ln(pq)=0.
\end{align*}
Solving the above equations for $x,y$ we get $$x=\frac{pq}{\left(\sqrt{pq}+\sqrt{(1-p)(1-q)}\right)^2},\quad y=\frac{\sqrt{pq(1-p)(1-q)}}{\left(\sqrt{pq}+\sqrt{(1-p)(1-q)}\right)^2}.$$
It is easy to check that $0\le x,y\le 1$ and $x+2y\le 1$. Substituting the values of $x,y$, we find that $$D(Q^*||R)=-2\log\left(\sqrt{pq}+\sqrt{(1-p)(1-q)}\right).$$
\end{proof}

\begin{lemma}
\label{lem:ublbdiv_approx}
$$2\log\left(\frac{1}{\sqrt{pq}+\sqrt{(1-p)(1-q)}}\right)\ge \frac{1}{2} \I(p,q).$$
\end{lemma}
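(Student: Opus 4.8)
The plan is to reduce the inequality to an elementary one-variable estimate via a change of variables. Set $x=p(1-q)$ and $y=q(1-p)$; since both sides are invariant under $(p,q)\mapsto(q,p)$, I may assume $p\ge q$, so that $x\ge y\ge 0$. The algebraic point is that $pq\,(1-p)(1-q)=xy$ and $pq+(1-p)(1-q)=1-(x+y)$, and hence
\[
\big(\sqrt{pq}+\sqrt{(1-p)(1-q)}\big)^2 \;=\; 1-(x+y)+2\sqrt{xy} \;=\; 1-\big(\sqrt{x}-\sqrt{y}\big)^2 .
\]
Writing $s=x+y$ and $\theta=x/s\in[\tfrac12,1]$ (the degenerate case $s=0$ forces $p=q\in\{0,1\}$, which makes both sides of the lemma equal to $0$ and can be dispatched at once), a short computation gives $(\sqrt x-\sqrt y)^2 = s\big(1-2\sqrt{\theta(1-\theta)}\big) = s\big(\sqrt\theta-\sqrt{1-\theta}\big)^2$. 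Thus the left-hand side of the lemma is $-\log\!\big(1-s(\sqrt\theta-\sqrt{1-\theta})^2\big)$, while $\tfrac12\I(p,q)=\tfrac{s}{2}\big(1-H(\theta)\big)$ straight from the definition of $\I$.

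Next I would apply the elementary bound $-\ln(1-v)\ge v$ for $v\in[0,1)$ (the case $v=1$ makes the left-hand side $+\infty$ and is trivial) to obtain
\[
2\log\!\Big(\tfrac{1}{\sqrt{pq}+\sqrt{(1-p)(1-q)}}\Big)
= -\log\big(1-s(\sqrt\theta-\sqrt{1-\theta})^2\big)
\;\ge\; \frac{s\,(\sqrt\theta-\sqrt{1-\theta})^2}{\ln 2}.
\]
Cancelling the common factor $s$, it then suffices to prove the pointwise inequality
\[
(\sqrt\theta-\sqrt{1-\theta})^2 \;\ge\; \frac{\ln 2}{2}\,\big(1-H(\theta)\big)\qquad\text{for all }\theta\in[0,1].
\]

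For this pointwise inequality I would substitute $\theta=\tfrac12+z$. On one hand, $(\sqrt\theta-\sqrt{1-\theta})^2 = 1-2\sqrt{\theta(1-\theta)} = 1-\sqrt{1-4z^2}\ge 2z^2$, using $\sqrt{1-u}\le 1-u/2$ with $u=4z^2$. On the other hand, Fact~\ref{fact:repinsker} gives $1-H(\theta)=D\big(\tfrac12+z\,\|\,\tfrac12\big)\le \tfrac{4}{\ln 2}z^2$, so $\tfrac{\ln 2}{2}\big(1-H(\theta)\big)\le 2z^2$; chaining the two estimates closes the argument.

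I expect the only delicate point to be checking that the constants line up: applying Pinsker-type bounds separately to the two sides would lose a factor of $1/\ln 2$, so the proof must use that $1-\sqrt{1-4z^2}$ is genuinely $\ge 2z^2$ rather than merely $\approx 2z^2$, matched against the clean bound $1-H(\tfrac12+z)\le \tfrac{4z^2}{\ln 2}$ — the explicit $\tfrac12$ in the statement is exactly the slack that makes this work. Everything else (the change of variables, the two degenerate cases $s=0$ and $v=1$, and the reduction to a single variable) is routine bookkeeping.
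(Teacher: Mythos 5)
Your proposal is correct and follows essentially the same route as the paper's proof: the same substitution $x=p(1-q)$, $y=q(1-p)$, the identity $\bigl(\sqrt{pq}+\sqrt{(1-p)(1-q)}\bigr)^2=1-(\sqrt{x}-\sqrt{y})^2$, the bound $-\log(1-v)\ge v/\ln 2$, and the reduction (after factoring out $x+y$) to $1-\sqrt{1-4z^2}\ge 2z^2$ combined with Fact~\ref{fact:repinsker}. The only cosmetic difference is that you cancel the factor $s=x+y$ explicitly where the paper normalizes to $x+y=1$; the constants line up identically in both versions.
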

\begin{proof}
We can assume $0<p,q<1$, otherwise the required inequality follows from the fact that $-\ln(1-t)\ge t$ for $0\le t< 1$. For example, when $p=0$, the LHS simplifies to $-\log(1-q)$ and the RHS to $q/2$, and the inequality is satisfied. The other cases are similar. Hence, from now on, assume that $0<p,q<1$.
Let $x=p(1-q)$ and $y=q(1-p)$. Thus $\I(p,q)=(x+y)(1-H(x/x+y))$. We can also the write the LHS of the inequality as:
\begin{align*}
-2\log\left(\sqrt{pq}+\sqrt{(1-p)(1-q)}\right)&=-\log\left(pq+(1-p)(1-q)+2\sqrt{pq(1-p)(1-q)}\right)\\
&=-\log\left(1-x-y+2\sqrt{xy}\right)\\
&=-\log\left(1-(\sqrt{x}-\sqrt{y})^2\right)\\
&\ge  (\sqrt{x}-\sqrt{y})^2/(\ln 2) \tag{$-\log(1-t)\ge t/(\ln 2)$}\\
&= (x+y-2\sqrt{xy})/(\ln 2).
\end{align*}
Now we need to show that $$\frac{(x+y-2\sqrt{xy})}{\ln 2}\ge \frac{1}{2}(x+y)\left(1-H\left(\frac{x}{x+y}\right)\right).$$
We can scale $x,y$ such that $x+y=1$, so let $x=\frac{1}{2}+z$ and $y=\frac{1}{2}-z$. Therefore it is enough to show that
 $$1-\sqrt{1-4z^2}\ge \frac{\ln 2}{2} \left(1-H\left(\frac{1}{2}+z\right)\right).$$
We have $1-\sqrt{1-4z^2}\ge 2z^2$ and by Fact~\ref{fact:repinsker}, $$1-H\left(\frac{1}{2}+z\right)=D\left(\frac{1}{2}+z \biggr|\biggr| \frac{1}{2}\right)\le \frac{4}{\ln 2}z^2.$$ Combining these two, we have the required inequality.
 
\end{proof}

\begin{lemma}
\label{lem:sanovsalgorithm)}
In \domination$(n,\bp,\bq,r)$, for any $i\in [n]$, if $r > 6/\I(p_i,q_i)$, then $$\Pr\left[\sgn\left(\sum_{j=1}^r (X_{i,j} -Y_{i,j})\right)=(-1)^B\right] > 5/6.$$ 
\end{lemma}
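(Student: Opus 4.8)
The plan is to reduce directly to the two preceding lemmas. First I would invoke the symmetry of \domination\ under swapping the roles of the $X$-samples and $Y$-samples to assume without loss of generality that $B=0$, so that each $X_{i,j}\sim\Ber(p_i)$ and each $Y_{i,j}\sim\Ber(q_i)$ with $p_i\ge q_i$. If $p_i=q_i$ then $\I(p_i,q_i)=0$ and the hypothesis $r>6/\I(p_i,q_i)$ is never satisfied, so I may assume $p_i>q_i$; in this case $(-1)^B=1$ and the event that the sign test fails is exactly $\{\sum_{j=1}^r (X_{i,j}-Y_{i,j})\le 0\}$ (the tie $\sum=0$ is counted as a failure, which is harmless since we only upper bound).

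Next I would apply Lemma~\ref{lem:sanovoptimaltest} with $k=r$ to obtain
$$\Pr\left[\sum_{j=1}^r (X_{i,j}-Y_{i,j})\le 0\right]\le \exp\left(-2(\ln 2)\,r\log\left(\frac{1}{\sqrt{p_iq_i}+\sqrt{(1-p_i)(1-q_i)}}\right)\right),$$
and then use Lemma~\ref{lem:ublbdiv_approx} to lower bound $2\log\big(1/(\sqrt{p_iq_i}+\sqrt{(1-p_i)(1-q_i)})\big)\ge \tfrac{1}{2}\I(p_i,q_i)$, so that the right-hand side is at most $\exp\left(-\tfrac{\ln 2}{2}\, r\,\I(p_i,q_i)\right)$.

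Finally, substituting the hypothesis $r>6/\I(p_i,q_i)$ makes the exponent strictly less than $-3\ln 2$, so the failure probability is strictly less than $2^{-3}=1/8<1/6$, which gives $\Pr[\sgn(\sum_{j=1}^r (X_{i,j}-Y_{i,j}))=(-1)^B]>5/6$, as claimed. I do not expect a genuine obstacle: the statement is essentially a bookkeeping combination of the Sanov-based tail bound (Lemma~\ref{lem:sanovoptimaltest}) with the information-theoretic comparison (Lemma~\ref{lem:ublbdiv_approx}). The only points requiring care are the degenerate case $p_i=q_i$ and the convention that a tie counts as a failure of the sign test.
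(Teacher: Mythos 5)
Your proof is correct and follows essentially the same route as the paper: apply Lemma~\ref{lem:sanovoptimaltest} to the samples in coordinate $i$, lower-bound the resulting exponent via Lemma~\ref{lem:ublbdiv_approx}, and plug in $r>6/\I(p_i,q_i)$ to get failure probability below $2^{-3}=1/8<1/6$. Your extra care about the degenerate case $p_i=q_i$ (where the hypothesis is vacuous) and the tie convention is a small but welcome tightening of the paper's argument, which simply assumes $B=0$ and bounds $\Pr[\sum_j(X_{i,j}-Y_{i,j})\le 0]$.
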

\begin{proof}
Assume we are in the $B=0$ case, the other case is similar. Fix an $i\in [n]$. By Lemma~\ref{lem:sanovoptimaltest}, 
\begin{align*}
\Pr\left[\sum_{j=1}^r (X_{i,j} -Y_{i,j})\le 0\right] &\le \exp\left(-r(\ln 2)\log\left(\frac{1}{\sqrt{p_iq_i}+\sqrt{(1-p_i)(1-q_i)}}\right)\right)\\
&\le \exp\left(-r(\ln 2)I_i/2\right) \tag{By Lemma~\ref{lem:ublbdiv_approx}}\\
&=2^{-rI_i/2}< 1/8.
\end{align*}
\end{proof}

We now introduce what we call the \textit{general coupling algorithm} $\mathcal{A}_{coup}$ for $\domination$. A detailed description of the algorithm can be found in Algorithm \ref{alg:gcoupling}; more briefly the algorithm works as follows:

\begin{enumerate}
\item Split the $r$ samples for each of the $n$ coordinates into $\ell = 18\log(2n\alpha^{-1})$ equally-sized segments where $\alpha$ is the error parameter. For each coordinate $i$ and segment $j$, set $S_{i, j} = 1$ if more samples from $X$ equal $1$ than samples from $Y$, and $-1$ otherwise. This can be thought of as running a miniature version of the counting algorithm on each segment; $S_{i, j} = 1$ is evidence that $B=0$, and $S_{i, j} = -1$ is evidence that $B=-1$.

\item Let $i'$ be the coordinate $i$ which maximizes $\left|\sum_{j=1}^{\ell}S_{i,j}\right|$ (i.e. the coordinate that is ``most consistently'' either $1$ or $-1$). If $\left|\sum_{j=1}^{\ell}S_{i', j}\right| \geq \ell/3$ (i.e. at least $2\ell/3$ of the segments for this coordinate agree on the value of $B$), output $B$ according to the sign of $\sum_{j=1}^{\ell}S_{i', j}$.

\item Otherwise, for each segment, take the majority of the votes from each of the $n$ coordinates; that is, for each $1\leq j\leq \ell$, set $T_{j} = \sgn(\sum_{i=1}^{n}S_{i,j})$. Then take another majority over the segments, by setting $Z_2 = \sgn(\sum_{j=1}^{\ell}T_j)$. Finally, if $Z_2 > 0$ output $B=0$; otherwise, output $B=1$.
\end{enumerate}

\begin{algorithm}[ht]
        \caption{General coupling algorithm for \domination$(n,\bp,\bq,r)$}
    \begin{algorithmic}[1]\label{alg:gcoupling}
        \STATE $\ell = 18\log(2n\alpha^{-1})$. 
    	\FOR{$i=1$ to $n$}
	\FOR{$j = 1$ to $\ell$}
        \STATE $S_{i,j}=\sgn (\sum_{t=(j-1)*(r/\ell) +1}^{jr/\ell} X_{i,t} - Y_{i,t})$
        \STATE If $S_{i,j} = 0$, let $S_{i,j} = 1$ with probability $1/2$ and let $S_{i,j} = -1$ with probability $1/2$. 
        \ENDFOR
        \ENDFOR
        \STATE $i' = \arg\max_i |\sum_{j=1}^{\ell} S_{i,j}|$
        \STATE $Z_1 = \sum_{j=1}^{\ell} S_{i',j}$
        \IF {$|Z_1| \geq \ell/3$}
        		\STATE If $Z_1>0$ output $B=0$, else output $B=1$. 
	\ELSE
        \FOR{$j=1$ to $l$}
        \STATE $T_j = \sgn(\sum_{i=1}^n S_{i,j})$. 
        \STATE If $T_{j} = 0$, let $T_{j} = 1$ with probability $1/2$ and let $T_{j} = -1$ with probability $1/2$. 
        \ENDFOR
        \STATE $Z_2 = \sgn(\sum_{j=1}^{\ell} T_j)$. 
        
        \STATE If $Z_2 = 0$, let $Z_2 = 1$ with probability $1/2$ and let $Z_2 = -1$ with probability $1/2$. 
        		\STATE If $Z_2>0$ output $B=0$, else output $B=1$. 
        \ENDIF
     \end{algorithmic}
\end{algorithm}

\begin{theorem}
\label{thm:main}
If $C = (n, \bp, \bq)$ is any instance of $\domination$, then

$$r_{min}(C, \mathcal{A}_{coup}, 1-\alpha) \leq \frac{2592\sqrt{n}\ln (2n\alpha^{-1})}{\I(\bp, \bq)}$$

\noindent
and thus

$$r_{min}(C, \mathcal{A}_{coup}) \leq O(\sqrt{n}\log n)\cdot r_{min}(C).$$
%There exists absolute constant $C > 0$, such that for \domination$(n,\bp,\bq, r)$, if $r \geq C\sqrt{n}\log(n)/\I(\bp,\bq)$, then Algorithm \ref{alg:gcoupling} outputs $B$ correctly with probability at least $1- \frac{2}{n}$.
\end{theorem}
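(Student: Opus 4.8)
The plan is to analyze the two branches of $\mathcal{A}_{coup}$ (Algorithm~\ref{alg:gcoupling}) separately, reducing everything to statements about a single coordinate on a single length-$m$ segment, where $m = r/\ell$ and $\ell = 18\log(2n\alpha^{-1})$. Assume without loss of generality $B=0$, so ``correct'' means $S_{i,j}=+1$, and abbreviate $I_i := \I(p_i,q_i)$ (so $\I(\bp,\bq) = \sum_i I_i$). I would establish three single-coordinate facts: (i) $\E[S_{i,j}] \ge 0$ for every $i$ (because $p_i \ge q_i$ makes the segment-level counting majority biased the right way --- condition on the set of samples with $X_{i,\cdot}\ne Y_{i,\cdot}$ and note each such sample favors $+1$); (ii) if $mI_i > 6$ then $\Pr[S_{i,j}=+1] > 5/6$ (this is Lemma~\ref{lem:sanovsalgorithm)} applied with $m$ in place of $r$); and (iii) the quantitative version $\E[S_{i,j}] \ge c_0\, m I_i$ whenever $mI_i \le 6$, for an absolute constant $c_0$.

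First I would handle a generic bad event and the ``if''-branch. Let $E_1$ be the event that $\sum_{j=1}^\ell S_{i,j} \le -\ell/3$ for some $i$. By fact (i) and Hoeffding's inequality, $\Pr[\sum_j S_{i,j} \le -\ell/3] \le \exp(-\ell/18) \le \frac{\alpha}{2n}$, so a union bound gives $\Pr[E_1] \le \alpha/2$. On $\neg E_1$, whenever $\mathcal{A}_{coup}$ enters the ``if''-branch the selected coordinate $i'$ satisfies $|\sum_j S_{i',j}| \ge \ell/3$ and $\sum_j S_{i',j} > -\ell/3$, hence $\sum_j S_{i',j} \ge \ell/3 > 0$ and the output $B=0$ is correct. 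Thus the algorithm can only err through the ``if''-branch inside $E_1$.

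Next I would split the analysis (not the algorithm) into two cases. \emph{Case~B:} some coordinate $i^\ast$ has $mI_{i^\ast} > 6$. By fact (ii) and Hoeffding, $\Pr[\sum_j S_{i^\ast,j} < \ell/3] \le \exp(-\ell/18) \le \frac{\alpha}{2n}$, and off this event $\max_i|\sum_j S_{i,j}| \ge \ell/3$ forces the algorithm into the ``if''-branch, which is correct off $E_1$; the total error is at most $\Pr[E_1] + \frac{\alpha}{2n} \le \alpha$. \emph{Case~A:} $mI_i \le 6$ for all $i$. Here fact (iii) gives
\[
\textstyle\sum_{i=1}^n \E[S_{i,j}] \;\ge\; c_0\, m \sum_i I_i \;=\; c_0\, m\, \I(\bp,\bq) \;\ge\; \theta\sqrt n,
\]
where the last step uses $m = r/\ell = \Theta(\sqrt n/\I(\bp,\bq))$ and the absolute constant $\theta$ can be made large by taking the leading constant $2592$ large. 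Since the $S_{i,j}$ are independent across $i$ and $[-1,1]$-valued, Hoeffding gives $\Pr[\sum_i S_{i,j} \le 0] \le \exp(-\theta^2/2) \le 1/6$, so each segment vote $T_j$ is correct with probability $\ge 5/6$; the $\ell$ segments being independent, one more Hoeffding bound yields $\Pr[Z_2 \text{ incorrect}] \le \exp(-2\ell/9) \le \alpha/2$. In this case the algorithm errs only inside $E_1 \cup \{Z_2 \text{ incorrect}\}$, of probability $\le \alpha$. Combining the cases gives $r_{min}(C,\mathcal{A}_{coup},1-\alpha) \le \frac{2592\sqrt n\ln(2n\alpha^{-1})}{\I(\bp,\bq)}$; taking $\alpha = \frac14$ and applying Lemma~\ref{lem:dom_lb} ($r_{min}(C) \ge 0.05/\I(\bp,\bq)$) yields the competitive bound $r_{min}(C,\mathcal{A}_{coup}) \le O(\sqrt n\log n)\,r_{min}(C)$.

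The main obstacle is fact (iii). Writing $d = p_i-q_i$ and $v = p_i(1-q_i)+q_i(1-p_i) = \E[(X_{i,1}-Y_{i,1})^2]$, the estimates in the proof of Lemma~\ref{lem:Ipq_approx_square} give $I_i = \Theta(d^2/v)$, so one must show $\E\!\left[\sgn\!\left(\sum_{t=1}^m (X_{i,t}-Y_{i,t})\right)\right] = \Omega(md^2/v)$ when this quantity is $O(1)$. I expect the proof to split on the number of ``active'' samples (those with $X_{i,t}\ne Y_{i,t}$): when $mv = O(1)$, the contribution of the event that exactly one sample is active already gives $\E[S_{i,j}] \ge \tfrac12\Pr[\text{one active}]\cdot\tfrac dv = \Omega(md)$, which is $\Omega(md^2/v)$ since $d \le v$; when $mv$ is large, a Berry--Esseen or Poisson/Skellam approximation shows $\sum_t(X_{i,t}-Y_{i,t})$ is close to a Gaussian with mean $md$ and variance $\Theta(mv)$, giving $\E[S_{i,j}] \approx 2\Phi(\sqrt{md^2/v})-1 = \Omega(\sqrt{md^2/v}) \ge \Omega(md^2/v)$ because $md^2/v = O(1)$. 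The point that keeps this manageable is that only the \emph{linear} bound $\Omega(md^2/v)$ is needed (not the sharper $\Omega(\sqrt{md^2/v})$), which is exactly why a single generous constant like $2592$ suffices throughout rather than constants depending delicately on these approximations.
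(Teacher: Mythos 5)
Your architecture coincides with the paper's: your Case~B/Case~A split is exactly the paper's Case~1/Case~2 (since $m I_i>6$ with $m=r/\ell=144\sqrt n/\I(\bp,\bq)$ is precisely $24\sqrt n I_i>\sum_k I_k$), the treatment of the ``if''-branch via $\E[S_{i,j}]\ge 0$ plus Hoeffding and of the dominant coordinate via Lemma~\ref{lem:sanovsalgorithm)} is the same, and the Hoeffding endgame in Case~A matches. The one place you genuinely depart is your fact~(iii), the per-coordinate advantage bound $\E[S_{i,j}]\ge c_0\,mI_i$ when $mI_i\le 6$. The paper proves this information-theoretically: it stacks $\lceil 6/(sI_i)\rceil$ independent blocks of $s$ samples, uses Fano to lower-bound the mutual information of the long block, subadditivity to push that down to $I((X_iY_i)^{[1,s]};B)\ge sI_i/36$, and Pinsker (Fact~\ref{fact:repinsker}) to convert information into the advantage of the majority vote. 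Your proposed elementary substitute is where the gap lies.

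Concretely, in your ``$mv$ large'' sub-case the Berry--Esseen error is additive of order $\gamma/\sigma^3=\Theta(1/\sqrt{mv})$, while the quantity you must certify is $\Omega(md^2/v)$. These are not comparable: taking $v=\Theta(1)$ and $d$ tiny with $md\lesssim 1$, one has $md^2/v\ll 1/\sqrt{mv}$, so the approximation error swallows the target and Berry--Esseen yields only $\E[S_{i,j}]\ge -O(1/\sqrt{mv})$, which (combined with fact~(i)) gives merely $\E[S_{i,j}]\ge 0$ for such coordinates. Since in Case~A the information can be spread across exactly such coordinates, the sum $\sum_i\E[S_{i,j}]\ge \theta\sqrt n$ is not established. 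The statement itself is true (the paper's argument proves it with $c_0=1/36$), and your approach is repairable: condition on the number $k$ of active samples (concentrated near $mv$) and use the sharper fact that a $k$-step $\pm1$ walk with per-step bias $\delta=d/(2v)$ satisfies $\Pr[S_k>0]-\Pr[S_k<0]\ge c\min(1,\delta\sqrt k)\ge c\min(1,\delta^2 k)$, which gives $\Omega(md^2/v)$ directly; but that inequality needs its own proof (e.g.\ a tilting/reflection argument), and plain Berry--Esseen does not supply it. Your ``$mv=O(1)$'' sub-case via a single active sample is fine, as is the rest of the argument.
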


\begin{proof}
%Let $C = 36\cdot 24 \cdot 6 = 5184$ \jonnote{$C$ is the instance, change to something else}.
Let $I_i=\I(p_i,q_i)$, $r=2592\sqrt{n}\log (2n\alpha^{-1})/\I(\bp, \bq)$ and  $\ell = 18\ln (2n\alpha^{-1})$. There are two cases to consider:
\begin{enumerate}
\item \textbf{Case 1}: There exists an $i'$ such that $24 \sqrt{n} I_{i'} \geq \sum_{k=1}^n I_k$. 

By symmetry, we can assume that $B =0$. In this case, we have that $\frac{r}{\ell} \geq \frac{24 \cdot 6\sqrt{n}}{\sum_{k=1}^n{I_k}} \geq \frac{6}{I_{i'}}$. By Lemma \ref{lem:sanovsalgorithm)}, for each $j=1,\dots,\ell$, $\Pr[S_{i',j} =1] \geq 5/6$. Therefore we have
\[
\E\left[ \sum_{j=1}^l S_{i',j}\right] \geq \ell \cdot (5/6 - 1/6) = 2\ell/3.
\]
Since $S_{i',1}, ...,S_{i',l}$ are independent when $B$ is given, by the Chernoff bound, we have that
\[
\Pr\left[ \sum_{j=1}^l S_{i',j} \geq \ell/3\right] \geq 1 - \exp( -\ell \cdot (1/3)^2 \cdot (1/2)) \geq 1 - \frac{\alpha}{2n}.
\]
For $i \neq i'$, since $p_i \geq q_i$, we still have $\Pr[S_{i,j} =1] \geq 1/2$. By a similar argument, we get
\[
\Pr\left[ \sum_{j=1}^l S_{i,j} \geq -\ell/3\right] \geq 1 - \exp(-\ell \cdot (1/3)^2 \cdot (1/2)) \geq 1 - \frac{\alpha}{2n}.
\]
Let $W$ be the event that $\sum_{j=1}^{\ell} S_{i',j} \geq \ell/3$ and for $i\neq i'$, $\sum_{j=1}^l S_{i,j} \geq -\ell/3$. By the union bound, we have that $\Pr[W] \geq 1 - n\cdot \frac{\alpha}{2n} = 1-\frac{\alpha}{2}$. Moreover, when $W$ happens, we know that $Z_1 \geq \ell/3$ and $\mathcal{A}_{coup}$ outputs $B=0$.  Therefore, in Case 1, the probability that $\mathcal{A}_{coup}$ outputs $B$ correctly is at least $1-\frac{\alpha}{2}$. 

\item \textbf{Case 2}: For all $i \in \{1,\dots,n\}$, $24 \sqrt{n} I_{i} < \sum_{k=1}^n I_k$. 

Similarly as in Case 1, since $\Pr[S_{i,j} =(-1)^B] \geq 1/2$, the probability that $|Z_1| \geq \ell/3$ and our algorithm outputs wrongly is at most $\frac{\alpha}{2}$. For the rest of Case 2, assume $|Z_1| < \ell/3$. 

Now fix a coordinate $i$. Our plan is to first lower bound the amount of information samples from coordinate $i$ have about $B$ by using Lemma \ref{lem:sanovsalgorithm)} and the subadditivity of information. Let $s = r/\ell$, and let $s' = s \cdot \lceil \frac{6}{s I_i}\rceil$. Imagine that we have $s'$ new samples, $U_{i,1},V_{i,1},..., U_{i,s'},V_{i,s'}$, where each $(U_{i,j},V_{i,j})$ ($j=1,\dots,s'$) is generated independently according to the same distribution as $(X_{i,1},Y_{i,1})$. Since $s' \geq 6/I_i$, by Lemma  \ref{lem:sanovsalgorithm)}, we have that
\[
\Pr\left[\sgn\left(\sum_{j=1}^{s'} (U_{i,j} -V_{i,j})\right)=(-1)^B\right] > 5/6.
\]

Write $(U_{i}V_{i})^{[a, b]}$ as shorthand for the sequence $((U_{i,a}, V_{i, a}), \dots (U_{i,b}, V_{i,b}))$, and define  $(X_{i}Y_{i})^{[a,b]}$ analogously. By Fano's inequality, we have that
\begin{eqnarray*}
I\left((U_{i}V_{i})^{[1,s']}; B\right) &=& H(B) - H(B|(U_{i}V_{i})^{[1,s']})\\
 &\geq& H(\tfrac{1}{2}) - H(1 - \tfrac{5}{6}) = 1 - H(\tfrac{1}{6}) \geq 1/3.
\end{eqnarray*}
Since $I((U_{i}V_{i})^{[1, s]}; (U_{i}V_{i})^{[s+1, s']} | B) = 0$ (our new samples are independent given $B$),  we have
\begin{eqnarray*}
I((U_{i}V_{i})^{[1,s']}; B) &=& I((U_{i}V_{i})^{[1,s]}; B |(U_{i}V_{i})^{[s+1,s']}) + I((U_iV_i)^{[s+1, s']} ;B)\\
&\leq& I((U_{i}V_{i})^{[1,s]}; B) + I((U_iV_i)^{[s+1, s']} ;B) ~~~~~~~~(\text{by Fact \ref{fact:it1}})
\end{eqnarray*}
Repeating this procedure, we get
\[
 I((U_{i}V_{i})^{[1,s']}; B)  \leq \sum_{u=1}^{\lceil \frac{6}{s I_i}\rceil} I((U_{i}V_{i})^{[(u-1)s+1, us]}; B).
\]
Since we know that for any $u=1,...,\lceil \frac{6}{s I_i}\rceil$, 
\[
I((U_{i}V_{i})^{[(u-1)s+1, us]}; B) = I((X_{i}Y_{i})^{[1, s]}; B),
\]
we get
\[
I((X_{i}Y_{i})^{[1, s]}; B) \geq  I((U_{i}V_{i})^{[1, s']}; B) \cdot \frac{1}{\lceil \frac{6}{s I_i}\rceil} \geq  \frac{s I_i}{6\cdot 6}.
\]
The last inequality is true because $\frac{6}{s I_i} = \frac{ \sum_{k=1}^n {I_k}}{24\sqrt{n}I_i} \geq 1$. 

After we lower bound $I((X_{i}Y_{i})^{[1, s]}; B)$, we are going to show that we can output $B$ correctly with reasonable probability based on samples only from coordinate $i$. 
\begin{eqnarray*}
\frac{s I_i}{6 \cdot 6} &\leq& I((X_{i}Y_{i})^{[1, s]}; B) \\
&=& \sum_x \Pr[(X_{i}Y_{i})^{[1, s]} = x] \cdot D(B|(X_{i}Y_{i})^{[1, s]} = x \|  B) \\
&\leq& \sum_x \Pr[(X_{i}Y_{i})^{[1, s]} = x]  \cdot \big( 2(\Pr[B = 0|(X_{i}Y_{i})^{[1, s]} = x ] -1/2)^2 +\\
&& \;\;\; 2(\Pr[B = 1|(X_{i}Y_{i})^{[1, s]} = x ] -1/2)^2 \big)~~~~~~~~~~~~~~~~~~~(\text{by Fact \ref{fact:repinsker}})\\
&=& \sum_x \Pr[(X_{i}Y_{i})^{[1, s]} = x]  \cdot  (\Pr[B = 0|(X_{i}Y_{i})^{[1, s]} = x ] -\Pr[B = 1|(X_{i}Y_{i})^{[1, s]} = x ])^2 \\
&\leq& \sum_x \Pr[(X_{i}Y_{i})^{[1, s]} = x]  \cdot \left|\Pr[B = 0|(X_{i}Y_{i})^{[1, s]} = x ] -\Pr[B = 1|(X_{i}Y_{i})^{[1, s]} = x ]\right|. \\
\end{eqnarray*}
When $\sum_{j=1}^{s} (X_{i,j} - Y_{i,j}) > 0$, it is easy to check that 
\[
\Pr[B = 0|(X_{i}Y_{i})^{[1, s]}] > \Pr[B = 1|(X_{i}Y_{i})^{[1, s]}].
\]
Therefore,
\begin{eqnarray*}
\Pr[S_{i,1} = (-1)^B] &=& \sum_x \Pr[(X_{i}Y_{i})^{[1, s]} = x] \cdot \\
&&\;\;\; \max\left(\Pr[B = 0|(X_{i}Y_{i})^{[1, s]} = x ], \Pr[B = 1|(X_{i}Y_{i})^{[1, s]} = x ]\right) \\
&=&  \frac{1}{2} + \frac{1}{2} \cdot \sum_x \Pr[(X_iY_i)^{[1,s]} = x] \cdot \\ 
&& \left|\Pr[B = 0|(X_{i}Y_{i})^{[1, s]} = x ] - \Pr[B = 1|(X_{i}Y_{i})^{[1, s]} = x ]\right| \\
&\geq& \frac{1}{2} + \frac{s I_i}{12\cdot 6} \\
&\geq& \frac{1}{2} + \frac{\sqrt{n} I_i}{\sum_{k=1}^n I_k}. \\
\end{eqnarray*}
Similarly, we can show for all $i = 1,...,n$, $j = 1,...,l$, 
\[
\Pr[S_{i,j} = (-1)^B] \geq \frac{1}{2} + \frac{\sqrt{n} I_i}{\sum_{k=1}^n I_k}. 
\]
Now without loss of generality assume that $B = 0$. We have that
\[
\E\left[ \sum_{i=1}^n S_{i,j} \right] \geq \sum_{i=1}^n \left( \frac{1}{2} + \frac{\sqrt{n} I_i}{\sum_{k=1}^n I_k}  -  \frac{1}{2} + \frac{\sqrt{n} I_i}{\sum_{k=1}^n I_k}\right) = 2\sqrt{n} .
\]
Therefore, by the Chernoff bound, 
\[
\Pr[ T_j = 1 ] \geq 1 - e^{- (1/n) \cdot (2\sqrt{n})^2 \cdot (1/2) } > 3 /4. 
\]
By the Chernoff bound again,
\[
\Pr[Z_2 > 0] \geq 1- e^{-\ell\cdot (1/2)^2 \cdot(1/2)} \geq 1 - \frac{\alpha}{2n}. 
\]
Since we initially fail with probability at most $\frac{\alpha}{2}$, by the union bound, in Case 2 we fail with probability at most $\frac{\alpha}{2} + \frac{\alpha}{2n} < \alpha$. This concludes the proof.
\end{enumerate}
\end{proof}

\subsection{$\mathcal A_{count}$ and $\mathcal A_{max}$ have unbounded competitive ratios}
\label{infgap}
%!TEX root =  main.tex

In this section, we show that the competitive ratio of $\mathcal A_{count}$ and $\mathcal A_{max}$ is unbounded. The result in Lemma \ref{lem:countingfails2} can be easily generalized to show that the counting algorithm of \cite{Shah15Sim} for \topk\ also has unbounded competitive ratio.

\begin{lemma}
\label{lem:countingfails2}
For each sufficiently large $n$ and for any $\eps >0$, there exists an instance $C = (n,\bp,\bq)$ of $\domination$ such that the following two statements are true:
\begin{enumerate}
\item $r_{min}(C, \mathcal{A}_{coup}, 1 - \frac{2}{n}) \leq \frac{5184\sqrt{n}\log n}{\eps}$
\item $r_{min}(C, \mathcal{A}_{count}) \geq \frac{n}{16\eps^2}$.
\end{enumerate}
\end{lemma}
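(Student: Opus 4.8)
The idea is to concentrate all of the signal in a single coordinate whose contribution to the count statistic is tiny, and make every other coordinate pure noise. Concretely, take the instance $C = (n,\bp,\bq)$ with $p_1 = \eps$, $q_1 = 0$, and $p_i = q_i = \tfrac12$ for $i = 2,\dots,n$ (we may assume $\eps \le 1$). Coordinates $2,\dots,n$ reveal nothing about the hidden bit $B$, yet each adds variance $\tfrac12$ per sample to $Z := \sum_{i}\sum_{j}(X_{i,j}-Y_{i,j})$, while coordinate $1$ carries all of the information but adds only $O(\eps)$ variance per sample. Thus $\mathcal{A}_{count}$ drowns the signal in the noise of coordinates $2,\dots,n$, whereas $\mathcal{A}_{coup}$, which isolates informative coordinates, does not.

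For part~(1), I would simply compute $\I(\bp,\bq)$. Since $\I(\tfrac12,\tfrac12) = 0$ and $\I(\eps,0) = \eps$, we get $\I(\bp,\bq) = \eps$, and then part~(1) is immediate from Theorem~\ref{thm:main} applied with $\alpha = \tfrac2n$, using $\ln(2n\alpha^{-1}) = \ln(n^2) = 2\ln n$.

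For part~(2), fix $B = 0$ without loss of generality and let $r$ be any integer with $r < n/(16\eps^2)$; it suffices to show $\Pr[Z<0] > \tfrac14$, since then $\mathcal{A}_{count}$ succeeds with probability less than $\tfrac34$. Writing $S_i = \sum_j (X_{i,j}-Y_{i,j})$ and $Z = \sum_i S_i$, we have $\E[Z] = r\eps$ and $V^2 := \var[Z] = r\eps(1-\eps) + (n-1)r/2$, so $(n-1)r/2 \le V^2 < nr/2$. Centering the summands via $U_{1,j} = X_{1,j}-\eps$ and $U_{i,j} = X_{i,j}-Y_{i,j}$ for $i\ge2$, I would check $\sum_{i,j}\E[U_{i,j}^2] = V^2$ and $\sum_{i,j}\E[|U_{i,j}|^3] \le nr/2$, observe that $Z$ is integer-valued so $\Pr[Z<0] = \Pr\big[\sum_{i,j}U_{i,j} \le -1-r\eps\big]$, and invoke the Berry--Esseen theorem (Lemma~\ref{lem:berryesseen}) to obtain $\Pr[Z<0] \ge \Phi\big(\tfrac{-1-r\eps}{V}\big) - \tfrac{\sum_{i,j}\E[|U_{i,j}|^3]}{V^3}$. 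Finally I would bound the two terms: $r\eps^2 < n/16 \le (n-1)/8$ gives $(r\eps)^2 \le \tfrac14 V^2$ and hence $r\eps/V \le \tfrac12$, which together with $1/V \le \sqrt{2/(n-1)}$ yields $\Phi\big(\tfrac{-1-r\eps}{V}\big) \ge \Phi(-0.51) > 0.30$ for $n$ large; and $\tfrac{\sum_{i,j}\E[|U_{i,j}|^3]}{V^3} \le \tfrac{2\sqrt2}{\sqrt{(n-1)r}} \le \tfrac{2\sqrt2}{\sqrt{n-1}} \le 0.04$ for $n$ large. Hence $\Pr[Z<0] > \tfrac14$, so $r_{min}(C,\mathcal{A}_{count}) \ge n/(16\eps^2)$.

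The only delicate point is this last Berry--Esseen estimate: the mean of $Z$ sits only about half a standard deviation above $0$, so the target anticoncentration value $\Phi(-1/2)\approx 0.31$ barely clears $\tfrac14$, and one must verify that the integer correction $-1/V$ and the third-moment error term together consume less than the available slack of roughly $0.06$ — which is exactly what forces the ``sufficiently large $n$'' hypothesis and pins down the constant $16$. Part~(1) and all the moment computations are routine.
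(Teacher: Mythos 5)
Your proposal is correct and follows essentially the same route as the paper: the identical instance ($p_1=\eps$, $q_1=0$, $p_i=q_i=\tfrac12$ for $i\ge 2$), part (1) via $\I(\bp,\bq)=\eps$ and Theorem~\ref{thm:main} with $\alpha=2/n$, and part (2) via the same centering and Berry--Esseen anticoncentration argument. Your constants are in fact slightly more carefully tracked than the paper's (which writes $\Phi(-1/4)$ where the computation actually yields $\Phi(-1/2)$, though both clear the $1/4$ threshold).
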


\begin{proof}
Define $\bp,\bq$ to be
\begin{enumerate}
\item $p_1 = \eps$, $q_1 = 0$.
\item $p_i = q_i = 1/2$ for $i=2,...,n$. 
\end{enumerate}
%Recall in  Theorem \ref{thm:main}, we define
%\[
%I_i = \left(p_i(1-q_i)+q_i(1-p_i)\right)\left(1-H\left(\frac{p_i(1-q_i)}{p_i(1-q_i)+q_i(1-p_i)}\right)\right).
%\]
Note that $\I(p_1,q_1) = \eps$, and $\I(p_2,q_2) = \cdots = \I(p_n,q_n) = 0$. Therefore, by Theorem \ref{thm:main}, $\mathcal{A}_{coup}$ succeeds given $r =  \frac{5184\sqrt{n}\log n}{\eps}$ samples with probability at least $1 - 2/n$. 

Now assume that $r \leq n/(16 \varepsilon^2) \leq  (n-1)/(8 \varepsilon^2)$. We will now show that $\mathcal{A}_{count}$ solves \domination$(n,\bp,\bq,r)$ with probability at most $3/4$. Without loss of generality assume that $B=0$. Define the random variables $U_{i,j}$ as follows:
\begin{enumerate}
\item $U_{1,j} = X_{1,j} - Y_{1,j} -\varepsilon$. $j=1,...,r$. 
\item $U_{i,j} = X_{i,j} - Y_{i,j}$ for $i = 2,...,n, j= 1,..,r$. 
\end{enumerate} 
It is straightforward to check that for all $i = 1,\dots,n$ and $j= 1,\dots,r$, $\E[U_{i,j}] = 0$ and $\E[|U_{i,j}|^3] \leq 1/2$. For all $i =2,\dots,n$ and $j = 1,\dots,r$, we further have that $\E[U^2_{i,j}] = 1/2$. Let $\Phi$ be the cdf of the standard normal distribution.
\begin{align*}
&\Pr[\text{$\mathcal{A}_{count}$ outputs $B=1$ (incorrectly)}]\\
 &= \Pr[\sum_{i=1}^n \sum_{j=1}^r (X_{i,j}-Y_{i,j}) < 0]
 = \Pr[\sum_{i=1}^n \sum_{j=1}^r U_{i,j} <-r \cdot \varepsilon] \\
 &\geq \Phi\left(-r \cdot \varepsilon \cdot \frac{1}{\sqrt{\sum_{i=1}^n \sum_{j=1}^r \E[U^2_{i,j}]}}\right) - \frac{\sum_{i=1}^n \sum_{j=1}^r \E[|U_{i,j}|^3]}{(\sum_{i=1}^n \sum_{j=1}^r \E[U^2_{i,j}] )^{-3/2}} \tag{By Berry-Esseen theorem (Lemma~\ref{lem:berryesseen})}\\
 &\geq \Phi\left(-r \cdot \varepsilon / \sqrt{r (n-1)/2} \right) - \sqrt{\frac{2n^2}{(n-1)^3 r}}\geq \Phi(-1/4)- \sqrt{\frac{2n^2}{(n-1)^3 r}} \geq 1/4.
\end{align*}
\end{proof}

%--------------------------------------------------------------------------------------------------------------

\begin{lemma}
\label{lem:maxfails2}
For each sufficiently large $n$ and any $0 < \eps < 1/n^3$, there exists an instance $C=(n, \bp,\bq)$ of $\domination$ such that the following two statements are true.
\begin{enumerate}
\item $r_{min}(C, \mathcal{A}_{coup}, 1 - \frac{2}{n}) \leq \frac{518400\sqrt{n}\ln n}{\eps}$. 

\item $r_{min}(C, \mathcal{A}_{max}, \frac{9}{10}) \geq \frac{1}{\eps^{2}2^{14}\ln n}$
\end{enumerate}
\end{lemma}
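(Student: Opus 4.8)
The plan is to use the instance $C = (n, \bp, \bq)$ with $p_1 = \eps$, $q_1 = 0$ and $p_i = q_i = \tfrac12$ for $2 \le i \le n$, so that coordinate $1$ carries a lot of information while having a tiny $\ell_2$-gap. From the definition, $\I(p_1, q_1) = \eps\,(1 - H(1)) = \eps$ while $\I(p_i, q_i) = \tfrac12(1 - H(\tfrac12)) = 0$ for $i \ge 2$, hence $\I(\bp, \bq) = \eps$. Part 1 is then immediate from Theorem~\ref{thm:main}: with $\alpha = 2/n$ we get $r_{min}(C, \mathcal{A}_{coup}, 1 - 2/n) \le 2592\sqrt{n}\,\ln(n^2)/\eps = 5184\sqrt{n}\ln n/\eps$, which is comfortably below the stated bound.

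For Part 2, fix $r = \tfrac{1}{\eps^2 2^{14}\ln n}$, assume w.l.o.g.\ $B = 0$, and set $S_i = \sum_{j=1}^r (X_{i,j} - Y_{i,j})$ as in $\mathcal{A}_{max}$. When $B = 0$ we have $Y_{1,j} \equiv 0$, so $S_1 \sim \mathrm{Binomial}(r, \eps)$; in particular $S_1 \ge 0$ with $\E[S_1] = r\eps$. For $i \ge 2$, the summands $X_{i,j} - Y_{i,j}$ are i.i.d., symmetric (since $p_i = q_i$), equal to $\pm 1$ with probability $\tfrac14$ each and $0$ otherwise, so $S_2, \dots, S_n$ are i.i.d., symmetric, with variance $r/2$. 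The crucial mismatch of scales is that $S_1 \approx r\eps = \tfrac{1}{\eps 2^{14}\ln n}$, whereas $-\min_{i \ge 2} S_i$ is of order $\sqrt{r\ln n} = \tfrac{1}{\eps 2^7}$, larger by a factor $\Theta(2^7\ln n)$. Concretely, taking the threshold $\lambda = \sqrt r$, I would show: (i) $\Pr[S_1 \ge \lambda] = o(1)$, by a multiplicative Chernoff bound for the binomial, since $\lambda = 2^7\sqrt{\ln n}\cdot \E[S_1]$ is a large multiple of the mean and $\E[S_1] = r\eps \ge n^3/(2^{14}\ln n)$ is huge; and (ii) $\Pr[\,-\min_{i \ge 2} S_i < \lambda\,] = o(1)$, since by Berry--Esseen (Lemma~\ref{lem:berryesseen}) $\Pr[S_i \le -\sqrt r] \ge \Phi(-\sqrt 2) - O(1/\sqrt r) \ge 0.07$ for large $n$, and these events are independent over $i \ge 2$, so all of them failing has probability at most $(0.93)^{n-1} = o(1)$. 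The hypothesis $\eps < 1/n^3$ is exactly what makes the Berry--Esseen correction $O(1/\sqrt r) = O(\eps 2^7\sqrt{\ln n})$ negligible.

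The final step uses symmetry to show $\mathcal{A}_{max}$ behaves like a coin flip here. Let $i^\ast = \arg\max_{i \ge 2}|S_i|$, and condition on $\mathcal{F} = \sigma(|S_2|, \dots, |S_n|, S_1)$: then $i^\ast$ and the event $\{|S_{i^\ast}| > S_1\}$ are $\mathcal{F}$-measurable, while the signs $\sgn(S_i)$ for $i \ge 2$ are i.i.d.\ uniform on $\{\pm 1\}$ (they are the signs of independent symmetric variables given their magnitudes, and $S_1$ is independent of everything), so $\Pr[\sgn(S_{i^\ast}) = -1 \mid \mathcal{F}] = \tfrac12$ whenever $|S_{i^\ast}| > 0$. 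On the event $\{|S_{i^\ast}| > S_1\}$ (which forces $|S_{i^\ast}| > 0$) the overall argmax of $|S_i|$ equals $i^\ast$, so $\mathcal{A}_{max}$ outputs the wrong bit precisely when $\sgn(S_{i^\ast}) = -1$. Hence $\Pr[\mathcal{A}_{max}\text{ errs}] \ge \tfrac12 \Pr[|S_{i^\ast}| > S_1] \ge \tfrac12\bigl(1 - \Pr[-\min_{i\ge2}S_i < \lambda] - \Pr[S_1 \ge \lambda]\bigr) = \tfrac12(1 - o(1)) > \tfrac1{10}$ for large $n$, so $\mathcal{A}_{max}$ succeeds with probability below $\tfrac9{10}$, i.e.\ $r_{min}(C, \mathcal{A}_{max}, \tfrac9{10}) \ge \tfrac{1}{\eps^2 2^{14}\ln n}$.

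I expect the last paragraph to be where care is needed: justifying that, conditioned on the magnitudes $|S_2|, \dots, |S_n|$, the sign of the (magnitude-determined) coordinate $i^\ast$ is uniform, and checking that the error event $\{|S_{i^\ast}| > S_1,\ \sgn(S_{i^\ast}) = -1\}$ is genuinely contained in $\mathcal{A}_{max}$'s failure set --- including the measure-zero-aside corner cases where several $S_i$ tie in magnitude or $S_{i^\ast} = 0$, which must be shown to contribute negligibly. Everything else reduces to standard Chernoff/Berry--Esseen estimates, with constants chosen loosely enough to absorb the lower-order terms.
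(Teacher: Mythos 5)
Your proposal is correct, but it takes a genuinely different route from the paper's. You reuse the instance from Lemma~\ref{lem:countingfails2} ($p_1=\eps$, $q_1=0$, and $p_i=q_i=\tfrac12$ for $i\ge 2$), whereas the paper puts a small bias on every noise coordinate ($p_1=\eps/100$, $q_1=0$, and $p_i=\tfrac12+\eps$, $q_i=\tfrac12$ for $i\ge 2$). Your choice of exactly zero bias makes $S_2,\dots,S_n$ symmetric about $0$, so after the scale comparison ($S_1\approx r\eps$ versus $\max_{i\ge2}|S_i|\gtrsim\sqrt{r}$, handled by a binomial Chernoff bound and a Berry--Esseen estimate giving each $S_i$ a constant chance of dipping below $-\sqrt r$), the sign of the selected coordinate is conditionally uniform given the magnitudes, and $\mathcal{A}_{max}$ errs with probability $\tfrac12-o(1)$. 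This cleanly replaces the paper's more delicate computation, which must lower-bound $\Pr[\inf_i S_i+\sup_i S_i<0]$ in the presence of a positive drift on every coordinate via a tuned threshold $\lambda=\mu+\sigma\Phi^{-1}(1-\tfrac{\ln 2}{n-1})$ and Gaussian tail estimates; Part~1 also simplifies in your version since $\I(\bp,\bq)=\eps$ exactly. What your approach buys is brevity and a stronger error bound; what the paper's buys is the (unneeded for the lemma as stated) extra information that $\mathcal{A}_{max}$ still fails when every coordinate carries some signal. The measurability and tie-breaking caveats you flag in the last paragraph are genuine but routine: any tie-breaking rule for $\arg\max$ that depends only on the data through the magnitudes keeps $i'$ measurable with respect to $\sigma(|S_2|,\dots,|S_n|,S_1)$ on the event $|S_{i^\ast}|>S_1$, and that event forces $|S_{i'}|>0$, so the conditional sign is exactly uniform and the bound $\Pr[\mathrm{err}]\ge\tfrac12\Pr[|S_{i^\ast}|>S_1]$ stands.
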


\begin{proof}
Define $\bp,\bq$ as:
\begin{enumerate}
\item $p_1 = \varepsilon / 100$, $q_1 = 0$.
\item $p_i = 1/2 + \varepsilon$, $q_i = 1/2$ $i=2,...,n$. 
\end{enumerate}
Note that $\I(p_1,q_1) = \varepsilon / 100$ and $\I(p_2,q_2) = \cdots \I(p_n,q_n)= (1-H(1/2+\varepsilon))/2 $. By Fact \ref{fact:repinsker}, $\I(p_2,q_2)\leq \frac{4}{\ln(2)} \cdot (\varepsilon)^2 \leq \varepsilon/(100n)$. Thus $\varepsilon/100 \leq \sum_{i=1}^n \I(p_i,q_i)\leq \varepsilon/50$. Therefore, by Theorem \ref{thm:main}, given at least $\frac{518400\sqrt{n}\ln n}{\eps}$ samples, $\mathcal{A}_{coup}$ succeeds with probability at least $1 - 2/n$. 

Now fix $r = \frac{1}{\varepsilon^2 2^{14}\ln n}$.  We will now show that $\mathcal{A}_{max}$ solves \domination$(n,\bp,\bq,r)$ with probability at most $9/10$. Without loss of generality assume $B=0$. Define random variable $S_i=\sum_{j=1}^r (X_{i,j}-Y_{i,j})$. $S_1$ is always non-negative. $S_2,\cdots,S_n$ are i.i.d random variables with $\E[S_i]=r\eps$ and $\var[S_i]=r(\frac{1}{2}-\eps^2)$. Algorithm \ref{alg:max} outputs $B=1$ when $\inf_i S_i + \sup_i S_i < 0.$ Let $\lambda>0$ be some parameter which we will choose later.

\begin{align*}
\Pr[\inf_i S_i + \sup_i S_i < 0] &\ge \Pr[\inf_i S_i < -\lambda,\ \sup_i S_i < \lambda]\\
&\ge \Pr[\sup_i S_i < \lambda]-\Pr[\inf_i S_i \ge -\lambda,\ \sup_i S_i < \lambda]\\
&=\prod_{i=1}^n\Pr[S_i<\lambda]^n-\prod_{i=1}^n\Pr[-\lambda\le S_i<\lambda]^n\\
&= \Pr[S_1 < \lambda] \left(\Pr[S_2<\lambda]^{n-1}-\Pr[-\lambda\le S_2<\lambda]^{n-1}\right)\\
&=\Pr[S_1 < \lambda]  \left(\Pr[S_2< \lambda]^{n-1}-(\Pr[S_2< \lambda]-\Pr[S_2< -\lambda])^{n-1}\right)\\
\end{align*}

We will now apply Berry-Esseen Theorem (Lemma~\ref{lem:berryesseen}) with $Z_j=(X_{2,j}-Y_{2,j})$ for $j=1,\cdots,r$, to approximate the CDF of $S_2$. We have $\mu=\E[S_2]=r\eps$, $\sigma^2=\var[S_2]=r(\frac{1}{2}-\eps^2) \ge \frac{r}{4}$. and $\gamma=\sum_{j=1}^r \E[|Z_j-\eps|^3]\le 8r$. Therefore for all $t\in \mathbb{R}$,
$$\left|\Pr[S_2< t]-\Phi\left(\frac{t-\mu}{\sigma}\right)\right|\le \frac{\gamma}{\sigma^3}\le \frac{64}{\sqrt{r}} \le \frac{64}{n^{3/2}}$$
when $n$ is large enough.
Let us choose $\lambda=\mu+\sigma \Phi^{-1}(1-\frac{\ln 2}{n-1})$ and let $a=\frac{\lambda-\mu}{\sigma}$, $b=\frac{\lambda+\mu}{\sigma}$. Therefore $\Phi(a)=1-\frac{\ln 2}{n-1}.$ When $n$ is large enough, $a>10$. By Fact ~\ref{fact:gaussiancdf}, $$\frac{1}{\sqrt{2\pi}}\exp(-a^2/2)\frac{1}{a}\ge \frac{\ln 2}{n-1}=1-\Phi(a)\ge \frac{1}{\sqrt{2\pi}} \exp(-a^2/2)\frac{1}{2a}.$$ From the left hand side of the above inequality, we can conclude that $a\le 2\sqrt{\ln (n-1)}$.
Also,
\begin{align*}
\Phi(-b)=1-\Phi(b) &=\frac{\ln 2}{n} - (\Phi(a)-\Phi(b))\\
&=\frac{\ln 2}{n-1}-\frac{1}{\sqrt{2\pi}}\int_a^b \exp\left(-t^2/2\right)dt\\
&\ge \frac{\ln 2}{n-1}-\frac{1}{\sqrt{2\pi}}(b-a)\exp(-a^2/2)\\
&\ge \frac{\ln 2}{n-1}-\frac{2a(\ln 2)(b-a)}{n-1}\\
&\ge \frac{\ln 2}{n-1}-\frac{4a\mu}{(n-1)\sigma}\\
&\ge \frac{\ln 2}{n-1}-\frac{16\eps \sqrt{r\ln (n-1)}}{n-1} \tag{$\mu=r\eps, \sigma^2\ge \frac{r}{4}, a\le 2\sqrt{\ln (n-1)}$}\\
&\ge \frac{\ln 2}{n-1}-16 \frac{\varepsilon}{n-1}\sqrt{\frac{1}{\varepsilon^22^{14}\ln n}\ln (n-1)}\\
&\ge \frac{\ln 2}{n-1}-\frac{1}{8(n-1)}
\end{align*}

By Chernoff bound, we have
\[
\Pr[S_1 < \lambda] \geq \Pr[S_1 \leq \mu] = 1- e^{-r \cdot D(\varepsilon \| \varepsilon /100)} \geq 1 - e^{-\frac{2.5}{\varepsilon}\cdot \varepsilon}\geq 1/2.
\]
Now we can bound the probability of error as follows:
\begin{align*}
&\Pr[\inf_i S_i + \sup_i S_i < 0] \\
&\ge \Pr[S_1 < \lambda]  \left(\Pr[S_2< \lambda]^{n-1}-(\Pr[S_2< \lambda]-\Pr[S_2< -\lambda])^{n-1}\right)\\
&\ge\frac{1}{2}\left( \left(\Phi\left(\frac{\lambda-\mu}{\sigma}\right)-\frac{64}{n^{3/2}}\right)^{n-1}-\left(\Phi\left(\frac{\lambda-\mu}{\sigma}\right)-\Phi\left(\frac{-\lambda-\mu}{\sigma}\right)+2\cdot \frac{64}{n^{3/2}}\right)^{n-1}\right)\\
&=\frac{1}{2}\left(\left(\Phi(a)-\frac{64}{n^{3/2}}\right)^{n-1}-\left(\Phi(a)-\Phi(-b)+\frac{128}{n^{3/2}}\right)^{n-1}\right)\\
&\ge\frac{1}{2}\left( \left(1-\frac{\ln 2}{n-1}-\frac{64}{n^{3/2}}\right)^{n-1} - \left(1-\frac{2\ln 2}{n-1}+\frac{1}{8(n-1)}+\frac{128}{n^{3/2}}\right)^{n-1}\right)\\
&\ge\frac{1}{2}\left( \exp(-\ln 2)-\exp(-2\ln 2+1/8)-0.01\right) \tag{when $n$ is large enough}\\
&> \frac{1}{10}.
\end{align*}
\end{proof}

%-----------------------------------------------------------------------------------------------------------------------------

\section{Reducing top-$K$ to domination}\label{sec:topk}
%!TEX root =  main.tex

In this section, we will finally reduce $\topk$ to $\domination$, thus proving Theorem~\ref{thm:ubintro}. First, we will give an algorithm for $\topk$ problem that uses $\mathcal{A}_{coup}$ for $\domination$ as a subroutine. We begin by reducing $\topk$ to the following graph theoretic problem.
\begin{lemma}
\label{lem:topktournament}
Let $G=([n],E)$ be a directed complete graph on vertices $\{1,2,\cdots,n\}$ i.e. for every distinct $i,j\in [n]$, either $(i,j)\in E$ or $(j,i)\in E$ but not both. Suppose there is a subset $S\subset [n]$ of size $k$ such that $(i,j)\in E$ for every $i\in S$ and $j\notin S$. Then there is a randomized algorithm which runs in expected running time $O(n)$ and finds the set $S$ given oracle access to the edges of $G$. Moreover there is some absolute constant $C>0$ such that for every $\lambda\ge 1$, the probability that the algorithm runs in more than $C\lambda n$ time is bounded by $\exp(-\lambda)$.
\end{lemma}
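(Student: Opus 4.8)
The plan is to use a randomized recursive pivoting strategy in the style of quickselect. I would maintain a vertex set $V$ (initially $[n]$) and a target size $k'$ (initially $k$) under the invariant that the sub-tournament of $G$ on $V$ contains a dominating set $S'$ of size $k'$, meaning $(i,j)\in E$ for all $i\in S'$ and $j\in V\setminus S'$; such an $S'$ is unique, since two distinct such sets of equal size would contain vertices $i\in S_1\setminus S_2$, $j\in S_2\setminus S_1$ with both $(i,j)\in E$ and $(j,i)\in E$. The recursion is: if $k'=0$ return $\emptyset$; if $k'=|V|$ return $V$; otherwise pick $v\in V$ uniformly at random, query the $|V|-1$ edges incident to $v$, and form $A=\{j\in V:(v,j)\in E\}$ and $B=\{j\in V:(j,v)\in E\}$. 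The key point is the strict dichotomy $v\in S'\iff |A|\ge |V|-k'$: if $v\in S'$ then $v$ beats all of $V\setminus S'$, so $|A|\ge |V|-k'$, whereas if $v\notin S'$ then $v$ loses to all of $S'$, so $|A|\le |V|-k'-1$. If $|A|\ge |V|-k'$ we recurse on $(A,\,k'-1-|B|)$ and output $\{v\}\cup B$ together with the recursive answer; otherwise we recurse on $(B,\,k')$ and output its answer. The algorithm is the top-level call on $([n],k)$, and the running time is dominated by the total number of queries (each call performs $|V|-1$ queries plus $O(|V|)$ extra bookkeeping).

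For correctness I would check the invariant is preserved in each branch. If $v\in S'$, every $w\in S'\setminus\{v\}$ lies in $B$ if $(w,v)\in E$ and in $A$ otherwise, so $S'=\{v\}\sqcup B\sqcup(S'\cap A)$; moreover $B\subseteq S'$, and for $i\in S'\cap A$ and $j\in A\setminus S'\subseteq V\setminus S'$ we have $(i,j)\in E$, so $S'\cap A$ is a dominating set of size $k'-1-|B|$ in the sub-tournament on $A$, which the recursive call recovers. If $v\notin S'$ then $S'\subseteq B$ and $A\cap S'=\emptyset$, and $S'$ is still a dominating set of size $k'$ in the sub-tournament on $B$. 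In both cases the recursion returns $S'$, so the top-level call returns $S$.

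For the expected running time I would track $|V_t|$, the size of the vertex set at the $t$-th call. Write $m=|V_t|$, $a=k'_t$, $b=m-a$. Conditioned on $V_t$, with probability $a/m$ the pivot is in $S'$, in which case $|V_{t+1}|=m-1-|B|$ where $|B|$ is the in-degree of $v$ in the sub-tournament on $S'$, of conditional mean $(a-1)/2$ over a uniform $v\in S'$; with probability $b/m$ the pivot is outside $S'$, in which case $|V_{t+1}|=a+(\text{in-degree of }v\text{ in the sub-tournament on }V\setminus S')$, of conditional mean $a+(b-1)/2$. Hence
\[
\E\bigl[\,|V_{t+1}|\,\bigm|\,V_t\,\bigr]=\frac{a}{m}\Bigl(b+\tfrac{a-1}{2}\Bigr)+\frac{b}{m}\Bigl(a+\tfrac{b-1}{2}\Bigr)=\frac{m}{2}-\frac12+\frac{ab}{m}\le \frac{3m}{4},
\]
using $ab\le m^2/4$. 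Thus $\E[\,|V_t|\,]\le (3/4)^t n$, and the expected total number of queries is at most $\sum_{t\ge 0}\E[\,|V_t|\,]\le 4n=O(n)$.

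The main obstacle is the exponential tail bound. Iterating the contraction gives $\E[\,|V_{t+5}|\mid \mathcal F_t\,]\le (3/4)^5|V_t|\le |V_t|/4$ (setting $|V_\cdot|:=0$ after termination), so $\Pr[\,|V_{t+5}|>|V_t|/2\mid \mathcal F_t\,]\le 1/2$ by Markov. Say the $t$-th call belongs to ``stage $j$'' if $|V_t|\in (n/2^{j+1},\,n/2^j]$; the stages occur in order since $|V_t|$ is nonincreasing. Partitioning each stage's calls into consecutive blocks of $5$, the bound above shows that, conditioned on the past, each block ends its stage (pushes $|V|$ to at most $n/2^{j+1}$) with probability at least $1/2$, so the number $B_j$ of blocks in stage $j$ is conditionally stochastically dominated by a geometric$(1/2)$ variable on $\{1,2,\dots\}$. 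Since every call in stage $j$ costs at most $n/2^j$ queries, the total query count is at most $5\sum_j B_j\,(n/2^j)$. Using $\E[e^{\theta B_j}\mid \text{past}]\le \tfrac{e^{\theta}/2}{1-e^{\theta}/2}\le e^{O(\theta)}$ for $\theta$ at most a small absolute constant, choosing $\theta_j=c\,2^{-j}$ with $c$ small enough, and peeling off stages in order, I would get $\E[\exp(\tfrac{c}{n}\sum_j B_j\,(n/2^j))]\le \exp(O(c)\sum_{j\ge 0}2^{-j})=O(1)$, and then Markov's inequality yields $\Pr[\,5\sum_j B_j\,(n/2^j)>C\lambda n\,]\le e^{-\lambda}$ for every $\lambda\ge 1$ and an absolute constant $C$, which is the claimed bound. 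The points that will take the most care are the conditional (rather than fully independent) stochastic domination of the $B_j$ by geometrics, and handling the rounding in the stage boundaries.
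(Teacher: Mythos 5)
Your proposal is correct, and the core pivoting idea (pick a random vertex, classify it by its degree, discard the vertices it determines, recurse) is the same as the paper's; the correctness argument via the dichotomy $v\in S'\iff|A|\ge|V|-k'$ and the computation $\E[|V_{t+1}|\mid V_t]\le\frac{3}{4}|V_t|$ match the paper's almost exactly. Where you genuinely diverge is in the algorithm's handling of bad pivots and, consequently, in the tail analysis. The paper's algorithm \emph{rejects} any pivot that does not shrink the vertex set below $\frac{4}{5}n$ and resamples; by Markov each attempt succeeds with probability at least $\frac{1}{16}$, so the level sizes decay deterministically like $(4/5)^i$ and the only randomness left is a sequence of independent geometric retry counts $T_i$, which the paper controls with a simple union bound over the events $T_i>\frac{C\lambda}{9}(10/9)^i$. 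Your algorithm accepts every pivot, which is the more natural quickselect-style procedure and wastes no queries, but the level sizes are then random, and you must recover concentration through the stage/block decomposition, conditional stochastic domination of the block counts $B_j$ by geometrics, and an MGF calculation with weights $\theta_j=c\,2^{-j}$. That machinery is sound as sketched (the convention $|V_t|=0$ after termination, the fact that a skipped stage contributes $B_j=0$, and peeling the conditional MGFs stage by stage all work out), but it is noticeably heavier than the paper's route; the paper essentially pays a constant-factor overhead in the algorithm to purchase independence in the analysis. Either approach yields the stated $O(n)$ expectation and $\exp(-\lambda)$ tail.
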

\begin{proof}
Pick $v\in [n]$, uniformly at random. Let $d^{in}(v)$ and $d^{out}(v)$ be the indegree and outdegree of vertex $v\in [n]$. Clearly $d^{in}(v)+d^{out}(v)=n-1$. Also $v\in S$ iff $d^{in}(v)< k$. We can thus easily test if $v\in S$ by querying the $n-1$ edges, $\{(i,v):i\in [n]\setminus\{v\}\}$. Depending on whether $v \in S$, we now have two cases:

\begin{itemize}
\item \textbf{Case 1}: $v\in S$

For every $i$ such that $(i,v)\in E$, we can conclude that $i\in S$. We can therefore remove these vertices and iterate. We have reduced the problem to a graph on $n-1-d^{in}(v)=d^{out}(v)$ vertices.
\item \textbf{Case 2}: $v\notin S$

For every $i$ such that $(v,i)\in E$, we can conclude that $i\notin S$. We can therefore remove these vertices and iterate. We have reduced the problem to a graph on $n-1-d^{out}(v)=d^{in}(v)$ vertices.
\end{itemize}
Let $n'$ be the number of vertices that remain after the above random process. Note that
\begin{align*}
\E_v[n']&=\Pr[v\in S]\cdot\E[d^{out}(v)|v\in S]+\Pr[v\notin S]\cdot\E[d^{in}(v)|v\notin S]\\
&=\frac{k}{n}\left(n-k+\frac{k-1}{2}\right)+\frac{n-k}{n}\left(k+\frac{n-k-1}{2}\right)\\
&=\frac{n-1}{2}+\frac{k(n-k)}{2n} \le \frac{3n}{4}.
\end{align*}
By Markov's inequality, $\Pr[n'\ge 4n/5]\le \frac{15}{16}$. We will repeatedly choose $v$ at random until we find a $v$ such that $n'< \frac{4n}{5}$. Once we find such a $v$, we can remove at least $n/5$ vertices from the graph and iterate the same procedure for the remaining graph. Let $T_0$ denote the random variable equal to the number of times we sample $v$. We have that $\Pr[T_0\ge t]\le (\frac{15}{16})^t$ and therefore $$\E[T_0]=\sum_{t=1}^\infty \Pr[T_0\ge t] \le 15.$$ Similarly let $T_i$ represent the number of times we must sample $v$ in iteration $i$ of this process; by the same logic, $\E[T_i] \leq 15$ for all $i$. If we let the random variable $X$ denote the number of edge queries the algorithm makes, then since the graph shrinks by a factor of $4/5$ at each iteration, 

\begin{align*}
X&=T_0\cdot n+T_1\cdot \left(\frac{4}{5}\right)n+T_2\cdot \left(\frac{4}{5}\right)^2n+\cdots \\
\E[X] &\leq 15 \cdot \left(1 + \frac{4}{5} + \left(\frac{4}{5}\right)^2 + \dots\right) \cdot n \leq 75 n.
\end{align*}

This completes the proof that  $\E[X]=O(n)$, as required. We can similarly analyze the tail probability of $X$; note that:
$$\Pr[X>C \lambda n]\le \Pr\left[\exists i: T_i > \frac{\lambda C}{9} \left(\frac{10}{9}\right)^i\right]$$
since $T_i\le \frac{C\lambda}{9} \left(\frac{10}{9}\right)^i$ for every $i$ implies that $$X\le \frac{C\lambda n}{9} \sum_{i=0}^\infty \left(\frac{4}{5}\right)^i\left(\frac{10}{9}\right)^i=\frac{C\lambda n}{9}\sum_{i=0}^\infty \left(\frac{8}{9}\right)^i\le C\lambda n.$$ By the union bound,
\begin{align*}
\Pr\left[\exists i: T_i > \frac{C\lambda }{9} \left(\frac{10}{9}\right)^i\right]&\le \sum_{i=0}^\infty \Pr\left[T_i>\frac{C\lambda}{9} \left(\frac{10}{9}\right)^i \right]\\
&\le \sum_{i=0}^\infty \exp\left(-\frac{C\lambda }{9} \ln\left(\frac{16}{15}\right)\left(\frac{10}{9}\right)^i \right)\\
&\le \exp(-\lambda). \tag{for sufficiently large $C$}
\end{align*}
\end{proof}
The following lemma shows that when $p \geq q$, $\I(p,q)$ is an increasing function of $p$ and a decreasing function of $q$.
\begin{lemma}
\label{lem:information_monotone}
Let $0\le q'\le q\le p\le p'\le 1$, then $\I(p',q')\ge \I(p,q)$.
\end{lemma}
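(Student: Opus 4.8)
The plan is to show monotonicity by reducing to the recipe $\I(p,q)=(x+y)(1-H(x/(x+y)))$ with $x=p(1-q)$, $y=q(1-p)$, and arguing that moving $p$ up or $q$ down only makes this quantity larger. Since the two variables $p$ and $q$ play symmetric roles here (swapping $p\leftrightarrow q$ leaves $x+y$ fixed and sends $x\leftrightarrow y$, hence leaves $\I$ fixed because $H$ is symmetric about $1/2$), it suffices to handle one of them: I would first prove that for fixed $q$ with $q\le p$, $\I(p,q)$ is non-decreasing in $p$, and then deduce the statement about decreasing $q$ by the symmetry $\I(p,q)=\I(1-q,1-p)$ (which follows since replacing $(p,q)$ by $(1-q,1-p)$ swaps $x$ and $y$). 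Applying the first fact twice — once to raise $p$ to $p'$, once to raise $1-q$ to $1-q'$ — chains the two inequalities $\I(p',q')\ge \I(p,q')\ge \I(p,q)$ and gives the lemma.

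For the core one-variable claim, I would compute $\frac{\partial}{\partial p}\I(p,q)$ holding $q$ fixed. Note $\frac{\partial x}{\partial p}=1-q$ and $\frac{\partial y}{\partial p}=-q$, so $\frac{\partial(x+y)}{\partial p}=1-2q$ and $\frac{\partial x}{\partial p}-\frac{x}{x+y}\cdot\frac{\partial(x+y)}{\partial p}$ controls how the ratio $\frac{x}{x+y}$ moves. Writing $t=\frac{x}{x+y}$ and using $1-H(t)=D(t\|\tfrac12)$, whose derivative in $t$ is $\log\frac{t}{1-t}$, the derivative of $\I$ becomes
\[
\frac{\partial \I}{\partial p} = (1-2q)\left(1-H(t)\right) + (x+y)\cdot \log\!\left(\frac{t}{1-t}\right)\cdot \frac{\partial t}{\partial p}.
\]
Since $p\ge q$ we have $x\ge y$, hence $t\ge \tfrac12$, so $\log\frac{t}{1-t}\ge 0$ and $1-H(t)\ge 0$; it remains to check that both $1-2q\ge 0$ is not needed — rather, that $\frac{\partial t}{\partial p}\ge 0$ and, in the regime $q>\tfrac12$ where $1-2q<0$, that the positive second term dominates. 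A cleaner route, which I expect to actually use, avoids casework: reparametrize by $(x,y)$ directly. As $p$ increases with $q$ fixed, $x$ increases and $y$ decreases, i.e. we move in the direction $(+,-)$ in the $(x,y)$ plane; I would just show that $\I$, viewed as the function $g(x,y)=(x+y)(1-H(\tfrac{x}{x+y}))$ on $\{x\ge y\ge 0\}$, is non-decreasing in $x$ and non-increasing in $y$. By homogeneity-degree-one scaling, $g(x,y)=(x+y)\phi(t)$ with $\phi(t)=1-H(t)$, and a direct computation gives $\partial_x g = \phi(t) + (1-t)\phi'(t)$, $\partial_y g = \phi(t) - t\,\phi'(t)$. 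Using $\phi'(t)=\log\frac{t}{1-t}$, for $t\ge\tfrac12$ one checks $\phi(t)+(1-t)\log\frac{t}{1-t}\ge 0$ and $\phi(t)-t\log\frac{t}{1-t}$... actually $\partial_y g\le 0$ for $t\ge \tfrac12$ is exactly what we want for the $y$-decreasing direction. So $\partial_x g\ge 0$ and $\partial_y g\le 0$ on $t\ge\tfrac12$, and since the motion from $(p,q)$ to $(p',q)$ has $dx\ge 0$, $dy\le 0$, we get $\I(p',q)\ge\I(p,q)$.

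The main obstacle is verifying the two scalar inequalities $\phi(t)+(1-t)\log\frac{t}{1-t}\ge 0$ and $\phi(t)-t\log\frac{t}{1-t}\le 0$ for $t\in[\tfrac12,1]$; both are elementary (at $t=\tfrac12$ the first is $1>0$ with the log term vanishing, and the second is $1>0$... wait, I will double-check signs: $\phi(\tfrac12)=0$, so at $t=\tfrac12$ both expressions equal $0$, and one then checks the derivatives/convexity to get the right sign on the whole interval), but need to be done carefully, e.g. by noting $\phi(t)=D(t\|\tfrac12)$, using $D(t\|\tfrac12)=t\log(2t)+(1-t)\log(2(1-t))$ to get closed forms, and reducing each to a single-variable convexity check. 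I would also need to handle the boundary cases $t=1$ (i.e. $q=0$) and $x+y=0$ separately, where $\I$ is defined by continuity; these are easy since there $\I(p,0)=-\log(1-p)$ explicitly, which is manifestly increasing in $p$. Finally, to be fully rigorous at the endpoints $p,q\in\{0,1\}$ where $H$ may be evaluated at $0$ or $1$, I would invoke the usual convention $0\log 0=0$ and note $\I$ extends continuously, so the monotonicity on the open region passes to the closed region by continuity.
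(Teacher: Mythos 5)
Your proposal is correct and is essentially the paper's proof: the paper likewise computes the partial derivatives of $\I(p,q)$ and observes that for $p\ge q$ the two logarithms $\log\frac{2p(1-q)}{p(1-q)+q(1-p)}$ and $\log\frac{2q(1-p)}{p(1-q)+q(1-p)}$ are respectively nonnegative and nonpositive --- these are exactly your $\partial_x g=\log(2t)$ and $\partial_y g=\log(2(1-t))$, so your $(x,y)$ reparametrization is the same derivative-sign computation, and the two scalar inequalities you defer do collapse to these closed forms. Two harmless slips worth fixing: the map $(p,q)\mapsto(1-q,1-p)$ fixes $x$ and $y$ rather than swapping them (the invariance of $\I$ still holds, and in fact you could skip the symmetry entirely since decreasing $q$ also moves $(x,y)$ in the direction $(+,-)$), and $\I(p,0)=p$, not $-\log(1-p)$ (still increasing in $p$).
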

\begin{proof}
We have:
\begin{align*}
\frac{\partial \I(p,q)}{\partial p}&=(1-q)\log\left(\frac{2p(1-q)}{p(1-q)+(1-p)q}\right)-q\log\left(\frac{2(1-p)q}{p(1-q)+(1-p)q}\right)\\
\frac{\partial \I(p,q)}{\partial q}&=(1-p)\log\left(\frac{2(1-p)q}{p(1-q)+(1-p)q}\right)-p\log\left(\frac{2p(1-q)}{p(1-q)+(1-p)q}\right)
\end{align*}
When $p\ge q$, $$\log\left(\frac{2p(1-q)}{p(1-q)+(1-p)q}\right)\ge 0,\quad \log\left(\frac{2(1-p)q}{p(1-q)+(1-p)q}\right)\le 0.$$ Thus $\frac{\partial \I(p,q)}{\partial p}\ge 0$ and $\frac{\partial \I(p,q)}{\partial q}\le 0$ when $p\ge q$. Thus increasing $p$ or decreasing $q$ cannot decrease $\I(p,q)$ when $p\ge q$.
\end{proof}
We are now ready to give an algorithm for $\topk$.
\begin{theorem}
\label{thm:algmain}
There exists an algorithm $A$ for $\topk$ such that for any $\alpha > 0$ and any instance $S = (n, k, \bP)$, $A$ runs in time $O(n^2r\log(1/\alpha))$ and satisfies

$$r_{min}(S, A, 1-\alpha) \leq \frac{7776\sqrt{n}\log(2n\alpha^{-1})}{\I(\bP_k, \bP_{k+1})}$$

\noindent
where $\bP_k,\bP_{k+1}$ are the $k$ and $k+1$ rows of $\bP$.
\end{theorem}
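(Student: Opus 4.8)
The plan is to reduce an instance $S=(n,k,\bP)$ of $\topk$ to $O(n)$ instances of $\domination$, solve each with $\mathcal{A}_{coup}$, and stitch the answers together using the tournament procedure of Lemma~\ref{lem:topktournament}. The sample complexity will come from Theorem~\ref{thm:main} applied to the worst of these $\domination$ instances, which I will identify using the monotonicity of $\I$ (Lemma~\ref{lem:information_monotone}).

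\textbf{The reduction.} Fix the (unknown) permutation $\pi$. For two distinct labels $a,b\in[n]$, I would show that deciding whether $a$ is ranked above $b$ from only the samples involving $a$ or $b$ is exactly an instance of $\domination$ on $n$ coordinates indexed by $c\in[n]$: for $c\notin\{a,b\}$ take the $X$-samples to be $(X_{a,c,l})_{l\le r}$ and the $Y$-samples to be $(X_{b,c,l})_{l\le r}$; for $c=a$ take $X$-samples $(X_{a,a,l})_l$ and $Y$-samples $(X_{b,a,l})_l$; for $c=b$ take $(X_{a,b,l})_l$ and $(X_{b,b,l})_l$. Since distinct ordered pairs of items are compared with mutually independent coins, all of these samples are mutually independent given $\pi$, which is precisely the structure demanded by Definition~\ref{def:dom}. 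If $\pi^{-1}(a)=u<v=\pi^{-1}(b)$, then SST (Definition~\ref{def:SST}) makes the $X$-coordinate means dominate the $Y$-coordinate means, and this embedded instance is exactly $\domination(n,\bP_u,\bP_v,r)$ with hidden bit $B=0$ (and with $B=1$ if $v<u$); in both cases the correct output is ``$a$ is ranked above $b$''. Because the coordinate for $a$ (resp.\ $b$) carries the information of $\bP_{u,u}$ vs.\ $\bP_{v,u}$ (resp.\ $\bP_{u,v}$ vs.\ $\bP_{v,v}$), the information of the embedded instance is $\I(\bP_u,\bP_v)=\sum_{w=1}^n\I(\bP_{u,w},\bP_{v,w})$.

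\textbf{Monotonicity and the algorithm.} Among the pairs that must be compared correctly — those with $u=\pi^{-1}(a)\le k<\pi^{-1}(b)=v$ — the ranks $(k,k+1)$ are hardest: for every coordinate $w$, SST gives $\bP_{v,w}\le\bP_{k+1,w}\le\bP_{k,w}\le\bP_{u,w}$, so Lemma~\ref{lem:information_monotone} yields $\I(\bP_{u,w},\bP_{v,w})\ge\I(\bP_{k,w},\bP_{k+1,w})$, and summing gives $\I(\bP_u,\bP_v)\ge\I(\bP_k,\bP_{k+1})$. The algorithm now builds (implicitly) the tournament graph $G$ on $[n]$ in which $(a,b)$ is an edge iff $\mathcal{A}_{coup}$, run with error parameter $\alpha'=\alpha/n^2$ on the embedded instance for $(a,b)$, outputs ``$a$ above $b$'', and feeds the procedure of Lemma~\ref{lem:topktournament}, answering each edge-query by this rule. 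Put $r=\frac{7776\sqrt n\log(2n\alpha^{-1})}{\I(\bP_k,\bP_{k+1})}$. For any cross-pair, Theorem~\ref{thm:main}, the bound $\ln(2n^3/\alpha)\le 3\ln(2n/\alpha)$, and the monotonicity inequality give $r\ge\frac{2592\sqrt n\ln(2n(\alpha')^{-1})}{\I(\bP_u,\bP_v)}\ge r_{min}(\,\cdot\,,\mathcal{A}_{coup},1-\alpha')$, so $\mathcal{A}_{coup}$ orients that edge correctly except with probability $\alpha/n^2$. For each fixed $\pi$, union-bounding over the at most $n^2$ ordered cross-pairs shows that with probability $\ge 1-\alpha$ every edge from the true top-$k$ set $S$ to its complement lies in $G$; this holds for random $\pi$ as well. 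On that event $G$ satisfies the hypothesis of Lemma~\ref{lem:topktournament}, so the tournament procedure outputs exactly $\{\pi(1),\dots,\pi(k)\}$.

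\textbf{Running time and the main obstacle.} By Lemma~\ref{lem:topktournament} the tournament makes $O(n)$ edge-queries in expectation and more than $C\lambda n$ queries with probability at most $e^{-\lambda}$; each query is one run of $\mathcal{A}_{coup}$ on $O(nr)$ samples, costing $O(nr)$ time, so the running time is $O(n^2 r)$ in expectation and exceeds $O(n^2 r\log(\alpha^{-1}))$ with probability at most $\alpha$ (folded into the error by truncating, adjusting constants). The step I expect to be most delicate is the reduction itself: one must verify the conditional-independence structure carefully (the self-comparison coordinates for $a$ and $b$ must not reuse the same coins as any other coordinate) and must retain the coordinates for $a$ and $b$ so that the embedded information is genuinely at least $\I(\bP_k,\bP_{k+1})$ — dropping them, the sum over the remaining $n-2$ coordinates can be an arbitrarily small fraction of $\I(\bP_k,\bP_{k+1})$. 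The remaining care is purely bookkeeping: composing the three independent randomness sources (the samples, $\mathcal{A}_{coup}$'s internal coins, and the tournament's coins) so that the cross-pair union bound and the correctness guarantee of Lemma~\ref{lem:topktournament} combine cleanly.
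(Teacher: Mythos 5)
Your proposal is correct and follows essentially the same route as the paper's proof: the same per-pair embedding of row comparisons as $\domination$ instances solved by $\mathcal{A}_{coup}$ with error parameter $\Theta(\alpha/n^2)$, the same use of Lemma~\ref{lem:information_monotone} to reduce to the $(k,k+1)$ pair, the same union bound, and the same invocation of Lemma~\ref{lem:topktournament} (including its tail bound) for correctness and the $O(n^2 r\log(1/\alpha))$ running time. The extra care you flag about the self-comparison coordinates and independence is valid and is implicitly handled the same way in the paper's reduction.
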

\begin{proof}
Let $\bP_i$ denote the $i^{th}$ row of $\bP$, and let $\Delta = I(\bP_{k}, \bP_{k+1})$.  Recall that $\mathcal{A}$ is given as input the three-dimensional array of samples $Z_{i, j, l}$, where for each $i, j \in [n]$ and $1 \leq l \leq r$, $Z_{i, j, l}$ is the result of the $l$th noisy comparison between item $i$ and item $j$ (sampled from $\Ber(\bP_{\pi^{-1}(i), \pi^{-1}(j)})$). We will define a complete directed graph $G=([n],E)$ as follows. For every $1 \leq i < j \leq n$, run $\mathcal{A}_{coup}$ with input $X_{h, l} = Z_{i, h, l}$ and $Y_{h, l} = Z_{j, h, l}$; if $\mathcal{A}_{coup}$ returns $B=0$, then direct the edge from $i$ towards $j$, and otherwise, direct the edge from $j$ towards $i$.

Let $T = \{\pi(1), \pi(2), \dots, \pi(k)\}$ be the set of labels of the top $k$ items. We claim that if $i \in T$ and $j \not\in T$, then with probability at least $1 - \frac{\alpha}{n^2}$, the edge is directed from $i$ towards $j$. To see this, note that in the corresponding input to $\mathcal{A}_{coup}$, $X$ is drawn from $\bP_{\pi^{-1}(i)}$ and $Y$ is drawn from $\bP_{\pi^{-1}(j)}$. If $i \in T$ and $j \not\in T$, then $\pi^{-1}(i) \leq k < \pi^{-1}(j)$. In particular, $\bP_{\pi^{-1}(i)}$ dominates $\bP_{\pi^{-1}(j)}$, and moreover by Lemma~\ref{lem:information_monotone}, $\I(\bP_{\pi^{-1}(i)},\bP_{\pi^{-1}(j)})\ge \Delta$. It follows from Theorem \ref{thm:main} that $A_{coup}$ outputs $B=0$ on this input with probability at least $1 - \frac{\alpha}{2n^2}$, since in general,

\begin{eqnarray*}
r_{min}(C, \mathcal{A}_{coup}, 1-\tfrac{\alpha}{2n^2}) &\leq& \frac{2592\sqrt{n}\log (4n^3\alpha^{-1})}{\I(\bp, \bq)} \\
&\leq& \frac{7776\sqrt{n}\log (2n\alpha^{-1})}{\I(\bp, \bq)}.
\end{eqnarray*}

By the union bound, the probability that all of these comparisons are correct is at least $1-\frac{\alpha}{2}$. Therefore, by the tail bounds in Lemma~\ref{lem:topktournament}, we can find the subset $T$ in $O(n\log(1/\alpha))$ oracle calls to $\mathcal{A}_{coup}$ with probability at least $1-\frac{\alpha}{2}$. The probability of failure is at most $\frac{\alpha}{2} + \frac{\alpha}{2} = \alpha$. Each call to Algorithm~\ref{alg:gcoupling} takes $O(nr)$ time, so the overall time of the algorithm is $O(n^2r\log(1/\alpha))$.

%We are given samples according to some matrix $\bQ$ which is obtained by permuting the rows and colums of $\bP$ using some random permutation $\pi:[n]\to [n]$ i.e. $\bQ_{i,j}=\bP_{\pi^{-1}(i),\pi^{-1}(j)}$. Define a complete directed graph $G=([n],E)$ as follows: The edge between $i,j$ is directed from $i$ to $j$ if $\mathcal{A}_{coup}$ given samples distributed according to rows $\bQ_i,\bQ_j$ predicts that $\bQ_i>\bQ_j$ and vice versa. Note that $\I(\bQ_i,\bQ_j)=\I(\bP_{\pi^{-1}(i)},\bP_{\pi^{-1}(j)})$. Therefore for every $i\le k, j>k$, we have $\I(\bQ_{\pi(i)},\bQ_{\pi(j)})\ge \Delta.$ Therefore by Theorem~\ref{thm:main}, Algorithm~\ref{alg:gcoupling} can solve \domination$(n,\bQ_{\pi(i)},\bQ_{\pi(j)},r)$ correctly with probability $\ge 1-1/n^C$ for every $i\le k, j>k$ i.e. it will predict $\bQ_{\pi(i)}>\bQ_{\pi(j)}$. By union bound, Algorithm~\ref{alg:gcoupling} doesn't make a mistake on any pair of rows $\pi(i),\pi(j)$ where $i\le k, j>k$ with probability at least $1-\frac{1}{n^{C-2}}$. So with probability $1-\frac{1}{n^{C-2}}$, the graph $G$ has a subset $S\in [n]$ of size $k$ such that for every $i\in S$ and $j\notin S$,  $(i,j)\in E$. This subset $S$ is exactly $\{\pi(1),\cdots,\pi(k)\}$. Therefore by Lemma~\ref{lem:topktournament}, we can find the subset $S$ in $O(n\log(1/\eps))$ oracle calls to Algorithm~\ref{alg:gcoupling} with probability $1-\eps/2$. The probability of failure is at most $\eps/2+\frac{1}{n^{C-2}}\le \eps$ when $C\ge 3$. Each call to Algorithm~\ref{alg:gcoupling} takes $O(nr)$ time, so the overall time of the algorithm is $O(n^2r\log(1/\eps))$.
\end{proof}

To prove that this algorithm is competitive, we will conclude by proving a lower bound on $r_{min}(S)$ (again, by reduction to the appropriate lower bound for $\domination$). 
\begin{lemma}
\label{lem:topk_lb}
Let $S = (n, k, \bP)$ be an instance of $\topk$. Then $r_{min}(S) \geq \frac{0.1}{\I(\bP_{k}, \bP_{k+1})}$. 
\end{lemma}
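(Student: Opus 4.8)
The plan is to reduce the lower bound for \topk\ to the lower bound for \domination\ already established in Lemma~\ref{lem:dom_lb}. Given the \topk\ instance $S = (n, k, \bP)$, I would construct a \domination\ instance $C$ on $n$ coordinates whose information quantity $\I$ is exactly $\I(\bP_k, \bP_{k+1})$, and then argue that any algorithm solving $S$ with $r$ samples can be turned into an algorithm solving $C$ with $r$ samples (with the same success probability, up to a small additive constant). The key observation is that the hardest part of $S$ is distinguishing whether the item ranked $k$ or the item ranked $k+1$ sits in the top-$k$ set, and the only comparisons that ever distinguish these two possibilities are the comparisons between these two items and the rest.

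Concretely, first I would set $\bp = \bP_k$ and $\bq = \bP_{k+1}$ (the $k$-th and $(k+1)$-th rows of $\bP$), so that $\bp \ge \bq$ coordinatewise by the SST property, and $\I(\bp,\bq) = \I(\bP_k,\bP_{k+1})$. Then, given samples $X_{i,j}, Y_{i,j}$ for the \domination\ instance $C = (n,\bp,\bq)$ with hidden bit $B$, I would simulate a \topk\ instance: fix a permutation $\pi$ (or average over random $\pi$, which is what \topk\ actually does), designate two special items $a = \pi(k)$ and $b = \pi(k+1)$, and feed the \domination\ samples in as the comparison outcomes between $a$ (resp.\ $b$) and all other items, using $X$-samples for one of them and $Y$-samples for the other depending on a guess. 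For all comparisons not involving $a$ or $b$, the outcomes are determined by the known matrix $\bP$ and $\pi$ and can be generated internally by the reduction without any extra samples. The point is that when $B = 0$, the joint distribution of all comparison outcomes matches the \topk\ instance where $a$ is ranked above $b$ (so $a \in T, b \notin T$), and when $B = 1$ it matches the instance where $b$ is ranked above $a$; since the top-$k$ set differs between these two cases exactly in whether it contains $a$ or $b$, recovering $T$ recovers $B$.

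The main steps in order: (i) define $\bp,\bq$ from the two critical rows and verify $\I(\bp,\bq) = \I(\bP_k,\bP_{k+1})$; (ii) describe the reduction that embeds $C$ into $S$, being careful that the comparisons among items other than $a,b$ are correctly distributed in \emph{both} the $B=0$ and $B=1$ worlds --- this works because rows of $\bP$ other than $k, k+1$ are unaffected by swapping which of items $a,b$ is in position $k$ vs $k+1$; (iii) conclude that a \topk\ algorithm succeeding with probability $\ge 3/4$ on all instances yields a \domination\ algorithm succeeding with probability $\ge 3/4$ on $C$ using the same number of samples $r$; (iv) apply Lemma~\ref{lem:dom_lb} to get $r \ge 0.05/\I(\bp,\bq) = 0.05/\I(\bP_k,\bP_{k+1})$, and since this holds for the best possible algorithm, $r_{min}(S) \ge 0.05/\I(\bP_k,\bP_{k+1})$, which is at least $0.1/\I(\bP_k,\bP_{k+1})$ after perhaps tightening the constant in the Fano argument or simply noting the stated constant is loose (I would double-check whether the paper wants $0.1$ or whether a sharper Fano bound is needed here).

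The main obstacle I anticipate is verifying the distributional equivalence in step (ii) rigorously: I need the comparison outcomes between the two special items and every other item to have \emph{exactly} the marginal distributions $\Ber(p_i)$ and $\Ber(q_i)$ in the right coordinates, and this requires that in the \topk\ instance, item $a$ (when ranked $k$-th) compared against the item ranked $i$-th has win probability $\bP_{k,i} = p_i$, while item $b$ (ranked $(k+1)$-th) has win probability $\bP_{k+1,i} = q_i$ --- and that the indexing survives the random permutation cleanly. There is a subtlety about the comparison between $a$ and $b$ themselves (one coordinate's worth), and about whether \domination\ gives $r$ samples per coordinate matching $r$ samples per pair in \topk; I expect these to be handled by a careful but routine bookkeeping, possibly absorbing one coordinate's information into the slack between $0.05$ and $0.1$, or by noting $\I$ only increases when we add a coordinate. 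The skew-symmetry of $\bP$ ensures the reverse-direction comparisons are also consistent, so no extra samples are needed there.
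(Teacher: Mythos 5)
Your proposal matches the paper's proof essentially exactly: the paper also sets $\bp = \bP_k$, $\bq = \bP_{k+1}$, embeds the \domination\ samples as the comparisons involving the items in positions $k$ and $k+1$ (generating all other comparisons internally from $\bP$ and a random permutation fixing $\{k,k+1\}$), and invokes Lemma~\ref{lem:dom_lb}. The bookkeeping subtlety you flag is real and is handled in the paper by spending $2r$ \domination\ samples to simulate $r$ \topk\ samples (since the comparisons $Z_{k,j,l}$ and $Z_{j,k,l}$ are independent draws), which costs a factor of $2$ in the final constant --- exactly the slack you anticipated.
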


\begin{proof}
We will proceed by contradiction. Suppose there exists an algorithm $A$ which satisfies $r_{min}(S, A) \leq \frac{0.01}{\I(P_{k}, P_{k+1})}$. We will show how to convert this into an algorithm $A'$ which solves the instance $C = (n, \bP_{k}, \bP_{k+1})$ of $\domination$ with probability at least $\frac{3}{4}$ when given at least $2r = 0.05/\I(\bP_{k}, \bP_{k+1})$ samples, thus contradicting Lemma \ref{lem:dom_lb}.

The algorithm $A'$ is described in Algorithm \ref{alg:lbreduction}; essentially, $A'$ embeds the inputs $X$ and $Y$ to the $\domination$ instance as rows/columns $k$ and $k+1$ respectively of the $\topk$ instance. It is easy to check that the $Z_{i,j,l}$ for $i,j \in[n]$,$l \in [r]$ generated in $A'$ are distributed according to the same distribution as the corresponding elements in the instance $S$ of $\topk$. Therefore $A$ will output the top $k$ items correctly with probability at least $3/4$. In addition, if $B =0$ the item labeled $k$ will be in the top $k$ items and if $B=1$ the item labeled $k$ will not be in the top $k$ items. Therefore, $A'$ succeeds to solve this instance of $\domination$ with probability at least $3/4$, leading to our desired contradiction.

\begin{algorithm}[ht]
    \caption{Algorithm $A'$ for the lower bound reduction}
    \begin{algorithmic}[1]\label{alg:lbreduction}
\STATE Get input $X_{i,l}$, $Y_{i,l}$ for $i\in[n]$ and $l \in [2r]$ from \domination$(n,\bP_k,\bP_{k+1},2r)$.
\STATE Generate a random permutation $\pi$ on $n$ elements s.t. $\pi(\{k, k+1\}) = \{k, k+1\}$.
\FOR{ $i \in [n]$, $j \in [n], l\in [r]$}
	\STATE If $i = k$, set $Z_{i,j,l} = X_{j,l}$. 
	\STATE If $i = k+1$, set $Z_{i,j,l} = Y_{j,l}$.
	\STATE If $i \not \in \{k,k+1\}, j = k$, set $Z_{i,j,l} = X_{i,l+r}$.
	\STATE If $i \not \in \{k,k+1\}, j = k+1$, set $Z_{i,j,l} = Y_{i,l+r}$.
	\STATE If $i \not \in \{k,k+1\}, j\not \in \{k , k+1\}$, sample $Z_{i,j,l}$ from $\Ber(\bP_{\pi^{-1}(i), \pi^{-1}(j)})$.
\ENDFOR
\STATE Run $A$ on samples $Z_{i,j,l}$, $i,j \in [n]$, $l \in [r]$. 
\STATE If $A$ said $k$ is amongst the top $k$ items, output $B=0$. Otherwise output $B=1$. 
\end{algorithmic}
\end{algorithm}

\end{proof}

We are now ready to prove our main upper bound result.

\begin{corollary}
\label{cor:ubmain}
There is an algorithm $A$ for \topk\ such that $A$ runs in time  $O(n^2r)$ and on every instance $S$ of \topk\ on $n$ items,
\[
r_{min}(S, A) \leq O(\sqrt{n}\log n)r_{min}(S).
\]
\end{corollary}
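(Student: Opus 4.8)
The plan is to simply combine the two results already established for $\topk$: the algorithmic guarantee of Theorem~\ref{thm:algmain}, which gives an explicit upper bound on $r_{min}(S, A, 1-\alpha)$ in terms of $\I(\bP_k, \bP_{k+1})$, and the instance-specific lower bound of Lemma~\ref{lem:topk_lb}, which lower bounds $r_{min}(S)$ by the \emph{same} quantity up to a constant. The algorithm $A$ witnessing the corollary is exactly the one from Theorem~\ref{thm:algmain}, instantiated with the fixed error parameter $\alpha = \tfrac14$.

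First I would record the upper bound. Plugging $\alpha = \tfrac14$ into Theorem~\ref{thm:algmain}, the running time becomes $O(n^2 r \log 4) = O(n^2 r)$ as required, and
\[
r_{min}(S, A) = r_{min}\!\left(S, A, \tfrac34\right) \le \frac{7776\sqrt{n}\log(8n)}{\I(\bP_k, \bP_{k+1})} = O\!\left(\frac{\sqrt{n}\log n}{\I(\bP_k, \bP_{k+1})}\right).
\]
Next I would invoke the lower bound: Lemma~\ref{lem:topk_lb} states $r_{min}(S) \ge 0.1 / \I(\bP_k, \bP_{k+1})$, equivalently $1/\I(\bP_k, \bP_{k+1}) \le 10\, r_{min}(S)$. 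Substituting this into the displayed bound yields $r_{min}(S, A) \le O(\sqrt{n}\log n) \cdot r_{min}(S)$, which is the claim.

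There is essentially no remaining obstacle here: all of the substantive work has already been done upstream — designing $\mathcal{A}_{coup}$ and bounding its sample complexity by $\tilde{O}(\sqrt{n}/\I(\bp,\bq))$ (Theorem~\ref{thm:main}), boosting its per-comparison success probability to $1 - \alpha/n^2$ so that a union bound over the comparisons goes through, cutting the number of required comparisons down to $O(n)$ via the tournament argument of Lemma~\ref{lem:topktournament}, and transferring the $\domination$ lower bound of Lemma~\ref{lem:dom_lb} to $\topk$ through the embedding in Algorithm~\ref{alg:lbreduction} (Lemma~\ref{lem:topk_lb}). The one point to be careful about is matching conventions: both $r_{min}(S, A)$ and $r_{min}(S)$ use the default success probability $\tfrac34$, so Theorem~\ref{thm:algmain} must be instantiated at the \emph{constant} value $\alpha = \tfrac14$ (and not at an $n$-dependent $\alpha$) in order for the running time to stay $O(n^2 r)$ and for the two bounds to compose with only constant and logarithmic factors lost.
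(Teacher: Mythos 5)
Your proposal is correct and matches the paper's own proof exactly: instantiate Theorem~\ref{thm:algmain} at $\alpha=\tfrac14$ to get $r_{min}(S,A)\le O(\sqrt{n}\log n/\I(\bP_k,\bP_{k+1}))$ with running time $O(n^2r)$, and combine with the lower bound $r_{min}(S)\ge 0.1/\I(\bP_k,\bP_{k+1})$ from Lemma~\ref{lem:topk_lb}. Nothing further is needed.
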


\begin{proof}
Let $S=$\topk$(n,k,\bP,\cdot)$ be an instance of \topk. By Lemma~\ref{lem:topk_lb}, $$r_{\min}(S)\ge \frac{0.1}{\I(\bP_k,\bP_{k+1})}.$$
If $A$ is the algorithm in Theorem~\ref{thm:algmain} with $\alpha = \frac{1}{4}$ then $A$ runs in time $O(n^2r)$ and
$$r_{\min}(S,A)\le O\left(\frac{\sqrt{n}\log n}{\I(\bP_k,\bP_{k+1})}\right).$$ 
Combining these two inequalities, we obtain our result.

\end{proof}

%-----------------------------------------------------------------------------------------------------------------------------

\section{Hardness of domination and top-$K$}\label{sec:lowerbound}
%!TEX root =  main.tex

In the previous section we demonstrated an algorithm that solves $\topk$ on any distribution using at most $\tilde{O}(\sqrt{n})$ times more samples than the optimal algorithm for that distribution (see Corollary \ref{cor:ubmain}). In this section, we show this is tight up to logarithmic factors; for any algorithm, there exists some distribution where that algorithm requires $\tilde{\Omega}(\sqrt{n})$ times more samples than the optimal algorithm for that distribution. Specifically, we show the following lower bound.

\begin{theorem}\label{thm:sstlb}
For any algorithm $A$, there exists an instance $S$ of $\topk$ of size $n$ such that $r_{min}(S,A) \geq \Omega\left(\frac{\sqrt{n}}{\log n}\right) r_{min}(S)$.
\end{theorem}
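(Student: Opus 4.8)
My plan follows the paper's roadmap: first prove a $\tilde\Omega(\sqrt n)$ competitive lower bound for $\domination$ (Theorem~\ref{thm:domlb}), then transport it to $\topk$ by an embedding of the kind used in Lemma~\ref{lem:topk_lb}. For the $\domination$ step the goal is a distribution $\mathcal{C}$ over instances $C=(n,\bp,\bq)$ with two properties: \textbf{(i)} every $C$ in the support of $\mathcal{C}$ satisfies $r_{min}(C)\le O(1/\sigma^{2})$ for a single fixed parameter $\sigma$ (the content of Theorem~\ref{thm:hardub}); and \textbf{(ii)} any algorithm correct with probability $\ge 3/4$ on a uniformly random $C\sim\mathcal{C}$ needs $\tilde\Omega(\sqrt n/\sigma^{2})$ samples (Theorem~\ref{thm:harddist}). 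Granting these, a pigeonhole argument gives the $\domination$ lower bound: if an algorithm $A'$ had competitive ratio $o(\sqrt n/\mathrm{polylog}\,n)$, then $r_{min}(C,A')=\tilde o(\sqrt n/\sigma^{2})$ for every $C$ in the support, so — all these instances having the same size $n$ and hence the same sample budget — $A'$ would solve a random $C\sim\mathcal C$ with $\tilde o(\sqrt n/\sigma^{2})$ samples, contradicting (ii).

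The distribution I would try places a ``planted sparse signal'' entirely inside the well-behaved regime, where by Lemma~\ref{lem:Ipq_approx_square} $\I$ and $\norm{\bp-\bq}_2^{2}$ agree up to constants (this is exactly the regime in which $\mathcal{A}_{cube}$ already loses a $\sqrt n$ factor, so it is where a matching lower bound must live). Concretely: fix $s=\tilde\Theta(\sqrt n)$ and a small $\sigma\in(0,1)$, draw a random $T\subseteq[n]$ with $|T|=s$, and put $p_i=q_i=\tfrac12$ for $i\notin T$ and $p_i=\tfrac12,\ q_i=\tfrac12-\sigma/\sqrt s$ for $i\in T$. Then $\norm{\bp-\bq}_2^{2}=\sigma^{2}$, so Lemmas~\ref{lem:Ipq_approx_square} and~\ref{lem:dom_lb} pin $r_{min}(C)=\Theta(1/\sigma^{2})$ from below, and the matching upper bound just runs $\mathcal{A}_{count}$ on the coordinates of $T$ (legitimate, since $r_{min}(C)$ is an infimum over algorithms that may depend on $C$), which uses $O(s/\norm{(\bp-\bq)|_T}_1^{2})=O(1/\sigma^{2})$ samples; this is (i). For (ii) I would use Le~Cam's two-point method: with $r$ samples per coordinate the data reduces to the vector of pairs of binomial counts $((a_i,b_i))_{i\in[n]}$, and the total-variation distance between its law under $B=0$ and under $B=1$ (each averaged over $T$) is bounded by a $\chi^{2}$-type quantity whose leading factor is $\E_{T,T'}\!\big[(1+\Theta(r\sigma^{2}/s))^{|T\cap T'|}\big]$; since $|T\cap T'|$ concentrates near $s^{2}/n$, this stays below the distinguishability threshold as long as $r\,\sigma^{2}\,s/n=\tilde o(1)$, i.e.\ $r=\tilde o(n/(\sigma^{2}s))=\tilde o(\sqrt n/\sigma^{2})$.

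For the transport step, given $C$ in the support of $\mathcal{C}$, build a $\topk$ instance $S_C=(n,k,\bP)$ whose matrix $\bP$ is SST, whose $k$-th and $(k{+}1)$-th rows encode $(\bp,\bq)$ (after the skew-symmetry bookkeeping of Definition~\ref{def:SST}, and arranged so that the reduction of Algorithm~\ref{alg:lbreduction} produces exactly the sample distribution of $S_C$), and whose other rows are pairwise well-separated and well-separated from rows $k$ and $k+1$. An algorithm that knows $\bP$ can then sort all labels except the pair realizing $\{\pi(k),\pi(k+1)\}$ using $O(\log n)$ samples per comparison and finish with an optimal $\domination$ test on rows $k,k+1$, so $r_{min}(S_C)=O(r_{min}(C)+\log n)=O(1/\sigma^{2})$ once $\sigma$ is taken small, while Lemma~\ref{lem:topk_lb} gives $r_{min}(S_C)\ge 0.1/\I(\bP_k,\bP_{k+1})=\Omega(1/\sigma^{2})$; thus $r_{min}(S_C)=\Theta(1/\sigma^{2})$. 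If some $A$ for $\topk$ satisfied $r_{min}(S,A)\le c\,(\sqrt n/\log n)\,r_{min}(S)$ on all size-$n$ instances with $c$ small, then Algorithm~\ref{alg:lbreduction} converts $A$ into an algorithm $A'$ for $\domination$ with $r_{min}(C,A')\le 2\,r_{min}(S_C,A)=\tilde o(\sqrt n/\sigma^{2})$ for every $C$ in the support; running $A'$ on a random $C\sim\mathcal{C}$ contradicts (ii). Hence some instance $S$ witnesses $r_{min}(S,A)\ge\Omega(\sqrt n/\log n)\,r_{min}(S)$.

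The main obstacle is property (ii) / Theorem~\ref{thm:harddist}: the $\chi^{2}$ estimate is delicate because $|T\cap T'|$ has a heavy upper tail, and a rare pair with $|T\cap T'|\gg s^{2}/n$ can blow up $(1+\Theta(r\sigma^{2}/s))^{|T\cap T'|}$, so one must either Poissonize the support (include each coordinate in $T$ independently with probability $s/n$), truncate the large-overlap event, or split the expectation by the value of $|T\cap T'|$; extracting the clean $\tilde\Omega(\sqrt n)$ bound — and locating where the final $\log n$ of the statement is spent — happens here. A secondary but essential check is that the embedded $S_C$ is an honest SST matrix and that the sample distribution produced by Algorithm~\ref{alg:lbreduction} genuinely matches that of $S_C$ (which may constrain the off-block structure of rows $k,k+1$ and force a modification of either the hard family or the reduction), and that the logarithmic overheads in the embedding and reduction do not erode the $\sqrt n$ gap below $\sqrt n/\log n$.
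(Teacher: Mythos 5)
Your plan matches the paper's proof in structure: a planted--sparse--signal distribution over $\domination$ instances with roughly $\sqrt n$ active coordinates, each instance easy ($\Theta(1/\sigma^2)$ samples for an algorithm that knows the planted set, exactly Theorem~\ref{thm:hardub}), the mixture hard ($\tilde\Omega(\sqrt n/\sigma^2)$, Theorem~\ref{thm:harddist}), a pigeonhole/conditioning step (Lemmas~\ref{lem:mix1} and~\ref{lem:mix2}) to pass from distributional to per-instance hardness, and an SST embedding of $(\bp,\bq)$ as the two rows straddling rank $k$. The two places you diverge are exactly the two places you flag as obstacles, and in both cases the paper's resolution is the one you anticipate. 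For the distributional lower bound you propose a fixed-size planted set $T$ and a Le~Cam/$\chi^2$ second-moment computation, worrying about the heavy upper tail of $|T\cap T'|$ and suggesting Poissonization; the paper simply \emph{defines} the hard distribution with independent $\Ber(\gamma)$ membership (i.e., it is Poissonized from the start, and moreover perturbs $\bp$ and $\bq$ on independent sets), which lets it avoid $\chi^2$ altogether and instead decompose $I(XY;B)\le\sum_i\bigl(I(X_i;B)+I(Y_i;B)\bigr)$ coordinate-by-coordinate (Lemma~\ref{lem:infdecomp}, which needs that independence) and bound each term by $O(\gamma^2)+O(n^{-2})$ via a direct posterior computation split on whether the binomial count is typical (Lemma~\ref{lem:inflb}); the $\log n$ in the final bound is spent there, in choosing $r=1/(100\eps^2\ln n)$ so that the atypical-count event has probability $n^{-2}$. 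For the embedding, your flat baseline $R_i=\tfrac12$ would not yield an SST matrix, and the paper's fix is the modification of the hard family you predict might be necessary: it takes $R_i=\tfrac14+\tfrac{i}{8n}$ strictly increasing and bounded below $\tfrac12$, with $\eps=1/(100n^2)$ so small that the additive cost of locating and ordering the non-planted rows ($O(n^2\log n)$ samples, not $O(\log n)$ as you estimate, because the staircase steps are $\Theta(1/n)$) is dominated by $r_{min}(S)=\Theta(n^{3.5})$. So your route for Theorem~\ref{thm:harddist} is a workable alternative but is left as a sketch at its hardest point, whereas the paper's information-theoretic route is what actually closes it.
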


As in the previous sections, instead of proving this lower bound directly, we will first prove a lower bound for the domination problem, which we will then embed in a $\topk$ instance.

\begin{theorem}\label{thm:domlb}
For any algorithm $A$, there exists an instance $C$ of $\domination$ of size $n$ such that $r_{min}(C, A) \geq \Omega\left(\frac{\sqrt{n}}{\log n}\right) r_{min}(C)$. 
\end{theorem}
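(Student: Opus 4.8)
The plan is a Yao-style averaging argument. I will construct a single distribution $\mathcal{C}$ over instances of $\domination$ on $n$ coordinates with two properties: (i) there is a common value $R$ such that every instance $C$ in the support of $\mathcal{C}$ satisfies $r_{min}(C)\le R$ (this is the content of Theorem~\ref{thm:hardub}); and (ii) any algorithm that outputs $B$ correctly with probability at least $3/4$ on the mixture $\mathcal{C}$ needs at least $g(n)$ samples, where $g(n)\ge \Omega(\sqrt{n})\cdot R$ (this is Theorem~\ref{thm:harddist}). These imply the theorem: if some algorithm $A$ had $r_{min}(C,A) < f(n)\cdot r_{min}(C)$ for every $C$ in the support, then since $r_{min}(C)\le R$, with $f(n)\cdot R$ samples $A$ would be correct with probability $\ge 3/4$ on every such $C$, hence on the mixture $\mathcal{C}$; taking $f(n)=c\sqrt{n}$ for a small enough constant $c$ makes $f(n)R< g(n)$, contradicting (ii). Hence some $C$ in the support witnesses $r_{min}(C,A)\ge \Omega(\sqrt{n})\,r_{min}(C)$.

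For the construction, fix $m=\lfloor\sqrt{n}\rfloor$ and a constant $\eps$ bounded away from $0$ and $1/2$. A random instance of $\mathcal{C}$ is obtained by drawing a uniformly random $T\subseteq[n]$ with $|T|=m$ and setting $p_i=\tfrac12+\eps,\ q_i=\tfrac12-\eps$ for $i\in T$ and $p_i=q_i=\tfrac12$ for $i\notin T$. The easy direction (i) is immediate from earlier results: by Lemma~\ref{lem:Ipq_approx_square}, $\I(\bp,\bq)=m\,\I(\tfrac12+\eps,\tfrac12-\eps)=\Theta(\sqrt{n}\,\eps^2)$, so Lemma~\ref{lem:dom_lb} gives $r_{min}(C)\ge\Omega(1/(\sqrt{n}\eps^2))$; conversely, the (instance-dependent) algorithm that hard-codes $T$ and runs $\mathcal{A}_{count}$ using only the coordinates in $T$ solves $C$ with $O(1/(\sqrt{n}\eps^2))$ samples by Lemma~\ref{lem:counting}, since the $\ell_1$ gap restricted to $T$ equals $2m\eps$. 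Thus $R=\Theta(1/(\sqrt{n}\eps^2))$ works for all $C$ in the support simultaneously.

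\textbf{The main obstacle is (ii): proving that, not knowing $T$, every algorithm needs $\Omega(1/\eps^2)=\Omega(\sqrt{n})\cdot R$ samples on $\mathcal{C}$.} The mutual-information/Fano bound of Lemma~\ref{lem:dom_lb} is useless here because it effectively reveals $T$, once $T$ is known the instance is cheap. Instead I bound the statistical distance of the two mixtures directly. Let $P_0,P_1$ be the distributions of the full sample array under $B=0$ and $B=1$ (each marginalized over $T$), and let $Q$ be the null distribution with all samples i.i.d.\ $\Ber(\tfrac12)$. Since $P_1$ is $P_0$ with the $X$- and $Y$-samples exchanged and $Q$ is invariant under that exchange, $d_{TV}(P_0,P_1)\le 2\,d_{TV}(P_0,Q)\le\sqrt{\chi^2(P_0\|Q)}$. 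A second-moment computation --- expanding $\chi^2(P_0\|Q)$ as an expectation over two independent planted sets $T,T'$, and noting that the per-coordinate likelihood ratios relative to $Q$ have expectation $1$ off $T\cap T'$ and second moment $(1+4\eps^2)^{2r}$ on each coordinate of $T\cap T'$ --- yields
\[
\chi^2(P_0\|Q)=\E_{T,T'}\!\left[(1+4\eps^2)^{2r\,|T\cap T'|}\right]-1,
\]
with $|T\cap T'|$ hypergeometric. Since $m^2/n=\Theta(1)$, stochastically dominating $|T\cap T'|$ by $\mathrm{Binomial}(m,m/n)$ and using $(1+4\eps^2)^{2r}\le e^{8r\eps^2}$ shows that for $r\le c/\eps^2$ with $c$ a sufficiently small absolute constant one has $\chi^2(P_0\|Q)<1/4$, hence $d_{TV}(P_0,P_1)<1/2$, and so no algorithm recovers $B$ with probability $\ge 3/4$. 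The heavy tail of $|T\cap T'|$ (it can be as large as $m$, where the summand is astronomically large) is harmless because $\Pr[|T\cap T'|\ge k]$ decays faster than $(e/k)^k$, which overwhelms the $e^{O(k)}$ growth of the summand. This gives $g(n)=\Omega(1/\eps^2)$.

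Feeding (i) and (ii) into the averaging argument of the first paragraph produces an instance of $\domination$ with competitive ratio $\Omega(\sqrt{n})$; the $\log n$ loss in the statement is slack this argument leaves, and is also the form in which the bound is used for the $\topk$ reduction of Theorem~\ref{thm:sstlb}. The only genuinely delicate point is the second-moment identity for $\chi^2(P_0\|Q)$ together with the tail control on $|T\cap T'|$; the remaining steps are bookkeeping.
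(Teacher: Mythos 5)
Your overall skeleton matches the paper's: a hard distribution over instances, a per-instance upper bound, a distributional lower bound, and an averaging step to extract a single witness (the paper packages the averaging as Lemmas~\ref{lem:mix1} and~\ref{lem:mix2}; you avoid needing Lemma~\ref{lem:mix2} entirely because your planted set has fixed size $m$, whereas the paper's $S_P$ is binomial and must be conditioned on being large). Where you genuinely diverge is in the distributional lower bound. The paper bounds $I(XY;B)$ coordinate-by-coordinate (Lemmas~\ref{lem:infdecomp} and~\ref{lem:inflb}), which requires a delicate pointwise estimate $|\Pr[B=0\mid Z_i=z]-\tfrac12|\le\gamma$ on the typical range of $z$ and a tail bound elsewhere, and this is exactly where the extra $\log n$ is lost (the paper must take $r=\Theta(1/(\eps^2\ln n))$ to keep the likelihood-ratio factor $e^{O(r\eps^2\ln n)}$ bounded). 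Your second-moment computation of $\chi^2(P_0\|Q)$ against the null, reduced to the MGF of the hypergeometric overlap $|T\cap T'|$ with mean $m^2/n=\Theta(1)$, is correct (the per-coordinate identity $\E_Q[L^2]=1+4\eps^2$ per sample, hence $(1+4\eps^2)^{2r}$ per overlapping coordinate, checks out, as does the symmetrization $d_{TV}(P_0,P_1)\le 2d_{TV}(P_0,Q)$), and it cleanly gives $\Omega(1/\eps^2)$ with no logarithmic loss, i.e.\ a clean $\Omega(\sqrt n)$ competitive ratio rather than $\Omega(\sqrt n/\log n)$. That is a real improvement in this step. One caveat: the paper's more elaborate construction (increasing base rates $R_i$, multiplicative perturbations, independent membership for $S_P$ and $S_Q$) is engineered so that the \emph{same} distribution embeds into valid SST matrices for Theorem~\ref{thm:sstlb}; your symmetric-about-$\tfrac12$ instance would need to be reshaped for that later reduction, though that is outside the present theorem.

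There is one parameter error you must fix: you take $\eps$ to be a constant bounded away from $0$. Then your per-instance bound is $R=\Theta(1/(\sqrt n\,\eps^2))=\Theta(1/\sqrt n)<1$, so in fact $r_{min}(C)=\Theta(1)$ (a single sample per coordinate already suffices on $\sqrt n$ constant-bias coordinates), while your mixture lower bound is only $g(n)=\Omega(1/\eps^2)=\Omega(1)$; the ratio $g(n)/r_{min}(C)$ degenerates to a constant and the theorem does not follow. The argument needs $\eps\lesssim n^{-1/4}$ so that $R=\Theta(1/(\sqrt n\,\eps^2))\ge 1$ and $g(n)=\Omega(1/\eps^2)=\Omega(\sqrt n)\cdot R$ are both meaningful sample counts. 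Nothing else in your proof uses the constancy of $\eps$ --- the $\chi^2$ bound and the counting upper bound are uniform in $\eps$ --- so this is a one-line repair, but as written the final ratio is not established.
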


\subsection{A hard distribution for domination}

To prove Theorem \ref{thm:domlb}, we will show that there exists a distribution over instances of the domination problem such that, while each instance in the support of this distribution can be solved by some algorithm with a small number of samples, any algorithm requires a large number of samples given an instance randomly sampled from this distribution.

Let $\mathcal{C}$ be a distribution over instances $C$ of the domination problem of size $n$. We extend $r_{min}$ to distributions by defining $r_{min}(\mathcal{C}, A, p)$ as the minimum number of samples algorithm $A$ needs to successfully solve $\domination$ with probability at least $p$ over instances randomly sampled from $\mathcal{C}$, and let $r_{min}(\mathcal{C}, A) = r_{min}(\mathcal{C}, A, 3/4)$. The following lemma relates the distributional sample complexity to the single instance sample complexity.

\begin{lemma}\label{lem:mix1}
For any $p > 1/2$, algorithm $A$ and any distribution $\mathcal{C}$ over instances of the domination problem, there exists a $C$ in the support of $\mathcal{C}$ such that $r_{min}(C, A, p) \geq r_{min}(\mathcal{C}, A, p)$.
\end{lemma}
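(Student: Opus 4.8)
The plan is a one-line averaging argument dressed up as a proof by contradiction. Write $r^\ast = r_{min}(\mathcal{C}, A, p)$, interpreted as a threshold on the (non-negative, integer) number of samples. If $r^\ast = \infty$ the claim is vacuous, so assume $r^\ast$ is finite. Suppose toward a contradiction that \emph{every} instance $C$ in the support of $\mathcal{C}$ satisfies $r_{min}(C, A, p) < r^\ast$. Since all sample counts are non-negative integers, this forces $r^\ast \geq 1$ and, more importantly, $r_{min}(C, A, p) \leq r^\ast - 1$ for every such $C$.

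The key (and only) fact I would use is that the probability $A$ succeeds on a random instance drawn from $\mathcal{C}$ given $r$ samples is exactly $\E_{C \sim \mathcal{C}}[\Pr(A \text{ succeeds on } C \text{ given } r \text{ samples})]$. Now fix any $r \geq r^\ast - 1$. For each $C$ in the support we have $r \geq r_{min}(C, A, p)$, so by definition of $r_{min}(C, A, p)$ the algorithm $A$ succeeds on $C$ with probability at least $p$ given $r$ samples; averaging this inequality over $C \sim \mathcal{C}$ shows $A$ succeeds on a random instance from $\mathcal{C}$ with probability at least $p$ given $r$ samples. As this holds for all $r \geq r^\ast - 1$, the minimality in the definition of $r_{min}(\mathcal{C}, A, p)$ yields $r_{min}(\mathcal{C}, A, p) \leq r^\ast - 1 < r^\ast$, contradicting the choice of $r^\ast$. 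Hence some $C$ in the support has $r_{min}(C, A, p) \geq r^\ast = r_{min}(\mathcal{C}, A, p)$, as desired.

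I do not expect a genuine obstacle here; the proof is essentially immediate. The only mild subtlety is bookkeeping at the edge cases — when $r^\ast$ or some $r_{min}(C,A,p)$ is infinite, or when the support of $\mathcal{C}$ is infinite so that a "worst" instance need not exist — and all of these are sidestepped cleanly by reasoning about the integer threshold $r^\ast$ and the uniform statement "$A$ succeeds on every $C$ in the support for all $r \geq r^\ast - 1$," rather than trying to pick out a maximizing instance directly. (Note the hypothesis $p > 1/2$ is not actually needed for this lemma; it is stated for consistency with its later uses.)
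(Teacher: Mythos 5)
Your proof is correct and is essentially the paper's argument: both rest on averaging the per-instance success (equivalently, error) probability over $\mathcal{C}$ and concluding that some instance in the support must be at least as hard as the distribution. Your contrapositive phrasing, working at $r = r^\ast - 1$ via integrality of the sample count, is if anything slightly more careful at the threshold than the paper's direct version, and you are right that the hypothesis $p > 1/2$ plays no role here.
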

\begin{proof}
Let $\eps(C, A, r)$ be the probability that algorithm $A$ errs given $r$ samples from $C$. By the definition of $r_{\min}(\mathcal{C}, A, p)$, we have that

$$\sum_{C \in \supp \mathcal{C}} \Pr_{\mathcal{C}}[C] \cdot \eps(C, A, r_{min}(\mathcal{C},A, p)) = 1-p$$

\noindent
It follows that there exists some $C^{*} \in \supp \mathcal{C}$ such that

$$\eps(C^{*}, A, r_{min}(\mathcal{C}, A, p)) \geq 1-p$$

\noindent
Since $\eps(C^{*}, A, r)$ is decreasing in $r$, this implies that $r_{min}(C^{*}, A, p) \geq r_{min}(\mathcal{C},A,p)$, as desired.
\end{proof}

We will find it useful to work with distributions that are only mostly supported on easy instances. The following lemma lets us do that.

\begin{lemma}\label{lem:mix2}
Let $\mathcal{C}$ be a distribution over instances of the domination problem, and let $E$ be an event with $\Pr[E] = 1-\delta$. Then for any algorithm $A$ and any $1-\delta > p > \frac{1}{2}$, $r_{min}(\mathcal{C}|E, A, p+\delta) \geq r_{min}(\mathcal{C}, A, p)$. 
\end{lemma}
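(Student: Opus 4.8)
The plan is to relate the error probability of $A$ under the conditional distribution $\mathcal{C}|E$ to its error probability under the full distribution $\mathcal{C}$, exploiting the fact that $E$ has probability $1-\delta$ and that error probabilities are nonnegative. Fix an algorithm $A$, and as in the proof of Lemma~\ref{lem:mix1} let $\eps(C, A, r)$ denote the probability that $A$ errs given $r$ samples from instance $C$. For a distribution $\mathcal{D}$ over instances write $\eps(\mathcal{D}, A, r) = \E_{C\sim \mathcal{D}}[\eps(C,A,r)]$ for the average error probability; note that $\mathcal{D}\mapsto \eps(\mathcal{D},A,r)$ is linear, and that $r_{min}(\mathcal{D}, A, p)$ is exactly the least $r$ with $\eps(\mathcal{D},A,r)\le 1-p$ (using that $\eps(\mathcal{D},A,r)$ is non-increasing in $r$).

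First I would decompose $\mathcal{C}$ as a mixture $\mathcal{C} = (1-\delta)\,(\mathcal{C}|E) + \delta\,(\mathcal{C}|\bar E)$. By linearity,
\[
\eps(\mathcal{C}, A, r) = (1-\delta)\,\eps(\mathcal{C}|E, A, r) + \delta\,\eps(\mathcal{C}|\bar E, A, r) \le \eps(\mathcal{C}|E, A, r) + \delta,
\]
since $(1-\delta)\le 1$ and $\eps(\mathcal{C}|\bar E, A, r)\le 1$. Now set $r^{*} = r_{min}(\mathcal{C}|E, A, p+\delta)$; since $p+\delta < 1$ this is well-defined, and by definition $\eps(\mathcal{C}|E, A, r^{*}) \le 1-(p+\delta)$. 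Plugging this into the displayed inequality gives $\eps(\mathcal{C}, A, r^{*}) \le 1-(p+\delta) + \delta = 1-p$. Because $\eps(\mathcal{C},A,\cdot)$ is non-increasing, this means $A$ succeeds on $\mathcal{C}$ with probability at least $p$ for every $r\ge r^{*}$, i.e. $r_{min}(\mathcal{C}, A, p) \le r^{*} = r_{min}(\mathcal{C}|E, A, p+\delta)$, which is the claim.

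The argument is essentially bookkeeping, so there is no serious obstacle; the only point that needs a little care is making sure the various $r_{min}$ quantities are well-defined and monotone in the relevant range of success probabilities — in particular that $p + \delta$ stays below $1$, which is guaranteed by the hypothesis $1-\delta > p$, and that one is allowed to pass from "$\eps \le 1-p$ at $r^{*}$" to "$\eps \le 1-p$ for all $r \ge r^{*}$" via monotonicity of the error probability in the number of samples (an algorithm can always ignore extra samples). I would state this monotonicity explicitly, perhaps reusing the observation already invoked in the proof of Lemma~\ref{lem:mix1}.
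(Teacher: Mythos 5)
Your proof is correct and uses essentially the same idea as the paper's: decompose $\mathcal{C}$ as the mixture $(1-\delta)(\mathcal{C}|E)+\delta(\mathcal{C}|\overline{E})$ and bound the contribution of $\overline{E}$ to the error by $\delta$. You merely run the computation in the contrapositive direction (starting from $r_{min}(\mathcal{C}|E,A,p+\delta)$ and bounding $r_{min}(\mathcal{C},A,p)$ from above, rather than starting from $r_{min}(\mathcal{C},A,p)$ and lower-bounding the conditional error there), and since the paper's definition of $r_{min}$ already quantifies over all $r\geq r_{min}$, even the monotonicity step you flag can be dispensed with.
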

\begin{proof}
By the definition of $r_{\min}(\mathcal{C}, A, p)$, we have that

$$\sum_{C \in \supp \mathcal{C}} \Pr_{\mathcal{C}}[C] \cdot \eps(C, A, r_{min}(\mathcal{C},A, p)) = 1-p$$

\noindent
Rewrite this as

$$\Pr[\overline{E}]\cdot\sum_{C \in \supp \mathcal{C}} \Pr_{\mathcal{C}|\overline{E}}[C] \cdot \eps(C, A, r_{min}(\mathcal{C},A, p))  + \Pr[E]\cdot\sum_{C \in \supp \mathcal{C}} \Pr_{\mathcal{C}|E}[C] \cdot \eps(C, A, r_{min}(\mathcal{C},A, p)) = 1-p$$

Since $\sum_{C \in \supp \mathcal{C}} \Pr_{\mathcal{C}|\overline{E}}[C] = 1$ and $\Pr[\overline{E}] = \delta$, it follows that

$$\sum_{C \in \supp \mathcal{C}} \Pr_{\mathcal{C}|E}[C] \cdot \eps(C, A, r_{min}(\mathcal{C},A, p)) \geq 1-p - \delta$$

\noindent
from which it follows that $r_{min}(\mathcal{C}|E, A, p+\delta) \geq r_{min}(\mathcal{C}, A, p)$.

\end{proof}

We can now define the hard distribution for the domination problem. Define $\gamma = \frac{1}{100\sqrt{n}}$. Let $S_{P}$ be a random subset of $[n]$ where each $i \in [n]$ is independently chosen to belong to $S_{P}$ with probability $\gamma$. Likewise, define $S_{Q}$ the same way (independently of $S_P$). Finally, fix $n$ constants $R_i$ all in the range $[\frac{1}{4}, \frac{3}{4}]$ (for now, it is okay to consider only the case where $R_i = \frac{1}{2}$ for all $i$; to extend this lower bound to the top-$k$ problem, we will need to choose different values of $R_i$). Then the hard distribution $\mathcal{C}_{hard}$ is the distribution over instances $C(S_P, S_Q) = (n, \bp, \bq)$ of $\domination$ where

$$ p_i = \begin{cases}R_i(1+\eps) & \mbox{ if } i \in S_{P} \\
R_i & \mbox{ if } i \not\in S_{P} \end{cases}$$

\noindent
and

$$ q_i = \begin{cases}R_i(1-\eps) & \mbox{ if } i \in S_{Q} \\
R_i & \mbox{ if } i \not\in S_{Q} \end{cases}$$

We claim that the majority of the instances in the support of $\mathcal{C}_{hard}$ have an algorithm that requires few samples. Intuitively, if $S_P$ and $S_Q$ are fixed, then the best algorithm for that specific instance can restrict attention only to the indices in $S_P$ and $S_Q$. In particular, if $S_P$ is large enough (some constant times its expected size), then simply throwing away all indices not in $S_P$ and counting which row has more heads is an efficient algorithm for recovering the dominant set.

\begin{theorem}\label{thm:hardub}
Fix any $S_P$ and $S_Q$ such that $|S_P| \geq \frac{1}{10}n\gamma$. Then $r_{min}(C(S_P, S_Q), p) = O\left(\frac{\log(1-p)^{-1}}{\eps^2\sqrt{n}}\right)$ for all $p<1$. 
\end{theorem}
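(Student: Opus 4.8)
The plan is to give an explicit algorithm tailored to the instance $C(S_P,S_Q)$ with $|S_P|\ge \frac{1}{10}n\gamma$, and show it succeeds with probability $p$ using $O\!\left(\frac{\log(1-p)^{-1}}{\eps^2\sqrt{n}}\right)$ samples. Since $\gamma = \frac{1}{100\sqrt{n}}$, the hypothesis $|S_P|\ge \frac{1}{10}n\gamma = \frac{\sqrt{n}}{1000}$ means $S_P$ has size $\Theta(\sqrt n)$. The key observation is that an algorithm \emph{that knows $C$} (and hence knows $S_P$, $S_Q$, and the constants $R_i$) can restrict attention to coordinates in $S_P$, where the gap between $p_i = R_i(1+\eps)$ and $q_i \le R_i$ is at least $R_i\eps \ge \eps/4$. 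Note that for every $i$ we have $p_i \ge q_i$, and for $i\in S_P$ we have $p_i - q_i \ge R_i\eps - 0 = R_i\eps$ if $i\notin S_Q$, and $p_i - q_i = R_i(1+\eps)-R_i(1-\eps) = 2R_i\eps$ if $i\in S_Q$; in all cases $p_i - q_i \ge R_i\eps \ge \eps/4$ on $S_P$.

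First I would have the algorithm compute, for each $i\in S_P$, the statistic $S_i = \sum_{j=1}^r (X_{i,j}-Y_{i,j})$ and then look at $Z = \sum_{i\in S_P} S_i$, outputting $B=0$ if $Z>0$ and $B=1$ if $Z<0$ (breaking ties by a coin flip). This is exactly the counting algorithm $\mathcal{A}_{count}$ run only on the coordinates in $S_P$; more precisely, it is $\mathcal{A}_{count}$ applied to the sub-instance $C' = (|S_P|, \bp|_{S_P}, \bq|_{S_P})$ of $\domination$. By Lemma~\ref{lem:counting} applied to $C'$, this succeeds with probability at least $1-\alpha$ whenever
$$r \ge \frac{2|S_P|\ln(\alpha^{-1})}{\norm{\bp|_{S_P}-\bq|_{S_P}}_1^2}.$$
Now $\norm{\bp|_{S_P}-\bq|_{S_P}}_1 = \sum_{i\in S_P}(p_i-q_i) \ge |S_P|\cdot \frac{\eps}{4}$, so $\norm{\bp|_{S_P}-\bq|_{S_P}}_1^2 \ge \frac{|S_P|^2\eps^2}{16}$, and therefore the bound above is satisfied as soon as
$$r \ge \frac{32\ln(\alpha^{-1})}{|S_P|\,\eps^2}.$$
Finally I use $|S_P| \ge \frac{1}{10}n\gamma = \frac{1}{10}\cdot\frac{n}{100\sqrt n} = \frac{\sqrt n}{1000}$, which gives $r = O\!\left(\frac{\ln(\alpha^{-1})}{\eps^2\sqrt n}\right)$; writing $\alpha = 1-p$ yields the claimed $r_{min}(C(S_P,S_Q), p) = O\!\left(\frac{\log(1-p)^{-1}}{\eps^2\sqrt n}\right)$.

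I do not anticipate a serious obstacle here: the only mild subtlety is making sure the reduction to $\mathcal{A}_{count}$ on the sub-instance is legitimate — i.e.\ that we may freely discard coordinates outside $S_P$ (true, since an algorithm for $C$ may depend on $C$ and hence on $S_P$, and discarding samples only loses information) and that the $\ell_1$ lower bound $\norm{\bp|_{S_P}-\bq|_{S_P}}_1 \ge |S_P|\eps/4$ holds coordinatewise using $R_i \ge 1/4$ and $p_i\ge q_i$. One should also double-check that Lemma~\ref{lem:counting}'s bound is stated for the regime we need (it is, since its first inequality holds for all instances of $\domination$, with no boundedness assumption on the $p_i,q_i$). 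All constants are absorbed into the $O(\cdot)$.
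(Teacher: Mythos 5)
Your proposal is correct and follows essentially the same approach as the paper: restrict attention to the coordinates in $S_P$, where $p_i-q_i\ge R_i\eps\ge\eps/4$, and decide $B$ by the sign of $\sum_{i\in S_P}\sum_j(X_{i,j}-Y_{i,j})$. The only (cosmetic) difference is that you obtain the concentration bound by invoking the first part of Lemma~\ref{lem:counting} on the sub-instance, whereas the paper applies Hoeffding's inequality directly; both yield the same bound $r=O\left(\frac{\log(1-p)^{-1}}{\eps^2\sqrt n}\right)$.
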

\begin{proof}
It suffices to demonstrate an algorithm $A$ such that $r_{min}(C(S_P, S_Q), A, p) = O\left(\frac{\log(1-p)^{-1}}{\eps^2\sqrt{n}}\right)$. 

Any algorithm $A$ receives two sets $X, Y$, each of $r$ samples from $n$ coins. Write $X = (X_1, X_2, \dots, X_n)$, where each $X_i = (X_{i,1}, X_{i,2}, \dots X_{i,r})$ is the collection of $r$ samples from coin $i$ (likewise, write $Y = (Y_1, Y_2, \dots, Y_n)$, and $Y_i = (Y_{i,1}, Y_{i,2}, \dots Y_{i,r})$). Consider the following algorithm: $A$ computes the value

$$T = \sum_{i \in S_{P}}\sum_{j=1}^{r} (X_{i,j} - Y_{i,j})$$

\noindent
and outputs that $B=0$ if $T \geq 0$ and outputs $B=1$ otherwise. 

For each $i, j$, let $A_{i,j} = X_{i,j} - Y_{i,j}$. If $B=0$, then $A_{i,j} \in [-1, 1]$, $\E[A_{i,j}] \geq \eps R_i \geq \frac{\eps}{4}$ and all the $A_{i,j}$ are independent. It follows from Hoeffding's inequality that in this case,

\begin{eqnarray*}
\Pr[T < 0] &=& \Pr[T - \E[T] < -\E[T]] \\
&\leq & \exp\left(- \frac{2E[T]^2}{4|S_P|r}\right) \\
&=& \exp\left(-\frac{|S_P|r\eps^2}{32}\right) \\
&\leq & \exp\left(-\frac{\gamma n\eps^2r}{320}\right) \\
&=& \exp\left(-\frac{\sqrt{n}\eps^2r}{32000}\right)
\end{eqnarray*}

Therefore, choosing $r = \frac{32000\ln (1-p)^{-1}}{\sqrt{n}{\eps^2}} = O\left(\frac{\log(1-p)^{-1}}{\sqrt{n}\eps^2}\right)$ guarantees $\Pr[T < 0] \leq 1-p$. Similarly, the probability that $T \geq 0$ if $B=1$ is also at most $1-p$ for this $r$. The conclusion follows. 
\end{proof}

By a simple Chernoff bound, we also know that the event that $S_P$ has size at least $\frac{1}{10}n\gamma$ occurs with high probability.

\begin{lemma}\label{lem:chern}
$\Pr\left[|S_P|\geq \frac{1}{10}n\gamma\right] \geq 1- e^{-\sqrt{n}/400}$.
\end{lemma}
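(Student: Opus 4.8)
The plan is to recognize $|S_P|$ as a binomial random variable and apply a standard Chernoff bound. By construction, $|S_P|$ is the number of indices $i \in [n]$ that land in $S_P$, and these $n$ events are independent each with probability $\gamma$, so $|S_P|$ is distributed as $\mathrm{Bin}(n,\gamma)$ with mean $\mu := n\gamma = \sqrt{n}/100$. The event $\{|S_P| < \tfrac{1}{10}n\gamma\}$ is exactly the lower-tail event $\{|S_P| < (1-\tfrac{9}{10})\mu\}$.

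First I would invoke the multiplicative Chernoff bound for the lower tail, $\Pr\bigl[|S_P| \le (1-\delta)\mu\bigr] \le \exp(-\delta^2\mu/2)$, valid for $0 < \delta < 1$, with $\delta = 9/10$. This gives
$$\Pr\!\left[|S_P| < \tfrac{1}{10}n\gamma\right] \;\le\; \exp\!\left(-\tfrac{(9/10)^2}{2}\,\mu\right) \;=\; \exp\!\left(-\tfrac{81}{200}\cdot\tfrac{\sqrt{n}}{100}\right) \;=\; \exp\!\left(-\tfrac{81\sqrt{n}}{20000}\right).$$
Then I would simply note that $\tfrac{81}{20000} \ge \tfrac{1}{400}$, so the right-hand side is at most $e^{-\sqrt{n}/400}$; taking complements yields the claimed bound $\Pr[|S_P| \ge \tfrac{1}{10}n\gamma] \ge 1 - e^{-\sqrt{n}/400}$.

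There is no real obstacle here — it is a routine Chernoff estimate. The only minor points to check are that the constants line up (it suffices that $81/20000 \ge 1/400$, which holds with room to spare) and that the strict inequality in the statement is harmless since $\Pr[|S_P| < \tfrac{1}{10}n\gamma] \le \Pr[|S_P| \le \tfrac{1}{10}n\gamma]$. If one prefers not to commit to a particular normalization of the Chernoff bound, the Chernoff–Hoeffding inequality in KL form, $\Pr[\mathrm{Bin}(n,\gamma) \le \alpha n] \le \exp(-n\,D(\alpha\|\gamma))$ for $\alpha < \gamma$, works just as well taking $\alpha = \gamma/10$.
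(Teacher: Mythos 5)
Your proof is correct and matches the paper's (unwritten) intent exactly: the paper states this lemma with only the remark ``by a simple Chernoff bound,'' and your instantiation of the multiplicative lower-tail bound with $\mu = n\gamma = \sqrt{n}/100$ and $\delta = 9/10$ gives $\exp(-81\sqrt{n}/20000) \le \exp(-\sqrt{n}/400)$ since $81/20000 \ge 50/20000 = 1/400$. The constants check out and the handling of the strict inequality is fine.
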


In the following subsection, we will prove that for all $A$, $r_{min}(\mathcal{C}_{hard}, A)$ is large. More precisely, we will prove the following theorem.

\begin{theorem}\label{thm:harddist}
For all algorithms $A$, $r_{min}(\mathcal{C}_{hard}, A, \frac{2}{3}) = \Omega\left(\frac{1}{\eps^2\log n}\right)$. 
\end{theorem}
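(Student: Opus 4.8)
The plan is to treat Theorem~\ref{thm:harddist} as a binary hypothesis-testing problem — distinguish the law $\mathcal D_0$ of the samples $(X,Y)$ given $B=0$ from the law $\mathcal D_1$ given $B=1$ — and to control the total variation distance between $\mathcal D_0$ and $\mathcal D_1$ directly, through $\chi^2$, rather than through mutual information. A naive information bound $I(B;X,Y)\le\sum_i I(B;X_i,Y_i)$ only gives $r_{min}=\Omega(1/(\sqrt n\,\eps^2))$: the expected contribution of each coordinate is $\Theta(\gamma r\eps^2)$, for a total of $\Theta(n\gamma r\eps^2)=\Theta(\sqrt n\,r\eps^2)$. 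The whole point of the $\chi^2$ computation below is that the mixture structure produces a sharper $\gamma^2$ per coordinate, and $n\gamma^2=\tfrac1{10000}=\Theta(1)$, which is exactly the extra $\sqrt n$ factor we need.

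The first step is structural. Since $B$ is drawn independently of $S_P,S_Q$ and of the comparison noise, conditioned on $B=0$ the block $X_i=(X_{i,1},\dots,X_{i,r})$ has law $\mathcal P_i:=\gamma\,\Ber(R_i(1+\eps))^{\otimes r}+(1-\gamma)\,\Ber(R_i)^{\otimes r}$, independently over $i$, so $X\sim\mathcal P:=\bigotimes_{i}\mathcal P_i$; likewise $Y\sim\mathcal Q:=\bigotimes_i\mathcal Q_i$ with $\mathcal Q_i:=\gamma\,\Ber(R_i(1-\eps))^{\otimes r}+(1-\gamma)\,\Ber(R_i)^{\otimes r}$, and $X\perp Y$ because $S_P\perp S_Q$; hence $\mathcal D_0=\mathcal P\otimes\mathcal Q$. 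Conditioned on $B=1$ the two groups of samples swap, so $\mathcal D_1=\mathcal Q\otimes\mathcal P$. Any (possibly randomized) algorithm then succeeds with probability at most $\tfrac12+\tfrac12\norm{\mathcal D_0-\mathcal D_1}_{\mathrm{TV}}$, and using the triangle inequality together with $\norm{\mu\otimes\nu_1-\mu\otimes\nu_2}_{\mathrm{TV}}=\norm{\nu_1-\nu_2}_{\mathrm{TV}}$ we get $\norm{\mathcal D_0-\mathcal D_1}_{\mathrm{TV}}\le 2\norm{\mathcal P-\mathcal Q}_{\mathrm{TV}}$, so it suffices to make $\norm{\mathcal P-\mathcal Q}_{\mathrm{TV}}$ small.

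Next I would bound $\norm{\mathcal P-\mathcal Q}_{\mathrm{TV}}\le\tfrac12\sqrt{\chi^2(\mathcal P\|\mathcal Q)}$ and tensorize: $\chi^2(\mathcal P\|\mathcal Q)=\prod_{i=1}^n\!\bigl(1+\chi^2(\mathcal P_i\|\mathcal Q_i)\bigr)-1\le\exp\!\bigl(\sum_i\chi^2(\mathcal P_i\|\mathcal Q_i)\bigr)-1$. For the per-coordinate term, write $a=R_i(1+\eps)$, $b=R_i(1-\eps)$, $c=R_i$; the two mixtures share their $(1-\gamma)$-weighted component, so $\mathcal P_i(x)-\mathcal Q_i(x)=\gamma\bigl(a^{|x|}(1-a)^{r-|x|}-b^{|x|}(1-b)^{r-|x|}\bigr)$ while $\mathcal Q_i(x)\ge(1-\gamma)\,c^{|x|}(1-c)^{r-|x|}$. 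Substituting into $\chi^2(\mathcal P_i\|\mathcal Q_i)=\sum_x(\mathcal P_i(x)-\mathcal Q_i(x))^2/\mathcal Q_i(x)$, the summand depends only on $|x|$, so after expanding the square each of the three resulting sums has the form $\sum_k\binom rk\alpha^k\beta^{r-k}=(\alpha+\beta)^r$; a short algebraic computation gives $\tfrac{a^2}{c}+\tfrac{(1-a)^2}{1-c}=\tfrac{b^2}{c}+\tfrac{(1-b)^2}{1-c}=1+\mu_i$ and $\tfrac{ab}{c}+\tfrac{(1-a)(1-b)}{1-c}=1-\mu_i$, where $\mu_i=\tfrac{(R_i\eps)^2}{R_i(1-R_i)}=\tfrac{R_i\eps^2}{1-R_i}\le3\eps^2$ (using $R_i\le\tfrac34$). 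Hence $\chi^2(\mathcal P_i\|\mathcal Q_i)\le\tfrac{2\gamma^2}{1-\gamma}\bigl((1+\mu_i)^r-(1-\mu_i)^r\bigr)$, which is $O(\gamma^2\eps^2 r)$ as soon as $\eps^2 r=O(1)$ — crucially with the factor $\gamma^2$, not $\gamma$.

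Putting the pieces together, $\sum_{i=1}^n\chi^2(\mathcal P_i\|\mathcal Q_i)=O(n\gamma^2\eps^2 r)=O(\eps^2 r)$ since $n\gamma^2=\tfrac1{10000}$; choosing $r\le c/(\eps^2\log n)$ for a small absolute constant $c$ makes this $o(1)$, whence $\chi^2(\mathcal P\|\mathcal Q)=o(1)$, $\norm{\mathcal D_0-\mathcal D_1}_{\mathrm{TV}}=o(1)$, and no algorithm achieves success probability $\tfrac23$; therefore $r_{min}(\mathcal C_{hard},A,\tfrac23)=\Omega\!\bigl(1/(\eps^2\log n)\bigr)$. (With slightly more care the same estimate already holds for $r\le c/\eps^2$, giving the stronger $\Omega(1/\eps^2)$, but the weaker bound stated is all that is used downstream in Theorem~\ref{thm:domlb}.) I expect the per-coordinate $\chi^2$ estimate to be the main obstacle: one must handle the Bernoulli mixtures so that the cancellation of the common $(1-\gamma)$-component yields the decisive $\gamma^2$ factor, verify the generating-function identities for $1\pm\mu_i$, and keep the probabilities $R_i(1\pm\eps)$ in $(0,1)$ with $R_i$ bounded away from $1$ so that $\chi^2(\mathcal P_i\|\mathcal Q_i)<\infty$ and the binomial sums are valid.
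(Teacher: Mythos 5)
Your proof is correct, and it takes a genuinely different route from the paper. The paper argues via Fano's inequality: it decomposes $I(XY;B)\le\sum_i\bigl(I(X_i;B)+I(Y_i;B)\bigr)$ using the chain rule and independence of the coordinates given $B$ (Lemma~\ref{lem:infdecomp}), and then shows each term is at most $\tfrac{1}{100n}$ (Lemma~\ref{lem:inflb}) by a somewhat delicate direct estimate of $\bigl|\Pr[B=0\,|\,Z_i=z]-\tfrac12\bigr|\le\gamma$ on the typical range of $z$, together with a separate Hoeffding tail argument for the atypical range $|z-rR_i|>11r\eps\ln n$ --- it is this case split that costs the paper the $\log n$ factor in $r$. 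You instead run the Le Cam two-point method: reduce to $\norm{\mathcal P-\mathcal Q}_{\mathrm{TV}}$, tensorize $\chi^2$, and compute the per-coordinate $\chi^2$ exactly via the binomial generating-function identities for $1\pm\mu_i$ (I checked these: with $a=c(1+\eps)$, $b=c(1-\eps)$ one indeed gets $\tfrac{a^2}{c}+\tfrac{(1-a)^2}{1-c}=1+\tfrac{c\eps^2}{1-c}$ and $\tfrac{ab}{c}+\tfrac{(1-a)(1-b)}{1-c}=1-\tfrac{c\eps^2}{1-c}$). This avoids the typical/atypical case analysis entirely and, as you note, yields the cleaner and slightly stronger bound $\Omega(1/\eps^2)$. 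Both arguments ultimately exploit the same phenomenon --- the shared $(1-\gamma)$-weighted mixture component cancels in $\mathcal P_i-\mathcal Q_i$, producing a $\gamma^2$ (not $\gamma$) per coordinate, and $n\gamma^2=\Theta(1)$. One small correction to your motivating remark: the bound $I(B;XY)\le\sum_i I(B;X_iY_i)$ is not too lossy here and is exactly what the paper uses; the quantity $I(B;X_i)$ is itself $O(\gamma^2)$ when $r\eps^2=O(1)$, not $\Theta(\gamma r\eps^2)$, so the ``naive'' information decomposition does recover the full lower bound --- the real work in either approach is the per-coordinate estimate, which your $\chi^2$ computation handles more cleanly.
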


Given that this theorem is true, we can complete the proof of Theorem~\ref{thm:domlb}.

\begin{proof}[Proof of Theorem~\ref{thm:domlb}]
By Theorem \ref{thm:harddist}, for any algorithm $A$, $r_{min}(\mathcal{C}_{hard}, A, \frac{2}{3}) = \Omega\left(\frac{1}{\eps^2\log n}\right)$. Let $E$ be the event that $|S_P| \geq \frac{1}{10} n\gamma$. By Lemma \ref{lem:chern}, if $n \geq (400\ln \frac{12}{11})^2$, $\Pr[E] \geq \frac{1}{12}$. It then follows from Lemma \ref{lem:mix2} that 

\begin{eqnarray*}
r_{min}(\mathcal{C}_{hard}|E, A) &=& r_{min}(\mathcal{C}_{hard}|E, A, 3/4)\\
 &\geq& r_{min}(\mathcal{C}_{hard}, A, 2/3)\\
 &\geq& \Omega\left(\frac{1}{\eps^2\log n}\right).
 \end{eqnarray*}

It then follows by Lemma \ref{lem:mix1} that there is a specific instance $C = C(S_P, S_Q)$ with $|S_{P}|$ at least $\frac{1}{10}\gamma n$ such that $r_{min}(C, A) \geq \Omega\left(\frac{1}{\eps^2\log n}\right)$. On the other hand, by Theorem \ref{thm:hardub}, for this $C$, $r_{min}(C) \leq O\left(\frac{1}{\eps^2\sqrt{n}}\right)$. It follows that for any algorithm $A$, there exists an instance $C$ such that $r_{min}(C, A) \geq \Omega\left(\frac{\sqrt{n}}{\log n}\right) r_{min}(C)$, as desired.

\end{proof}

\subsection{Proof of hardness}

In this subsection, we prove Theorem \ref{thm:harddist}; namely, we will show that any algorithm needs at least $\Omega\left(\frac{1}{\eps^2\log n}\right)$ samples to succeed on $\mathcal{C}_{hard}$ with constant probability. Our main approach will be to bound the mutual information between the samples provided to the algorithm and the correct output (recall that $B$ is the hidden bit that determines whether the samples in $X$ are drawn from $\bp$ or from $\bq$).

\begin{lemma}\label{lem:fano}
If $I(XY;B) < 0.05$, then there is no algorithm that can succeed at identifying $B$ with probability at least $\frac{2}{3}$.
\end{lemma}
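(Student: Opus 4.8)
The plan is to argue by Fano's inequality, exactly in the spirit of the proof of Lemma~\ref{lem:dom_lb}, but tracking the constants carefully enough to rule out success probability $\tfrac{2}{3}$ (rather than just $\tfrac{3}{4}$). Since $B$ is sampled from $\Ber(\tfrac12)$ we have $H(B) = 1$. Any algorithm $A$ produces an output $\hat{B}$ which is a (possibly randomized) function of the samples $X,Y$; introducing fresh random coins independent of everything else, we still have that $B \to (X,Y) \to \hat{B}$ is a Markov chain. Hence by the data processing inequality $I(B;\hat{B}) \le I(XY;B) < 0.05$, and therefore
\[
H(B \mid \hat{B}) = H(B) - I(B;\hat{B}) > 0.95.
\]

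Next I would apply Fano's inequality. Writing $p_e = \Pr[\hat{B} \neq B]$ for the error probability of $A$, Fano's inequality gives $H(B\mid \hat{B}) \le H(p_e) + p_e \log(|\{0,1\}| - 1) = H(p_e)$, since the second term vanishes for a binary hidden variable. Combining with the previous display, $H(p_e) > 0.95$. Because $H$ is strictly increasing on $[0,\tfrac12]$ and $H(\tfrac13) = \log 3 - \tfrac23 < 0.92 < 0.95$, this forces $p_e > \tfrac13$. Equivalently, the success probability $1 - p_e$ is strictly less than $\tfrac23$, which is exactly the claim.

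There is essentially no real obstacle here; the only point that needs a word of care is that the algorithm may be randomized, so one should note explicitly that $\hat{B}$ depends on $(X,Y)$ and on internal coins that are independent of $B$, which is what lets the Markov-chain/data-processing step go through. The rest is the routine entropy bookkeeping above, and the numerical check $H(\tfrac13) < 0.95$.
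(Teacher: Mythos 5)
Your proof is correct and follows essentially the same route as the paper: Fano's inequality applied to $H(B\mid XY)=1-I(XY;B)>0.95$ together with the numerical check $H(\tfrac13)<0.95$. The only difference is cosmetic — you route the argument explicitly through $\hat B$ and data processing, while the paper states Fano directly as $H(p_e)\ge H(B\mid XY)$ — and your handling of randomized algorithms is a nice (if standard) extra word of care.
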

\begin{proof}
Fix an algorithm $A$, and let $p_e$ be the probability that it errs at computing $B$. By Fano's inequality, we have that

\begin{eqnarray*}
H(p_e) &\geq& H(B|XY)\\
&=& H(B) - I(XY;B) \\
&=& 1 - I(XY;B)\\
&>& 0.95
\end{eqnarray*}

Since $H(\frac{1}{3}) \leq 0.95$, it follows that $A$ must err with probability at least $1/3$.
\end{proof}

Via the chain rule, we can decompose $I(XY;B)$ into the sum of many smaller mutual informations.

\begin{lemma}\label{lem:infdecomp}
$I(XY;B) \leq \sum_{i=1}^{n} \left(I(X_i;B) + I(Y_i;B)\right)$
\end{lemma}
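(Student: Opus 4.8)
The plan is to follow the same information-theoretic peeling argument used in the proof of Lemma~\ref{lem:dom_lb}, but now grouping the samples by coordinate rather than splitting them into individual bits. Write the $2n$ blocks as $W_1 = X_1, \dots, W_n = X_n$ and $W_{n+1} = Y_1, \dots, W_{2n} = Y_n$, so that $XY = (W_1, \dots, W_{2n})$. Applying the chain rule for mutual information (Fact~\ref{fact:cr}) repeatedly gives
\[
I(XY; B) = \sum_{t=1}^{2n} I\bigl(W_t; B \mid W_1, \dots, W_{t-1}\bigr).
\]

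The key observation is that, conditioned on $B$, the blocks $W_1, \dots, W_{2n}$ are mutually independent. Indeed, under $\mathcal{C}_{hard}$ the $i$-th coordinate of $\bp$ is a deterministic function of $B$ and the indicator $\mathbf{1}[i \in S_P]$, the $i$-th coordinate of $\bq$ is a deterministic function of $B$ and $\mathbf{1}[i \in S_Q]$, and once these parameters are fixed the block $X_i$ (resp. $Y_i$) consists of $r$ fresh independent Bernoulli samples. Since the indicators $\{\mathbf{1}[i \in S_P]\}_{i}$ and $\{\mathbf{1}[i \in S_Q]\}_{i}$ are all independent of each other and of $B$, and all the Bernoulli samples are independent, the conditional joint law of $(W_1, \dots, W_{2n})$ given $B$ factors as a product over the $2n$ blocks. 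In particular $I\bigl(W_t; (W_1, \dots, W_{t-1}) \mid B\bigr) = 0$ for every $t$, so by Fact~\ref{fact:it1} (conditioning on a variable that is conditionally independent of the pair given $B$ cannot increase mutual information with $B$) we get $I\bigl(W_t; B \mid W_1, \dots, W_{t-1}\bigr) \le I(W_t; B)$. Summing over $t$ and re-expanding the blocks yields
\[
I(XY; B) \le \sum_{t=1}^{2n} I(W_t; B) = \sum_{i=1}^{n} I(X_i; B) + \sum_{i=1}^{n} I(Y_i; B),
\]
as claimed.

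The only step that requires any care is the conditional-independence claim, precisely because the instance $C(S_P, S_Q)$ is itself random under $\mathcal{C}_{hard}$ and the set $S_P$ influences every block $X_i$ while $S_Q$ influences every block $Y_i$; what saves us is that $\mathcal{C}_{hard}$ is built coordinatewise, so block $i$ of $X$ depends on the instance only through $\mathbf{1}[i \in S_P]$ (and block $i$ of $Y$ only through $\mathbf{1}[i \in S_Q]$), and these membership events are independent across coordinates. Everything after that is the identical mechanical application of the chain rule and Fact~\ref{fact:it1} already carried out for individual samples in Lemma~\ref{lem:dom_lb}.
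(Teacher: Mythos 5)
Your proof is correct and follows essentially the same route as the paper: the chain rule for mutual information, the observation that under $\mathcal{C}_{hard}$ the blocks are mutually independent given $B$ (because the coordinatewise memberships in $S_P$ and $S_Q$ are independent), and Fact~\ref{fact:it1} to drop the conditioning. The only cosmetic difference is that you run a single chain-rule pass over all $2n$ blocks, whereas the paper first peels off the pairs $(X_i, Y_i)$ and then splits each pair into $X_i$ and $Y_i$; both rest on the identical conditional-independence structure.
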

\begin{proof}
Write $X^{<i}$ to represent the concatenation $X_1X_2\dots X_{i-1}$. By the chain rule, we have that

$$I(XY;B) = \sum_{i=1}^{n} I(X_iY_i;B|X^{<i}Y^{<i})$$

We claim that $I(X_iY_i;X^{<i}Y^{<i}|B) = 0$. To see this, note that given $B$, each coin in $X_i$ is sampled from some $\Ber(p)$ distribution, where $p$ only depends on whether $i \in S_P$ or $i\in S_Q$. Since each $i$ is chosen to belong to $S_P$ and $S_Q$ independently with probability $\gamma$, this implies $X_i$ (and similarly $Y_i$) are independent from $X^{<i}$ and $Y^{<i}$ given $B$. By Fact \ref{fact:it1}, this implies that $I(X_iY_i;B|X^{<i}Y^{<i}) \leq I(X_iY_i;B)$, and therefore that

$$I(XY;B) \leq \sum_{i=1}^{n} I(X_iY_i;B).$$

Likewise, we can write $I(X_iY_i;B) = I(X_i;B) + I(Y_i;B|X_i)$. Since $I(X_i;Y_i|B) = 0$ (since $S_P$ and $S_Q$ are chosen independently), again by Fact \ref{fact:it1} it follows that $I(Y_i;B|X_i) \leq I(Y_i;B)$ and therefore that

$$I(XY;B) \leq \sum_{i=1}^{n} \left(I(X_i;B) + I(Y_i;B)\right).$$
\end{proof}

\begin{lemma}
\label{lem:inflb}
If $n \geq 400$ and $r = \frac{1}{100\eps^2\ln n}$, then for all $i$, $I(B;X_i)=I(B;Y_i) \leq \frac{1}{100n}$. 
\end{lemma}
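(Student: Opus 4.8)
The plan is to bound the mutual information $I(B;X_i)$ directly from the mixture structure of the two conditional laws of $X_i$; the identical bound for $Y_i$ then follows from the symmetry exchanging the roles of $(X,\bp)$ and $(Y,\bq)$, under which $(B,Y_i)$ has the same joint law as $(1-B,X_i)$. First I would write the conditional distributions explicitly: setting $A=\Ber(R_i)^{\otimes r}$, $B_0=\Ber(R_i(1+\eps))^{\otimes r}$ and $B_1=\Ber(R_i(1-\eps))^{\otimes r}$, the law $P_0$ of $X_i$ given $B=0$ is $(1-\gamma)A+\gamma B_0$ and the law $P_1$ of $X_i$ given $B=1$ is $(1-\gamma)A+\gamma B_1$, where $\gamma=\frac1{100\sqrt n}$. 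The two laws share the dominant component $A$ and differ only on a $\gamma$-fraction of their mass.

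The crucial point is to extract a factor $\gamma^2$, not merely $\gamma$, from this near-coincidence. A linear (convexity or data-processing) estimate such as $D(P_0\|P_1)\le\gamma D(B_0\|B_1)$ would give only $I(B;X_i)=O(\gamma\cdot r\eps^2)=O\!\big(1/(\sqrt n\log n)\big)$, which is polynomially weaker than the target $1/(100n)$. Instead I would use that, for uniform binary $B$, $I(B;X_i)$ equals $\tfrac12 D(P_0\|M)+\tfrac12 D(P_1\|M)$ with $M=\tfrac12(P_0+P_1)$, bound each KL term by $\chi^2(\cdot\|M)/\ln 2$, and then exploit that $P_0-M=\tfrac{\gamma}{2}(B_0-B_1)$ while $M\ge(1-\gamma)A$ pointwise. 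This gives $\chi^2(P_0\|M)\le\frac{\gamma^2}{4(1-\gamma)}\,J$ with $J:=\sum_x (B_0(x)-B_1(x))^2/A(x)$, hence
\[
I(B;X_i)\ \le\ \frac{\gamma^2 J}{4(1-\gamma)\ln 2},
\]
and it is exactly this quadratic dependence on the mixture weight that makes the lemma go through. I expect this to be the only genuinely delicate step.

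It remains to estimate $J$, which is routine since all three distributions are products over the $r$ coordinates: $J=f(a)^r-2g(a,b)^r+f(b)^r$, where for $a=R_i(1+\eps)$, $b=R_i(1-\eps)$ we set $f(x)=x^2/R_i+(1-x)^2/(1-R_i)$ and $g(x,y)=xy/R_i+(1-x)(1-y)/(1-R_i)$. A short computation gives $f(a)=f(b)=1+\delta$ and $g(a,b)=1-\delta$ with $\delta=\frac{R_i\eps^2}{1-R_i}\le 3\eps^2$ (using $R_i\le\tfrac34$ and assuming, as we may, that $\eps$ is below an absolute constant so that $a\in(0,1)$), so that $J=2\big[(1+\delta)^r-(1-\delta)^r\big]$. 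Since $r=\frac1{100\eps^2\ln n}$ and $n\ge 400$ we have $r\delta\le\frac{3}{100\ln n}<1$, so the elementary bounds $(1-\delta)^r\ge 1-r\delta$ and $(1+\delta)^r\le e^{r\delta}\le 1+2r\delta$ yield $J\le 6r\delta\le 18\,r\eps^2$. Finally I would substitute $\gamma^2=\frac1{10^4 n}$, $r\eps^2=\frac1{100\ln n}$ and $1-\gamma\ge\tfrac12$ into the displayed inequality to get $I(B;X_i)=O\!\big(1/(n\log n)\big)\le\frac1{100n}$, with a large constant to spare; the bound for $Y_i$ is identical by the symmetry noted at the outset.
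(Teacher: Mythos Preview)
Your argument is correct and follows a genuinely different route from the paper. The paper first reduces to the sufficient statistic $Z_i=\sum_j X_{i,j}$, writes $I(B;Z_i)=\sum_z \Pr[Z_i=z]\,D(\Pr[B=0\mid Z_i=z]\,\|\,\tfrac12)$, and then splits into two cases: for ``typical'' values $|z-rR_i|\le 11r\eps\ln n$ it bounds the posterior deviation $|\Pr[B=0\mid Z_i=z]-\tfrac12|\le \gamma$ by a direct ratio computation (this is where the $\gamma^2$ is extracted, via $D(\tfrac12+\gamma\,\|\,\tfrac12)=O(\gamma^2)$), and for ``atypical'' $z$ it bounds $\Pr[|Z_i-rR_i|>11r\eps\ln n]\le 2n^{-2}$ via Hoeffding. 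Summing gives $I(B;Z_i)\le 2n^{-2}+O(\gamma^2)\le 1/(100n)$.

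Your approach sidesteps the case analysis entirely: you work with the full product law, pass to the $\chi^2$-divergence to get the $\gamma^2$ factor structurally from $P_0-M=\tfrac{\gamma}{2}(B_0-B_1)$, and then evaluate $J=\sum_x(B_0(x)-B_1(x))^2/A(x)$ in closed form using the tensor structure. This is cleaner, yields the slightly sharper bound $I(B;X_i)=O(1/(n\log n))$, and makes transparent that the quadratic $\gamma^2$ is the whole story. The paper's approach, by contrast, is more elementary in its tools (no $\chi^2$ inequality needed) and perhaps makes the intuition ``the posterior barely moves unless $Z_i$ is extreme'' more visible, at the cost of the two-case bookkeeping.
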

\begin{proof}
By symmetry, $I(B;X_i)=I(B;Y_i)$. We will show that $I(B;X_i) \leq \frac{1}{100n}$. 

Let $Z_i = \sum_{j} X_{i,j}$. Note that $Z_i$ is a sufficient statistic for $B$, and therefore $I(B;X_i) = I(B;Z_i)$. By Fact \ref{fact:div}, 

\begin{eqnarray*}
I(B;Z_i) &=& \E_{Z_{i}}[D(B|Z_{i} \| B)]\\
&=& \sum_{z=0}^{r} \Pr[Z_i=z] \cdot D(\Pr[B=0|Z_i=z] \| \tfrac{1}{2}).
\end{eqnarray*}

We next divide the range of $z$ into two cases.

\begin{enumerate}
\item 
\textbf{Case 1}: $|z-rR_i| \leq 11r\eps \ln n$.

In this case, we will bound the size of $D(\Pr[B=0|Z_i=z] \| \tfrac{1}{2})$. Note that

\begin{eqnarray}
\left|\Pr[B = 0|Z_i = z]-\frac{1}{2}\right| &=& \left|\frac{\Pr[Z_i = z|B=0] \cdot \Pr[B=0]}{\Pr[Z_i = z]} - \frac{1}{2}\right| \nonumber \\
&=& \left|\frac{\Pr[Z_i=z|B=0]}{\Pr[Z_i=z|B=0] + \Pr[Z_i=z|B=1]} - \frac{1}{2}\right| \nonumber \\
&=& \frac{\left|\Pr[Z_i=z|B=0]-\Pr[Z_i=z|B=1]\right|}{2(\Pr[Z_i=z|B=0] + \Pr[Z_i=z|B=1])}
\label{eqn:halfdist}
\end{eqnarray}

Now, note that

\begin{eqnarray*}
\Pr[Z_i=z|B=0] &=& (1-\gamma)\binom{r}{z}R_i^{z}(1-R_{i})^{r-z} + \gamma\binom{r}{z}(R_i(1+\eps))^{z}(1-R_i(1+\eps))^{r-z} \\
\Pr[Z_i=z|B=1] &=& (1-\gamma)\binom{r}{z}R_i^{z}(1-R_{i})^{r-z} + \gamma\binom{r}{z}(R_i(1-\eps))^{z}(1-R_i(1-\eps))^{r-z} \\
\end{eqnarray*}

We can therefore lower bound the denominator of (\ref{eqn:halfdist}) via

\begin{eqnarray*}
2(\Pr[Z_i=z|B=0] + \Pr[Z_i=z|B=1]) &\geq& 4(1-\gamma)\binom{r}{z}R_i^{z}(1-R_{i})^{r-z} \\
&\geq& 2\binom{r}{z}R_i^{z}(1-R_{i})^{r-z}
\end{eqnarray*}

Likewise, we can write the numerator of (\ref{eqn:halfdist}) as

\begin{equation*}
\left|\Pr[Z_i=z|B=0]-\Pr[Z_i=z|B=1]\right|=\gamma\binom{r}{z}R_i^{z}(1-R_i)^{r-z}M
\end{equation*}

\noindent
where

\begin{eqnarray*}
M &=& \left|(1+\eps)^{z}\left(\frac{1-R_i(1+\eps)}{1-R_i}\right)^{r-z} - (1-\eps)^{z}\left(\frac{1-R_i(1-\eps)}{1-R_i}\right)^{r-z}\right|\\
&=& \left|(1+\eps)^{z}\left(1 - \frac{R_i}{1-R_i}\eps\right)^{r-z} - (1-\eps)^{z}\left(1+\frac{R_i}{1-R_i}\eps\right)^{r-z}\right|.
\end{eqnarray*}

To bound $M$, note that (applying the inequality $1+x \leq e^{x}$)

\begin{eqnarray*}
(1+\eps)^{z}\left(1 - \frac{R_i}{1-R_i}\eps\right)^{r-z} &\leq& \exp\left(\eps z - \eps\frac{R_i}{1-R_i}(r-z)\right) \\
&=& \exp\left(\eps \frac{z-rR_i}{1-R_i}\right) \\
&\leq& \exp(4\eps(z-rR_i)) \\
&\leq& \exp(44r\eps^2\ln n) \\
&=& e^{0.44} \\
&<& 2
\end{eqnarray*}

Similarly, $(1-\eps)^{z}\left(1+\frac{R_i}{1-R_i}\eps\right)^{r-z} \leq 2$. It follows that $M \leq 2$, and therefore that
 
\begin{eqnarray*}
\left|\Pr[B = 0|Z_i = z] - \frac{1}{2}\right| &=& \frac{\left|\Pr[Z_i=z|B=0]-\Pr[Z_i=z|B=1]\right|}{2(\Pr[Z_i=z|B=0] + \Pr[Z_i=z|B=1])}\\
&\leq & \frac{\gamma\binom{r}{z}R_i^{z}(1-R_i)^{r-z}M}{2\binom{r}{z}R_i^{z}(1-R_i)^{r-z}} \\
&=& \frac{\gamma M}{2}\\
&\leq & \gamma
\end{eqnarray*}

By Fact \ref{fact:repinsker}, this implies that

$$D(\Pr[B=0|Z_i=z] \| \tfrac{1}{2}) \leq \frac{4\gamma^2}{\ln 2}.$$

\item
\textbf{Case 2}: $|z - rR_i|  > 11 r\eps\ln n $.

Let $Z^{+}$ be the sum of $r$ i.i.d. $\Ber\left(R_i(1+\eps)\right)$ random variables. Note that since $Z$ is the sum of $r$ $\Ber(p)$ random variables for some $p \leq R_i(1+\eps)$, $\Pr[Z^{+} \geq x] \geq \Pr[Z \geq x]$ for all $x$. Therefore, by Hoeffding's inequality, we have that

\begin{eqnarray*}
\Pr\left[Z-rR_i \geq 11r \eps\ln n\right] &\leq& \Pr\left[Z^{+}-rR_i \geq 11r\eps\ln n \right] \\
&\leq& \Pr\left[Z^{+} - rR_i(1+\eps) \geq r\eps (11\ln n - R_i)\right] \\
&\leq& \Pr\left[Z^{+} - \E[Z^{+}] \geq 10r\eps\ln n\right] \\
&\leq& \exp\left(-\frac{2(10r\eps \ln n)^2}{r}\right) \\
&=& \exp(-2\ln n) \\
&=& n^{-2}
\end{eqnarray*}

Likewise, we can show that

$$\Pr\left[Z-rR_i \leq -11r \eps\ln n\right] \leq n^{-2}$$

\noindent
so

$$\Pr\left[\left|Z-rR_i\right| \geq 11r \eps\ln n\right] \leq 2n^{-2}$$
\end{enumerate}

Combining these two cases, we have that (for $n \geq 400$)
\begin{eqnarray*}
I(B;Z_i) &=& \sum_{z=0}^{r} \Pr[Z_i=z] \cdot D(\Pr[B=0|Z_i=z] \| \tfrac{1}{2}) \\
 &\leq& \sum_{|\|z\| - r/2|  > 11r\eps\ln n } \Pr[Z_i = z] \cdot 1  +  \sum_{|\|z\| - r/2|  \leq 11r\eps\ln n} \Pr[Z_i = z] \cdot O(\gamma^2) \\
 &\leq& 2n^{-2} + \frac{4\gamma^2}{\ln 2} \\
 &\leq& \frac{1}{100n}.
\end{eqnarray*}

\end{proof}

We can now complete the proof of Theorem \ref{thm:harddist}.

\begin{proof}[Proof of Theorem~\ref{thm:harddist}]
Combining Lemmas \ref{lem:infdecomp} and \ref{lem:inflb}, we have that if $r = \frac{1}{100\eps^2\ln n}$, then (for $n \geq 400$) $I(XY;B) \leq 2n I(X_i;B) \leq 0.02$. Therefore by Lemma \ref{lem:fano}, there exists no algorithm $A$ that, given this number of samples, correctly identifies $B$ (and thus solves the domination problem) with probability at least $2/3$. It follows that 

$$r_{min}(\mathcal{C}_{hard}, A, \tfrac{2}{3}) \geq \frac{1}{100\eps^2\ln n} = \Omega\left(\frac{1}{\eps^2\log n}\right)$$

\noindent
as desired.

\end{proof}

\subsection{Proving hardness for Top-$K$}

We will now show how to use our hard distribution of instances of $\domination$ to generate a hard distribution of instances of $\topk$. Our goal will be to embed our $\domination$ instance as rows $k$ and $k+1$ of our SST matrix; hence, intuitively, deciding which of the two rows ($k$ or $k+1$) belongs to the top $k$ is as hard as solving the domination problem.

Unfortunately, the SST condition imposes additional structure that prevents us from directly embedding any instance of the domination problem. However, for appropriate choices of the constants $R_i$, all instances in the support of $\mathcal{C}_{hard}$ give rise to valid SST matrices. 

Specifically, we construct the following distribution $\mathcal{S}_{hard}$ over $\topk$ instances $S$ of size $n+2$. Consider the distribution $\mathcal{C}_{hard}$ over $\domination$ instances of size $n$, where for $1\leq i \leq n$, $R_i = \frac{1}{4} + \frac{i}{8n}$, and $\eps = \frac{1}{100n^2}$. Now, consider the following map $f$ from $\domination$ instances $C = (\bp, \bq)$ to $\topk$ instances $S = f(C) = (n+2, k, \bP)$: we choose $k=n+1$ (so that the problem becomes equivalent to identifying row $n+2$) and define the matrix $\bP$ as follows:

$$\bP_{ij} = \begin{cases}
\bp_j &\mbox{if } i=n+1 \mbox{ and } j \leq n\\
\bq_j &\mbox{if } i=n+2 \mbox{ and } j \leq n\\
1-\bp_i &\mbox{if } j=n+1 \mbox{ and } i \leq n\\
1-\bq_i &\mbox{if } j=n+2 \mbox{ and } i \leq n\\ 
\frac{1}{2} &\mbox{otherwise}
\end{cases}$$

In general, for arbitrary $\bp$ and $\bq$, this matrix may not be an SST matrix. Note however that for this choice of $R_i$ and $\eps$, it is always the case that $R_i(1+\eps) \leq R_{i+1}(1-\eps)$, so for all $i$ (regardless of sample $C$), $p_i < p_{i+1}$. In addition, all the $R_i$ belong to $[1/4, 3/8]$, so for all $i$, $p_i$ and $q_i$ are less than $1/2$. From these two observations, it easily follows that if $C$ belongs to the support of $\mathcal{C}_{hard}$, $\bP$ is an SST matrix, and $f(C)$ is a valid instance of the top-$k$ problem. We will write $\mathcal{S}_{hard} = f(\mathcal{C}_{hard})$ to denote the distribution of instances of top-$k$ $f(C)$ where $C$ is sampled from $\mathcal{C}_{hard}$. Likewise, for any event $E$ (e.g. the event that $|S_{P}| \geq \frac{1}{10}n\gamma$) , we write $\mathcal{S}_{hard}|E$ to denote the distribution $f(\mathcal{C}_{hard}|E)$. 

We will begin by showing that, if there exists a sample efficient algorithm for some $\domination$ instance $C$ in the support of $\mathcal{C}_{hard}$, there exists a similarly efficient algorithm for the corresponding $\topk$ instance $S=f(C)$.

\begin{lemma}\label{lem:sstdownred}
If $C \in \supp \mathcal{C}_{hard}$ and $S = f(C)$, then $r_{min}(S) \leq \max(r_{min}(C, \frac{4}{5}), 1000n^2(1+\ln n))$. 
\end{lemma}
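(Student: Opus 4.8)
The plan is to produce an explicit algorithm $\mathcal{A}$ for the top-$K$ instance $S=f(C)$ that uses $r:=\max\!\big(r_{min}(C,\tfrac{4}{5}),\,1000n^2(1+\ln n)\big)$ samples, succeeds with probability at least $3/4$, and whose failure probability is non-increasing in the number of samples; this gives $r_{min}(S)\le r$. Recall the structure of $f(C)$: it is a $\topk$ instance on $n+2$ items with $k=n+1$, so solving it means identifying the single item in position $n+2$. The $n$ ``generic'' items in positions $1,\dots,n$ compare to one another with probability $\tfrac12$, the two ``special'' items in positions $n+1,n+2$ beat the generic item in position $i$ with probabilities $1-p_i$ and $1-q_i$ respectively, and the two special items tie ($\tfrac12$) against each other. $\mathcal{A}$ runs three phases on the same pool of $r$ samples; correctness is established by a union bound over a few bad events, so we never need independence across phases.

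\textbf{Phase 0 (isolate the special pair).} Let $W_u$ be the total number of comparisons (over all opponents and all $r$ rounds) won by item $u$. A direct expectation computation, using $R_i\in[\tfrac14,\tfrac38]$, gives $\E[W_u]\approx nr/2$ for generic $u$ and $\E[W_u]\le \tfrac{5}{16}nr+O(r)$ for special $u$ --- a gap of $\Omega(nr)$ that stays positive for every $n\ge 1$. Since $r\gg\log n$, Hoeffding's inequality together with a union bound over the $O(n)$ generic--special comparisons shows that the two items with smallest $W_u$ are exactly the two special items with overwhelming probability; call this pair $\{a,b\}$. \textbf{Phase 1 (recover the hidden matching of generic items).} For each generic item $g$, the empirical frequency with which $g$ beats $a$ has expectation $1-R_{\mathrm{pos}(g)}(1\pm\eps)$, so (since $\eps\le 1/(100n^2)$) estimating it to additive accuracy $1/(16n)$ pins down $\mathrm{pos}(g)$ exactly, because the prescribed values $R_i=\tfrac14+\tfrac{i}{8n}$ are $1/(8n)$-separated. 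With $r\ge 1000n^2(1+\ln n)$, Hoeffding gives this accuracy for a fixed $g$ with probability $1-n^{-\Omega(1)}$, and a union bound over the $n$ generic items recovers the full bijection $\sigma$ between generic items and positions $1,\dots,n$ with probability at least $1-\tfrac1{20}$. This phase is the reason the $1000n^2(1+\ln n)$ term appears.

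\textbf{Phase 2 (solve the embedded domination instance).} With $\{a,b\}$ and $\sigma$ in hand, set $X'_{i,l}$ to be the round-$l$ result of $a$ against the generic item in position $i$, and $Y'_{i,l}$ the analogue with $b$; let $B'=0$ if $a$ is in position $n+1$ and $B'=1$ otherwise. Using uniformity of $\pi$, the family $\{X'_{i,l},Y'_{i,l}\}$ is distributed exactly as a sample of $\domination(n,\bp,\bq,r)$ with hidden bit $B'\sim\Ber(\tfrac12)$ (after the coordinate relabelling $i\mapsto \sigma(i)$, which does not change the domination instance). Run the optimal algorithm $A_C$ for the instance $C$ --- it exists, since $r_{min}(C,\tfrac45)$ is the minimum of a set of positive integers (and if it is $\infty$ the lemma is vacuous). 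Because $r\ge r_{min}(C,\tfrac45)$, $A_C$ outputs $B'$ correctly with probability at least $\tfrac45$; we then output the set of all labels except the special item that $A_C$ places in position $n+2$. On the intersection of ``Phase 0 correct'', ``Phase 1 correct'', and ``$A_C$ correct on the true $(X',Y')$ data'' the algorithm returns the correct top-$K$ set; the complement of this intersection has probability at most (negligible)$+\tfrac1{20}+\tfrac15<\tfrac14$ by a union bound, and each of these bad-event probabilities is non-increasing in the sample size, so the bound persists for all sample sizes $\ge r$. Hence $r_{min}(S)\le r$.

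The one genuinely delicate point --- and the reason Phases 0 and 1 are present at all --- is that after isolating $a$ and $b$, the top-$K$ solver is not facing the instance $C$ itself but a uniformly random permuted copy $C_\sigma$, and the optimal algorithm for $C$ need not be competitive on the permutation mixture. We therefore must learn $\sigma$ \emph{exactly} before invoking $A_C$, which costs $\Theta(n^2\log n)$ samples precisely because consecutive values $R_i$ differ by only $\Theta(1/n)$; the $\pm\eps$ ambiguity in the win probabilities is harmless since $\eps$ is polynomially smaller. Everything else is routine Chernoff/Hoeffding bookkeeping.
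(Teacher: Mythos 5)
Your proposal is correct and follows essentially the same three-step route as the paper: isolate the two special rows by their row sums, recover the hidden positions of the generic items using the $\Theta(1/n)$-separation of the $R_i$ (this is exactly where the $1000n^2(1+\ln n)$ term comes from), and then feed the reassembled samples to the probability-$\tfrac45$ algorithm for $C$, finishing with a union bound over the three bad events. The only nit is arithmetic: as written your budget $\tfrac{1}{20}+\tfrac15$ already equals $\tfrac14$, but the Hoeffding bound for your Phase~1 actually gives failure probability at most $2n\exp(-r/200n^2)\le 2e^{-5}<0.02$ (as in the paper), so the total is comfortably below $\tfrac14$.
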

\begin{proof}
Let $A$ be an algorithm that successfully solves the $\domination$ instance $C$ with probability at least $\frac{4}{5}$ using $r_{min}(C, \frac{4}{5})$ samples. We will show how to use $A$ to construct an algorithm $A'$ that solves the $\topk$ instance $S$ with probability at least $3/4$ using $r = \max(r_{min}(C, \frac{4}{5}), 1000n^2(1+\ln n))$ samples. 

For each $i, j$, write $Z_{i, j} = \sum_{\ell = 1}^{r}Z_{i, j, \ell}$. Our algorithm $A'$ operates as follows. 

\begin{enumerate}
\item We begin by finding the two rows with the smallest row sums $\sum_{j}Z_{i, j}$. Let these two rows have indices $c$ and $d$. We claim that, with high probability, $\pi^{-1}(\{c, d\}) = \{n+1, n+2\}$. 

To see this, note that for all $i \not\in \pi(\{n+1, n+2\})$, $\bP_{i, j} \geq \frac{1}{2}$, so $\E\left[\sum_{j}Z_{i,j}\right] \geq \left(\frac{n}{2} + 1\right)r$. Thus, for any fixed $i \not\in \pi(\{n+1, n+2\})$, it follows from Hoeffding's inequality that

$$\Pr\left[\sum_{j}Z_{i,j} \leq \left(\frac{7}{16}n+1\right)r\right] \leq \exp\left(-\frac{nr}{128}\right)$$

\noindent
so by the union bound, the probability that there exists an $i \not\in \pi^{-1}(\{n+1, n+2\})$ such that $\sum_{j}Z_{i,j} \leq \left(\frac{7}{16}n+1\right)r$ is at most $n\exp\left(-\frac{nr}{128}\right)$. 

On the other hand, if $i \in \pi(\{n+1, n+2\})$ then $\bP_{i, j} \leq \frac{3}{8}(1+\eps)$ unless $j \in \pi(\{n+1, n+2\})$, where $\bP_{i,j} = \frac{1}{2}$; it follows that in this case, $\E\left[\sum_{j}Z_{i, j}\right] \leq \left(\frac{3n}{8}(1+\eps) + 1 \right)r$. Similarly, applying Hoeffding's inequality in this case, we find that for any fixed $i \in \pi^{-1}(\{n+1, n+2\})$,

$$\Pr\left[\sum_{j}Z_{i,j} \geq \left(\frac{7}{16}n+1\right)r\right] \leq \exp\left(-\frac{nr}{128(1+\eps)^2}\right) \leq 1.5\exp\left(-\frac{nr}{128}\right)$$

\noindent
and thus the probability that there exists some $i\in \pi^{-1}(\{n+1, n+2\})$, such that $\sum_{j}Z_{i,j} \geq \left(\frac{7}{16}n+1\right)r$ is at most $3\exp\left(-\frac{nr}{128}\right)$. It follows that, altogether, the probability that $\pi^{-1}(\{c, d\}) \neq \{n+1, n+2\}$ is at most $(n+3)\exp\left(-\frac{nr}{128}\right)$. Since $r \geq 1000n^2\ln n$, this is at most $4\exp(-1000/128) < 0.01$. 

\item We next sort the values $Z_{c, j}$ for $j \in [n+2] \setminus \{c, d\}$ and obtain indices $j_1, j_2, \dots, j_n$ so that $Z_{c,j_1} \leq Z_{c, j_2} \leq \dots \leq Z_{c, j_n}$. We claim that, with high probability, for all $a$, $\pi^{-1}(j_{a}) = a$.

For each $i$, let $U_i$ be the interval $\left[R_{i}(1-\eps) - \frac{1}{20n}, R_{i}(1+\eps) + \frac{1}{20n}\right]$. Note that, by our choice of $R_i$ and $\eps$, all the intervals $U_i$ are disjoint, with $U_i$ less than $U_{i+1}$ for all $i$. We will show that with high probability, $\frac{1}{r}Z_{c, \pi(i)} \in U_i$ for all $i$, thus implying the previous claim.

Note that $Z_{c, \pi(i)}$ is the sum of $r$ $\Ber(p)$ random variables, where $p$ is either $(1+\eps)R_i$, $R_i$, or $(1-\eps)R_i$. By Hoeffding's inequality, it follows that

\begin{eqnarray*}
\Pr\left[Z_{c, \pi(i)} \geq r\left(R_i(1+\eps) + \frac{1}{20n}\right)\right] &\leq& \exp\left(-2\frac{(r/20n)^2}{r}\right)\\
&=& \exp\left(-\frac{r}{200n^2}\right)
\end{eqnarray*}

Likewise,

$$\Pr\left[Z_{c, \pi(i)} \leq r\left(R_i(1-\eps) - \frac{1}{20n}\right)\right] \leq \exp\left(-\frac{r}{200n^2}\right)$$

Thus, for any fixed $i$,

$$\Pr\left[\frac{Z_{c, \pi^(i)}}{r} \not\in U_{i}\right] \leq 2\exp\left(-\frac{r}{200n^2}\right)$$

\noindent
and by the union bound, the probability this fails for some $i$ is at most $2n\exp\left(-\frac{r}{200n^2}\right)$. Since $r \geq 1000n^2(1+\ln n)$, $\exp\left(-\frac{r}{200n^2}\right) \leq (ne)^{-5}$, so this probability is at most $2e^{-5} < 0.02$.

\item Finally, we give algorithm $A$ as input $X_{i, \ell} = Z_{c, j_{i}, \ell}$ and $Y_{i, \ell} = Z_{d, j_{i}, \ell}$. Note that (conditioned on the above two claims holding), this input is distributed equivalently to input from the $\domination$ instance $C$. In particular, if $\pi^{-1}(c) = n+1$ and $\pi^{-1}(d) = n+2$, then each $X_{i, \ell}$ is distributed according to $\Ber(\bp_i)$ and each $Y_{i, \ell}$ is distributed according to $\Ber(\bq_i)$, and if $\pi^{-1}(c) = n+2$ and $\pi^{-1}(d) = n+1$, then each $X_{i, \ell}$ is distributed according to $\Ber(\bq_{i})$ and each $Y_{i, \ell}$ is distributed according to $\Ber(\bp_{i})$. Thus, if $A$ returns $B=0$, we return $[n+2] \setminus \{d\}$ as the top $n+1$ indices, and if $A$ returns $B=1$, we return $[n+2] \setminus \{c\}$ as the top $n+1$ indices.

The probability that $A$ fails given that steps 1 and 2 succeed is at most $0.2$, and the probability that either of the two steps fail to succeed is at most $0.01+0.02 = 0.03$. Since  $0.2 + 0.03 < \frac{1}{4}$, $A'$ succeeds with probability at least $\frac{3}{4}$, as desired.
\end{enumerate}

\end{proof}

\begin{corollary}\label{cor:sstub}
Let $E$ be the event that $|S_{P}| \geq \frac{1}{10}n\gamma$. If $C \in \supp (\mathcal{C}_{hard}|E)$ and $S = f(C)$, then $r_{min}(S) \leq O(n^{3.5})$.
\end{corollary}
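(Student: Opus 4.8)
The plan is to combine the two ingredients already established: the reduction from $\topk$ back to $\domination$ (Lemma~\ref{lem:sstdownred}) and the upper bound for the ``easy'' $\domination$ instances (Theorem~\ref{thm:hardub}). Fix any $C \in \supp(\mathcal{C}_{hard}|E)$. By the definition of conditioning on the event $E$, this instance $C = C(S_P, S_Q)$ satisfies $|S_P| \geq \frac{1}{10}n\gamma$, so Theorem~\ref{thm:hardub} applies to it directly; and since $C \in \supp\mathcal{C}_{hard}$, the construction of $\mathcal{S}_{hard}$ already guarantees that $S = f(C)$ is a valid $\topk$ instance.

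First I would apply Theorem~\ref{thm:hardub} with $p = \frac{4}{5}$, obtaining $r_{min}(C, \frac{4}{5}) = O\!\left(\frac{1}{\eps^2\sqrt{n}}\right)$, where $\eps = \frac{1}{100n^2}$ is the value used when defining $\mathcal{S}_{hard}$. Substituting this value of $\eps$ gives $\frac{1}{\eps^2\sqrt{n}} = \frac{10^4 n^4}{\sqrt{n}} = 10^4\, n^{3.5}$, so $r_{min}(C, \frac{4}{5}) = O(n^{3.5})$.

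Next I would feed this into Lemma~\ref{lem:sstdownred}, which states $r_{min}(S) \leq \max\!\left(r_{min}(C, \tfrac{4}{5}),\, 1000 n^2(1+\ln n)\right)$. Since $n^{3.5}$ dominates $n^2(1+\ln n)$ for large $n$, the right-hand side is $O(n^{3.5})$, which is exactly the claimed bound.

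There is essentially no new technical content here: the corollary is an arithmetic specialization of Lemma~\ref{lem:sstdownred} and Theorem~\ref{thm:hardub} to the parameter choice $\eps = \Theta(1/n^2)$ dictated by the $\domination$-to-$\topk$ embedding. The only point needing a moment's care — confirming that every instance in $\supp(\mathcal{C}_{hard}|E)$ meets the hypothesis $|S_P| \geq \frac{1}{10}n\gamma$ of Theorem~\ref{thm:hardub} — is immediate from the definition of $E$, so I do not anticipate any genuine obstacle.
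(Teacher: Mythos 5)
Your proposal is correct and matches the paper's own proof essentially verbatim: both invoke Theorem~\ref{thm:hardub} to get $r_{min}(C,\frac{4}{5}) \leq O(1/(\eps^2\sqrt{n})) = O(n^{3.5})$ for $\eps = \frac{1}{100n^2}$, and then apply Lemma~\ref{lem:sstdownred} and note that $n^{3.5}$ dominates $1000n^2(1+\ln n)$. No issues.
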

\begin{proof}
Recall that by Theorem \ref{thm:hardub}, for any $C \in \supp(\mathcal{C}_{hard}|E)$, $r_{min}(C,\frac{4}{5}) \leq O\left(\frac{1}{\sqrt{n}\eps^2}\right) = O(n^{3.5})$. By Lemma \ref{lem:sstdownred}, $r_{min}(S) \leq \max(r_{min}(C, \frac{4}{5}), 1000n^2(1+\ln n)) \leq O(n^{3.5})$.
\end{proof}

We next show that solving $\topk$ over the distribution $\mathcal{S}_{hard}|E$ is at least as hard as solving $\domination$ over the distribution $\mathcal{C}_{hard}|E$.

\begin{lemma}\label{lem:sstupred}
For any algorithm $A$ that solves $\topk$, there exists an algorithm $A'$ that solves domination such that $r_{min}(\mathcal{S}_{hard}, A, p) \geq \frac{1}{2}r_{min}(\mathcal{C}_{hard}, A', p)$.
\end{lemma}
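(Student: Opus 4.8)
The plan is to build $A'$ by using the $\domination$ samples to simulate rows $n+1$ and $n+2$ of a $\topk$ instance drawn from $\mathcal{S}_{hard}$, feeding the result to $A$, and then reading off the hidden bit from the single item that $A$ omits from the top $n+1$. Concretely, on input $X_{i,\ell},Y_{i,\ell}$ ($i\in[n]$, $\ell\in[2r]$), $A'$ first samples a uniformly random permutation $\pi$ of $[n+2]$ and then constructs the sample array $Z_{a,b,\ell}$ ($a,b\in[n+2]$, $\ell\in[r]$) by the rule: if both $a,b\in\pi(\{1,\dots,n\})$ or $\{a,b\}=\{\pi(n+1),\pi(n+2)\}$, let $Z_{a,b,\ell}$ be a fresh $\Ber(1/2)$ bit (drawn independently for each ordered pair and each $\ell$); if $a=\pi(n+1)$ and $b=\pi(j)$ with $j\le n$, set $Z_{a,b,\ell}=X_{j,\ell}$ and $Z_{b,a,\ell}=1-X_{j,\ell+r}$; and symmetrically, if $a=\pi(n+2)$ and $b=\pi(j)$ with $j\le n$, set $Z_{a,b,\ell}=Y_{j,\ell}$ and $Z_{b,a,\ell}=1-Y_{j,\ell+r}$. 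Finally $A'$ runs $A$ on $Z$: if $A$'s output omits $\pi(n+1)$ it returns $B=1$, if it omits $\pi(n+2)$ it returns $B=0$, and otherwise it returns $B=0$.

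The heart of the argument is to check that, conditioned on each value of the hidden bit $B$, the array $Z$ seen by $A$ is distributed exactly as the sample array of the genuine $\topk$ instance $f(C)$ under a uniformly random permutation, where $C\sim\mathcal{C}_{hard}$ is the hidden domination instance. When $B=0$ the $X_{j,\cdot}$ are $\Ber(p_j)$ and the $Y_{j,\cdot}$ are $\Ber(q_j)$, so by the definition of $f$ the array $Z$ is exactly the samples of $f(C)$ under $\pi$ --- here one uses $\bP_{n+1,j}=\bp_j$, $\bP_{j,n+1}=1-\bp_j$ (and the analogous identities for row $n+2$), that ranks $1,\dots,n$ compare at probability $\tfrac12$, and that Definition~\ref{def:topK} draws the two orientations $Z_{a,b,\ell}$ and $Z_{b,a,\ell}$ independently. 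When $B=1$ the roles of $\bp$ and $\bq$ are interchanged, and one verifies that $Z$ now has the distribution of $f(C)$ under the permutation $\pi'=\pi\circ\tau$, where $\tau$ is the transposition of ranks $n+1$ and $n+2$; since $\pi\mapsto\pi\circ\tau$ is a bijection of the symmetric group, $\pi'$ is again uniform. This step uses the fact (established when $\mathcal{S}_{hard}$ was defined) that every $C$ in $\supp\mathcal{C}_{hard}$ maps under $f$ to a legitimate SST matrix. In the $B=0$ case the correct $\topk$ answer is $[n+2]\setminus\{\pi(n+2)\}$, and in the $B=1$ case it is $[n+2]\setminus\{\pi(n+1)\}$; hence whenever $A$ returns the correct set, $A'$ outputs $B$ correctly.

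Given this, the accounting is routine: if $A'$ receives $m\ge 2\,r_{min}(\mathcal{S}_{hard},A,p)$ samples it sets $r=\lfloor m/2\rfloor\ge r_{min}(\mathcal{S}_{hard},A,p)$, so $A$ succeeds on the constructed instance with probability at least $p$ (over $C\sim\mathcal{C}_{hard}$, the permutation, and the fresh randomness), and therefore $A'$ solves $\domination$ over $\mathcal{C}_{hard}$ with probability at least $p$. This gives $r_{min}(\mathcal{C}_{hard},A',p)\le 2\,r_{min}(\mathcal{S}_{hard},A,p)$, equivalently $r_{min}(\mathcal{S}_{hard},A,p)\ge\tfrac12 r_{min}(\mathcal{C}_{hard},A',p)$, as claimed.

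I expect the one genuinely delicate point to be the distributional identity in the second paragraph: making precise that ``flipping $B$'' in the $\domination$ input corresponds to composing the simulated $\topk$ permutation with the fixed transposition $\tau$, so that $A$ is run on exactly its intended input distribution (a uniform-permutation instance of $\mathcal{S}_{hard}$) in both cases, and the correct answer, read through $A'$'s output map, reveals $B$. The remaining ingredients --- the factor of two in the sample count, the independence of the freshly minted fair coins, and the SST-validity of $f(C)$ --- are straightforward.
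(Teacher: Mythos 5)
Your proposal is correct and follows essentially the same route as the paper: the identical embedding of the $\domination$ samples as rows/columns $\pi(n+1),\pi(n+2)$ of a simulated $\topk$ array (splitting the $2r$ samples so the two orientations of each comparison are independent), followed by reading $B$ off the omitted item and the factor-of-two accounting. Your extra care in the $B=1$ case — identifying the simulated instance as $f(C)$ under $\pi\circ\tau$ and noting this is again uniform — is a detail the paper leaves implicit, but it is the same argument.
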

\begin{proof}
We will show more generally that for any distribution $\mathcal{C}$ of $\domination$ instances, if $\mathcal{S} = f(\mathcal{C})$ is a valid distribution of $\topk$ instances, then $r_{min}(\mathcal{S}, A, p) \geq \frac{1}{2}r_{min}(\mathcal{C}, A', p)$. 

We will construct $A'$ by embedding the domination instance inside a top-$k$ instance in much the same way that the function $f$ does, and then using $A$ to solve the top-$k$ instance. We receive as input two sets of samples $X_{i,\ell}$ and $Y_{i, \ell}$ (where $1\leq i,j \leq n$ and $1\leq \ell \leq r$) from some $\domination$ instance $C$ drawn from $\mathcal{C}$. We then generate a random permutation $\pi$ of $[n+2]$. We use our input and this permutation to generate a matrix $Z_{i, j, \ell}$ (where $1\leq i, j \leq n+2$ and $1\leq \ell \leq \frac{r}{2}$) of samples to input to $A$ as follows.

For $1 \leq i, j \leq n$, set each $Z_{\pi(i), \pi(j), \ell}$ to be a random $\Ber(\frac{1}{2})$ random variable. Similarly, for $n+1 \leq i, j \leq n+2$, set each $Z_{\pi(i), \pi(j), \ell}$ to be a random $\Ber(\frac{1}{2})$ random variable. Now, for all $1 \leq j \leq n$, set $Z_{\pi(n+1), \pi(j), \ell} = X_{j, \ell}$ and set $Z_{\pi(n+2), \pi(j), \ell} = Y_{j, \ell}$. Similarly, for all $1 \leq i \leq n$, set $Z_{\pi(i), \pi(n+1), \ell} = 1-X_{i, \ell + r/2}$ and set $Z_{\pi(i), \pi(n+2), \ell} = 1-Y_{i, \ell + r/2}$. Finally, set $k=n+1$ and ask $A$ to solve the $\topk$ instance defined by $k$ and $Z_{i,j,\ell}$. If $A$ returns that $\pi(n+1)$ is in the top $n+1$ indices, return $B=0$, and otherwise return $B=1$.

From our construction, if the $r$ samples of $X$ and $Y$ are distributed according to a $\domination$ instance $C$, then the $r/2$ samples of $Z$ are distributed according to the $\topk$ instance $S=f(C)$. Since $A$ succeeds with probability $p$ on distribution $\mathcal{S}$ with $r_{min}(\mathcal{S}, A, p)$ samples, $A'$ therefore succeeds with probability $p$ on distribution $\mathcal{C}$ with $2r_{min}(\mathcal{S}, A, p)$ samples, thus implying that $r_{min}(\mathcal{S}, A, p) \geq \frac{1}{2}r_{min}(\mathcal{C}, A', p)$.
\end{proof}
\begin{corollary}\label{cor:sstlb}
For all algorithms $A$ that solve $\topk$, $r_{min}(\mathcal{S}_{hard}, A, \frac{2}{3}) = \Omega\left(\frac{n^4}{\log n}\right)$. 
\end{corollary}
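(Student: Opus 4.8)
The plan is to obtain Corollary~\ref{cor:sstlb} as a direct consequence of the two lemmas already in hand, namely the reduction Lemma~\ref{lem:sstupred} and the domination lower bound Theorem~\ref{thm:harddist}, specialized to the parameters chosen for $\mathcal{S}_{hard}$ (recall $\eps = \frac{1}{100n^2}$ and $R_i = \frac14 + \frac{i}{8n} \in [\tfrac14, \tfrac38]$). First I would fix an arbitrary algorithm $A$ for $\topk$ and apply Lemma~\ref{lem:sstupred} to produce an algorithm $A'$ for $\domination$ satisfying $r_{min}(\mathcal{S}_{hard}, A, \tfrac{2}{3}) \geq \tfrac{1}{2}\, r_{min}(\mathcal{C}_{hard}, A', \tfrac{2}{3})$.

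Next I would invoke Theorem~\ref{thm:harddist} for the algorithm $A'$. Since that theorem was proved for a generic choice of the constants $R_i \in [\tfrac14,\tfrac34]$ (the bound in Lemma~\ref{lem:inflb} treats $R_i$ only through the interval it lies in), it applies verbatim to the $\mathcal{C}_{hard}$ used here, giving
\[
r_{min}(\mathcal{C}_{hard}, A', \tfrac{2}{3}) = \Omega\!\left(\frac{1}{\eps^2 \log n}\right).
\]
Substituting $\eps = \frac{1}{100n^2}$ turns the right-hand side into $\Omega\!\left(\frac{n^4}{\log n}\right)$. Combining the two displayed inequalities yields
\[
r_{min}(\mathcal{S}_{hard}, A, \tfrac{2}{3}) \geq \tfrac{1}{2}\,\Omega\!\left(\frac{n^4}{\log n}\right) = \Omega\!\left(\frac{n^4}{\log n}\right),
\]
as desired (the instances in $\mathcal{S}_{hard}$ have size $n+2$, but this only affects the hidden constant).

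There is no genuine obstacle here; the work was already done in establishing Lemma~\ref{lem:sstupred} (showing $f$ preserves sample complexity up to a factor of $2$) and Theorem~\ref{thm:harddist}. The only point requiring a sentence of care is to confirm that the hypotheses of Theorem~\ref{thm:harddist} are met by the particular $\mathcal{C}_{hard}$ instantiated in this subsection — that is, that $f(\mathcal{C}_{hard})$ is indeed a valid distribution over $\topk$ instances (so that Lemma~\ref{lem:sstupred} is applicable), which was already verified in the construction of $\mathcal{S}_{hard}$ using the facts $R_i(1+\eps) \le R_{i+1}(1-\eps)$ and $R_i(1+\eps) < \tfrac12$.
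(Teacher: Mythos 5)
Your proposal is correct and matches the paper's own proof exactly: the paper likewise combines Theorem~\ref{thm:harddist} (giving $r_{min}(\mathcal{C}_{hard}, A', \frac{2}{3}) = \Omega(1/(\eps^2\log n)) = \Omega(n^4/\log n)$ for $\eps = \frac{1}{100n^2}$) with the factor-of-two reduction in Lemma~\ref{lem:sstupred}. Your added check that the specific $R_i$ and $\eps$ satisfy the hypotheses is a reasonable extra sentence of care but introduces nothing beyond the paper's argument.
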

\begin{proof}
Theorem \ref{thm:harddist} tells us that for all algorithms $A'$ that solve $\domination$, $r_{min}(\mathcal{C}_{hard}, A, \frac{2}{3}) = \Omega\left(\frac{1}{\eps^2\log n}\right) = \Omega\left(\frac{n^4}{\log n}\right)$. Combining this with Lemma \ref{lem:sstupred}, we obtain the desired result.
\end{proof}

We can now prove Theorem \ref{thm:sstlb} in much the same fashion as Theorem \ref{thm:domlb}.

\begin{proof}[Proof of Theorem~\ref{thm:sstlb}]
By Corollary \ref{cor:sstlb}, $r_{min}(\mathcal{S}_{hard}, A, \frac{2}{3}) = \Omega\left(\frac{n^4}{\log n}\right)$. Let $E$ be the event that $|S_P| \geq \frac{1}{10}n\gamma$ (in the original $\domination$ instance $C$). By Lemma \ref{lem:chern}, if $n \geq (400 \ln \frac{12}{11})^2$, $\Pr[E] \geq \frac{1}{12}$, and it follows from Lemma \ref{lem:mix2} that 

\begin{eqnarray*}
r_{min}(\mathcal{S}_{hard}|E, A) &=& r_{min}(\mathcal{S}_{hard}|E, A, \frac{3}{4}) \\
&\geq & r_{min}(\mathcal{S}_{hard}, A, \frac{2}{3}) \\
&\geq & \Omega\left(\frac{n^4}{\log n}\right)
\end{eqnarray*}

It therefore follows from \ref{lem:mix1} that there is a specific instance $S$ in the support of $\mathcal{S}_{hard}|E$ such that $r_{min}(S, A) \geq \Omega\left(\frac{n^4}{\log n}\right)$. However, by Corollary \ref{cor:sstub}, $r_{min}(S) \leq O(n^{3.5})$. It follows that for any algorithm $A$, there exists an instance $S$ of $\topk$ such that $r_{min}(S, A) \geq \Omega\left(\frac{\sqrt{n}}{\log n}\right)r_{min}(S)$, as desired.
\end{proof}

%-----------------------------------------------------------------------------------------------------------------------------

%\section{Experiments}\label{sec:experiments}
%\input{experiments}

%-----------------------------------------------------------------------------------------------------------------------------

%\bibliographystyle{abbrvnat}
\bibliographystyle{alpha}
\bibliography{references}

\appendix
\section{Probability and Information Theory Preliminaries}

\label{app:info}

%!TEX root = main.tex

We briefly review some standard facts and definitions from information theory we will use throughout this paper. For a more detailed introduction, we refer the reader to \cite{CK11}.

Throughout this paper, we use $\log$ to refer to the base $2$ logarithm and use $\ln$ to refer to the natural logarithm. 

\begin{definition}
The \emph{entropy} of a random variable $X$, denoted by $H(X)$, is defined as $H(X) = \sum_x \Pr[X = x] \log(1 / \Pr[X = x])$. 
\end{definition}

If $X$ is drawn from Bernoulli distributions $\Ber(p)$, we use $H(p) = -(p\log p + (1-p)(\log(1-p))$ to denote $H(X)$. 

\begin{definition}
The \emph{conditional entropy} of random variable $X$ conditioned on random variable $Y$ is defined as $H(X|Y) = \mathbb{E}_y[H(X|Y = y)]$. 
\end{definition}

\begin{fact}
$H(XY) = H(X) + H(Y|X)$. 
\end{fact}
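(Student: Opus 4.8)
The plan is to prove the chain rule $H(XY) = H(X) + H(Y|X)$ by a direct expansion of the definitions, following the same style of elementary manipulation used elsewhere in the appendix. I would start from the definition $H(XY) = \sum_{x,y}\Pr[X=x,Y=y]\log\bigl(1/\Pr[X=x,Y=y]\bigr)$, where the sum ranges over all pairs $(x,y)$ with $\Pr[X=x,Y=y] > 0$ (adopting the standard convention $0\log(1/0) = 0$ so that zero-probability pairs may be harmlessly included or excluded).

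The key step is to factor the joint probability as $\Pr[X=x,Y=y] = \Pr[X=x]\cdot\Pr[Y=y\mid X=x]$, valid for every pair with positive probability, and then use $\log(1/ab) = \log(1/a) + \log(1/b)$ to split the logarithm. This breaks $H(XY)$ into two sums. For the first sum, $\sum_{x,y}\Pr[X=x,Y=y]\log(1/\Pr[X=x])$, I would sum out $y$ using $\sum_y \Pr[X=x,Y=y] = \Pr[X=x]$, which leaves exactly $\sum_x \Pr[X=x]\log(1/\Pr[X=x]) = H(X)$. For the second sum, $\sum_{x,y}\Pr[X=x]\Pr[Y=y\mid X=x]\log\bigl(1/\Pr[Y=y\mid X=x]\bigr)$, I would group the inner sum over $y$ to recognize $\sum_y \Pr[Y=y\mid X=x]\log(1/\Pr[Y=y\mid X=x])$ as $H(Y\mid X=x)$, so the whole expression becomes $\sum_x \Pr[X=x]\,H(Y\mid X=x) = \mathbb{E}_x[H(Y\mid X=x)]$, which is precisely $H(Y|X)$ by the definition given just above the statement. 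Adding the two pieces yields $H(XY) = H(X) + H(Y|X)$.

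There is no real obstacle here; the only point requiring a little care is the treatment of pairs $(x,y)$ with $\Pr[X=x,Y=y] = 0$ (equivalently, terms where $\Pr[Y=y\mid X=x]=0$), which is handled by the convention $0\log(1/0)=0$ together with the observation that $\Pr[X=x]=0$ forces all corresponding joint probabilities to vanish, so those $x$ contribute nothing to either side. Since this is a standard textbook identity, I would keep the write-up to a few lines and cite \cite{CK11} for further background if desired.
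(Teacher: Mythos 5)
Your proof is correct and is the standard derivation of the entropy chain rule; the paper itself states this as a background fact without proof (deferring to \cite{CK11}), so there is nothing to compare against. Your handling of the zero-probability terms via the $0\log(1/0)=0$ convention is exactly the right level of care, and the write-up can indeed be kept to a few lines.
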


\begin{definition}
The \emph{mutual information} between two random variables $X$ and $Y$ is defined as $I(X;Y) = H(X) - H(X|Y) = H(Y) - H(Y|X)$. 
\end{definition}

\begin{definition}
The \emph{conditional mutual information} between $X$ and $Y$ given $Z$ is defined as $I(X;Y|Z) = H(X|Z) - H(X|YZ) = H(Y|Z) - H(Y|XZ)$. 
\end{definition}

\begin{fact}\label{fact:cr}
Let $X_1,X_2,Y,Z$ be random variables, we have $I(X_1X_2;Y|Z) = I(X_1;Y|Z) + I(X_2;Y|X_1Z)$.
\end{fact}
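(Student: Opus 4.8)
The plan is to unwind everything to the definition of conditional mutual information in terms of conditional entropy, and then apply the entropy chain rule stated just above (Fact: $H(XY)=H(X)+H(Y|X)$).

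First I would expand the left-hand side using the definition $I(U;V|W)=H(U|W)-H(U|VW)$, treating the pair $X_1X_2$ as a single jointly-distributed random variable:
$$I(X_1X_2;Y|Z)=H(X_1X_2|Z)-H(X_1X_2|YZ).$$
Next I would apply the chain rule for conditional entropy to each term — namely $H(X_1X_2|Z)=H(X_1|Z)+H(X_2|X_1Z)$ and $H(X_1X_2|YZ)=H(X_1|YZ)+H(X_2|X_1YZ)$ — which follows from the unconditional chain rule $H(UV)=H(U)+H(V|U)$ by conditioning on a fixed value of the remaining variable(s) and averaging. Then I would subtract the two expansions and regroup the four resulting entropy terms into two pairs: $\bigl(H(X_1|Z)-H(X_1|YZ)\bigr)$ and $\bigl(H(X_2|X_1Z)-H(X_2|X_1YZ)\bigr)$. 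The first pair is exactly $I(X_1;Y|Z)$ by definition, and the second is exactly $I(X_2;Y|X_1Z)$, using the definition of conditional mutual information with conditioning set $X_1Z$. Adding them yields the claimed identity.

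I do not anticipate any genuine obstacle here; the only point needing a line of care is the passage from the unconditional entropy chain rule to its conditional version, i.e.\ that conditioning each entropy on $Z=z$ and then taking $\mathbb{E}_z$ preserves the identity. In fact this can be sidestepped entirely: every quantity appearing in the statement is conditioned on $Z$, so the claim is precisely the unconditional chain rule $I(X_1X_2;Y)=I(X_1;Y)+I(X_2;Y|X_1)$ applied to the conditional distribution given $Z=z$ and then averaged over $z$, so it suffices to prove the $Z$-free version by the two-line entropy computation above.
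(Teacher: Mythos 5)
Your derivation is correct: expanding both conditional mutual informations via $I(U;V|W)=H(U|W)-H(U|VW)$, applying the conditional entropy chain rule, and regrouping is the standard proof of this identity, and your remark that one may equivalently prove the $Z$-free version and average over $Z=z$ is also sound. The paper states this as a standard fact without proof (deferring to \cite{CK11}), so there is nothing to compare against; your argument fills the gap correctly.
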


\begin{fact}
\label{fact:it1}
Let $X,Y,Z,W$ be random variables. If $I(Y;W|X,Z) = 0$, then $I(X;Y|Z) \geq I(X;Y|ZW)$. 
\end{fact}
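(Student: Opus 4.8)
The plan is to derive the inequality purely from the chain rule for conditional mutual information (Fact~\ref{fact:cr}), the symmetry of mutual information, and its nonnegativity; no structural features of $\domination$ or $\topk$ are needed here.

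First I would expand $I(XW;Y|Z)$ by the chain rule, peeling off $X$ first:
\[
I(XW;Y|Z) = I(X;Y|Z) + I(W;Y|XZ).
\]
By the symmetry of mutual information, $I(W;Y|XZ) = I(Y;W|X,Z)$, which equals $0$ by hypothesis. Hence $I(XW;Y|Z) = I(X;Y|Z)$.

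Next I would expand the same quantity, using $I(XW;Y|Z) = I(WX;Y|Z)$ and peeling off $W$ first (again Fact~\ref{fact:cr}, now with the roles of the two variables interchanged):
\[
I(XW;Y|Z) = I(W;Y|Z) + I(X;Y|WZ).
\]
Equating the two expansions gives $I(X;Y|Z) = I(W;Y|Z) + I(X;Y|ZW)$. Since conditional mutual information is always nonnegative, $I(W;Y|Z)\ge 0$, and therefore $I(X;Y|Z) \ge I(X;Y|ZW)$, which is the claim.

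I do not expect any real obstacle: the statement is a routine identity-plus-nonnegativity argument. The only points that need a little care are (i) invoking the symmetry $I(A;B|C)=I(B;A|C)$ so that the chain-rule term $I(W;Y|XZ)$ is recognized as the hypothesis $I(Y;W|X,Z)=0$, and (ii) noting that the chain rule may be applied to the joint variable $(X,W)$ in either coordinate order, since $I(XW;\,\cdot\,)=I(WX;\,\cdot\,)$.
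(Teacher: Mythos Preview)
Your proof is correct and is the standard argument via two applications of the chain rule plus nonnegativity of conditional mutual information. The paper itself states Fact~\ref{fact:it1} without proof (it is listed among the information-theoretic preliminaries in the appendix), so there is no paper proof to compare against; your argument would serve as a complete justification.
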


\begin{definition}
The \emph{Kullback-Leibler divergence} between two random variables $X$ and $Y$ is defined as $D(X\| Y) = \sum_x \Pr[X = x] \log(\Pr[X = x] / \Pr[Y = x])$. 
\end{definition}

If $X$ and $Y$ are drawn from Bernoulli distribution $B_p$ and $B_q$, we write $D(p\| q)$ as an abbreviation for $D(X\| Y)$. 

\begin{fact}
\label{fact:div}
Let $X,Y,Z$ be random variables, we have $I(X;Y|Z) = \mathbb{E}_{x,z}[D((Y|X = x, Z=z)\|(Y|Z=z))]$.
\end{fact}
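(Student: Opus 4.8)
The plan is to unfold both sides of the claimed identity into explicit sums over the supports of $X$, $Y$, $Z$ and verify they agree term by term. Starting from the definition of conditional mutual information, write $I(X;Y|Z) = H(Y|Z) - H(Y|XZ)$, and then expand each conditional entropy using $H(Y|Z) = \mathbb{E}_z[H(Y|Z=z)]$ and $H(Y|XZ) = \mathbb{E}_{x,z}[H(Y|X=x,Z=z)]$ together with the definition of entropy. This turns $I(X;Y|Z)$ into a difference of two nested sums, one indexed by $(y,z)$ and weighted by $\Pr[Z=z]\Pr[Y=y|Z=z]$, and one indexed by $(x,y,z)$ and weighted by $\Pr[X=x,Z=z]\Pr[Y=y|X=x,Z=z]$.

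Next I would put both terms on a common index set. The key observation is the marginalization identity $\Pr[Z=z]\,\Pr[Y=y|Z=z] = \sum_{x}\Pr[X=x,Z=z]\,\Pr[Y=y|X=x,Z=z]$, which lets me re-express the first term as a sum over the triple $(x,y,z)$ with weight $\Pr[X=x,Z=z]\Pr[Y=y|X=x,Z=z]$ and logarithmic factor $\log\Pr[Y=y|Z=z]$. After this substitution both terms share the prefactor $\sum_{x,z}\Pr[X=x,Z=z]\sum_y \Pr[Y=y|X=x,Z=z]$, so subtracting leaves $\log\Pr[Y=y|X=x,Z=z] - \log\Pr[Y=y|Z=z] = \log\frac{\Pr[Y=y|X=x,Z=z]}{\Pr[Y=y|Z=z]}$ inside the sum.

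Finally I would recognize the inner sum $\sum_y \Pr[Y=y|X=x,Z=z]\log\frac{\Pr[Y=y|X=x,Z=z]}{\Pr[Y=y|Z=z]}$ as precisely $D\big((Y|X=x,Z=z)\,\big\|\,(Y|Z=z)\big)$ by the definition of Kullback--Leibler divergence, so that the whole quantity collapses to $\sum_{x,z}\Pr[X=x,Z=z]\,D\big((Y|X=x,Z=z)\|(Y|Z=z)\big) = \mathbb{E}_{x,z}\big[D((Y|X=x,Z=z)\|(Y|Z=z))\big]$, which is the claim. There is no genuine obstacle here; the only points requiring a little care are the marginalization step and the usual conventions ($0\log 0 = 0$, and dropping $(x,z)$ pairs of zero probability) so that all sums are over the relevant supports and every term is well defined.
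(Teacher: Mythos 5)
Your proof is correct. The paper does not actually prove Fact \ref{fact:div}; it is stated as a standard identity with a pointer to the information-theory literature, so there is no argument to compare against. Your derivation --- expanding $I(X;Y|Z)=H(Y|Z)-H(Y|XZ)$, using the marginalization $\Pr[Z=z]\Pr[Y=y|Z=z]=\sum_x \Pr[X=x,Z=z]\Pr[Y=y|X=x,Z=z]$ to put both sums over the triple $(x,y,z)$, and recognizing the resulting inner sum as a KL divergence --- is the canonical proof, and your attention to the $0\log 0$ convention and to absolute continuity of $(Y|X=x,Z=z)$ with respect to $(Y|Z=z)$ on the relevant support covers the only points where care is needed.
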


\begin{fact}\label{fact:repinsker}
Let $X,Y$ be random variables, 
\[
\sum_x \frac{|\Pr[X = x] - \Pr[Y = x]|^2}{2\max\{\Pr[X = x], \Pr[Y=x]\}} \leq \ln(2) \cdot D(X\|Y)  \leq \sum_x \frac{|\Pr[X = x] - \Pr[Y = x]|^2}{\Pr[Y=x]}.
\]
\end{fact}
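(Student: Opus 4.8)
The plan is to reduce both inequalities to pointwise (single-symbol) inequalities and then sum, using the normalization $\sum_x\Pr[X=x]=\sum_x\Pr[Y=x]=1$. Write $p_x=\Pr[X=x]$ and $q_x=\Pr[Y=x]$. Since the divergence in the excerpt uses base-$2$ logarithms, $\ln(2)\cdot D(X\|Y)=\sum_x p_x\ln(p_x/q_x)$. If there is some $x$ with $p_x>0=q_x$, then $D(X\|Y)=+\infty$ and the rightmost sum is also $+\infty$, so both stated inequalities hold trivially; hence we may assume $q_x>0$ whenever $p_x>0$, and restrict every sum below to indices $x$ with $q_x>0$ (no mass of $p$ or $q$ is dropped, so $\sum_x p_x=\sum_x q_x=1$ still hold).

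For the upper bound, apply $\ln t\le t-1$ (valid for $t>0$) with $t=p_x/q_x$ to each term with $p_x>0$ to get $p_x\ln(p_x/q_x)\le p_x(p_x/q_x-1)=p_x^2/q_x-p_x$ (this also holds trivially when $p_x=0$). Summing and using $\sum_x p_x=1$ gives $\ln(2)\cdot D(X\|Y)\le \sum_x p_x^2/q_x-1$. Finally, expanding $\sum_x (p_x-q_x)^2/q_x=\sum_x p_x^2/q_x-2\sum_x p_x+\sum_x q_x=\sum_x p_x^2/q_x-1$ shows the right-hand side is exactly $\sum_x (p_x-q_x)^2/q_x$. (This is the standard inequality $D\le\chi^2$.)

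For the lower bound, the key is the pointwise inequality
\[
a\ln(a/b)\;\ge\;(a-b)+\frac{(a-b)^2}{2\max\{a,b\}}\qquad\text{for all }a\ge 0,\ b>0,
\]
with the convention $0\ln 0=0$. Granting this, substituting $a=p_x$, $b=q_x$ and summing over $x$ kills the linear term since $\sum_x(p_x-q_x)=0$, yielding $\ln(2)\cdot D(X\|Y)\ge\sum_x (p_x-q_x)^2/(2\max\{p_x,q_x\})$. To prove the pointwise inequality I would divide by $b$ and set $t=a/b\ge 0$, reducing it to $t\ln t\ge (t-1)+(t-1)^2/(2\max\{t,1\})$, and split into two cases. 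For $t\ge 1$ (so $\max\{t,1\}=t$), let $\phi(t)=t\ln t-(t-1)-(t-1)^2/(2t)$; then $\phi(1)=\phi'(1)=0$ and $\phi''(t)=(t^2-1)/t^3\ge 0$, so $\phi'$ is nondecreasing hence nonnegative on $[1,\infty)$, hence $\phi\ge\phi(1)=0$. For $0<t\le 1$ (so $\max\{t,1\}=1$), let $\psi(t)=t\ln t-(t-1)-(t-1)^2/2$; then $\psi(1)=\psi'(1)=0$ and $\psi''(t)=1/t-1\ge 0$ on $(0,1]$, so $\psi'$ is nondecreasing there and therefore $\psi'\le\psi'(1)=0$, so $\psi$ is nonincreasing and $\psi\ge\psi(1)=0$; the boundary cases $t=0$ and $a=0$ are immediate.

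The routine parts are the two derivative computations; the only real content is recognizing that the $\max$ in the denominator forces the two-case split (with $\max\{t,1\}$ equal to $t$ or to $1$), and that in each case the difference function vanishes to second order at $t=1$, so a single sign check on the second derivative closes the argument. I expect no genuine obstacle beyond carefully bookkeeping the degenerate terms where $p_x=0$ or $q_x=0$.
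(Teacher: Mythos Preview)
Your proof is correct. The paper itself does not supply a proof of this fact; it simply cites a reference (\cite{BM15}). Your argument is therefore strictly more detailed than what appears in the paper: you give a fully self-contained elementary proof via the pointwise inequalities $\ln t\le t-1$ (for the $\chi^2$ upper bound) and $t\ln t\ge (t-1)+(t-1)^2/(2\max\{t,1\})$ (for the lower bound), the latter handled by a clean two-case convexity analysis. Both derivative computations check out, the degenerate cases are handled, and the summing step correctly uses $\sum_x(p_x-q_x)=0$ to eliminate the linear term.
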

\begin{proof}
A proof of Fact \ref{fact:repinsker} can be found in \cite{BM15}.
\end{proof}

We will also need the following quantitative version of the central limit theorem.
\begin{lemma}[Berry-Esseen Theorem]
\label{lem:berryesseen}
Let $Z_1,\cdots,Z_k$ be independent random variables and let $S=\sum_{i=1}^k Z_i$. Let $\mu=\E[S]=\sum_{i=1}^k \E[Z_i]$, $\sigma^2=\var[S]=\sum_{i=1}^k \var[Z_i]$ and $\gamma=\sum_{i=1}^k \E[|Z_i-\E[Z_i]|^3]$. Let $\Phi$ be the CDF of standard Gaussian. Then for all $t\in \mathbb{R}$,
$$\left|\Pr\left[S< t\right]-\Phi\left(\frac{t-\mu}{\sigma}\right)\right|\le \frac{\gamma}{\sigma^3} .$$
\end{lemma}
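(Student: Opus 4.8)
Lemma~\ref{lem:berryesseen} is the classical Berry--Esseen theorem in its Lyapunov form (independent, not necessarily identically distributed summands), so the appropriate ``proof'' is simply to cite a standard reference such as Feller, Petrov, or \cite{CK11}; this is a textbook fact rather than something one proves from scratch, and that is what I would do in the paper. For completeness, here is an outline of the classical Fourier-analytic argument. First I would normalize: replacing each $Z_i$ by $(Z_i-\E[Z_i])/\sigma$, we may assume $\mu=0$ and $\sigma=1$, so the claim becomes $\sup_{t}|F(t)-\Phi(t)|\le\gamma$, where $F$ is the CDF of $S=\sum_i Z_i$ and $\gamma=\sum_i\E|Z_i|^3$ in the rescaled variables. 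If $\gamma\ge 1$ there is nothing to prove, since both $F$ and $\Phi$ take values in $[0,1]$, so we may also assume $\gamma<1$.

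Next I would invoke Esseen's smoothing inequality: for any $T>0$ and an absolute constant $c$, $\sup_t|F(t)-\Phi(t)|\le\frac{1}{\pi}\int_{-T}^{T}\left|\frac{\widehat F(u)-e^{-u^2/2}}{u}\right|du+\frac{c}{T}$, where $\widehat F(u)=\E[e^{iuS}]=\prod_i\varphi_i(u)$ and $\varphi_i(u)=\E[e^{iuZ_i}]$. I would Taylor-expand each factor as $\varphi_i(u)=1-\frac{1}{2}\E[Z_i^2]u^2+\theta_i(u)$ with $|\theta_i(u)|\le\frac{1}{6}\E|Z_i|^3|u|^3$, and compare $\prod_i\varphi_i(u)$ with $e^{-u^2/2}=\prod_i e^{-\E[Z_i^2]u^2/2}$ using the elementary telescoping bound $\left|\prod_i a_i-\prod_i b_i\right|\le\sum_i|a_i-b_i|$, valid once all $|a_i|,|b_i|\le 1$ (which holds for $|u|$ up to a fixed constant times $\gamma^{-1/3}$). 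This yields $|\widehat F(u)-e^{-u^2/2}|\le O(\gamma|u|^3)e^{-u^2/4}$ on that range; dividing by $|u|$, integrating, and choosing $T$ to be a suitable constant multiple of $\gamma^{-1}$ balances the two terms in the smoothing inequality and gives $\sup_t|F(t)-\Phi(t)|\le C\gamma$ for an explicit absolute constant $C$.

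The only genuine difficulty is quantitative: pushing the constant down to exactly $1$, as stated, requires the careful optimization that is the technical heart of the sharp Berry--Esseen estimates. Since every application of this lemma in the paper only needs a fixed absolute constant, I would simply quote the known bound rather than carry out that optimization; everything else (the normalization, the smoothing lemma, and the characteristic-function expansion and telescoping) is routine. An alternative that avoids Fourier analysis entirely is Stein's method, which likewise produces a bound of the form $C\gamma$ with an explicit constant.
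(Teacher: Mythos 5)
Your proposal matches the paper exactly: the paper states this lemma in the appendix as a classical result with no proof, which is precisely what you advocate. Your outline of the Fourier-analytic argument is accurate, and your observation that the constant $1$ in the stated bound requires the sharp quantitative versions (e.g.\ van Beek's $C\le 0.7975$ for non-identically distributed summands) rather than the easy smoothing argument is a correct and worthwhile caveat.
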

Finally, we will need the following estimates on the tails of the Gaussian distribution.
\begin{lemma}
\label{fact:gaussiancdf}
Let $\Phi(t)$ be the CDF of standard Gaussian distribution then for $t>0$,
$$\frac{1}{\sqrt{2\pi}}\exp(-t^2/2)\left(\frac{1}{t}-\frac{1}{t^3}\right)\le 1-\Phi(t) \le \frac{1}{\sqrt{2\pi}}\exp(-t^2/2)\frac{1}{t}.$$
\end{lemma}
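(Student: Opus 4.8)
The plan is to reduce the claim to an elementary estimate on the Gaussian integral and establish it by two integrations by parts. Since $1-\Phi(t) = \frac{1}{\sqrt{2\pi}}\int_t^\infty e^{-x^2/2}\,dx$, multiplying the desired inequality through by $\sqrt{2\pi}$ reduces it to showing, for every $t>0$,
$$\left(\frac{1}{t}-\frac{1}{t^3}\right)e^{-t^2/2} \;\le\; \int_t^\infty e^{-x^2/2}\,dx \;\le\; \frac{1}{t}e^{-t^2/2}.$$

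For the upper bound I would use the pointwise estimate $e^{-x^2/2} \le \tfrac{x}{t}e^{-x^2/2}$, valid for all $x\ge t>0$ because $x/t\ge 1$; integrating over $[t,\infty)$ and using $\int_t^\infty x e^{-x^2/2}\,dx = e^{-t^2/2}$ gives exactly $\int_t^\infty e^{-x^2/2}\,dx \le \tfrac{1}{t}e^{-t^2/2}$. For the lower bound the trick is to manufacture an exact antiderivative by inserting a factor of $x$: writing $e^{-x^2/2}=\tfrac{1}{x}\bigl(xe^{-x^2/2}\bigr)$ and integrating by parts (the boundary term at $\infty$ vanishing since $x^{-1}e^{-x^2/2}\to 0$),
$$\int_t^\infty e^{-x^2/2}\,dx = \frac{1}{t}e^{-t^2/2} - \int_t^\infty \frac{1}{x^2}e^{-x^2/2}\,dx.$$
Applying the same manoeuvre once more to the remaining integral, with $\tfrac{1}{x^2}e^{-x^2/2}=\tfrac{1}{x^3}\bigl(xe^{-x^2/2}\bigr)$, yields
$$\int_t^\infty \frac{1}{x^2}e^{-x^2/2}\,dx = \frac{1}{t^3}e^{-t^2/2} - \int_t^\infty \frac{3}{x^4}e^{-x^2/2}\,dx \;\le\; \frac{1}{t^3}e^{-t^2/2},$$
and substituting this bound back gives $\int_t^\infty e^{-x^2/2}\,dx \ge \bigl(\tfrac1t-\tfrac1{t^3}\bigr)e^{-t^2/2}$, as required.

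There is no real obstacle here; this is the classical Mills-ratio estimate, and the only things to check carefully are that the boundary contributions at $\infty$ genuinely vanish (immediate from superexponential decay) and that the integral discarded in each integration by parts carries the sign that makes the inequality go the right way — both $\int_t^\infty x^{-2}e^{-x^2/2}\,dx$ and $\int_t^\infty 3x^{-4}e^{-x^2/2}\,dx$ are nonnegative, so dropping them only sharpens the estimate toward, and never past, the stated bounds. As an alternative one could set $g(t)=\sqrt{2\pi}\,(1-\Phi(t)) - \bigl(\tfrac1t-\tfrac1{t^3}\bigr)e^{-t^2/2}$ and $h(t)=\tfrac1t e^{-t^2/2}-\sqrt{2\pi}\,(1-\Phi(t))$, compute $g'(t) = -3t^{-4}e^{-t^2/2}<0$ and $h'(t)=-t^{-2}e^{-t^2/2}<0$, and combine these with $g(\infty)=h(\infty)=0$ to conclude $g(t),h(t)>0$ for all finite $t>0$; but the integration-by-parts derivation above is the cleaner of the two.
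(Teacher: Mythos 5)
Your proof is correct and follows essentially the same route as the paper: both derive the bounds from the identity $\int_t^\infty e^{-x^2/2}\,dx = \frac{1}{t}e^{-t^2/2} - \int_t^\infty x^{-2}e^{-x^2/2}\,dx$ and a second integration by parts giving $\int_t^\infty x^{-2}e^{-x^2/2}\,dx = \frac{1}{t^3}e^{-t^2/2} - \int_t^\infty 3x^{-4}e^{-x^2/2}\,dx$, then discard the nonnegative remainder integrals. Your pointwise derivation of the upper bound is a cosmetic variant of the first integration-by-parts step, so there is nothing substantive to add.
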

\begin{proof}
\begin{align*}
1-\Phi(t)&=\frac{1}{\sqrt{2\pi}}\int_t^\infty \exp(-x^2/2) dx\\
&=\frac{1}{\sqrt{2\pi}}\int_t^\infty \frac{1}{x} \cdot x\exp(-x^2/2) dx\\
&=\frac{1}{\sqrt{2\pi}}\left(\frac{\exp(-t^2/2)}{t}-\int_t^\infty \frac{1}{x^2}\exp(-x^2/2) dx\right) \tag{integration by parts}\\
&=\frac{1}{\sqrt{2\pi}}\left(\frac{\exp(-t^2/2)}{t}-\frac{\exp(-t^2/2)}{t^3}+\int_t^\infty \frac{3}{x^4}\exp(-x^2/2) dx\right)\tag{integration by parts again}.
\end{align*}
From the last two expressions, we get the required upper and lower bounds.
\end{proof}

\end{document}